\renewcommand{\paragraph}{%
  \@startsection{paragraph}{4}%
  {\z@}{1ex \@plus 1ex \@minus .2ex}{-1em}%
  {\normalfont\normalsize\bfseries}%
}
\renewcommand{\subparagraph}{%
	\@startsection{subparagraph}{5}%
	{\parindent}{1ex \@plus 1ex \@minus .2ex}{-1em}%
	{\normalfont\normalsize\bfseries}%
}
\newtheorem{theorem}{Theorem}[section]
\newtheorem{fact}[theorem]{Fact}
\newtheorem{lemma}[theorem]{Lemma}
\theoremstyle{definition}
\newtheorem{definition}[theorem]{Definition}
\newtheorem{remark}[theorem]{Remark}
\newcommand{\eps}{\varepsilon}
\renewcommand{\epsilon}{\varepsilon}
\newcommand{\eat}[1]{}
\newcommand{\R}{\mathbb{R}}
\DeclareMathOperator*{\E}{\ensuremath{\mathbb{E}}}
\newcommand{\Z}{\mathbb{Z}}
\DeclareMathOperator{\diam}{diam}
\newcommand{\VC}{\ensuremath{\mathrm{VC}}}
\newcommand{\far}{\ensuremath{\mathrm{far}}}
\newcommand{\inn}{{\mathrm{in}}}
\newcommand{\ProblemName}[1]{\textsc{#1}}
\newcommand{\kzC}{\ProblemName{$(k, z)$-Clustering}\xspace}
\newcommand{\kMedian}{\ProblemName{$k$-Medians}\xspace}
\newcommand{\kMeans}{\ProblemName{$k$-Means}\xspace}
\newcommand*{\rom}[1]{\expandafter\@slowromancap\romannumeral #1@}
\DeclareMathOperator{\poly}{poly}
\DeclareMathOperator{\polylog}{polylog}
\DeclareMathOperator{\dist}{dist}
\DeclareMathOperator{\Balls}{Balls}
\DeclareMathOperator{\Var}{Var}
\DeclareMathOperator*{\argmin}{arg\,min}
\DeclarePairedDelimiter{\abs}{\lvert}{\rvert}
\DeclarePairedDelimiter{\norm}{\lVert}{\rVert}
\DeclarePairedDelimiter{\ceil}{\lceil}{\rceil}
\DeclarePairedDelimiterX{\inner}[2]{\langle}{\rangle}{#1 , #2}
\DeclarePairedDelimiterXPP\normp[1]{}{\lVert}{\rVert}{_p}{#1}
\NewDocumentCommand\cost{o}{%
    \IfNoValueTF{#1}%
    {\operatorname{cost}}
    {\operatorname{cost}^{(#1)}}
}
\NewDocumentCommand\Ballh{o}{%
    \IfNoValueTF{#1}%
    {\operatorname{Ball}}
    {\operatorname{Ball}_{H}^{(#1)}}
}
\NewDocumentCommand\Ballsh{o}{%
    \IfNoValueTF{#1}%
    {\operatorname{Balls}}
    {\operatorname{Balls}_{H}^{(#1)}}
}
\title{Towards Tight Robust Coresets for $k$-Medians Clustering}
\author{Lingxiao Huang\thanks{
		Email: \texttt{huanglingxiao@nju.edu.cn}
	}\\
	Nanjing University
	\and
	Zhenyu Jiang\thanks{
		Email: \texttt{zhenyujiang@smail.nju.edu.cn}
	}\\
	Nanjing University
	\and
	Yi Li\thanks{ 
		Email: \texttt{yili@ntu.edu.sg}}\\
	Nanyang Technological University
	\and
	Xuan Wu\thanks{ 
		Email: \texttt{xuan.wu@ntu.edu.sg}}\\
	Nanyang Technological University
}
\date{}
\begin{document}

\pagenumbering{roman} 

\maketitle

\begin{abstract}
This paper considers coresets for the robust $k$-medians problem with $m$ outliers, and new constructions in various metric spaces are obtained.
Specifically, for metric spaces with a bounded VC or doubling dimension $d$, the coreset size is $O(m) + \tilde{O}(kd\varepsilon^{-2})$, which is optimal up to logarithmic factors. For Euclidean spaces, the coreset size is $O(m\varepsilon^{-1}) + \tilde{O}(\min\{k^{4/3}\varepsilon^{-2}, k\varepsilon^{-3}\})$, improving upon a recent result by Jiang and Lou (ICALP 2025). These results also extend to robust $(k,z)$-clustering, yielding, for VC and doubling dimension, a coreset size of $O(m) + \tilde{O}(kd\varepsilon^{-2z})$ with the optimal linear dependence on $m$. This extended result improves upon the earlier work of Huang et al. (SODA 2025). The techniques introduce novel dataset decompositions, enabling chaining arguments to be applied jointly across multiple components.
\end{abstract}

\tableofcontents

\newpage

\pagenumbering{arabic}

\section{Introduction}
Let $(M,\dist)$ be a metric space and let $X$ denote the dataset. 
The \kzC problem asks to solve the following optimization problem 
\[
	C^\ast := \argmin_{C\in \binom{M}{k}} \, \cost_z(X,C), \quad\text{ where }\quad  
    \cost_z(X,C) := \sum_{x\in X} \Big(\min_{c\in C}\dist(x,c)\Big)^z
\]
and $\binom{M}{k}$ denotes the set of all $k$-element subsets of $M$. The special case $z=1$, known as the \kMedian problem, is one of the most extensively studied clustering problems and the main focus of this paper. Accordingly, we omit the subscript in $\cost_1$ and simply write $\cost$.

In practice, the presence of outliers poses a significant challenge to the optimization problem above. For example, a small number of adversarially inserted outliers can drastically bias the solution by forcing the algorithm to designate them as centers, thus failing to find the true underlying clusters of the dataset. To address this issue, we consider the following robust version of \kMedian, proposed by~\cite{charikar2001algorithms}, which seeks to find a minimizer that tolerates (at most) $m$ outliers:
\begin{equation}\label{eqn:robust_cost_intro}
	C^\ast := \argmin_{C\in \binom{M}{k}} \, \cost^{(m)}(X,C), \quad\text{ where }\quad 
	\cost^{(m)}(X,C):=\min_{L\in \binom{X}{m}}\cost(X\setminus L,C).
\end{equation}

Solving robust \kMedian is considerably more challenging than solving its vanilla counterpart (i.e., without outliers). Existing constant-factor approximation algorithms \cite{chen2008constant,krishnaswamy2018constant,gupta2024structural} have a high-order polynomial running time. Moreover, while fixed-parameter tractable (FPT) algorithms for $(1+\eps)$-approximations are known~\cite{feng2019improved,agrawal2023clustering},
their running times remain far from near-linear in the dataset size. Hence, designing scalable algorithms for robust \kMedian continues to be an active area of research.

Among the various approaches, coresets have emerged as a powerful technique for scalable clustering. An $\eps$-coreset is a small weighted subset of the dataset that approximates the \kMedian cost within a relative error of $\eps$ for every set of $k$ centers.
By constructing such a coreset, one can solve robust \kMedian more efficiently, as the runtime depends on the coreset size rather than on the size of the full dataset.
Over the last two decades, extensive research has yielded nearly optimal coresets for various metric spaces \cite{DBLP:journals/dcg/Har-PeledK07,Chen09,langberg2010universal,DBLP:conf/stoc/FeldmanL11,cohen2021new,Cohen2022Towards,cohen2022improved,huang2024optimal,bansal2024sensitivity,Cohen25}. 
For metric spaces with a finite VC dimension\footnote{For brevity, we say a metric space has VC dimension $d_{\VC}$ when the VC dimension of its ball family is $d_{\VC}$.} $d_{\VC}$, the best known $\eps$-coreset for $\kMedian$ has a size of\footnote{Throughout this paper, $\tilde{O}(f) = O(f\polylog f)$.} $\tilde{O}(kd_{\VC}\eps^{-2})$~\cite{Cohen25}. Similarly, for metric spaces with a finite doubling dimension $d_D$, the coreset size is $\tilde{O}(kd_D\eps^{-2}$)~\cite{cohen2021new}. Both of these bounds are known to be tight up to logarithmic factors~\cite{Cohen2022Towards}.
However, the coreset size need not always depend on the dimension. In Euclidean space, a dimension-independent coreset of size $\tilde{O}(\min(k^{4/3} \eps^{-2}, k\eps^{-3}))$ can be achieved~\cite{Cohen2022Towards,cohen2022improved}, and this bound is near-tight for $\eps$ that is not too small~\cite{huang2024optimal}.

The success of vanilla coresets does not carry over to the robust setting, and our understanding of robust coresets lags significantly behind.
For over a decade, the best-known coreset constructions for robust \kMedian with $m$ outliers (see Definition~\ref{def:rbcoreset}) either had size exponential in $k+m$~\cite{feldman2012data} or achieved only bi-criteria guarantees~\cite{DBLP:conf/stoc/FeldmanL11,huang2018epsilon}.
A recent breakthrough by Braverman et al.~\cite{braverman2022power} introduced a general framework for clustering with constraints, which led to substantial progress on this problem.
In particular, the first polynomial-size coreset for robust \kMedian in Euclidean space had size $\tilde{O}(m+k^3\eps^{-5})$~\cite{huang2022near}.
Subsequent work~\cite{huang2025coresets} improved this bound to $\tilde{O}(m+k^2\eps^{-4})$ and extended the construction to metric spaces with finite VC or doubling dimension, yielding coreset sizes of $\tilde{O}(m+k^2 d_{\VC}\eps^{-4})$ and $\tilde{O}(m+k^2 d_{D}\eps^{-2})$, respectively.
While these results achieve near-linear dependence on the number of outliers $m$, their $\poly(k/\eps)$ dependence remains substantially worse than the optimal vanilla coreset size $Q$. 

Another recent work~\cite{jiang2025coresets} adopts an alternative approach to construct robust coresets in Euclidean space, achieving a size of $\tilde{O}\big(m\eps^{-1}\cdot \min(k,\eps^{-1}) + Q\big)$.
This approach can be generalized to metric spaces with finite VC or doubling dimension, resulting in the same robust coreset size but with $Q$ now denoting the vanilla coreset size specific to the respective metric space.
While this result matches the vanilla coreset size in its $Q$ term, the other ``outlier term'' is much worse than near-linear in $m$ due to additional multiplicative factors.

On the other hand, we know that any robust \kMedian coreset must have size at least $\Omega(m+Q)$ (see Theorem~\ref{thm:lower} in Appendix \ref{sec:lb}). This leads to the following question.

\begin{quote}
\textit{%
In which family of metric spaces can an $\eps$-coreset for \kMedian with $m$ outliers be constructed with size $\tilde{O}(m+Q)$, where $Q$ denotes the current optimal size of vanilla $\eps$-coreset in the same metric space?
}
\end{quote}

Tight vanilla coreset sizes are often achieved through chaining arguments~\cite{bansal2024sensitivity,Cohen25}. 
In contrast, all the robust coreset results discussed so far do not use chaining arguments and rely on geometric decomposition techniques, which partition the dataset into significantly more pieces with desirable properties than are needed for vanilla coresets. 
This methodological gap exists because applying vanilla geometric decomposition and chaining to the robust setting faces a fundamental difficulty: even a minor perturbation to the center set can drastically change the outliers. 
Such instability can either invalidate the correctness of the chaining argument or, if addressed with a simple fix, lead to an ineffective, large coreset size.

In this paper, we address this difficulty by introducing a new form of dataset decomposition which allows for a careful adaptation of the chaining arguments.
As a result, we obtain robust \kMedian coresets whose sizes nearly match the best known vanilla bounds for metric spaces with finite VC or doubling dimension.
Moreover, for Euclidean space, where vanilla coresets admit dimension-independent size bounds, we obtain a coreset of size $\tilde{O}(m\eps^{-1}+Q)$.

\subsection{Our results}
\label{sec:result}

\newcommand{\specialcell}[2][c]{%

\begin{tabular}[#1]{@{}c@{}}#2\end{tabular}}
  
    \begin{table}[t]
		\centering
		\footnotesize
		
		\begin{tabular}{|c|c|c|c|}
			\hline
			Metric space & Prior results & Our results \\ \hline
			VC  & \specialcell{$O(m) + \tilde{O}(k^2 d_{\VC} \eps^{-2z-2})$ \cite{huang2025coresets} \\ $\tilde{O}(\min\{km\eps^{-1}, m\eps^{-2z}\}) + \text{Vanilla size}$ \cite{jiang2025coresets} \\ $\Omega(m+k d_{\VC} \eps^{-2})$ ($d_{\VC} = \log n$) \cite{huang2022near,Cohen2022Towards} } & \specialcell{ $O(m)+\tilde{O}(k d_{\VC} \eps^{-2z})$ $\ast$ \\ (Theorems \ref{thm_gen} and \ref{thm:kzC})}
                \\ \hline
			Doubling & \specialcell{$O(m) + \tilde{O}(k^2 d_D \eps^{-2z})$ \cite{huang2025coresets} \\ $\tilde{O}(\min\{km\eps^{-1}, m\eps^{-2z}\} ) + \text{Vanilla size}$ \cite{jiang2025coresets} \\ $\Omega(m +k d_D \eps^{-2})$ \cite{huang2022near,Cohen2022Towards} \\ $\Omega(k d_D \eps^{-\max\{z,2\}}/\log k)$ ($\eps = \Omega(k^{-1/(z+2)})$) \cite{huang2024optimal} } & \specialcell{$O(m) + \tilde{O}(k d_{D} \eps^{-2z})$ $\ast$ \\ (Theorems \ref{thm_doubling} and \ref{thm:kzC}) } \\ \hline
			Euclidean & \specialcell{$O(m) + \tilde{O}(k^2 \eps^{-2z-2})$ \cite{huang2025coresets} \\ $\tilde{O}(\min\{km\eps^{-1}, m\eps^{-2z}\} ) + \text{Vanilla size}$ \cite{jiang2025coresets} \\ $\Omega(m +k \eps^{-2})$ \cite{huang2022near,Cohen2022Towards} \\ $\Omega(k \eps^{-z-2})$ ($\eps = \Omega(k^{-1/(z+2)})$) \cite{huang2024optimal} } & \specialcell{$O(m \eps^{-z})+\tilde{O}(\min\{k^{\frac{2z+2}{z+2}}\eps^{-2}, k\eps^{-z-2}\})$ \\ (Theorems \ref{thm_Euc} and \ref{thm_Euc_kzC}) \\ $O(m) + \tilde{O}(k \eps^{-2z-2})$ \\ (Corollary of Theorems \ref{thm_doubling} and \ref{thm:kzC}) } \\ \hline
		\end{tabular}
        
\caption{Comparison of state-of-the-art coreset sizes and our results for robust \kzC. 
Results marked with $\ast$ indicate near-optimal worst-case bounds for robust \kMedian ($z=1$).
Here, \( d_{\VC} \) and \( d_D \) denote the VC and doubling dimensions, respectively.
All results assume that $z\geq 1$ is a constant and omit factors of $2^{O(z)}$ or $z^{O(z)}$ in the coreset sizes.
For Euclidean results derived as corollaries of Theorems \ref{thm_doubling} and \ref{thm:kzC}, see the discussion preceding Theorem \ref{thm_Euc}.
Note that the work \cite{jiang2025coresets} provides a reduction from robust to vanilla coreset sizes.
}

\label{tab:result}
\end{table}

We now present our results in detail, focusing on robust \kMedian.
The extension to general robust \kzC, defined similarly to \eqref{eqn:robust_cost_intro} but using 
$\cost_z(\cdot,\cdot)$ in place of $\cost(\cdot,\cdot)$, is deferred to Appendix~\ref{sec:kzC}. A summary of our results can be found in Table~\ref{tab:result}. 
%
%

Throughout this paper, we assume the availability of a distance oracle that returns \( \dist(x,y) \) in constant time for any queried pair \( (x,y) \).

First, we have the following theorem for coresets in metric spaces with a finite VC dimension.

\begin{theorem}[Finite VC dimension] \label{thm_gen} 
Let $(M,\dist)$ be a metric space with VC dimension $d_{\VC}\geq 1$. 
There exists an algorithm that, given a dataset $X\subseteq M$ of size $n\geq 1$, constructs an $(\eps,m)$-robust coreset of 
$X$ for robust \kMedian with size $O(m)+\tilde{O}(k\cdot d_{\VC}\cdot \eps^{-2})$ in $O(nk)$ time.
\end{theorem}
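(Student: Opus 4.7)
The plan is a two-part decomposition combined with a chaining-based vanilla coreset: preserve a small set of outlier candidates exactly and compress the remainder using existing VC-dimension-based constructions. In $O(nk)$ time, compute a constant-factor approximation $C_0 \in \binom{M}{k}$ for robust \kMedian, with $\cost[m](X, C_0) = \Theta(\OPT)$. Sort points by $\dist(\cdot, C_0)$ and let $\Lambda$ consist of the $\Theta(m)$ farthest points; set $X' := X \setminus \Lambda$. Include $\Lambda$ in the coreset with unit weights (contributing $O(m)$ to the size), and attach a vanilla $\eps$-coreset $S'$ of $X'$ produced by a chaining-based construction (e.g., following \cite{bansal2024sensitivity,Cohen25}) of size $\tilde{O}(k d_{\VC} \eps^{-2})$. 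Output $S := \Lambda \cup S'$.

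The central technical claim is an \emph{outlier substitution lemma}: for every $C \in \binom{M}{k}$ with $\cost[m](X, C) = O(\OPT)$, there exists $L_C \subseteq \Lambda$ with $|L_C| = m$ and $|\cost(X \setminus L_C, C) - \cost[m](X, C)| \leq O(\eps) \cdot \cost[m](X, C)$. Granted this, decomposing $\cost(X \setminus L_C, C) = \cost(\Lambda \setminus L_C, C) + \cost(X', C)$ lets us combine the exact preservation of $\Lambda$ with the $\eps$-approximation of $\cost(X', \cdot)$ by $S'$; a symmetric argument bounds $\cost[m](S, C)$. When $\cost[m](X, C)$ is much larger than $\OPT$, a coarser triangle-inequality argument suffices, so only the regime of near-optimal $C$ is nontrivial.

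The main obstacle is the substitution lemma: a naive swap cannot work, since a $\Lambda$-point, though guaranteed far from $C_0$, may be even farther from a given $C$, so exchanging non-$\Lambda$ outliers of $C$ for unused $\Lambda$-points could inflate the cost unboundedly. I would resolve this by embedding the substitution inside the chaining hierarchy itself. Fix a geometric sequence of scales $r_j = 2^j \cdot \OPT / m$ and, at each scale, build a net of candidate center sets whose size is governed by the VC chaining entropy. At each level, argue via a uniform-convergence bound that the set of ``outlier-boundary flips'' between $X'$ and $\Lambda$ has complexity bounded by that same entropy, so both the outlier exchange error and the vanilla coreset approximation error can be charged to the \emph{same} joint chain and summed geometrically. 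Executing this joint chaining argument---exactly the ``chaining applied jointly across multiple components'' mentioned in the abstract---without losing extra $k$ or $\eps^{-1}$ factors is the crux of the proof.
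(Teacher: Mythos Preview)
Your approach has a genuine gap. The ``outlier substitution lemma'' you posit --- that for every near-optimal $C$ the $m$ outliers can be taken from a fixed set $\Lambda$ of $\Theta(m)$ points, up to $O(\eps)$ relative error --- is precisely the obstruction the paper identifies and works around, not a lemma one can prove in this generality. Reductions of your shape (retain $O(m)$ candidates, vanilla-compress the rest) are what prior work does, paying an extra $\min(k,\eps^{-1})$ factor on the $m$ term~\cite{jiang2025coresets}. Your proposed fix, to ``embed the substitution inside the chaining hierarchy'' and charge outlier-exchange errors to the same net, conflates two different kinds of error: chaining controls a \emph{sampling} fluctuation of the coreset, whereas the substitution error is \emph{deterministic} --- it is a property of $X$ itself, present before any sampling, so no uniform-convergence bound over a net of center sets can shrink it. Concretely, take $k=1$, $m=n/4$, with $n/2$ points at $0$, $n/4$ at $+1$, and $n/4$ at $-(1+\delta)$ for tiny $\delta>0$; then $C_0=\{0\}$, $\OPT = n/4$, and $\Lambda$ (the $m$ farthest from $C_0$) is the $-(1+\delta)$ group. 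For $C=\{-(1+\delta)\}$ one has $\cost[m](X,C)\approx n/2 = 2\,\OPT$, yet the only $L_C\subseteq\Lambda$ of size $m$ is $\Lambda$ itself, giving $\cost(X\setminus L_C,C)\approx n$: a constant-factor gap independent of $\eps$.

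The paper's construction is different in kind. After stripping $O(m+\eps^{-1})$ far points, it does \emph{not} treat the remainder as one piece; it decomposes it into $O(\log(mk))$ layers, each an \emph{$(r,k)$-instance} --- a union of $k$ balls of a \emph{common} radius $r$ --- plus an innermost instance of tiny radius (Lemma~\ref{thm_decomp}). On each layer it takes a uniform sample and realigns weights so every ball carries exactly its original mass (capacity-respecting). The uniform radius is the idea you are missing: for any $C$ and outlier count $t$, every ball is either entirely inside the inlier threshold, entirely outside, or confined to an annulus of width $O(r)$ around it. The all-inside balls contribute a vanilla cost indexed by \emph{which balls} are included --- only $2^k$ choices --- handled by chaining (the paper's \emph{indexed-subset cost approximation}); the annulus is handled by a VC range-space bound integrated over an interval of length $O(r)$. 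Neither step identifies or substitutes individual outlier \emph{points}; it only tracks which \emph{clusters} straddle the threshold, and capacity-respecting ensures the same partition applies to both $X$ and the sample. The innermost instance has radius at most $\tfrac{\eps}{mk}\cdot\cost[m](X,C^\ast)$, small enough for a direct vanilla reduction with $k$ surrogate points (Lemma~\ref{lemma_innergroup}).
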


In the special case where \( m = 0 \), this coreset size matches that of the vanilla \( k \)-median coreset \cite{Cohen25}, which is worst-case optimal. 
This follows from the fact that in a metric space with $n$ points and VC dimension \( \log n \), the tight coreset size for vanilla \( k \)-median is precisely \( \tilde{\Theta}(k \cdot \log n \cdot \eps^{-2}) \) \cite{cohen2021new,Cohen2022Towards}.  
Moreover, the \( O(m) \) factor aligns with the coreset lower bound \( \Omega(m) \) established in \cite{huang2022near}.  
Thus, we obtain the first tight VC-dimension-based coreset for robust \( k \)-median.

Next, we give the coreset result in doubling spaces.

\begin{theorem}[Finite doubling dimension] 
\label{thm_doubling} 
Let $(M,\dist)$ be a metric space with doubling dimension $d_D\geq 1$.
There exists an algorithm that, given a dataset $X\subseteq M$ of size $n\geq 1$, constructs an $(\eps,m)$-robust coreset of 
$X$ for robust \kMedian with size $O(m)+\tilde{O}(k\cdot d_D\cdot \eps^{-2})$ in $O(nk)$ time.
\end{theorem}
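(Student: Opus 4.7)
The plan is to follow the template of Theorem~\ref{thm_gen}, replacing the VC-dimension-based uniform convergence arguments with doubling-dimension-based entropy bounds. The coreset is assembled in two disjoint parts: a \emph{far component} $F$ of size $O(m)$ that captures all candidate outliers and is included essentially verbatim, and a \emph{regular component} $X\setminus F$ that is compressed via importance sampling to size $\tilde O(kd_D\eps^{-2})$. An $O(nk)$-time constant-factor approximation algorithm seeds the construction and delivers the stated running time.

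First, I would compute a bicriteria $O(1)$-approximation $\tilde C$ and organize $X$ into geometric rings around each center of $\tilde C$. From these rings I would harvest the far set $F$ by taking, per cluster and per band, the points with the largest offset, calibrated so that $|F|=O(m)$. The central design requirement is the \emph{outlier-swap property}: for every candidate center set $C\in\binom{M}{k}$, the $m$ true outliers of $\cost^{(m)}(X,C)$ can be replaced by outliers drawn from $F$ with only a $(1+\eps)$-multiplicative error. Once this property holds, outlier assignment is decoupled from the choice of $C$, and the remaining task reduces to compressing $X\setminus F$ against an essentially vanilla cost function.

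Next, I would apply Feldman--Langberg sensitivity sampling on $X\setminus F$ and control the sampling error uniformly over all $C$ via a chaining argument. Build a hierarchy of $\rho_\ell$-nets $\calN_\ell$ of the relevant portion of $M$ with geometrically decreasing $\rho_\ell$. The doubling property gives $|\calN_\ell\cap \Ball(x,R)|\le(R/\rho_\ell)^{O(d_D)}$, so at level $\ell$ the induced net of $k$-tuple center configurations has log-cardinality $O(k d_D\ell)$. A telescoping sum of per-level variances combined with Bernstein-type concentration matches the template of the vanilla doubling-dimension coreset of Cohen-Addad et al.\ and yields a sample size $\tilde O(k d_D\eps^{-2})$. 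Combining the two components produces an $(\eps,m)$-robust coreset of total size $O(m)+\tilde O(k d_D\eps^{-2})$.

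The main obstacle is establishing the outlier-swap property uniformly over all $C$. A naive per-$C$ argument either inflates the $O(m)$ term by a $\poly(k,\eps^{-1})$ factor or forces $|F|$ to grow super-linearly in $m$. In the VC setting of Theorem~\ref{thm_gen} this is handled by uniform convergence over the indicator class of balls; here, VC bounds are unavailable and must be replaced by a covering argument that exploits the same hierarchy of $\rho_\ell$-nets used for the chaining step, so as to bound how the identity of the outlier set can vary across $C$. The delicate part is ensuring that the outlier-swap argument and the chaining argument share a single net hierarchy, so that their respective error contributions compose cleanly without double-paying the $d_D$ factor---this is the ``novel dataset decomposition enabling chaining across multiple components'' promised in the introduction.
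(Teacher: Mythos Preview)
Your plan has a genuine gap at its core: the outlier-swap property with $|F|=O(m)$ cannot be established by the method you describe, and in fact the paper explicitly identifies this as the obstacle that forces a different architecture. You propose that after excising a fixed set $F$ of size $O(m)$, the remainder $X\setminus F$ can be compressed by a \emph{vanilla} chaining argument. But for a center set $C$ far from the bicriteria solution $\tilde C$, the $m$ farthest points of $X$ from $C$ need not lie in (or even be swappable into) any fixed $O(m)$-sized set chosen in advance; different $C$'s pick out entirely different outlier sets, and a covering argument over $C$ does not collapse this to $O(m)$ candidates. The paper's Algorithm~\ref{alg_coreset2} realizes a version of your swap idea but needs $mk$ surrogate outliers ($m$ per cluster), and even then it is only applied to the inner $(r_{\inn},k)$-instance whose tiny radius $r_{\inn}=\frac{\eps}{mk}\cost^{(m)}(X,C^\ast)$ absorbs the resulting $mkr_{\inn}$ additive error. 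It cannot be applied to the bulk of the data.

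What the paper actually does is quite different and does \emph{not} try to pre-identify outliers. The dataset is decomposed into $F$, layered rings $R_1,\dots,R_l$ (each an $(r_i,k)$-instance with a common radius), and the inner instance $G$ (Lemma~\ref{thm_decomp}). For each ring layer, the robust cost is analyzed directly: given any $C$, the $k_0$ clusters are partitioned \emph{after the fact} into $A_1$ (all inliers), $A_2$ (all outliers), and $A_3$ (mixed). Chaining is applied only to $A_1$, over all $2^{k_0}$ possible index subsets---this is the indexed-subset cost approximation of Definition~\ref{def_subset} and Lemma~\ref{lemma_subset}. The mixed part $A_3$ is controlled by a range-space argument over $k$-balls (Definition~\ref{def_range}), which for doubling metrics requires the random $\eps$-smoothed distance trick of \cite{huang2018epsilon} (Lemmata~\ref{lemma:randomdist}--\ref{lemma:eps_range}) since the ball family can have unbounded VC dimension. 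The uniform radius of an $(r,k)$-instance is what makes $A_3$ live in an annulus of width $O(r)$, so the integral in Lemma~\ref{lemma_inq} contributes only $O(\eps nr)$. None of this machinery---the $A_1/A_2/A_3$ split, the indexed-subset chaining, the smoothed-distance range-space bound, or the uniform-radius property---appears in your proposal, and your single-net-hierarchy idea does not substitute for it.
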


Similarly to the VC-dimension case, this doubling-dimension-based result extends the vanilla result from \cite{Cohen25} by introducing an additive \( O(m) \) term.  
This size is optimal, as it matches the lower bound of the coreset size,  \( \Omega(m + k \cdot d_D \cdot \eps^{-2}) \), established in \cite{huang2022near,Cohen2022Towards}.  

Finally, we consider Euclidean spaces.
Since the Euclidean metric \( (\mathbb{R}^d, \|\cdot\|_2) \) has doubling dimension at most \( d+1 \), Theorem~\ref{thm_doubling} and its generalization in Theorem~\ref{thm:kzC} imply the existence of a Euclidean coreset of size \( O(m) + \tilde{O}(k  d  \eps^{-2}) \).  
By applying the standard iterative size reduction and terminal embedding technique from \cite{BJKW21,NN19}, which reduces the ambient dimension $d$ to $\tilde{O}(\eps^{-2})$, this coreset size can be further reduced to \( O(m) + \tilde{O}(k\eps^{-4} )\). 
This already improves upon the prior bound of \( O(m) + \tilde{O}(k^2 \eps^{-4}) \) from \cite{huang2025coresets}, achieving an improvement by a factor of \( k \).  
However, when \( m = 0 \), this size exceeds the vanilla coreset bound \( \tilde{O}(\min\{k^{4/3}\eps^{-2}, k\eps^{-3}\}) \) from \cite{cohen2021new,cohen2022improved,huang2024optimal}.
This constrasts with the VC and the doubling cases, in which our results extend the previous vanilla results.
This difference stems from the special geometric structure of Euclidean metrics, which enables further reductions in coreset size but also complicates obtaining tight bounds.
Indeed, even for the vanilla setting, the optimal coreset size in Euclidean space remains partially open~\cite{huang2024optimal}.
To address the added complexity of robustness, we develop a new algorithm specifically tailored for robust Euclidean \kMedian, yielding the following theorem.

\begin{theorem}[Euclidean spaces] \label{thm_Euc}
Let $(M,\dist)=(\mathbb{R}^d,\|\cdot\|_2)$. 
There exists an algorithm that, given a dataset $X\subset \mathbb{R}^d$ of size $n\geq 1$, constructs an $(\eps,m)$-robust coreset of 
$X$ for the robust \kMedian with size $O(m \eps^{-1})+\tilde{O}(\min\{k^{4/3}\eps^{-2}, k\eps^{-3}\})$ in $O(nk)$ time.
\end{theorem}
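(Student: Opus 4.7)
The plan is to combine a novel two-layer decomposition of the dataset with the tight dimension-independent vanilla chaining framework of \cite{cohen2022improved,huang2024optimal}. First, I would compute an $O(1)$-approximation $C_0$ for the robust \kMedian instance (so $|C_0|=k$) together with its induced set of $m$ outliers $L_0\subseteq X$. Using $C_0$ and the outlier cost scale $\tau := \max_{x\in L_0} \dist(x,C_0)$, I would partition $X$ into three pieces: a \emph{core} region of points whose distance to $C_0$ is much smaller than $\tau$, a \emph{buffer} region of points whose distance to $C_0$ lies within a multiplicative window $[\tau/(1+\Theta(\eps)),\,(1+\Theta(\eps))\tau]$, and a \emph{far} region of points whose distance exceeds $(1+\Theta(\eps))\tau$. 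Since every candidate center set $C$ of interest is an $O(1)$-approximation (up to scaling by $\cost^{(m)}(X,C^\ast)$), one shows that outlier swaps between the core and the rest only occur inside the buffer, so the buffer completely captures the uncertainty of the outlier set across all $C$.

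Next I would bound the size of the buffer by $O(m\eps^{-1})$: its total weight is at most $O(m)$ outliers plus at most $O(m)$ inliers per multiplicative band $(1+\eps)$, and the buffer only spans $O(\log(1+\eps)^{-1})=O(1/\eps)$ bands, giving $O(m\eps^{-1})$ points that will be kept explicitly in the coreset. The far region contributes at most $m$ outliers in any feasible $C$ (otherwise the cost exceeds that of $C_0$), so the cost on it is dominated by the buffer and can be offset by a single aggregated weighted point per cluster of $C_0$, costing only $O(k)$ extra.

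The core of the construction is then a vanilla coreset on the core region, obtained via the importance-sampling/chaining framework that yields $\tilde{O}(\min\{k^{4/3}\eps^{-2},k\eps^{-3}\})$ for \kMedian in Euclidean space. The crucial modification is to run the chaining argument \emph{jointly} across the three components: rather than bounding $|\cost^{(m)}(X,C) - \cost^{(m)}(S,C)|$ directly, I would define a surrogate $\widetilde{\cost}(X,C)$ that uses a \emph{fixed}, $C$-independent outlier assignment obtained from $C_0$ and then charge the difference $\cost^{(m)}(X,C)-\widetilde{\cost}(X,C)$ entirely to the buffer. Because the buffer is retained exactly, the surrogate error vanishes on the coreset side, and the chaining bound on $\widetilde{\cost}$ goes through as in the vanilla case.

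The main obstacle is ensuring that the outlier swaps across the buffer do not break the standard net/symmetrization steps of chaining. The difficulty is that two center sets $C_1,C_2$ that are close in the chain metric may still induce outlier sets $L_{C_1},L_{C_2}$ whose symmetric difference is large; my fix is to show that the restriction of this difference to $X\setminus(\text{buffer})$ is empty under the $O(1)$-approximation guarantee, so all disagreement is absorbed by the explicit buffer. Verifying this invariant under the $(1+\eps)$-scale net of the chaining argument, and doing so with tight $\eps$-dependence, is the step that will require the most care; everything else is an adaptation of the standard vanilla analysis.
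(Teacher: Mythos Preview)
Your proposal has a fundamental gap at the buffer-localization step. You assert that ``every candidate center set $C$ of interest is an $O(1)$-approximation'' and conclude that outlier swaps only occur inside the buffer, but a robust coreset (Definition~\ref{def:rbcoreset}) must preserve $\cost^{(t)}(X,C)$ for \emph{every} $C\in\binom{\R^d}{k}$ and every $t\le m$, not only near-optimal ones. For $C$ far from $C_0$ the $t$ farthest points from $C$ can lie entirely inside your core region, so the surrogate $\widetilde{\cost}$ with a fixed $C_0$-based outlier set can differ from $\cost^{(t)}(X,C)$ by an amount the buffer does not absorb; even for near-optimal $C$ one can place centers near $L_0$ and turn core points into outliers. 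Separately, the buffer-size bound (``at most $O(m)$ inliers per multiplicative $(1+\eps)$-band near $\tau$'') is unjustified: the number of inliers at distance $\Theta(\tau)$ from $C_0$ is controlled only by $\cost^{(m)}(X,C_0)/\tau$, which bears no relation to $m$.

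The paper does not attempt to localize outliers at all. It keeps exactly the $O(m\eps^{-1})$ points at distance greater than $r_{\inn}:=\tfrac{\eps}{m}\cost^{(m)}(X,C^\ast)$ from $C^\ast$ (this Markov bound is what yields $O(m\eps^{-1})$; see Lemma~\ref{thm_euc_decomp}), leaving an $(r_{\inn},k)$-instance $G$ in which all $k$ clusters share the \emph{same} radius. That uniform radius is the missing ingredient: Lemma~\ref{proof_Euc_order} shows that for any $C$ and $t$, after sorting the $k$ clusters by $\dist(a_i,C)$, the outliers of $\cost^{(t)}(G,C)$ are, up to additive error $O(m\,r_{\inn})$, the points in the farthest clusters, giving $\cost^{(t)}(G,C)\approx \alpha\,\cost(G_{j},C)+\cost(G_I,C)$ for some $j\in[k]$, $I\subseteq[k]$, $\alpha\in[0,1]$. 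The robust guarantee thus reduces to a \emph{strong indexed-subset cost approximation} (Definition~\ref{def_strong_subset}) over all $2^k$ cluster subsets, and the vanilla Euclidean chaining of Algorithm~\ref{alg_coreset3} (with an extra capacity-respecting reweighting so that $\|D\cap G_i\|_1=|G_i|$) absorbs the additional $2^k$ union into the net size without changing the $\tilde O(\min\{k^{4/3}\eps^{-2},k\eps^{-3}\})$ bound. Your telescoping-by-bands/surrogate route never reaches this cluster-level (as opposed to point-level) control of outliers, which is precisely what makes the chaining go through for arbitrary $C$.
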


This Euclidean coreset size extends the vanilla result from \cite{cohen2021new,cohen2022improved,huang2024optimal} by introducing an additive \( O(m \eps^{-1}) \) term.  
An interesting open question is whether this additional term can be further improved to \( O(m) \), matching the $\Omega(m)$ lower bound \cite{huang2022near}.  
Moreover, Theorem~\ref{thm_Euc} improves the recent bound \( \tilde{O}(\min\{m\eps^{-2}, km \eps^{-1}\} + \min\{k^{4/3}\eps^{-2}, k\eps^{-3}\}) \) from \cite{jiang2025coresets} by replacing the term \(\min\{m\eps^{-2}, km \eps^{-1}\}\) with a smaller \( m \eps^{-1} \).

\paragraph{Extension to general $z\geq 1$}
We further extend our results to general robust \((k,z)\)-clustering under various metrics.
For metric spaces with a finite VC or doubling dimension, this extension is formalized in Theorem~\ref{thm:kzC} (see Appendix~\ref{sec:kzC}).
The resulting coreset size is \( O(m) + \tilde{O}(k \cdot \min\{d_{\VC}, d_D\} \cdot \eps^{-2z}) \).  
Compared to the previous bound of \( O(m) + \tilde{O}(k^2 \cdot \min\{ d_{\VC} \cdot \eps^{-2z-2}, d_D \cdot \eps^{-2z} \}) \) from \cite{huang2025coresets}, our result significantly reduces the dependence on \( k \) from \( k^2 \) to \( k \).  
Notably, our coreset achieves a linear dependence on \( k \), which is optimal.

In Euclidean spaces, we also extend Theorem~\ref{thm_Euc} to general \( z \geq 1 \), as stated in Theorem~\ref{thm_Euc_kzC}. 
The resulting coreset size is \( O(m \eps^{-z}) + \tilde{O}(\min\{k^{\frac{2z+2}{z+2}} \eps^{-2}, k \eps^{-z-2}\}) \), introducing an additive \( O(m \eps^{-z}) \) term to the vanilla size of \cite{huang2024optimal}.  
Compared to the recent bound \( \tilde{O}(\min\{m \eps^{-2z},\; km \eps^{-1}\} + \min\{k^{\frac{2z+2}{z+2}} \eps^{-2}, k \eps^{-z-2}\}) \) from \cite{jiang2025coresets}, our result improves the \( m \eps^{-2z} \) term to \( m \eps^{-z} \).

\subsection{Other related work}
\label{sec:related}

\paragraph{Coresets for constrained \kMedian}
Recently, there has been significant interest in developing coresets for constrained variants of the \kMedian problem. 
For capacitated \kMedian, where centers have capacity constraints, 
a coreset of size $\tilde{O}(k^2\eps^{-3}\log^2 n)$ was first constructed for instances in Euclidean space~\cite{cohen2022fixed}. 
Subsequent research yielded improvements, first to a size of $\tilde{O}(k^3\eps^{-6})$~\cite{braverman2022power}, which notably is independent of both the dataset size and dimension, and more recently to $\tilde{O}(k^2\eps^{-4})$~\cite{huang2025coresets}.
A closely related variant is the fair \kMedian problem.
The first coreset developed for this variant had a size of $\tilde{O}(k\eps^{-d}\log n)$~\cite{Schmidt19}. 
This was followed by coresets of size $\tilde{O}(k^2\eps^{-d})$~\cite{HJV19} and $\tilde{O}(k^2\eps^{-3}d\log n)$~\cite{DBLP:conf/icalp/BandyapadhyayFS21}.
These bounds were later improved to $\tilde{O}(k^3\eps^{-6})$~\cite{braverman2022power} and eventually to $\tilde{O}(k^2\eps^{-4})$~\cite{huang2025coresets}. 
Other variants of constraints have also been studied, including coresets for clustering with missing values~\cite{NEURIPS2021_90fd4f88} and fault-tolerant coresets~\cite{huang2025coresets}.
All these results can be extended to general \kzC.
\eat{
\paragraph{Algorithms for robust $k$-Median}
In the seminal work~\cite{charikar2001algorithms}, the robust \kMedian problem was first introduced and a bi-criteria approximation algorithm was provided. 
Following this, the first constant-factor approximation algorithm was developed~\cite{chen2008constant}.
The approximation ratio was later improved to $7.081 +\eps$~\cite{krishnaswamy2018constant} and then to $6.994+\epsilon$~\cite{gupta2024structural}, which is the current best ratio.
Furthermore, a PTAS was developed for robust \kMedian with respect to $k$~\cite{feng2019improved}. Given that these approximation algorithms typically require at least quadratic time, the construction of efficient coresets becomes particularly important.

In addition, fixed-parameter tractable (FPT) algorithms for robust \kMedian have been gaining increasing attention recently. 
There has been an FPT algorithm which reduces the robust \kMedian problem to the standard \kMedian problem and achieves a running time of $f(k,m,\eps)\cdot n^{O(1)}$~\cite{agrawal2023clustering}.
For capacitated \kMedian with outliers, where both capacity and robust constraints are imposed, the first constant-factor approximation algorithm has been proposed. This algorithm  runs in $f(k,m,\eps)\cdot n^{O(1)}$ time and achieves a $(3+\eps)$-approximation ratio~\cite{dabas2025fpt}.}


\paragraph{Coresets for other problems}
Beyond clustering, coresets have been applied to a wide range of optimization and machine learning problems.
These include low-rank approximation~\cite{cohen2017input}, principal component analysis~\cite{feldman2020turning} and mixture models~\cite{huang2020coresets,lucic2018training}
and various regression tasks~\cite{chhaya2020coresets,MMWY22,WY23:sensitivity,MO24:sensitivity,AP24,gajjar:COLT,LT25}. Coresets have also been used for other robust optimization problems. For example, Wang et al.~\cite{wang2021robust} proposed a framework to compute robust local coresets for continuous-and-bounded learning, and Huang et al.~\cite{huang2022coresets} introduced a unified framework to construct a coreset for general Wasserstein distributionally robust optimization problem.


\subsection{Paper organization}
In Section \ref{sec:pre}, we review fundamental concepts, including VC and doubling dimensions, and formally define the robust clustering problem. Section \ref{sec:overview} provides a high-level technical overview.

Section \ref{sec:three_coresets} introduces three coreset constructions for the key data structure: \( (r,k) \)-instances.  %
Section \ref{sec:proof_thm} presents algorithms that integrate these coreset constructions and provides proofs for Theorems \ref{thm_gen}--\ref{thm_Euc}.  

The analysis of the second coreset construction is fully presented in Section~\ref{sec:three_coresets}, while the first and third coreset constructions involve highly technical arguments, with major details deferred to later sections. Specifically, Sections \ref{sec:VC} and \ref{sec:Doubling} establish the guarantees of the first coreset for $(r,k)$-instances with finite VC dimension or bounded doubling dimension, respectively. Finally, 
Section \ref{sec_euc} analyses the third coreset for $(r,k)$-instances in Euclidean spaces.
\section{Preliminaries}
\label{sec:pre}

Throughout this paper, let $(M,\dist)$ denote a metric space. For notational convenience, for a set $S$, we denote by $\binom{S}{k}$ the set of all $k$-element subsets of a set $S$. For a set $A\subseteq M$, the diameter of $A$ is denoted by $\mathrm{diam}(A) = \sup_{a,b\in A} \dist(a,b)$, and the distance from a point $x$ to $A$ is defined as $\dist(x,A) = \inf_{y\in A} \dist(x,y)$.

\paragraph{Ball range space and VC dimension} For a point $x\in M$ and a radius $r>0$, let $B(x,r)=\{y\in M\mid \dist(x,y)\leq r\}$ denote the ball centered at $x$ with radius $r$. The ball range space of $(M,\dist)$ is the collection of all balls, denoted by $\Balls=\{B(x,r)\mid x\in M,r>0\}$. For a $k$-point set $C\in \binom{M}{k}$, let $B(C,r)=\bigcup_{c\in C} B(c,r)$. The $k$-$\Balls$ range space of $M$ is denoted by $\Balls_k=\{B(C,r)\mid C\in \binom{M}{k},r>0\}$.

A finite set $X\subset M$ is said to be shattered by $\Balls$ if $|X\cap \Balls|=2^{|X|}$, where $X\cap \Balls=\{X\cap B(x,r)\mid B(x,r)\in \Balls\}$.
The VC dimension of $(M,\dist)$ is the maximum size of a subset of $M$ that can be shattered by $\Balls$, or $+\infty$ if no such maximum exists.

\paragraph{Doubling dimension}
We say that a metric space $(M,\dist)$ has doubling dimension at most $t$, if for any ball $B(x,r)$, there exists a set $C \in \binom{M}{q}$ with $q\leq 2^t$ so that the ball $B(x,r) \subseteq B(C,\frac{r}{2})$, i.e., any ball can be covered by at most $2^t$ balls of half radius.

\paragraph{Robust $(k,z)$-clustering} For a dataset $X\subset M$, the vanilla \kzC problem aims to compute a $k$-center set $C\in \binom{M}{k}$ that minimizes
\[
\cost_z(X,C):= \sum_{x\in X} \dist^z(x,C).
\]
This problem captures several variants, including vanilla \kMedian ($z = 1$) and vanilla \kMeans ($z = 2$).

A well-known variant, called \kzC with $m$ outliers, or the robust\kzC problem, is to find $C\in \binom{M}{k}$ that minimizes
\[
\cost^{(m)}_z(X,C):=\min_{Y\subseteq X,|X\setminus Y|\leq m} \cost_z(Y,C).
\] 
Namely, $\cost^{(m)}(X,C)$ aggregates all but the largest $m$ distances to $C$.
Similar to the vanilla case, we call the problem robust $k$-median when $z = 1$ and robust $k$-means when $z=2$.

\paragraph{Weighted Set} Let $(Y,w_Y)$ be a weighted set where each point $y\in Y$ has a weight $w_Y(y)\geq 0$. We use $w(y)$ to denote $w_Y(y)$ if the weighted set is clear in the context. Moreover, for a weighted set $Y$, we denote by $\norm{Y}_0$ the number of points in $Y$ and by $\norm{Y}_1$ the total weight of $Y$.
The $k$-median objective on $Y$ is defined as $$
\cost_z(Y,C):=\sum_{y\in Y} w(y)\dist^z(y,C).
$$

\paragraph{Weighted Outlier} Let $(Y, w_Y)$ and $(Z, w_Z)$ be two weighted sets. We call $Z$  a valid $m$-weighted outlier of $Y$ if (i) $Z\subseteq Y$, (ii) $w_Z(z)\leq w_Y(z)$ for all $z\in Z$, and (iii) $w_Z(Z) = m$. Also denote by $Y\setminus Z$ the weighted set $(Y, w_Y-w_Z)$.

For a weighted set $(Y, w_Y)$, we use $L_Y^{(m)}$ to denote the set of all valid $m$-weighted outliers of $Y$. 
The robust $(k,z)$-clustering objective on $Y$ is defined as 
$$
\cost^{(m)}_z(Y,C):=\min_{Z\in L_Y^{(m)}} \cost_z(Y\setminus Z,C).
$$

\begin{definition}[Robust coresets] \label{def:rbcoreset}
	A weighted subset $D$ of $X$ is an $(\eps,m,\Delta)$-\emph{robust coreset} of $X$ for robust $(k,z)$-clustering if for every $t=0,\dots,m$, and every $C\in \binom{M}{k}$, it holds that
	$$
	|\cost[t]_z(X,C)-\cost[t]_z(D,C)|\leq \eps\cdot\cost[t]_z(X,C)+ \Delta.
	$$
	An $(\eps,m,0)$-robust coreset is also called an $(\eps,m)$-\emph{robust coreset}.
\end{definition}

This robust coreset preserves the clustering cost for every outlier budget $t\le m$ (rather than only for $t=m$), which yields the following mergeability property.

\begin{fact}[Mergeability of Robust Coresets, \cite{huang2022near}]\label{fact_merge}
	Suppose that $X_1\cap X_2=\emptyset$. If $D_1$ is an $(\eps,m,\Delta_1)$-robust coreset for $X_1$ and $D_2$ is an $(\eps,m,\Delta_2)$-robust coreset for $X_2$, then their union $D_1\cup D_2$ forms an $(\eps,m,\Delta_1+\Delta_2)$-robust coreset for $X_1\cup X_2$.
\end{fact}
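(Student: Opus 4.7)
The plan is to verify the robust coreset error bound on $X_1 \cup X_2$ with witness $D_1 \cup D_2$ by exploiting the separability of the robust cost across a disjoint union. The key structural observation is the \emph{split formula}: for any two disjoint weighted sets $Y_1, Y_2$ and any $t \geq 0$,
$$
\cost_z^{(t)}(Y_1 \cup Y_2, C) \;=\; \min_{\substack{t_1, t_2 \geq 0 \\ t_1 + t_2 = t}} \bigl[\cost_z^{(t_1)}(Y_1, C) + \cost_z^{(t_2)}(Y_2, C)\bigr].
$$
This holds because any valid $t$-weighted outlier $Z$ of $Y_1 \cup Y_2$ decomposes as $Z_1 \sqcup Z_2$ with $Z_i \subseteq Y_i$ and $w(Z_1) + w(Z_2) = t$, and conversely any two valid outliers $Z_i$ of $Y_i$ with weights summing to $t$ yield a valid outlier of $Y_1 \cup Y_2$; by disjointness the costs add.

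Fix $t \in \{0, 1, \dots, m\}$ and $C \in \binom{M}{k}$. I will prove the two directions of
$$
\bigl|\cost_z^{(t)}(X_1 \cup X_2, C) - \cost_z^{(t)}(D_1 \cup D_2, C)\bigr| \;\leq\; \eps\,\cost_z^{(t)}(X_1 \cup X_2, C) + \Delta_1 + \Delta_2
$$
separately. For the upper direction, let $(t_1^\ast, t_2^\ast)$ be an optimal split realizing $\cost_z^{(t)}(X_1 \cup X_2, C)$. Since $t_i^\ast \leq t \leq m$, the coreset hypothesis on each $D_i$ gives $\cost_z^{(t_i^\ast)}(D_i, C) \leq (1+\eps)\cost_z^{(t_i^\ast)}(X_i, C) + \Delta_i$. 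Using $(t_1^\ast, t_2^\ast)$ as a (possibly suboptimal) split for $D_1 \cup D_2$ and summing the two coreset inequalities yields the desired upper bound.

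For the lower direction, let $(s_1^\ast, s_2^\ast)$ be an optimal split realizing $\cost_z^{(t)}(D_1 \cup D_2, C)$, again with $s_i^\ast \leq m$. The coreset hypothesis gives $\cost_z^{(s_i^\ast)}(D_i, C) \geq (1-\eps)\cost_z^{(s_i^\ast)}(X_i, C) - \Delta_i$. Summing and using the split formula on $X_1 \cup X_2$ (with $(s_1^\ast, s_2^\ast)$ as a suboptimal split there) produces $\cost_z^{(t)}(D_1 \cup D_2, C) \geq (1-\eps)\cost_z^{(t)}(X_1 \cup X_2, C) - \Delta_1 - \Delta_2$, which rearranges into the lower half of the desired inequality.

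There is no real obstacle: the only substantive ingredient is the split formula, which is a direct consequence of the definition of a valid $m$-weighted outlier restricted to a disjoint union. After that the two directions are symmetric term-by-term applications of the per-part coreset guarantees, and the fact that each part receives at most $t \leq m$ outliers ensures the guarantees are applicable.
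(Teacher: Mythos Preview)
The paper does not give its own proof of this fact; it is stated with a citation to \cite{huang2022near}. Your argument via the split identity
\[
\cost_z^{(t)}(Y_1 \cup Y_2, C) \;=\; \min_{\substack{t_1,t_2 \geq 0\\ t_1+t_2=t}} \bigl[\cost_z^{(t_1)}(Y_1,C) + \cost_z^{(t_2)}(Y_2,C)\bigr]
\]
is the standard one and is correct.

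One small technical point worth making explicit: in the lower direction you invoke the coreset hypothesis on $D_i$ at the value $s_i^\ast$, which need not be an integer since $D_i$ is a weighted set, whereas Definition~\ref{def:rbcoreset} as literally written ranges only over $t=0,\dots,m$. This is harmless in context because the paper itself treats the outlier parameter as a real number elsewhere (e.g., Lemma~\ref{proof_Euc_order} takes $t\in[0,m]$), so the intended reading of the definition covers your use; but it would be cleaner to say so.
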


Given a real number $a \in \mathbb{R}$, we define $a^+ = \max\{a, 0\}$.
We shall repeatedly use a basic fact about the maximum of Gaussian variables.

\begin{fact}[Maximum of Gaussians, {\cite[p79]{LT91}}] \label{fact_max_gauss}
	Let $g_1,\dots,g_n$ be mean-zero Gaussian variables (which are not necessarily independent), then
	$$
	\E \max_i |g_i|\lesssim \sqrt{\log n} \sqrt{\max_i \E g_i^2}.
	$$
\end{fact}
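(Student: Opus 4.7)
The plan is to run the standard moment-generating-function (MGF) argument for suprema of sub-Gaussian variables. Set $\sigma^2 := \max_i \E g_i^2$. The only property I would use is the one-dimensional Gaussian MGF, namely $\E e^{\lambda g_i} = e^{\lambda^2 \E g_i^2 / 2}$ for every $\lambda \in \mathbb{R}$. Since this is a statement about marginals only, the argument requires no joint distributional assumption on $(g_1,\dots,g_n)$, matching exactly what the fact allows.

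First I would apply Jensen's inequality in the form $\exp(\lambda \cdot \E \max_i |g_i|) \le \E \max_i e^{\lambda |g_i|}$ for any $\lambda > 0$, and then bound the maximum of nonnegative quantities by their sum: $\E \max_i e^{\lambda |g_i|} \le \sum_i \E e^{\lambda |g_i|}$. Using $e^{\lambda |g_i|} \le e^{\lambda g_i} + e^{-\lambda g_i}$ and the Gaussian MGF applied to both $g_i$ and $-g_i$, this is at most $2n \cdot e^{\lambda^2 \sigma^2 / 2}$. Taking logarithms yields $\E \max_i |g_i| \le \frac{\log(2n)}{\lambda} + \frac{\lambda \sigma^2}{2}$.

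The last step is to optimize over $\lambda$. Choosing $\lambda = \sqrt{2 \log(2n)}/\sigma$ balances the two terms and gives $\E \max_i |g_i| \le \sigma \sqrt{2 \log(2n)} \lesssim \sqrt{\log n}\cdot \sqrt{\max_i \E g_i^2}$, which is the claimed bound. There is no genuine obstacle here: the entire argument is a textbook Chernoff-type computation. The only conceptual point worth flagging is that replacing $\max$ by $\sum$ inside the exponential is precisely the step that decouples the variables, so no hypothesis on their joint law enters; sharper Gaussian comparison bounds (e.g.\ Sudakov--Fernique) would require additional structure that we do not need for this weak $\sqrt{\log n}$ estimate.
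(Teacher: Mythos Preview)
Your proposal is correct; this is the standard Chernoff/MGF argument and is essentially what one finds in the cited reference. The paper itself does not supply a proof for this fact—it is stated as a \emph{Fact} with a citation to \cite[p79]{LT91}—so there is no paper-specific argument to compare against.
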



\section{Technical overview}
\label{sec:overview}

Our approach is built upon a key conceptual idea: the \emph{\((r,k)\)-instance} (Definition~\ref{def:rk}), a structure consisting of \(k\) balls all having the same radius \(r\).  
This uniform-radius constraint is, to the best of our knowledge, a new contribution to the coreset literature, serving as an architectural foundation for the following three major methodological innovations. 
%

\begin{enumerate}[label=(\roman*),nosep,leftmargin=*]
	\item A new dataset decomposition that partitions the dataset into only \(\poly\!\log (k/\eps)\) pieces, each forming an \((r,k)\)-instance for some radius \(r\).  
    This is a significant reduction from the $\tilde{O}(k)$ pieces required by previous methods~\cite{huang2022near,huang2025coresets} and is  
    %
    essential for eliminating a $k$ factor from the final coreset size.
	
	\item A robust coreset algorithm (Algorithm \ref{alg_coreset}) for an \((r,k)\)-instance, accompanied by a novel analysis that achieves the following two advances simultaneously.
    \begin{itemize}[nosep,leftmargin=*]
        \item We introduce the first application of the chaining technique to the robust setting, overcoming its prior confinement to the vanilla case \cite{bansal2024sensitivity,Cohen25}. Our analysis crucially relies on the uniformity of the radii, which enables control over the placement of outliers. 
        \item We develop a multi-ring range-space argument that extends the prior single-ring analyses~\cite{huang2022near,huang2025coresets} by treating the rings collectively as a single \((r,k)\)-instance. 
    \end{itemize}
    
	\item Two algorithms (Algorithms \ref{alg_coreset2} and \ref{alg_coreset3}) for \((r,k)\)-instances with small radii, which employ new reductions to the vanilla coreset framework. 
	Specifically in Euclidean spaces, our joint reduction across all components of an \((r,k)\)-instance yields improved coreset sizes compared to the component-wise vanilla reduction used in~\cite{jiang2025coresets} (see Remark~\ref{remark:comparison}).

\end{enumerate}

Below, we begin by presenting these novelties in the context of Theorem \ref{thm_gen} for metric spaces with a finite VC dimension \(d_{\VC}\). 
We then discuss how to adapt these results to doubling metrics, highlight the additional technical novelty required for Euclidean metrics, and extend the techniques to general robust \kzC.

\paragraph{Review of prior results}
We begin by reviewing the techniques of \cite{huang2022near}, which obtains the first polynomial-sized robust coresets. 
Given a dataset, the authors first find a constant-factor approximation $C^\ast$ for robust \kMedian and explicitly add the $m$ outliers from this solution to the coreset. 
Then they apply the ring-decomposition framework of \cite{braverman2022power}, which builds upon \cite{Chen09}, to reduce the remaining data points (called inliers) to $\tilde{O}(k^2\eps^{-1})$ \emph{rings}, where points in each ring are approximately equidistant from the center. 
A robust coreset is then constructed for each ring using uniform sampling, and the mergeability of robust coresets (Fact~\ref{fact_merge}) implies a robust coreset for the whole dataset. 
Given a ring $X$ of radius $r$, the analysis of uniform sampling relies on the following integral representation of the cost function:
\[
    \cost[m](X,C)=\int_{0}^{\infty} (|X\setminus B(C,u)|-m)^+ \; du.
\]
Suppose that $D$ is a uniform sample (reweighted by $\frac{|X|}{|D|}$) from $X$, one immediately obtains 
\begin{equation} \label{tech:int}
|\cost[m](X,C)-\cost[m](D,C)|\leq \int_{0}^{\infty} \big|\|D\setminus B(C,u)\|_1-|X\setminus B(C,u)|\big| \; du.
\end{equation} 
Now, if the sample size on each ring is $\tilde{O}(kd_{\VC}\eps^{-2})$, it follows from the classical VC theory that $\|D\setminus B(C,u)\|_1$ approximates $|X\cap B(C,u)|$ up to an additive error $\eps\cdot |X|$ for every $C$ and $u$, i.e., $D$ is an $\eps$-range space approximation of $X$. 
Since $X$ is a ring of diameter $O(r)$, the integrand of (\ref{tech:int}) vanishes outside an interval of length $O(r)$ and, consequently, $D$ is an $(\eps,r,\eps |X| r)$-robust coreset of $X$. 
Applying this to $\tilde{O}(k^2\eps^{-1})$ rings results in a robust coreset of size $m+\tilde{O}(k^2\eps^{-1}\cdot kd_{\VC}\eps^{-2}) = m+\tilde{O}(k^3d_{\VC}\eps^{-3})$, as concluded in~\cite{huang2022near}.
A subsequent work~\cite{huang2025coresets} essentially reduces the number of ``effective rings'' to \( \tilde{O}(k) \) by adaptively selecting the sample size for each ring, achieving a smaller coreset size of \( m + \tilde{O}(k^2 d_{\VC} \eps^{-2}) \).

Our goal is to further reduce the coreset size to \( m + Q(\eps) \), where \( Q(\eps) \) denotes the size of a vanilla \( \eps \)-coreset.  
The state-of-the-art result is \( Q(\eps) = \tilde{O}(k d_{\VC} \eps^{-2}) \) \cite{Cohen25}, improving on \cite{huang2022near} by a factor of \( k \).  
This improvement is achieved by grouping the rings into \( \poly\log (k/\eps) \) collections, each containing at most \( k \) rings, and performing uniform sampling at the group level rather than for individual rings.  
A straightforward idea would be to apply the vanilla coreset construction directly to the union of all rings, i.e.\ the set of all inliers.
However, this fails to guarantee a robust coreset, since points within the rings can become outliers with respect to center sets different from \( C^\ast \). 
Therefore, new technical ideas are required. 

\paragraph{Grouping rings: $(r,k)$-instances}
A natural strategy is to leverage the core principle of vanilla coreset construction: perform uniform sampling across multiple rings and analyse the total error collectively, rather than summing the errors from individual rings.
However, a key obstacle is that the error induced by outliers depends on the chosen center set and can vary by orders of magnitude across different rings, making it difficult to control the total error uniformly over the entire space of center sets.
Addressing this motivates the key definition of an \((r,k)\)-instance (see Definition~\ref{def:rk}), which is essentially a set of points that can be covered by \(k\) balls of equal radius \(r\) (i.e.\ contained in \(B(A, r)\) for some $k$-point set $A$). 
The idea of grouping rings has been instrumental in recent advances towards constructing near-optimal vanilla coresets \cite{cohen2021new,Cohen2022Towards,cohen2022improved,huang2024optimal,Cohen25}.
These approaches typically group rings of varying radii from different clusters, whereas in our \((r,k)\)-instance, all rings must have a uniform radius \(r\).
While such uniformity is unnecessary for vanilla coreset constructions, it is crucial for robust coresets, as it ensures ``comparable'' outlier-induced errors across rings within the same group.

Now we provide a high-level overview of our algorithm, which decomposes the input dataset $X$ into a collection of $(r,k)$-instances (see Lemma \ref{thm_decomp} and Figure \ref{fig:VC_decomp} for illustration).
First, we add $m + \eps^{-1}$ points farthest from $C^\ast$ to the coreset. 
Adding these additional $\eps^{-1}$ points guarantees that all remaining points in $X$ are within a distance of $\eps\cdot\cost[m](X,C^\ast)$ from $C^\ast$. 
Then we collect all points within distance $r_{\inn}:=\frac{\eps}{mk}\cdot \cost[m](X,C^\ast)$ from $C^\ast$, forming an $(r_{\inn},k)$-instance.
Now, all remaining points have distances to $C^\ast$ within the interval $[r_{\inn}, mk r_{\inn}]$, allowing them to be decomposed into $\log (mk)$ layers of $(r,k)$-instances with geometrically increasing radii $r$. 
We then construct a robust coreset for each of these \((r,k)\)-instances (with carefully chosen error guarantees) and merge them into a final robust coreset of \(X\) using Fact \ref{fact_merge}. 
The main technical challenge is to construct robust coresets of size \(\tilde{O}(k d_{\VC} \epsilon^{-2})\) for these \((r,k)\)-instances, i.e., the layers and the inner instance; we illustrate the key ideas for each scenario below.

\paragraph{Handling a layer of $(r,k)$-instance} 
Consider a layer of \((r,k)\)-instance \(X = X_1 \cup \cdots \cup X_k \subset B(A,r)\), where each \(X_i \subset B(a_i,r)\) for some \(a_i \in A\). 
We develop Algorithm~\ref{alg_coreset} to construct an \((\epsilon,m,\epsilon \abs{X} r)\)-robust coreset for \(X\); roughly speaking, we take \(\tilde{O}(k d_{\VC} \epsilon^{-2})\) uniform samples from \(X\) (Theorem \ref{thm_main}). 
Without loss of generality, assume that all \(X_i\) have equal cost \(\cost(X_i,a_i)\). 
This type of instance is also referred to as an \((r,k)\)-regular instance (Definition \ref{def:rbcoreset}); we show in Lemma \ref{lemma_mainring} how to remove this assumption and in Figure \ref{fig:first_coreset} for illustration.

A natural idea is to extend the single-ring analysis (shown in \eqref{tech:int}) to an \((r,k)\)-instance. 
With \(\tilde{O}(k d_{\VC} \epsilon^{-2})\) points in $D$, we can still guarantee that \(\bigl|\|D \setminus B(C,u)\|_1 - |X \setminus B(C,u)|\bigr| \le \epsilon |X|\) for every \(C\) and \(u\). 
However, applying this bound directly would yield an additive error of \(\epsilon k |X|r\) (arising from integrating over \(k\) intervals of length \(O(r)\), one for each of $k$ clusters), which is a $k$-factor larger than our target error of \(\epsilon \abs{X} r\). 
Hence, additional technical insights are needed for a tighter control of the error.

\subparagraph{\em Chaining argument}
The first idea is to borrow insights from the vanilla coreset construction, where the chaining argument is central.
An essential step there is bounding the net size by \(\exp(\tilde{O}(k))\), which counts the number of distinct distance vectors \(\bigl(\dist(x,C)\bigr)_{x \in D}\) for all possible choices of \(C\), under a certain level of discretization. 
In the robust setting, however, a small perturbation to $C$ can drastically change the set of outliers and thus the distance vectors. 
A natural idea is to fix the set of outliers, which then requires taking a union bound over all $\binom{|X|}{m}$ possible sets of $m$ outliers, causing an additional $\exp(m)$ factor in the net size.
This results in a net size far exceeding the desired \(\exp(\tilde{O}(k))\), which will blow up the error bound we aim to control. 

To address this issue, we apply the chaining argument \emph{exclusively} to those clusters whose points are all inliers with respect to \(C\). 
Specifically, we partition both \(X\) and \(D\) into the same three collections of clusters with respect to \(C\): \(A_1\) (clusters of inliers), \(A_2\) (clusters of outliers), and \(A_3\) (clusters whose distances to \(C\) lie within an interval of length \(O(r)\)); see Lemma~\ref{proof_fact} and Figure \ref{fig:subset}.
Such a decomposition is possible because every cluster in an \((r,k)\)-instance has the same radius, and an additional realignment procedure (Lines 4–6 in Algorithm \ref{alg_coreset}) ensures that each cluster \(X_i\) has a total sample weight of  \(|X_i|\) (Definition \ref{def_cap}).
Notably, the realignment procedure ensures that the sample set \(D\) has the same partition as \(X\), a property absent in vanilla coreset constructions.
We then apply the chaining argument solely to \(A_1\), obtaining an \emph{indexed-subset cost approximation} that bounds the error introduced by \(A_1\) (Definition~\ref{def_subset}). 
Since \(A_1\) is a union of clusters and there are at most \(2^k\) such unions, the net size only increases by a factor of \(\exp(k)\), which remains manageable (Lemma~\ref{lemma_subset}).

\subparagraph{\em Range space argument} 
Since \(A_2\) contains only outliers, we may safely ignore it. 
It remains to control the error contributed by \(A_3\). 
Since the distances of all points in \(A_3\) from \(C\) lie within an interval of length \(O(r)\), we anticipate extending the single-ring analysis in \eqref{tech:int} to \(A_3\) and obtaining an additive error of \(O(\epsilon \lvert X\rvert r)\) (Lemma \ref{lemma_event}). 
A key challenge is that \(A_3\) varies with the choice of \(C\), increasing the number of relevant ranges to consider.
To mitigate this, we extend the notion of \emph{range space approximation} to encompass all \(2^k\) possible sets of \(A_3\) (Definition \ref{def_range}). 
While this extension increases the number of ranges by a factor of \(2^k\), we can safely ignore this factor because the original number of ranges is already on the order of \(2^{k d_{\VC}}\).
Consequently, the range space approximation guarantee of the sample set \(D\) remains valid, meaning \(D\) induces at most an \(\epsilon |X|\) error for each of these ranges. 
Now, the integral in \eqref{tech:int} is effectively over the distance interval for \(A_3\), which has length \(O(r)\), and the overall error for \(A_3\) is thus within \(\epsilon |X| \cdot O(r)\) (Lemma \ref{lemma_range}).

\paragraph{Handling inner $(r_{in},k)$-instance} 
Consider the inner \((r_{\inn},k)\)-instance \(X = X_1 \cup \cdots \cup X_k\), where each \(X_i \subseteq B(a_i, r_{\inn})\). 
We shall construct an \((\epsilon, m, mk\,r_{\inn})\)-robust coreset for \(X\) (Lemma \ref{lemma_innergroup}), where the additive error is at most $mk\,r_{\inn} \leq \epsilon \cdot \cost[m](X, C^\ast)$, which is well within acceptable bounds.

We propose Algorithm \ref{alg_coreset2} to achieve this goal. 
For each \(X_i\), we first select an arbitrary subset \(H_i\subseteq X_i\) of \(m\) points and move these points to the center \(a_i\). 
The coreset then consists of two components: a weighted point \((a_i, |H_i|)\) to represent the collection \(H_i\), and a vanilla coreset of the remaining points \(X \setminus \bigl(\bigcup_{i \in [k]} H_i\bigr)\). 
See also Figure \ref{fig:second_coreset} for illustration.
The algorithm’s correctness follows from the fact that the total moving distance between \(X\) and \(D\) is at most \(m k\,r_{\inn} \), which ensures the same level of additive error.

Interestingly, this robust coreset construction for the inner \((r_{\inn},k)\)-instance also leads to another coreset of size \(O(km \epsilon^{-1}) + \tilde{O}(k d_{\VC} \epsilon^{-2})\) for the full dataset (rather than just for the inner instance). 
The algorithm maintains this robust coreset \(D\) along with all points outside the inner \((r_{\inn},k)\)-instance.
Since each outside point \(x\) satisfies \(\dist(x, C^\ast) \geq r_{\inn} = \frac{\epsilon}{mk} \cdot \cost[m](X, C^\ast)\), there can be at most \(km \epsilon^{-1}\) such points, yielding the stated coreset size. 
We note that this matches the size obtained in a recent paper \cite{jiang2025coresets}, although we use different reductions to the vanilla setting. 
Unlike our requirement that each ball in the inner \((r_{\inn},k)\)-instance has a small radius, their approach requires every ball to be \emph{dense}, containing at least \(m \epsilon^{-1}\) points.

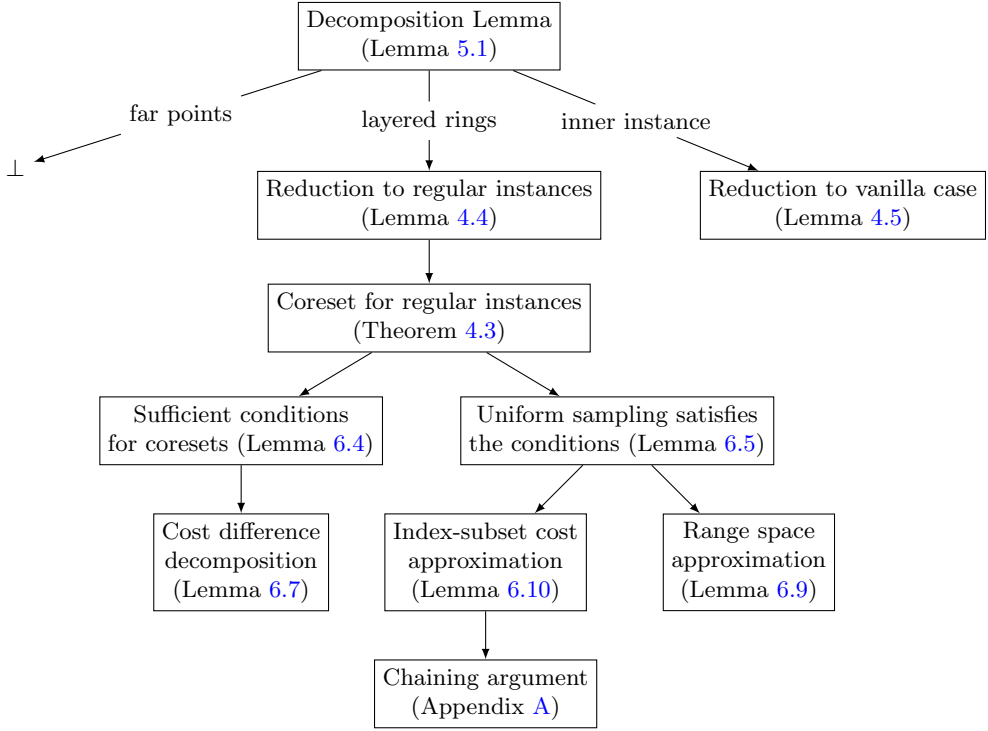
\begin{figure}[t]
\centering
\footnotesize
\begin{tikzpicture}[every text node part/.style={align=center}]
\node[draw] (root) at (2,3.75) {Decomposition Lemma\\ (\Cref{thm_decomp})};
\node[draw] (ringroot) at (2,1.5) {Reduction to regular instances\\ (\Cref{lemma_mainring})};
\node[draw] (innerroot) at (7.5,1.5) {Reduction to vanilla case\\ (\Cref{lemma_innergroup})};
\node[draw] (ringregular) at (2,0) {Coreset for regular instances\\ (\Cref{thm_main})};
\node[draw] (ringsufficient) at (-0.5,-1.5) {Sufficient conditions\\ for coresets (\Cref{lemma_main_tech})};
\node[draw] (costdiff) at (-0.5,-3.25) {Cost difference\\ decomposition \\ (\Cref{lemma_inq})};
\node[draw] (sufficient) at (4.5,-1.5) {Uniform sampling satisfies\\ the conditions (\Cref{alg_guarantee})};
\node[draw] (indexsubset) at (2.75,-3.25) {Index-subset cost \\ approximation \\ (\Cref{lemma_subset})};
\node[draw] (chainingroot) at (2.75,-5) {Chaining argument \\ (\Cref{sec_proof_subset_appendix})};
\node[draw] (rangespace) at (6.25,-3.25) {Range space \\ approximation \\ (\Cref{lemma_range})};
\node (far) at (-3.5,2) {$\perp$};
\node (v6) at ($(root)!.6!(far)$) {far points};
\node (v7) at ($(root)!.5!(ringroot)$) {layered rings};
\node (v8) at ($(root)!.5!(innerroot)$) {inner instance};
\draw [-latex] (root) -- (v7) -- (ringroot);
\draw [-latex] (ringroot) -- (ringregular);
\draw [-latex] (ringregular) -- (ringsufficient);
\draw [-latex] (ringregular) -- (sufficient);
\draw [-latex] (root) -- (v8) -- (innerroot);
\draw [-latex] (root) -- (v6) -- (far);
\draw [-latex] (sufficient) -- (indexsubset);
\draw [-latex] (sufficient) -- (rangespace);
\draw [-latex] (indexsubset) -- (chainingroot);
\draw [-latex] (ringsufficient) -- (costdiff);
\end{tikzpicture}
\caption{Structural diagram of the proof for VC instances (Theorem \ref{thm_gen})}\label{fig:flowchart}
\end{figure}
This concludes our proof overview for the VC case. A structural diagram is shown in Figure~\ref{fig:flowchart}. %

\paragraph{Adapting to doubling metrics}
The analysis for the doubling metric in Theorem \ref{thm_doubling} is almost identical to that for the VC case.
We employ the same point set decomposition into $(r,k)$-instances (Lemma \ref{thm_decomp}) and the same algorithms (Algorithms \ref{alg_coreset} and \ref{alg_coreset2}) for constructing robust coresets,
with the only difference being the substitution of $d_D$ for $d_{\VC}$ in the coreset size.
%
The key difference is that, in a doubling space, the VC dimension of the set of balls can be unbounded \cite{huang2018epsilon}, preventing a direct application of the same range space argument used in the VC case. 
To overcome this, we adopt the technique from \cite{huang2018epsilon}, which shows that by (randomly) distorting the distance function \(\dist\) slightly (Definition \ref{def:eps_smooth}), one can bound a ``probabilistic'' version of the VC dimension by \(\tilde{O}(d_D)\) (Lemma \ref{lemma:randomdist}). 
This property enables a distorted variant of the range space argument (Lemma \ref{lemma:eps_range}), and we further show that the resulting error for \(A_3\) under this distortion remains well-controlled (Lemma \ref{lemma:diffrerence:doubling}).

\paragraph{Euclidean spaces}
As discussed above, our algorithm 
for the inner $(r_{\inn}, k)$-instance (Algorithm~\ref{alg_coreset2}) also produces a Euclidean coreset of size $O(km \eps^{-1}) + Q(\eps)$, where $Q(\eps)$ now denotes the size of the vanilla $\eps$-coreset in Euclidean space.
Note that the total number of moved points in Algorithm~\ref{alg_coreset2} is $m k$, while the number of outliers is only $m$, wasting a $ k$ factor. 
This observation motivates Algorithm~\ref{alg_coreset3}, which further reduces the Euclidean coreset size to $O(m \eps^{-1}) + Q(\eps)$.
The key idea is to increase the radius threshold $r_{\inn}$ from $\frac{\epsilon}{mk} \cdot \cost[m](X, C^\ast)$ to $\frac{\epsilon}{m} \cdot \cost[m](X, C^\ast)$.
See Figure \ref{fig:euc_decomp} for illustration.

In the analysis of this algorithm, a major challenge arises when the inner \((r_{\inn},k)\)-instance contains points that are outliers with respect to some center set \(C\). 
As noted earlier, a na\"ive adaptation of the vanilla chaining argument can blow up the net size by a factor of \(\exp(m)\). 
To address this, we propose a novel decomposition of the robust clustering cost by strategically reassigning outliers across clusters, ensuring that only one cluster contains both inliers and outliers, while all other clusters contain exclusively inliers or exclusively outliers (Lemma~\ref{lemma_Euc_tech}).
This decomposition relies on the uniform radius assumption, which guarantees that moving each outlier introduces only a small additive error $O(r_{\inn})$ to the overall clustering cost.
Consequently, the net size in the chaining argument grows by at most a factor of \(2^k\), which remains manageable as before.

\paragraph{Extension to general $z\geq 1$}
To extend our approach to general robust \kzC, the key idea is to use the generalized triangle inequality (Lemma \ref{lm:triangle}). 
In both the VC and doubling cases, we further require the sample set \(D\) to be an \(\epsilon^z\)-range space approximation, rather than \(\epsilon\) as in the robust \kMedian\ case. This change results in an \(\epsilon^{-2z}\) term in the coreset size (Theorem \ref{thm:kzC}). 
We note that if the range space argument could be further refined, it would be possible to achieve the desired coreset size of \(m + Q(\eps)\) for any \(z \ge 1\) (Remark \ref{remark:open}), which remains an intriguing open problem.
\section{Three Coreset Constructions for $(r,k)$-Instances}
\label{sec:three_coresets}

A common approach to constructing a robust coreset for a point set is to decompose the set into distinct parts with useful geometric properties and then construct a coreset for each part separately.  
This paper follows this strategy by introducing the notion of \( (r,k) \)-instances and \( (r,k) \)-regular instances.  
%

\begin{definition}[$(r,k)$-instance] \label{def:rk}
A dataset $X\subset M$ is called an $(r,k)$-instance if
$X\subset B(A,r)$ for some $A\in \binom{M}{k}$.
\end{definition}

Intuitively, an \( (r,k) \)-instance \( X \) consists of \( k \) clusters, each contained within a ball of radius \( r \).  
As a result, we have \( \cost(X,A) \leq |X| \cdot r \).  

We further define a specific subclass of \( (r,k) \)-instances below.

\begin{definition}[$(r,k)$-regular instance] A point set $X=X_1\cup\cdots\cup X_{k}\subset M$ is called an $(r,k)$-\emph{regular instance} if 
	\begin{enumerate}[label=(\roman*),itemsep=0pt]
		\item $\forall i,j\in [k]$, $i\neq j$, $X_i\cap X_j=\emptyset$;
		\item $\forall i\in [k]$, $X_i\subseteq B(a_i,r)$ for some center $a_i\in M$;
		\item $\forall i,j\in [k], i\neq j$, $\abs{X_i} \leq 2\abs{X_j}$.
	\end{enumerate} 
\end{definition}

An \( (r,k) \)-regular instance further imposes the condition that the number of points in each cluster \( X_i \) is of the same order.  
Specifically, we have \( \frac{|X|}{2k} \leq |X_i| \leq \frac{2|X|}{k} \) for all \( i \in [k] \).  
We note that the prior work \cite{cohen2021new,Cohen25} introduced a similar structure called a \emph{group}, which also consists of \( k \) clusters.
The key distinction between our data structure and a group is that we enforce strict alignment on the radius of different clusters, a property that is crucial for our analysis.

In the three subsections below, we shall give three different coreset constructions for $(r,k)$-instances. 

\subsection{First coreset construction}
\label{sec_mainring}

The first coreset construction applies to both VC-dimension-based and doubling-dimension-based instances.  
As a preliminary step, we present a robust coreset construction for \( (r,k_0) \)-regular instances (\( k_0 \leq k \)) in Algorithm~\ref{alg_coreset}, with its theoretical guarantee stated in Theorem~\ref{thm_main}.
Algorithm~\ref{alg_coreset} takes a uniform sample $U$ from $X$ (Line 2) and then constructs a coreset $D$ by assigning weights to points in $U$ (Lines 3-7).

\begin{algorithm}[t]
	\caption{$\mathrm{Coreset1}(X,r,k_0,k,\eps, m)$}
	\label{alg_coreset}
	\begin{algorithmic}[1]
		\REQUIRE An $(r,k_0)$-regular instance $X=X_1\cup\cdots\cup X_k$, parameters $\eps$, $m$
		\ENSURE Coreset $D$
		\STATE $d\leftarrow \min\{d_{\mathrm{VC}}, d_D\}$, $s \gets \frac{kd}{\eps^2}\cdot \log^{O(1)} (\frac{kd}{\epsilon})\cdot \log^3 m$
		\STATE $U\gets $ a uniform sample with replacement of size $s$ from $X$, with each sampled point weighted by $\frac{n}{s}$
		\STATE $D\gets \emptyset$
		\FOR{$i=1,\dots,k_0$}
		\STATE $u_i\gets \norm{U\cap X_i}_1$
		\STATE for each $x\in U\cap X_i$, add $x$ to $D$ with weight $w_D(x)=\frac{n_i}{u_i}\cdot \frac{n}{s}$
		\ENDFOR
		\RETURN $D$
	\end{algorithmic}
\end{algorithm}

\begin{theorem}[VC and doubling $(r,k_0)$-regular instances]
\label{thm_main} 
Let $(M,\dist)$ be a metric space with VC dimension $d_{\mathrm{VC}}$ and doubling dimension $d_D$.
Let $d = \min\{d_{\mathrm{VC}}, d_D\}$.
Assume that $X\subseteq M$ is an $(r,k_0)$-regular instance with $|X|=n$.
With probability $1-O\big(\frac{1}{\log(k/\eps)\cdot \log (mk)}\big)$, Algorithm~\ref{alg_coreset} returns an $(\eps,m,\eps n r)$-robust coreset of $X$ for robust \kMedian with size $kd\eps^{-2}\cdot \log^{O(1)}(kd\eps^{-1})\cdot \log^3 m$.
\end{theorem}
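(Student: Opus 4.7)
The plan is a two-step proof: first, isolate a deterministic condition $\mathsf{Good}(D)$ on $D$ whose fulfillment implies the $(\eps,m,\eps nr)$-robust coreset property, and second, show that a uniform sample of the size stipulated by Algorithm~\ref{alg_coreset} yields $\mathsf{Good}(D)$ with the claimed probability. The realignment in Lines~4--6 is essential to the first step: it enforces $\|D\cap X_i\|_1=|X_i|$ for every cluster, so that the inlier/outlier partition of $D$ at the cluster level mirrors that of $X$. The condition $\mathsf{Good}(D)$ consists of an \emph{indexed-subset cost approximation}
\[
\bigl|\cost(D\cap X_I,C)-\cost(X_I,C)\bigr|\le \eps\cdot\cost(X_I,C)+\tfrac{\eps nr}{2}\qquad\forall I\subseteq[k_0],\ \forall C\in\tbinom{M}{k},
\]
and an \emph{extended range-space approximation}
\[
\bigl|\|D\cap X_I\setminus B(C,u)\|_1-|X_I\setminus B(C,u)|\bigr|\le \eps n\qquad \forall I\subseteq[k_0],\ \forall C,\ \forall u\ge 0,
\]
where $X_I=\bigcup_{i\in I}X_i$.

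To show that $\mathsf{Good}(D)$ implies the coreset guarantee, I would fix $C\in\binom{M}{k}$ and $t\le m$, let $\tau$ be the $t$-th largest distance from $X$ to $C$, and partition $[k_0]=I_1\cup I_2\cup I_3$ where $I_1,I_2,I_3$ are respectively the all-inlier, all-outlier, and border clusters. Since $X_i\subseteq B(a_i,r)$, every point of any border cluster has distance to $C$ within an interval of length $4r$ around $\tau$. Expanding both costs via the integral representation $\cost^{(t)}(X,C)=\int_0^\infty(|X\setminus B(C,u)|-t)^+du$ and using the cluster-level weight matching, one checks that the $I_1$-contribution is controlled by the indexed-subset approximation, the $I_2$-contribution cancels (all outliers on both sides), and the $I_3$-contribution becomes the difference of two integrals supported on an interval of length $O(r)$, which the range-space approximation bounds by $\eps n\cdot O(r)=O(\eps nr)$.

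Next, I would verify the probabilistic claim. The range-space approximation follows from classical $\eps$-approximation theory applied to the extended family $\{X_I\setminus B(C,u):I\subseteq[k_0],C,u\}$ whose effective VC dimension is $O(kd)$ (the $2^{k_0}$ factor from $I$ adds only $k_0\le k$ to the dimension); for the doubling case, the distance-perturbation device from Lemma~\ref{lemma:randomdist} is invoked to control the relevant range complexity. The indexed-subset cost approximation is obtained through a Dudley-type chaining argument on the empirical process $(C,I)\mapsto \cost(D\cap X_I,C)-\cost(X_I,C)$; the covering numbers of the class $\{x\mapsto\dist(x,C)\mathbf{1}[x\in X_I]\}$ inherit a factor $2^{k_0}\le 2^k$ from the choice of $I$, which is already absorbed in the $\tilde O(kd)$ metric-entropy term.

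The principal obstacle will be the chaining step for the indexed-subset approximation: in the robust setting a small change in $C$ can reshuffle outliers and in a na\"ive union bound would blow the net size by $\exp(m)$. The crux is that the uniform-radius structure of an $(r,k_0)$-regular instance lets us outsource all outlier-sensitive clusters to $I_3$ and handle them via the range-space argument, while the chaining is only ever invoked on entirely-inlier unions $X_{I_1}$; the oscillation of each function $\dist(\cdot,C)\mathbf{1}[\cdot\in X_I]$ within a cluster is at most $2r$, which is precisely what yields the additive $\eps nr$ error and caps the sample size at $\tilde O(kd\eps^{-2})$.
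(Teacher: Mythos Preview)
Your proposal is correct and follows essentially the same approach as the paper: the paper likewise isolates the three conditions (capacity-respecting via the realignment, $\eps$-range space approximation over all $I\subseteq[k_0]$, and $\eps$-indexed-subset cost approximation), proves they suffice by the same $A_1/A_2/A_3$ cluster partition and integral representation (Lemmata~\ref{lemma_main_tech} and~\ref{lemma_inq}), and then establishes each property probabilistically via VC/symmetrization for the range-space part and a chaining argument for the indexed-subset part, with the smoothed-distance device for the doubling case. Your identification of the ``principal obstacle'' and its resolution---pushing all outlier-sensitive clusters into $I_3$ so that chaining only runs over cluster-unions and incurs a harmless $2^{k_0}$ net-size factor---is exactly the paper's key insight.
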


Theorem~\ref{thm_main} is a major technical result and its proof is deferred to Section~\ref{sec:VC}. 
Based on the theorem, we can now prove our first robust coreset construction for $(r,k)$-instances by splitting an $(r,k)$-instance into $\log \frac{k}{\eps}$  regular instances. An illustration is given in Figure~\ref{fig:first_coreset}.

\begin{figure}[t]
    \centering
    \includegraphics[width=0.8\linewidth]{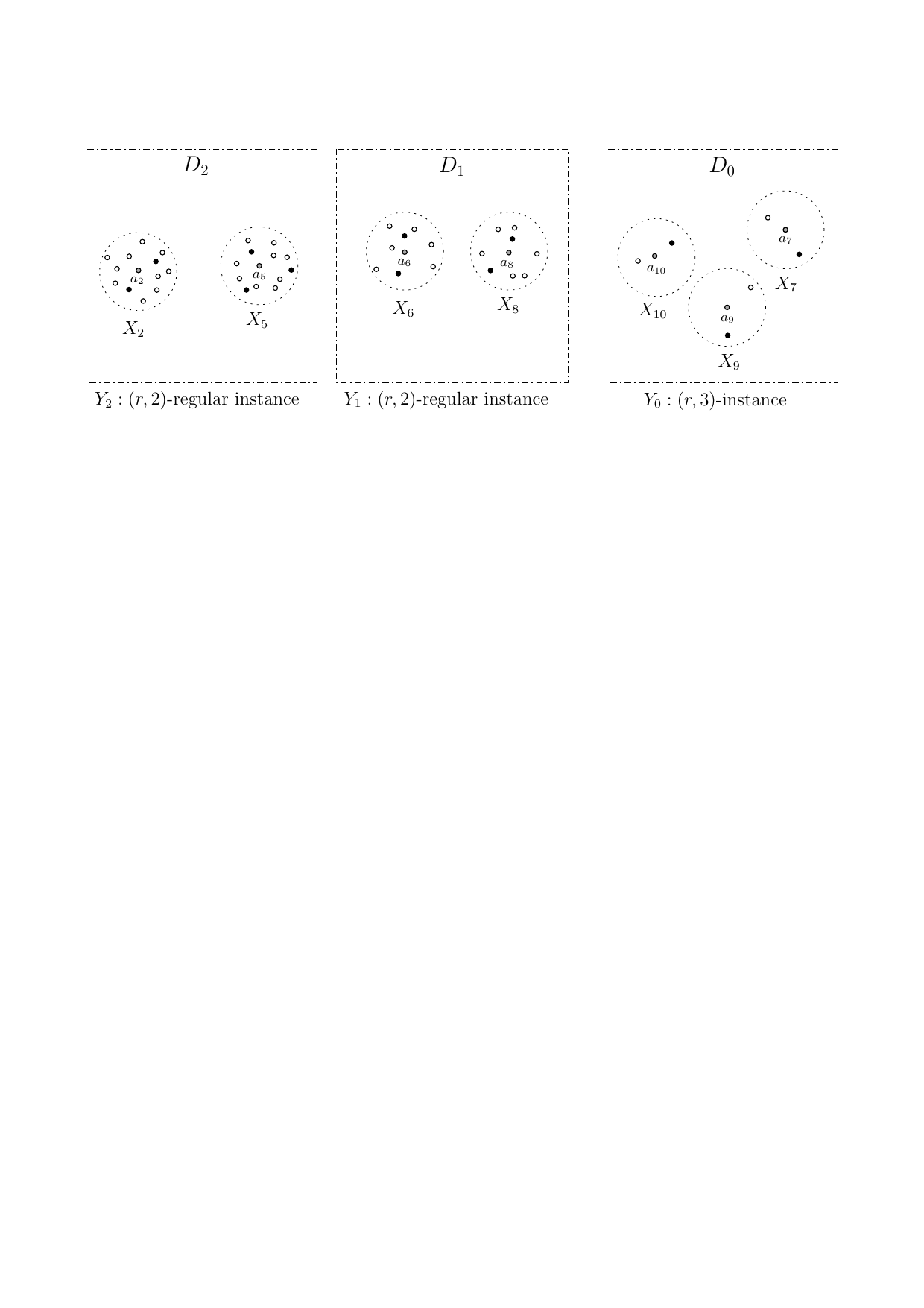}
    \vspace{-0.3cm}
    \caption{An illustrated example of the first coreset construction as per Lemma~\ref{lemma_mainring}. Black dots represent coreset points. Clusters are grouped based on their sizes.}
    \label{fig:first_coreset}
\end{figure}

\begin{lemma}[VC and doubling $(r,k)$-instances] 
\label{lemma_mainring}
Let $(M,\dist)$ be a metric space with VC dimension $d_{\mathrm{VC}}$ and doubling dimension $d_D$.
Let $d = \min\{d_{\mathrm{VC}}, d_D\}$.
There is a randomized algorithm which, given an $(r,k)$-instance $X$, computes with probability at least $1- O(\frac{1}{\log (mk)})
	$  an $(\eps,m,\eps r |X|)$-robust coreset with size $\tilde{O}(kd\eps^{-2})\cdot \log^3 m$ of $X$ for robust \kMedian.
\end{lemma}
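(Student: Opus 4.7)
The plan is to decompose $X = X_1 \cup \cdots \cup X_k$ (with $X_i \subseteq B(a_i,r)$) into $O(\log(k/\eps))$ disjoint pieces, each of which is an $(r,k_0)$-regular instance for some $k_0 \leq k$, apply Algorithm~\ref{alg_coreset} to each piece via Theorem~\ref{thm_main}, and merge via Fact~\ref{fact_merge}. A naive geometric size-bucketing of the $k$ clusters could produce up to $\log|X|$ buckets, which is too many; the key trick is to first peel off the ``small'' clusters and represent each by a single weighted point, so that only a bounded range of cluster sizes remains to be bucketed.

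Concretely, set $T = \eps|X|/k$ and call a cluster $X_i$ small if $|X_i| \leq T$ and large otherwise. For each small $X_i$, I would pick an arbitrary representative $y_i \in X_i$ and place $(y_i, |X_i|)$ into the coreset. Writing $X_{\mathrm{small}}$ for the union of small clusters and $D_{\mathrm{small}}$ for the resulting weighted set of at most $k$ points, a standard moving-style argument shows that $D_{\mathrm{small}}$ is a $(0, m, 2\eps r|X|)$-robust coreset of $X_{\mathrm{small}}$: for any center set $C$ and any $t \leq m$, the map $x \in X_i \mapsto y_i$ turns any $t$-weighted outlier set of $X_{\mathrm{small}}$ into a valid $t$-weighted outlier set of $D_{\mathrm{small}}$ (the induced weight at each $y_i$ never exceeds $|X_i|$), and the triangle inequality with $\dist(x,y_i)\leq 2r$ gives a total additive error of at most $2r|X_{\mathrm{small}}| \leq 2kTr = 2\eps r|X|$, uniformly in $C$ and $t$.

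For the large clusters, I would bucket geometrically: for $j = 0, 1, \ldots$, set $Y_j := \bigcup\{X_i : |X_i| \in [2^jT, 2^{j+1}T)\}$. Because $T < |X_i| \leq |X|$ for every large cluster, the number of nonempty buckets is at most $\lceil \log(|X|/T)\rceil = O(\log(k/\eps))$, and within each bucket the cluster sizes differ by a factor at most $2$, so $Y_j$ is an $(r, k_j)$-regular instance with $k_j \leq k$. Applying Algorithm~\ref{alg_coreset} to each $Y_j$ with parameter $\eps$, Theorem~\ref{thm_main} produces an $(\eps, m, \eps r|Y_j|)$-robust coreset of $Y_j$ of size $\tilde O(kd\eps^{-2})\log^3 m$, with failure probability $O(1/(\log(k/\eps)\log(mk)))$.

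Merging all pieces via Fact~\ref{fact_merge} gives an $(\eps, m, \Delta)$-robust coreset of $X$ with $\Delta \leq 2\eps r|X| + \sum_j \eps r|Y_j| \leq 3\eps r|X|$, which is absorbed by rescaling $\eps$ by a constant. The total size is $O(k) + O(\log(k/\eps)) \cdot \tilde O(kd\eps^{-2}) \log^3 m = \tilde O(kd\eps^{-2}) \log^3 m$, and a union bound over the $O(\log(k/\eps))$ applications of Theorem~\ref{thm_main} yields overall failure probability $O(1/\log(mk))$. The only nontrivial step is verifying the moving-style guarantee for $D_{\mathrm{small}}$ under the weighted-outlier definition of robust cost; once this is in place, everything else is bucketing and bookkeeping.
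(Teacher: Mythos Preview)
Your proposal is correct and essentially identical to the paper's proof: the paper uses the same threshold $T=\eps |X|/k$, collapses each small cluster to a single weighted representative (obtaining a $(0,m,2\eps r|X|)$-robust coreset for $Y_0$ via the same triangle-inequality argument you sketch), buckets the remaining clusters into $\lceil\log(k/\eps)\rceil$ regular instances, applies Theorem~\ref{thm_main} to each, and merges via Fact~\ref{fact_merge} with the same size and failure-probability accounting.
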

\begin{proof}
	Let $X=X_1\cup\cdots\cup X_k$ be an $(r,k)$-instance where $X_i\subset B(a_i,r)$ for some $a_i\in M$, and suppose that $|X|=n$. We let $P_0=\{i\in [k] \mid 0 < |X_i|\leq \frac{\eps n}{k}\}$ and $P_j=\{i\in [k]\mid 2^{j-1}\cdot \frac{\eps n}{k}<|X_i|\leq 2^j\cdot \frac{\eps n}{k}\}$ for $j=1,\dots,\lceil\log \frac{k}{\eps}\rceil$. Observe that $P_j$'s form a partition of $[k]$.
	
	Let $Y_j=\bigcup_{i\in P_j} X_i$ for each $j=0,\dots,\lceil\log \frac{k}{\eps}\rceil$.
	
	Now we construct our coreset $D$. Observe that $Y_j$ is an $(r,|P_j|)$-regular instance for each $j\geq 1$. Applying Theorem~\ref{thm_main} to $Y_j$ yields an $(\eps,m,\eps r |Y_j|)$-robust coreset $D_j$ for robust \kMedian with probability at least $1-O\big(\frac{1}{\log \frac{k}{\eps}\cdot \log (mk)}\big)$, with $|D_j| = \tilde{O}(\frac{kd}{\eps^2})\cdot \log^3 m$. We include all these coresets $D_j$ in $D$.
	
	Next, we handle $Y_0$. For each $i\in P_0$, select an arbitrary point $x_i\in X_i$ and define $D_0=\{(x_i,|X_i|)\mid i\in P_0 \}$. By the triangle inequality, we know that $D_0$ is an $(\eps,m,2\eps n r)$-coreset of $Y_0$. We add $D_0$ to $D$. 
	
	By the mergeability of robust coresets, we conclude that $D$ is an $(\eps, m, 3\eps n r)$-robust coreset of $X$ for robust \kMedian. Furthermore, the size of $D$ is
	$$
	\norm{D}_0 = k+ \left\lceil\log \frac{k}{\eps}\right\rceil\cdot \tilde{O}\left(\frac{kd}{\eps^{2}}\right)\cdot \log^3 m=\tilde{O}\left(\frac{kd}{\eps^{2}}\right)\cdot \log^3 m.
	$$
	The overall failure probability is at most $\lceil\log\frac{k}{\eps}\rceil\cdot O\big(\frac{1}{\log \frac{k}{\eps}\cdot \log (mk)}\big) = O(\frac{1}{\log (mk)})$. Rescaling $\eps$ completes the proof.
\end{proof}

\subsection{Second coreset construction}
\label{sec_innergroup}

Suppose that $X$ is an $(r,k)$-instance and $X\subset B(A,r)$ for some $A=\{a_1,\dots,a_k\}\in \binom{M}{k}$. 
Let $X_i=\{x\in X\mid i=\mathrm{argmin}_{j\in [k]} \dist(x,a_j)\}$, where ties in $\mathrm{argmin}$ are broken arbitrarily. 
We present in Algorithm~\ref{alg_coreset2} our second coreset construction, which includes sufficient many points $H$ as possible outlier surrogates (Lines 1--6) and a vanilla $\eps$-coreset $D'$ on the remaining points (Lines 7--8). An illustration is given in Figure~\ref{fig:second_coreset}.

\begin{figure}
    \centering
    \includegraphics[width=0.8\linewidth]{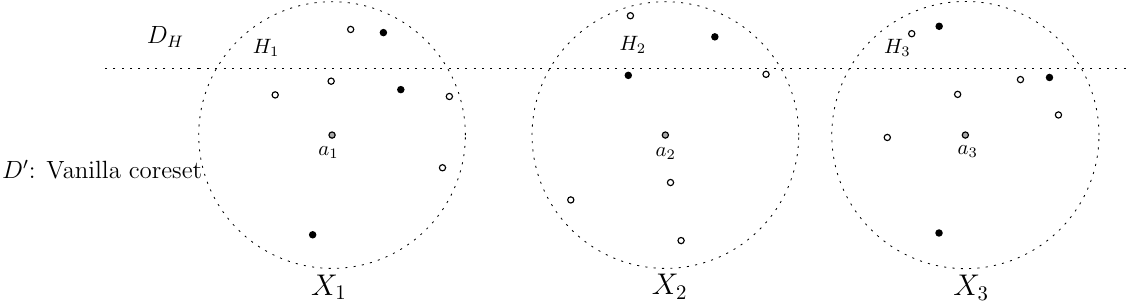}
    \vspace*{-0.3cm}
    \caption{An illustrated example of the second coreset construction as per Lemma~\ref{lemma_innergroup}. Black dots represent coreset points.}
    \label{fig:second_coreset}
\end{figure}

\begin{algorithm}[t]
	\caption{$\mathrm{Coreset}2(X,r,k,\eps, m)$}
	\label{alg_coreset2}
	\begin{algorithmic}[1]
		\REQUIRE An $(r,k)$-instance $X=X_1\cup\cdots\cup X_k$, parameters $\eps$, $m$, and a vanilla $\eps$-coreset oracle $\mathcal{B}$
		\ENSURE Coreset $D$
		\FOR{$i=1,\dots,k$}
		\STATE
		$H_i\leftarrow$ a subset of $\min(m,|X_i|)$ arbitrary points from $X_i$
		\STATE 
		$x_i\leftarrow$ an arbitrary point from $H_i$
		\STATE add $x_i$ with weight $w_D(x_i)=|H_i|$ to $D$
		\ENDFOR
		\STATE
		$H\leftarrow \bigcup_{i=1}^k H_i$
		\STATE
		$X'\leftarrow X\setminus H$
		\STATE
		$D'\leftarrow \mathcal{B}(X',\eps)$
		\STATE
		add $D'$ to $D$
		\RETURN $D$
	\end{algorithmic}
\end{algorithm}

\begin{lemma}[Reduction from vanilla coreset to robust coreset] 
\label{lemma_innergroup} Suppose there exists a randomized algorithm $\mathcal{B}$ that on every \kMedian instance, with probability at least $0.9$, computes a vanilla $\eps$-coreset with size $Q(\eps)$.
Given an $(r,k)$-instance $X$, with probability at least $0.9$, Algorithm~\ref{alg_coreset2} computes an $(\eps,m,mkr)$-robust coreset with size $k+Q(\eps)$ of $X$ for robust \kMedian.
\end{lemma}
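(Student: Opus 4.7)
The plan is to introduce the intermediate weighted set $\tilde X := D_1 \cup X'$, where $D_1 := \{(x_i,|H_i|) : i \in [k]\}$ serves as a weighted representative of $H$, and to compare $\cost^{(t)}(X,C)$ and $\cost^{(t)}(D,C)$ via the chain $X \to \tilde X \to D_1 \cup D' = D$ for every $t \le m$ and every $C \in \binom{M}{k}$. The uniform-radius property of the $(r,k)$-instance ($X_i \subseteq B(a_i,r)$) is used twice: once to collapse each $H_i$ onto its representative $x_i$, and once to route any outlier mass sitting outside $D_1$ back into $D_1$.

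For the first leg, I will prove $|\cost^{(t)}(X,C) - \cost^{(t)}(\tilde X, C)| \le 2mkr$. Given an optimal $t$-outlier set $L^*$ for $X$, transform it into a weighted $t$-outlier of $\tilde X$ by replacing each point of $L^* \cap H_i$ by one unit of weight on $x_i$. This is valid because $|L^*\cap H_i| \le |H_i|$, and since $H_i \cup \{x_i\} \subseteq B(a_i,r)$, each substitution changes the cost by at most $2r$; summing over clusters yields the slack $2r \sum_i |H_i| \le 2mkr$. The reverse direction is symmetric.

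For the second leg, I establish the two localization identities
\[
|\cost^{(t)}(\tilde X,C) - \cost^{(t)}(D_1,C) - \cost(X',C)| \le 2mr \quad\text{and}\quad |\cost^{(t)}(D,C) - \cost^{(t)}(D_1,C) - \cost(D',C)| \le 2mr,
\]
both saying that outliers may, without loss of generality, be taken entirely from $D_1$. The upper bound in each identity is immediate: any optimizer of $\cost^{(t)}(D_1,C)$ is already a valid $t$-outlier of the larger weighted set. For the lower bound, take an optimal decomposition $Z_1^* \cup Z_2^*$ (with $Z_1^*$ supported on $D_1$ and $Z_2^*$ supported on $X'$ or $D'$) and shift every unit of weight on $y \in Z_2^*$ onto the representative $x_{j(y)}$, where $X_{j(y)}$ is the cluster containing $y$. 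Since $y, x_{j(y)} \in B(a_{j(y)},r)$, the swap costs at most $2r$ per unit, for a total of $\le 2rt \le 2mr$. The capacity check $w(\text{outliers on }x_i) \le |H_i|$ is where the specific choice $|H_i| = \min(m,|X_i|)$ becomes essential: if $|X_i| \ge m$ then $|H_i| = m$ and the total outlier mass on $x_i$ after the swap is at most $t \le m$; if $|X_i| < m$ then $H_i = X_i$ forces $X_i' = \emptyset$, and the inclusion $D' \subseteq X'$ further forces $D' \cap X_i = \emptyset$, so nothing is routed to $x_i$ at all.

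Combining the collapse bound, the two localization identities, and the vanilla guarantee $|\cost(X',C) - \cost(D',C)| \le \eps \cost(X',C)$ (which holds with probability at least $0.9$ over $\mathcal B$) via the triangle inequality yields
\[
|\cost^{(t)}(X,C) - \cost^{(t)}(D,C)| \le 2mkr + 4mr + \eps \cost(X',C).
\]
Applying the same swap argument directly to $X = H \cup X'$ also gives $\cost(X',C) \le \cost^{(t)}(X,C) + 2mr$, so the multiplicative term becomes $\eps \cost^{(t)}(X,C) + 2\eps mr$ and the overall error is $\eps \cost^{(t)}(X,C) + O(mkr)$; rescaling $\eps$ by a universal constant recovers the claimed $(\eps,m,mkr)$-robust coreset. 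The size is $|D_1| + |D'| = k + Q(\eps)$, and the success probability is inherited from $\mathcal B$. The main obstacle is the capacity bookkeeping in the swap step, which is exactly the reason Algorithm~\ref{alg_coreset2} sets $|H_i| = \min(m,|X_i|)$.
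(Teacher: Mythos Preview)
Your proof is correct and uses essentially the same idea as the paper: swap outliers within each cluster (at cost at most $2r$ per unit, since everything in $X_i$ lies in $B(a_i,r)$), use $|H_i|=\min(m,|X_i|)$ to guarantee enough capacity in $D_1$ for the swapped mass, and then invoke the vanilla coreset guarantee on $X'$. The paper carries out the swap in a single direct comparison of $\cost^{(t)}(D,C)$ against $\cost^{(t)}(X,C)$ (building per-cluster outlier counts $t_i$ and parallel substitutes $J,E$ inside $H$ and $D_H$), whereas you factor through the intermediate set $\tilde X=D_1\cup X'$ and two ``localization'' identities; this makes your argument a bit more modular, but the underlying mechanism and the resulting $O(mkr)$ additive error are identical.
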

\begin{proof} 
	
Let $D_H=\{(x_i,|H_i|)\mid i\in [k]\}$ denote the set of $k$ weighted points added into $D$ in the for loop. 
We prove that $D=D_H\cup D'$ is an $(\eps,m,O(mkr))$-robust coreset of $X=H\cup X'$.
	

Let $J^\ast$ denote the set of the outlier points in $\cost[t](X, C)$.
For each $i\in [k]$, let $t_i = |J^\ast \cap X_i|$. We then have $t_i \leq \min\{\abs{X_i}, m\} = \abs{H_i}$. Define $E^\ast_i$ as the set of outlier points in $\cost[t_i](D\cap X_i, C)$. Since $\norm{D\cap X_i}_1\geq \abs{H_i}\geq t_i$, it must hold that $\norm{E^\ast_i}_1 = t_i$. Now, let $E^\ast = \bigcup_{i\in [k]} E_i^\ast$. Noting that $t = \sum_{i\in [k]} t_i$, we obtain 
\[
\cost(D\setminus E^\ast, C) = \sum_{i\in [k]} \cost[t_i](D\cap X_i, C) \geq \cost[t](D, C).
\]
Since $t_i \leq \abs{H_i}$, we can find a subset $J'\subseteq H\setminus J^\ast$ such that $\abs{J'\cap H_i} = \abs{(J^\ast\setminus H)\cap X_i}$ for all $i\in [k]$. Define $J = (J^\ast\cap H)\cup J'$. 
Note that $|J\cap H_i| = |J^\ast\cap X_i| = t_i$ for all $i\in [k]$.
Similarly, define $E = (E^\ast \cap D_H)\cup E'$, where $E'\subseteq D_H\setminus E^\ast$ is chosen such that $\norm{E'\cap \{x_i\}}_1 = \norm{E_i^\ast\setminus D_H}_1$. Thus, we have $\norm{E\cap \{x_i\}}_1 = \norm{E^\ast_i}_1 = t_i$.
It follows that
	\begin{align*}
		&\quad\ \cost[t](D,C) - \cost[t](X,C) \\
		&\leq \cost(D\setminus E^\ast,C) - \cost(X\setminus J^\ast,C) \\
		&\leq \cost(D\setminus E,C) - \cost(X\setminus J,C)  + \abs{ \cost(E,C) - \cost(E^\ast,C) } + \abs{ \cost(J,C) - \cost(J^\ast,C) }\\
		&=  \cost(D\setminus E,C) - \cost(X\setminus J,C)  + \abs{ \cost(E',C) - \cost(E^\ast\setminus D_H,C) } \\
		&\hspace*{8cm} + \abs{ \cost(J',C) - \cost(J^\ast\setminus H,C) } \\
		&\leq \abs{ \cost(D\setminus E,C) - \cost(X\setminus J,C) } + 2mr + 2mr\\
		&= \abs{\cost(D_H\setminus E, C) - \cost(H\setminus J,C)} + \abs{\cost(D', C) - \cost(X',C)} + 4mr.
	\end{align*}
	We shall bound the first two terms separately.
	
	For the first term, by the triangle inequality,
	\begin{align*}
	\abs{\cost(D_H\setminus E, C) - \cost(H\setminus J, C)}
	\leq & \ \sum_{i\in [k]} \abs{(|H_i|-t_i)\dist(x_i, C) - \cost(H_i\setminus J, C)} \\
    \leq & \ \sum_{i\in [k]} \sum_{x\in H_i\setminus J} \dist(x,x_i)
	\leq 2mkr.
	\end{align*}
	where the last inequality follows from the fact that $\sum_i \abs{H_i} \leq mk$ due to the construction of $H$.
	
	For the second term, since $D'$ is a vanilla $\eps$-coreset of $X'$, we have
	\begin{align*}
		\abs{\cost(D', C) - \cost(X',C)} 
		&\leq \eps \cost(X', C) \\
		&\leq \eps \cost(X\setminus J, C) \\
		&\leq \eps \cost(X\setminus J^\ast, C) + \eps\abs{ \cost(J, C) - \cost(J^\ast, C) } \\
		&\leq \eps \cost(X\setminus J^\ast, C) + 2\eps m r \\
		&=\eps \cost[t](X,C) + 2\eps m r.
	\end{align*}
	Therefore,
    \begin{equation}\label{eqn:second_coreset_bound_1}
	\begin{aligned}
		\cost[t](D,C) - \cost[t](X,C)  
		&\leq \eps \cost^{(t)}(X, C) + (2mk+4m+2\eps m) r\\
		&\leq \eps\cost^{(t)}(X,C)+8mkr.
	\end{aligned}
    \end{equation}
    By a similar argument, we can also obtain that
    \[
        \cost[t](X,C) - \cost[t](D,C) \leq \eps\cost^{(t)}(D,C) + 8mkr,
    \]
    which implies that
    \begin{equation}\label{eqn:second_coreset_bound_2}
        \cost[t](X,C) - \cost[t](D,C) \leq \frac{\eps}{1+\eps}\cost^{(t)}(X,C) + \frac{8mkr}{1+\eps}.
    \end{equation}
	Combining \eqref{eqn:second_coreset_bound_1} and \eqref{eqn:second_coreset_bound_2}, we see that $D$ is an $(\eps,m,8mkr)$-robust coreset of $X$.
\end{proof}

\subsection{Third coreset construction}

Let $X$ be an $(r,k)$-instance in $\mathbb{R}^d$ and $X\subset B(A,r)$. Let $A=\{a_1,\dots,a_k\}$ and $X_i=\{x\in X\mid i = \argmin_{j\in [k]}\|x-a_j\|_2\}$, where the ties are broken arbitrarily, so $X_i\subset B(a_i,r)$.
For each $x\in X$, define $\pi(x)$ to be the unique index $i\in [k]$ such that $x\in X_i$. For each $i\in [k]$, let $\Delta_i := \frac{\cost(X_i,A)}{|X_i|}$ denote the average cost of $X_i$.

Our third coreset construction is for Euclidean spaces, given in Algorithm~\ref{alg_coreset3}. 
It is a modification of an importance sampling algorithm proposed by \cite{bansal2024sensitivity}. 
The only change occurs in Line~\ref{alg3_scale}, where we scale the weights to ensure that $\norm{D\cap X_i}_1 = \abs{X_i}$ for every $i\in [k]$ (this is called capacity-respecting, see Definition~\ref{def_cap}).

\begin{figure}[t]
    \centering
    \includegraphics[width=0.8\linewidth]{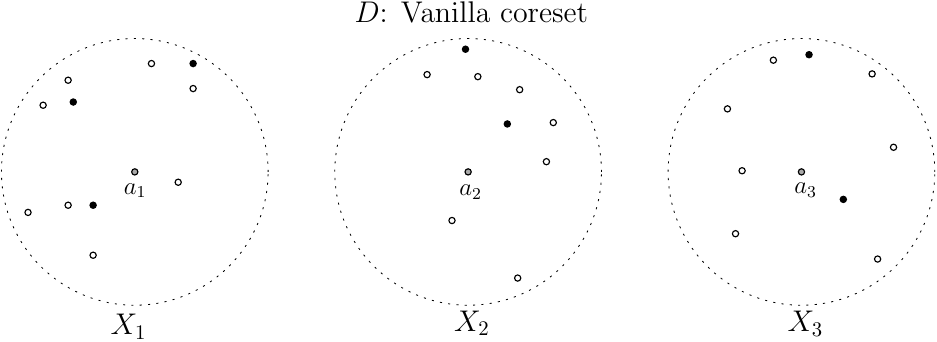}
    \vspace*{-0.3cm}
    \caption{An illustrated example of the third coreset construction as per Lemma~\ref{lemma_Euc}. The construction of $D$ includes an additional capacity-respecting step (Lines 6-10 in Algorithm \ref{alg_coreset3}) compared to prior vanilla construction, which is essential for handling robust clustering cost.}
    \label{fig:third_coreset}
\end{figure}

\begin{algorithm}[t]
	\caption{$\mathrm{Coreset}3(X,A,r,k,\eps)$}
	\label{alg_coreset3}
	\begin{algorithmic}[1]
		\REQUIRE An $(r,k)$-instance $X$, $A\in \binom{\mathbb{R}^d}{k}$,  and $\eps$.
		\ENSURE Coreset $D$
		\STATE $s\gets \min(k^{4/3} \eps^{-2}, k \eps^{-3})\cdot \log^{O(1)}(k/\eps)$ \label{alg_coreset3_size}
		\FOR{$x\in X$}
		\STATE $p(x)\gets \frac{1}{4}\big(\frac{\cost(x,A)+\Delta_{\pi(x)}}{\cost(X,A)}+\frac{1}{k|X_{\pi(x)}|}+\frac{\cost(x,A)}{k\cost(X_{\pi(x)},A)}\big)$
		\ENDFOR
		\STATE
		Draw with replacement a sample $U$ of $s$ points from $X$ where $x\in X$ is drawn w.p. $p(x)$, and weighted by $w_U(x)=\frac{1}{s\cdot p(x)}$ if drawn
		\FOR{$i=1,\dots,k$}
		\STATE $u_i\gets \|U\cap X_i\|_1$
		\ENDFOR
		\FOR{$x\in U$}
		\STATE add $x$ into $D$ with weight $w_D(x)=\frac{|X_{\pi(x)}|}{u_{\pi(x)}}\cdot w_U(x)$ 
		\label{alg3_scale}
		\ENDFOR
		\RETURN $D$
	\end{algorithmic}
\end{algorithm}

\begin{lemma}[Euclidean $(r,k)$-instances] 
\label{lemma_Euc}
    Given an $(r,k)$-instance $X$ in $\mathbb{R}^d$ such that $X\subset B(A,r)$ with $A\in \binom{\mathbb{R}^d}{k}$, Algorithm~\ref{alg_coreset3} computes in $O(nkd)$ time, with probability at least $0.9$, an $(\eps,m,6mr+\eps\cost(X,A))$-robust coreset of $X$, with size $\tilde{O}(\min(k^{4/3} \eps^{-2},k \eps^{-3}))$. 
\end{lemma}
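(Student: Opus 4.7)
The plan is to combine a chaining-based vanilla guarantee for $D$ with a robust-to-vanilla reduction enabled by the capacity-respecting property $\|D \cap X_i\|_1 = |X_i|$ enforced in Line~\ref{alg3_scale}. For the reduction, fix any $C$ and $0 \leq t \leq m$, and let $J^* \subseteq X$ be an optimal set of $t$ outliers for $\cost[t](X, C)$. Following the outlier-reassignment idea described in the technical overview, I would first move each outlier within its own cluster to obtain a set $J'$ with at most one mixed cluster $X_{i^*}$ (containing both points in $J'$ and points outside $J'$), while every other cluster is either entirely in $J'$ or entirely outside $J'$ (Lemma~\ref{lemma_Euc_tech}). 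Since each move stays inside $B(a_i, r)$, the cost changes by at most $O(mr)$. Then, by the capacity-respecting property, I can pick a weighted outlier set $E' \subseteq D$ with $\|E' \cap X_i\|_1 = |J' \cap X_i|$ for every $i$; the triangle inequality applied cluster by cluster yields $|\cost(J', C) - \cost(E', C)| \leq 2mr$.

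The core task is to prove the vanilla-like bound $|\cost(X \setminus J', C) - \cost(D \setminus E', C)| \leq \eps \cost[t](X, C) + \eps \cost(X, A)$ for every $C$ with high probability. For this, I plan to invoke the sensitivity-sampling framework of~\cite{bansal2024sensitivity}: the distribution $p$ in Algorithm~\ref{alg_coreset3} consists of precisely a global sensitivity term, a per-cluster uniform term, and a per-cluster cost term that together control the variance in their chaining analysis, and the sample size $s = \tilde{O}(\min(k^{4/3}\eps^{-2}, k\eps^{-3}))$ matches their vanilla guarantee. The reassignment step is critical: after reassignment, each relevant $(J', E')$ pair is indexed by the $\leq 2^k$ choices of inlier/outlier clusters together with the $\leq k$ choices of mixed cluster index, so the chaining net is enlarged by only a $k \cdot 2^k$ factor, which is absorbed by the $\exp(\tilde{O}(k))$ net already present in the vanilla analysis. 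The capacity-respecting rescaling behaves as a variance-reduction modification: the rescaled per-cluster contribution can be written as $\sum_{x \in U \cap X_i} w_U(x)[\dist(x,C) - \cost(X_i,C)/|X_i|]$, a mean-zero sum whose variance is controlled by the chosen $p(x)$. Combining the reassignment, the matching outliers, and the vanilla-like bound, and applying the argument in both directions, yields the claimed $(\eps, m, 6mr + \eps \cost(X, A))$-robust coreset; the running time $O(nkd)$ is dominated by computing $\cost(x, A)$ and $\Delta_i$ for all $x \in X$.

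The main obstacle is the chaining step: propagating the capacity-respecting rescaling through the chaining argument of \cite{bansal2024sensitivity} while simultaneously handling the enlarged $(J', E')$ net induced by the reassignment. The rescaling factors $|X_i|/u_i$ are data-dependent but center-independent; I expect to verify their multiplicative closeness to $1$ by splitting clusters into heavy ones (where a multiplicative Chernoff bound gives $1 \pm O(\eps)$) and light ones (where the triangle inequality $\dist(x, C) \leq r + \dist(a_{\pi(x)}, C)$ together with $X_i \subseteq B(a_i, r)$ allows the residual error to be charged to the additive $\eps \cost(X, A)$ term). Carrying out this bookkeeping uniformly over both the centers $C$ and the reassigned outlier configurations, while simultaneously controlling the two-sided cost comparison, is the most delicate part of the proof.
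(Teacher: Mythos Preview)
Your high-level plan—reduce the robust cost to a combination of whole-cluster costs via outlier reassignment, then prove a vanilla-type bound uniformly over cluster-index subsets using the \cite{bansal2024sensitivity} chaining with a net enlarged by $2^{O(k)}$—is exactly the paper's route (Lemmata~\ref{lem_euc_coreset_guarantee} and~\ref{lemma_Euc_tech}). However, your reassignment step as written does not work: moving an outlier ``within its own cluster'' cannot change how many outliers each cluster holds, so you cannot go from an arbitrary optimal $J^\ast$ (which may leave many clusters mixed) to a $J'$ with a single mixed cluster by such moves; and your justification ``each move stays in $B(a_i,r)$, so cost changes by $O(mr)$'' is tied to precisely this wrong picture. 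The paper's Lemma~\ref{proof_Euc_order} instead sorts clusters by $\dist(a_i,C)$ and shows that the $\beta$-th largest outlier distance is at most the average cost of the cluster into which $\beta$ falls under this ordering, plus $2r$; integrating over $\beta\in[0,t]$ yields
\[
\cost^{(t)}(X,C)=\alpha_t^C\,\cost(X_{j_t^C},C)+\cost(X_{I_t^C},C)\pm 2mr
\]
for a scalar $\alpha_t^C\in[0,1]$, and the identical identity for $D$ using only capacity-respecting. This is where the uniform radius is actually used, and it is not a within-cluster argument.

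Once you have this fractional decomposition, the mixed cluster disappears as a separate object: all you need is the \emph{strong indexed-subset cost approximation} (Definition~\ref{def_strong_subset}), namely $|\cost(U_J,C)-\cost(X_J,C)|\le\eps(\cost(X_J,C)+\cost(X,A))$ for every $J\subseteq[k]$, applied to $J=I_t^C$ and to the singleton $J=\{j_t^C\}$. Your plan to bound $|\cost(X\setminus J',C)-\cost(D\setminus E',C)|$ directly would still have to deal with the specific points retained in the mixed cluster, which your proposed $k\cdot 2^k$ net does not index; the fractional coefficient is what lets the chaining run purely over the $2^k$ cluster-index subsets. For the rescaling, the paper takes the simpler path of first proving the inequality for the unrescaled sample $U$ (Lemma~\ref{lemma_Euc_U_subset}) and then transferring to $D$ via $w_D(x)\in(1\pm\eps)w_U(x)$ under the high-probability event $u_i\in(1\pm\eps)|X_i|$ for all $i$ (Lemmata~\ref{lm:event_prob} and~\ref{lemma_Euc_UD}); your heavy/light cluster split is unnecessary because the importance-sampling distribution already contains the $\frac{1}{k|X_{\pi(x)}|}$ term that guarantees this concentration uniformly.
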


This lemma reduces the use of a vanilla coreset to a robust coreset for Euclidean instances.
Its proof is highly technical and postponed to Section~\ref{sec_euc}.

\section{Proofs of Main Theorems: Point Set Decompositions}
\label{sec:proof_thm}

We prove Theorems \ref{thm_gen}, \ref{thm_doubling} and \ref{thm_Euc} for $z = 1$ via three coreset constructions in Section \ref{sec:three_coresets}.

\subsection{Proof of Theorems \ref{thm_gen} and \ref{thm_doubling}: VC instances and doubling instances}

Let $(M,\dist)$ be a metric space with VC dimension $d_{\mathrm{VC}}$ and doubling dimension $d_D$.
Let $d = \min\{d_\VC, d_D\}$.
The following lemma provides a decomposition of a dataset $X\subseteq M$, which is essential for coreset construction. An illustration is given in Figure~\ref{fig:VC_decomp}.

\begin{figure}
    \centering
    \includegraphics[width=\linewidth]{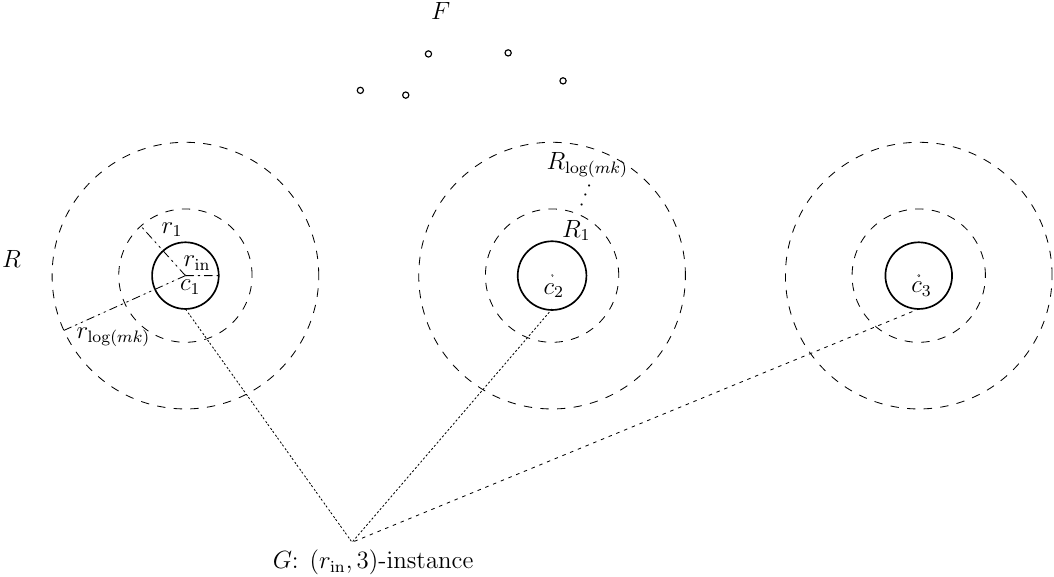}
    \vspace*{-0.7cm}
    \caption{An illustrated example of the point set decomposition for VC and doubling instances in Lemma~\ref{thm_decomp}. Here, 
    $r_{\inn} = \eps/(mk)\cdot \cost^{(m)}(X,C^\ast)$.}
    \label{fig:VC_decomp}
\end{figure}

\begin{lemma}[First point set decomposition] 
\label{thm_decomp}
	There is an $O(nk)$ time algorithm that, given any constant-factor approximation $C^\ast$ for robust $k$-median on $X$, decomposes $X$ as
	\[
		X=F\cup R\cup G,
	\]
	where
	\begin{itemize}[itemsep=0pt,topsep=0pt]
	\item $F$ is a finite subset of $X$ with $|F| = O(m+\eps^{-1})$, 
	\item $R$ is the union $R=R_1\cup\cdots\cup R_l$ with $l=O(\log (mk))$, each $R_i$ being an $(r_i,k)$-instance satisfying $\sum_{i\in [l]} r_i|R_i| = O(\cost^{(m)}(X,C^\ast))$, 
	\item $G$ is an $(r_{\inn},k)$-instance, where $r_{\inn} = \eps/(mk)\cdot \cost^{(m)}(X,C^\ast)$.
	\end{itemize}
\end{lemma}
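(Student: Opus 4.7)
The plan is to perform a distance-based decomposition around the given constant-factor approximation $C^\ast$. First I would compute $\dist(x, C^\ast)$ for every $x\in X$ (which costs $O(nk)$ time) and sort the points by this distance. Let $F$ consist of the $m+\lceil\eps^{-1}\rceil$ points of $X$ with the largest distances to $C^\ast$. Writing $\OPT := \cost^{(m)}(X, C^\ast)$ and arranging the remaining points as $x_{m+\lceil\eps^{-1}\rceil+1}, x_{m+\lceil\eps^{-1}\rceil+2}, \dots$ in decreasing order of distance, a Markov-type averaging argument gives the key uniform bound: since the $\eps^{-1}$ points $x_{m+1},\dots,x_{m+\lceil\eps^{-1}\rceil}$ are all inliers with respect to $C^\ast$, their distance sum is at most $\OPT$, so the smallest of them has distance at most $\eps\OPT$, and hence every $x\in X\setminus F$ satisfies $\dist(x, C^\ast)\leq \eps\OPT$.

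Next I would carve out the inner piece by setting
\[
G = \{x\in X\setminus F : \dist(x, C^\ast)\leq r_{\inn}\},
\]
which sits inside $B(C^\ast, r_{\inn})=\bigcup_{c\in C^\ast} B(c, r_{\inn})$, and hence is an $(r_{\inn},k)$-instance by Definition~\ref{def:rk}. The leftover points $X\setminus(F\cup G)$ then lie in the annular range of distances $(r_{\inn}, \eps\OPT]$, whose ratio is exactly $mk$. I would split this range into $l = \lceil\log_2(mk)\rceil$ geometric layers by defining, for $j=1,\dots,l$,
\[
R_j = \bigl\{x\in X\setminus(F\cup G)\;:\;2^{j-1}r_{\inn} < \dist(x, C^\ast)\leq 2^j r_{\inn}\bigr\},
\]
and setting $r_j = 2^j r_{\inn}$. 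Since $R_j\subseteq B(C^\ast, r_j)=\bigcup_{c\in C^\ast} B(c, r_j)$, each $R_j$ is an $(r_j,k)$-instance.

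It remains to bound $\sum_j r_j|R_j|$. For any $x\in R_j$ we have $\dist(x, C^\ast)>2^{j-1}r_{\inn} = r_j/2$, so $r_j|R_j|\leq 2\sum_{x\in R_j}\dist(x,C^\ast)$. Summing over $j$ yields
\[
\sum_{j=1}^{l} r_j|R_j|\;\leq\;2\sum_{x\in X\setminus(F\cup G)}\dist(x, C^\ast)\;\leq\;2\cost(X\setminus F, C^\ast)\;\leq\;2\OPT,
\]
where the last step uses that $F$ contains the top $m$ distances and hence $X\setminus F$ is a subset of the inlier set whose cost equals $\OPT$. Together with $F$, $G$, $\{R_j\}_{j=1}^l$ partitioning $X$, this establishes the three asserted properties; the runtime is $O(nk)$ since it is dominated by the distance evaluations, with sorting and layer assignment contributing only $O(n\log n)$.

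I do not anticipate a serious obstacle here since the argument is a standard Chen-style ring decomposition adapted to the robust setting; the only point that requires care is the inclusion of the extra $\eps^{-1}$ points in $F$, which is precisely what permits the uniform distance bound $\eps\OPT$ on the complement and thereby caps the number of geometric layers at $O(\log(mk))$ rather than something depending on $n$.
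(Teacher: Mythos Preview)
Your proposal is correct and essentially mirrors the paper's proof: both take $F$ to be the $m+\lceil\eps^{-1}\rceil$ farthest points, deduce the $\eps\OPT$ distance cap on $X\setminus F$ by averaging over the $\lceil\eps^{-1}\rceil$ extra inliers, and then cut the annulus $(r_{\inn},\eps\OPT]$ into $O(\log(mk))$ geometric rings (the paper parameterizes the rings via the average cost of $Y=X\setminus F$ and keeps only the outermost $O(\log mk)$ of them, but this is cosmetic). One small fix on the running time: sorting is $O(n\log n)$, which is not $O(nk)$ when $k<\log n$; replace the sort by a linear-time selection to find the $F$ cutoff and compute each point's layer index directly as $\lceil\log_2(\dist(x,C^\ast)/r_{\inn})\rceil$ to keep the total at $O(nk)$.
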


\begin{proof}
	Let $L$ be the set of the $m+\lceil \eps^{-1} \rceil$ furthest points to $C^\ast$ in $X$ and define $Y=X\setminus L$. Let $r = \cost(Y,C^\ast)/|Y|$ denote the average cost of $Y$. For each $i\in[k]$, let $Y_i = \{y\in Y\mid i=\mathrm{argmin}_{i\in [k]} \dist(y,c_i^\ast)\}$ denote the set of inliers whose closest center is $c_i^\ast$, where ties in $\mathrm{argmin}$ are broken arbitrarily. Thus, $Y_1,\dots,Y_k$ form a partition of $Y$.
	
	\emph{Balls} around $C^\ast$ are defined as $B_i=\{y\in Y\mid \dist(y,C^\ast)\leq 2^i\cdot\eps r\}$. We then define the \emph{rings} such that $R_0=B_0$ and $R_i=B_i\setminus B_{i-1}$ for $i=1,2,\dots,T$, where $T$ is the largest integer such that $R_T\neq\emptyset$. This implies that $T\leq \log(|Y|/\eps)$ and $R_0,\dots,R_T$ form a partition of $Y$. 
	
	We are ready to define the decomposition. Specifically, we let
	
	\begin{itemize}[itemsep=0pt,topsep=0pt]
		\item $F = L$;
		\item $R = \bigcup_{j=T-s+1}^T R_j$ for $s=\min\{\lceil 1 + \log (mk) \rceil,T\}$;
		\item $G = \bigcup_{j=0}^{T-s} R_j$. 
	\end{itemize}
	
	We have $|F|= O(m+\eps^{-1})$ by definition. For each $j\in \{T-s+1,\dots,T\}$, $R_j$ is contained in the ball $B_j$. So $R_j$ is an $(r_j ,k)$-instance with $r_j=2^j \cdot \eps r$. So we have,
	$$
	\sum_{j=T-s+1}^T r_j|R_j|=\sum_{j=T-s+1}^T 2^j |R_j| \eps r\leq \sum_{j=T-s+1}^T 2\cost(R_j,C^\ast)\leq 2\cost(Y,C^\ast)\leq 2\cost^{(m)}(X,C^\ast).
	$$
	
	It remains to prove that $G$ is an $(\frac{\eps}{mk}\cdot \cost^{(m)}(X,C^\ast),k)$-instance. Let $G_i=G\cap Y_i$ for each $i\in [k]$. $G_i$'s are disjoint since $Y_i$'s are disjoint. It suffices to prove $\mathrm{diam}(G_i)\leq \frac{\eps}{mk}\cdot \cost^{(m)}(X,C^\ast)$ for each $i\in [k]$.
	
	Recall that $X=Y\cup L$ and $|L|=m+\lceil \eps^{-1}\rceil$, meaning that $\cost[m](X,C^\ast)$ aggregates at least $\lceil \eps^{-1}\rceil$ points outside the ball $B_T$. This implies that
	\[
	\eps\cdot \cost[m](X,C^\ast)\geq  \frac{\cost[m](X,C^\ast)}{\lceil \eps^{-1}\rceil}\geq 2^T\cdot \eps r.
	\]
	Hence, for each $i\in [k]$,
	\[
	\diam(G_i) \leq 2\cdot 2^{T-s}\cdot \eps r\leq 2\cdot \frac{1}{2mk}\cdot 2^T\cdot \eps r\leq \frac{\eps}{mk}\cdot \cost[m](X,C^\ast). \qedhere
	\]

\end{proof}

We are ready to prove Theorems \ref{thm_gen} and \ref{thm_doubling}.

\begin{proof}[Proof of Theorems \ref{thm_gen} and \ref{thm_doubling}]
We first find a constant-factor approximation $C^\ast$ of $X$, i.e.,\linebreak $\cost^{(m)}(X,C^\ast) \lesssim \min_{C\in \binom{M}{k}} \cost^{(m)}(X,C)$.\footnote{To improve the runtime, we can instead construct a tri-criteria approximation solution $C^\ast$ in near-linear time~\cite{bhaskara2019greedy}, which relaxes the outlier and center number constraints by constant factors. This construction only affects the coreset size by a constant factor, and hence, we usually assume $C^\ast$ to be a constant-factor approximation for simplicity. A more detailed discussion can be found in, e.g., Appendix A in \cite{huang2025coresets}.}
Apply Lemma~\ref{thm_decomp} to decompose $X$ into $X=F\cup R\cup G$, where $R=R_1\cup\cdots\cup R_l$ for $l=O(\log mk)$ and each $R_i$ is an $(r_i,k)$-instance, and $G$ is an $(\frac{\eps}{km}\cost^{(m)}(X,C^\ast),k)$-instance. We construct our coreset $D$ as follows.
	
	\begin{itemize}[itemsep=0pt]
		\item We add $F$ identically into $D$, i.e.\ each $x\in F$ is added into $D$ with unit weight.
		\item For each $R_i$ ($i\in [l]$), we apply Lemma~\ref{lemma_mainring} to construct an $(\eps,m,\eps r_i |R_i|)$-robust coreset $D_i$ of $R_i$. We add all $D_i$'s into $D$.
		\item For $G$, we remark that $G$ is an $(\frac{\eps}{km}\cost^{(m)}(X,C^\ast),k)$-instance, so we can apply Lemma~\ref{lemma_innergroup} to construct an $(\eps,m,\eps\cdot \cost^{(m)}(X,C^\ast))$-coreset $D_{\inn}$ of $G$. We add $D_{\inn}$ into $G$.
	\end{itemize}
	
	We first bound the size of $D$. Note that $\|D\|_0=\|F\|_0+\sum_{i=1}^l \|R_i\|_0+\|D_{\inn}\|_0$. We have $\|F\|_0 = O(m+\eps^{-1})$ by Lemma~\ref{thm_decomp}, $\|R_i\|_0 = \tilde{O}(kd\eps^{-2})\cdot \log^3 m$ by Lemma~\ref{lemma_mainring}, and $\|D_{\inn}\|_0\leq k+Q(\eps)$. By applying the state-of-the-art vanilla coreset \cite{Cohen25}, we have $Q(\eps) = \tilde{O}(kd\eps^{-2})$. Thus we have $\|D\|_0 = O(m)+\tilde{O}(kd\eps^{-2})\cdot \log^4 m = O(m) + \tilde{O}(kd\eps^{-2})$.\footnote{Here we use the fact that $B\log^4 A = B(\log B + \log\frac{A}{B})^4\lesssim B\log^4 B + A$ for $A,B\geq 1$.}
	
	Now we claim $D$ is an $(O(\eps),m)$-coreset of $X$ and the proof will be complete by rescaling $\eps$. Recall that 
	\[
	D=F\cup \left(\bigcup_{i=1}^l D_i\right)\cup D_{\inn},
	\]
	where $F$ is an $(\eps,m,0)$-robust coreset for $F$, each $D_i$ is an $(\eps,m,\eps r_i |R_i|)$-robust coreset for $R_i$ and $D_{\inn}$ is an $(\eps,m,\eps\cdot \cost^{(m)}(X,C^\ast))$-robust coreset. 
	Since $\sum_{i=1}^l r_i|R_i|= O(\cost^{(m)}(X,C^\ast))$,
	the mergeability of robust coresets (Fact~\ref{fact_merge}) implies that $D$ is an $(\eps,m,O(\eps\cdot \cost^{(m)}(X,C^\ast)))$-robust coreset of $X$. Recall that $\cost^{(m)}(X,C^\ast)\lesssim \cost^{(m)}(X,C)$ for every $C\in \binom{M}{k}$. Therefore, $D$ is also an $(O(\eps),m)$-robust coreset of $X$.
	
	By a union bound, we see that the failure probability is at most a constant.
\end{proof}

\subsection{Proof of Theorem \ref{thm_Euc}: Euclidean instances}
\label{proof_Euc}

For preparation, we have the following decomposition for every Euclidean instance. An illustration is given in Figure~\ref{fig:euc_decomp}.

\begin{figure} 
    \centering
    \includegraphics[width=0.8\linewidth]{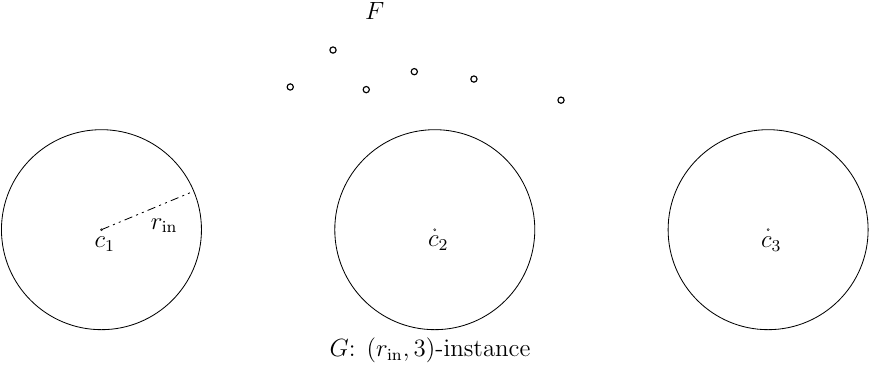}
    \vspace*{-0.3cm}
    \caption{An illustrated example of the decomposition for Euclidean instances in Lemma~\ref{thm_euc_decomp}. Note that $r_{\inn} = (\eps/m) \cost[m](X,C^\ast)$ in this figure, which is larger than that in Figure \ref{fig:VC_decomp} by a multiplicative factor of $k$.}
    \label{fig:euc_decomp}
\end{figure}

\begin{lemma}[Second point set decomposition]
\label{thm_euc_decomp}
	Suppose that $X\subset \mathbb{R}^d$ and let $C^\ast$ be an $O(1)$-approximation for robust $k$-median on $X$. Then, $X$ can be decomposed in $O(nk)$ time as
	$$
	X=F\cup G,
	$$ 
	where $|F|= O(m\eps^{-1})$ and $G$ is an $(r_{\inn},k)$-instance with $G\subset B(C^\ast, r_{\inn})$ for $r_{\inn} = \frac{\eps}{m} \cost[m](X,C^\ast)$.
\end{lemma}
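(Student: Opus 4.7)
The decomposition is essentially obtained by thresholding on the distance to $C^\ast$. I would define
\[
F := \{x \in X : \dist(x, C^\ast) > r_{\inn}\}, \qquad G := X \setminus F,
\]
where $r_{\inn} = (\eps/m)\cost^{(m)}(X, C^\ast)$. Computing $\dist(x, C^\ast)$ for every $x\in X$ takes $O(nk)$ time, so this decomposition can clearly be produced within the stated runtime budget.

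\textbf{Bounding $|F|$.} Let $S \subseteq X$ denote the set of the $m$ furthest points from $C^\ast$, so that $\cost^{(m)}(X, C^\ast) = \cost(X \setminus S, C^\ast)$. I would split $F$ as $F = (F \cap S) \cup (F \setminus S)$. Clearly $|F \cap S| \leq m$. For the second part, every $x \in F \setminus S$ satisfies $\dist(x, C^\ast) > r_{\inn}$ and contributes at least $r_{\inn}$ to $\cost(X \setminus S, C^\ast)$, so
\[
|F \setminus S| \cdot r_{\inn} \;\leq\; \cost(X \setminus S, C^\ast) \;=\; \cost^{(m)}(X, C^\ast),
\]
giving $|F \setminus S| \leq \cost^{(m)}(X, C^\ast)/r_{\inn} = m/\eps$. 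Thus $|F| \leq m + m/\eps = O(m\eps^{-1})$.

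\textbf{Structure of $G$.} By construction, every $x \in G$ satisfies $\dist(x, C^\ast) \leq r_{\inn}$, i.e.\ $G \subseteq B(C^\ast, r_{\inn}) = \bigcup_{c \in C^\ast} B(c, r_{\inn})$. Since $|C^\ast| = k$, the set $C^\ast$ serves as the required $k$-point center set, and hence $G$ is an $(r_{\inn}, k)$-instance satisfying the stated containment.

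\textbf{Obstacles.} There is essentially no technical obstacle: the content of the lemma is a one-shot thresholding argument, and the only quantitative step is the Markov-style inequality that bounds $|F \setminus S|$. Unlike the first decomposition (Lemma~\ref{thm_decomp}), no ring layering is needed here because the target radius $r_{\inn}$ is larger by a factor of $k$, which makes it feasible to absorb all non-far points into a single $(r_{\inn}, k)$-instance. The only subtlety worth flagging is that $C^\ast$ is merely an $O(1)$-approximation, but this never enters the proof explicitly: we only use $\cost^{(m)}(X, C^\ast)$ as the scale against which $r_{\inn}$ is defined, so the argument is indifferent to the optimality of $C^\ast$.
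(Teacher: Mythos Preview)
Your proposal is correct and follows essentially the same approach as the paper: both threshold on $\dist(x,C^\ast)>r_{\inn}$, then bound $|F|$ by separating the $m$ outliers from the at most $m\eps^{-1}$ inliers whose distance exceeds $r_{\inn}$ via the same Markov-type count. Your write-up is in fact slightly more explicit than the paper's, which compresses the splitting argument into a single sentence.
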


\begin{proof}
	Define 
	$$
		F = \left\{x\in X \;\middle\vert\; \dist(x,C^\ast)> \frac{\eps}{m}\cdot \cost(X,C^\ast)\right\}.
	$$
	We argue that $|F|\leq m+m\eps^{-1}$. To see this, we note that except for $m$ outliers and at most $m\eps^{-1}$ inliers, each point $x\in X$ satisfies $\dist(x,C^\ast)\leq r_{\inn} := (\eps/m) \cdot \cost(X,C^\ast)$. Otherwise, the total cost over $X$ would exceed $\cost(X,C^\ast)$. Hence, letting $G:=X\setminus F$, we have that $G$ is an $(r_{\inn},k)$-instance and $G\subset B(C^\ast,r_{\inn})$.
\end{proof}

We are now ready to prove Theorem~\ref{thm_Euc}.

\begin{proof}[Proof of Theorem~\ref{thm_Euc}] Our coreset $D$ is constructed as follows. 
	First, we use Lemma~\ref{thm_euc_decomp} to compute a decomposition $X=F\cup G$ and add $F$ identically to $D$. Then we use Lemma~\ref{lemma_Euc} to compute an $(\eps,m,7\eps\cdot\cost^{(m)}(X,C^\ast))$-robust coreset $D_0$ of $G$ and add $D_0$ into $D$. By mergeability of robust coresets, we know that $D=F\cup D_0$ is an $(\eps,m,7\eps\cost[m](X,C^\ast))$-robust coreset of $X=F\cup G$. But $C^\ast$ is an $O(1)$-approximation of $X$ for $(k,m)$-robust $k$-median, hence $D$ is an $(O(\eps),m)$-robust coreset of $X$. Rescaling $\eps$ completes the proof.
\end{proof}

\section{Proof of Theorem \ref{thm_main}: First Coreset for VC Instances}
\label{sec:VC}

In this section, we prove Theorem \ref{thm_main} for the case that $d = d_{\VC}\leq d_D$, i.e., the coreset size is proportional to $d_{\VC}$. Throughout this section, we let $n:=|X|$ and $n_i:=|X_i|$ for each $i\in [k_0]$. We also let $D_i = D\cap X_i$ for each $i\in [k_0]$.
 For $I\subseteq [k_0]$, we define $X_I := \bigcup_{i\in I} X_i$, $D_I := D\cap X_I$ and $U_I := U\cap X_I$.

At a high level, we bound the induced error from the clusters containing only inliers and from the clusters containing both inliers and outliers separately.
To capture this idea, we introduce the following definitions.
The first definition is range space approximation, which is commonly used for metric spaces with bounded VC dimension; see, e.g., \cite{vershynin2018high}.

\begin{definition}[Range space approximation] \label{def_range}
We say $D$ is an $\eps$-range space approximation of $X$ if for every $I\subseteq [k_0]$, $C\in \binom{M}{k}$, and $r>0$,
$$
\abs[\big]{ |X_I\cap B(C,r)| - \norm{D_I\cap B(C,r)}_1 }\leq \eps n.
$$ 
\end{definition}

It requires that for each range $B(C,r)$ and each subset $I\subseteq [k_0]$, the total weight of points in $D_I \cap B(C,r)$ is a good approximation of the original size $|X_I\cap B(C,r)|$.
This definition is useful for bounding the induced error of $D$ restricted to those clusters that contain both inliers and outliers with respect to $C$.

Next, we propose a definition that bounds certain cost differences between $X$ and $D$.

\begin{definition}[Indexed-subset cost approximation] \label{def_subset}
We say $D$ is an $\eps$-indexed-subset cost approximation of $X$ if for every $I\subseteq [k_0]$, and every $C\in \binom{M}{k}$, 
$$
|\cost(X_I,C)-\cost(D_I,C)|\leq \eps \cost(X_I,C)+\eps n r.
$$
\end{definition}

This definition requires that $D$ preserves clustering cost for every subset $I\subseteq [k_0]$ of clusters and every center set $C$.
Different from the vanilla case that only needs to consider the induced cost difference of the whole $D$, this definition introduces an additional complexity by considering subsets $I$ of clusters.
This is useful for bounding the induced cost difference of those clusters that are completely inliers with respect to $C$.

Finally, we give the following definition that requires the weight of $D$ to align with $X$ for each cluster.
This property ensures that we can consider the same number of outliers within each cluster for $X$ and $D$.

\begin{definition}[Capacity-respecting] \label{def_cap}
We say $D$ is \emph{capacity-respecting} with respect to $X$ if for every $i\in [k_0]$, $\norm{D\cap X_i}_1 = \abs{X_i}$.
\end{definition}

Next, we give two key lemmata.
The first one provides a sufficient condition for $D$ being a robust coreset for $(r,k_0)$-regular instance, based on the above definitions. Its proof is postponed to Section~\ref{sec:lemma_main_tech}.

\begin{lemma}[Sufficient condition for robust coreset on bounded VC instances] 
\label{lemma_main_tech}
Let $X$ be an $(r,k_0)$-regular instance of size $n$ and $D$ a weighted subset of $X$.
If $D$ is simultaneously capacity-respecting, an $\eps$-range space approximation and an $\eps$-indexed-subset cost approximation of $X$, then $D$ is an $(O(\eps),m,O(\eps n r ))$-robust coreset of $X$.
\end{lemma}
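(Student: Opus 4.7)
Fix $C \in \binom{M}{k}$ and $t \in \{0, \dots, m\}$, and let $u^* = u^*(C, t)$ be the threshold such that exactly $t$ points of $X$ have distance $> u^*$ to $C$. Since each $X_i \subseteq B(a_i, r)$, every point in $X_i$ has distance to $C$ within $r$ of $\dist(a_i, C)$, which lets me partition $[k_0] = A_1 \cup A_2 \cup A_3$ with $A_1 = \{i : \dist(a_i, C) + r \leq u^*\}$ (all inliers, distances $\leq u^*$), $A_2 = \{i : \dist(a_i, C) - r > u^*\}$ (all outliers, distances $> u^*$), and $A_3$ the remainder (distances in $(u^* - 2r, u^* + 2r]$). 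Writing $n_j := |X_{A_j}|$, one has $n_2 \leq t \leq n_2 + n_3$, and the optimal outlier set decomposes as $J^* = X_{A_2} \cup (\text{top-}(t - n_2) \text{ in } X_{A_3})$, yielding the identity
\[
\cost[t](X, C) = \cost(X_{A_1}, C) + \cost[t - n_2](X_{A_3}, C).
\]

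For $D$, the capacity-respecting property gives $\norm{D_{A_j}}_1 = n_j$ for every $j$, so $J_D := D_{A_2} \cup (\text{top-}(t - n_2) \text{ in } D_{A_3})$ is a valid $t$-weighted outlier set of $D$, producing the upper bound $\cost[t](D, C) \leq \cost(D_{A_1}, C) + \cost[t - n_2](D_{A_3}, C)$. The matching lower bound is the main obstacle, because the truly optimal outlier set for $D$ may ``swap'' weight near $u^*$: either dropping a bit of $D_{A_2}$-weight in favor of higher-distance weight in $D_{A_3}$ (if $\norm{D_{A_3} \setminus B(C, u^*)}_1 > t - n_2$), or including a bit of $D_{A_1}$-weight in place of lower-distance weight in $D_{A_3}$ (if $\norm{D_{A_3} \setminus B(C, u^*)}_1 < t - n_2$). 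I apply the $\eps$-range space approximation with $I = A_3$ to the ball $B(C, u^*)$ to conclude $|\norm{D_{A_3} \setminus B(C, u^*)}_1 - (t - n_2)| \leq \eps n$, so any such swap involves at most $\eps n$ weight. Moreover, the uniform radius $r$ confines the swapped-in and swapped-out points to intervals of length $\leq 2r$ straddling $u^*$, so the per-unit cost gap is at most $2r$; multiplying by the total swap weight gives a cost gap of at most $2 \eps n r$, yielding $\cost[t](D, C) \geq \cost(D_{A_1}, C) + \cost[t - n_2](D_{A_3}, C) - 2 \eps n r$.

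It remains to bound the two component differences. For $A_1$, the $\eps$-indexed-subset cost approximation with $I = A_1$ gives $|\cost(X_{A_1}, C) - \cost(D_{A_1}, C)| \leq \eps \cost(X_{A_1}, C) + \eps n r \leq \eps \cost[t](X, C) + \eps n r$, since $X_{A_1}$ consists only of inliers. For $A_3$, I use the integral representation
\[
\cost[t - n_2](X_{A_3}, C) = \int_0^\infty \bigl(|X_{A_3} \setminus B(C, u)| - (t - n_2)\bigr)^+ \, du,
\]
and the analogous formula for $D_{A_3}$. Since both $X_{A_3}$ and $D_{A_3}$ lie in the distance range $(u^* - 2r, u^* + 2r]$ and $|X_{A_3}| = \norm{D_{A_3}}_1 = n_3$, the two integrands agree outside this length-$4r$ window; inside the window, the $\eps$-range space approximation bounds their pointwise difference by $\eps n$, producing $|\cost[t - n_2](X_{A_3}, C) - \cost[t - n_2](D_{A_3}, C)| \leq 4 \eps n r$. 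Summing the $A_1$ error, the $A_3$ error, and the $2 \eps n r$ swap error yields $|\cost[t](X, C) - \cost[t](D, C)| \leq \eps \cost[t](X, C) + O(\eps n r)$, proving the lemma after rescaling $\eps$. The chief technical difficulty is the swap analysis in the lower bound on $\cost[t](D, C)$, which decisively leverages the uniform radius $r$ of an $(r, k_0)$-regular instance together with the capacity-respecting property to ensure that the natural outlier partition used for $X$ can also be used for $D$ at bounded cost.
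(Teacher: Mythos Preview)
Your proof is correct and follows essentially the same strategy as the paper: the partition of $[k_0]$ into $A_1,A_2,A_3$, with the indexed-subset cost approximation handling the $A_1$ contribution and the range-space approximation handling the $A_3$ integral over a window of length $4r$. Your explicit swap analysis for the lower bound on $\cost[t](D,C)$ is in fact more careful than the paper, which asserts (Lemma~\ref{proof_fact}\ref{fact2}) that all of $D_{A_1}$ are inliers of $\cost[t](D,C)$; that claim can fail when a $D_{A_3}$ point happens to lie closer to $C$ than some $D_{A_1}$ point, and your extra $O(\eps n r)$ swap term is precisely what closes this gap.
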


Intuitively, this sufficient condition lets us bound the induced error from purely inlier clusters and from clusters that mix inliers and outliers separately, matching our high-level approach.

The second lemma shows that Algorithm~\ref{alg_coreset} indeed outputs a coreset $D$ satisfying the desired properties.

\begin{lemma}[Performance analysis of Algorithm~\ref{alg_coreset} for VC instances] \label{alg_guarantee}
With probability at least $1-O\big(\frac{1}{\log (k/\eps)\cdot \log (mk)}\big)$,  Algorithm~\ref{alg_coreset} outputs a weighted set $D$ which
is simultaneously an $\eps$-range space approximation of $X$, an $\eps$-indexed-subset cost approximation of $X$ and capacity-respecting with respect to $X$.
\end{lemma}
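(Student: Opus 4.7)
The plan is to establish the three properties—capacity-respecting, $\eps$-range space approximation, and $\eps$-indexed-subset cost approximation—separately, each failing with probability at most $O(1/(\log(k/\eps)\log(mk)))$, and then combine them via a union bound.

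The capacity-respecting property is immediate from the construction: for each $x\in U\cap X_i$ the weight $w_D(x)=\frac{n_i}{u_i}\cdot\frac{n}{s}$ is chosen so that $\norm{D\cap X_i}_1 = |U\cap X_i|\cdot\frac{n_i}{u_i}\cdot\frac{n}{s}=n_i$, using $u_i=|U\cap X_i|\cdot n/s$, provided $u_i>0$. Since the regularity of $X$ gives $n_i\ge n/(2k_0)\ge n/(2k)$, a Chernoff bound shows that each $u_i>0$ with probability $1-\exp(-\Omega(s/k))$, and a union bound over $i\in[k_0]$ remains well within the target failure rate for our $s$. For the $\eps$-range space approximation, Chernoff concentration of $|U\cap X_i|$ around $sn_i/n$ yields $\norm{D\cap X_I\cap B(C,r)}_1 = (1\pm O(\eps))\cdot(n/s)\cdot|U\cap X_I\cap B(C,r)|$, reducing the task to a standard $\eps$-approximation bound for the uniform sample $U$. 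The $k$-Balls range family has VC dimension $O(kd_{\VC})$, and the $2^{k_0}\le 2^k$ choices of $X_I$ add at most $k$ to the log factor, so a uniform sample of size $\tilde O(kd_{\VC}/\eps^2)$ suffices.

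The main obstacle is the $\eps$-indexed-subset cost approximation. Using the integral identity $\cost(X_I,C)=\int_0^\infty|X_I\setminus B(C,u)|\,du$ together with capacity-respecting, the cost difference reduces to $\int_0^\infty[\norm{D_I\setminus B(C,u)}_1-|X_I\setminus B(C,u)|]\,du$, which I must bound uniformly in $I\subseteq[k_0]$ and $C\in\binom{M}{k}$. A pointwise range-space bound on the integrand would yield $O(\eps n)$ times the length of its support, and since the integrand is nontrivial on a union of $|I|$ intervals of length $O(r)$ each, this loses a factor of $|I|$. To avoid the blow-up, I would apply generic chaining on the space of center sets $C\in\binom{M}{k}$: at geometric scales $\eta\in\{1,1/2,1/4,\dots\}$ build nets $\mathcal{N}_\ell$ whose size is bounded by $\exp(\tilde O(kd_{\VC}\log\tfrac{1}{\eta}))$ (from covering-number estimates for $k$-Balls), multiplied by $2^{k_0}$ to index the subsets $I$. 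The supremum of increments at each level is dominated by a Gaussian maximum via Fact~\ref{fact_max_gauss}, contributing $\tilde O(\sqrt{kd_{\VC}/s})\cdot nr$ per scale; summing the telescoping geometric series yields the desired additive bound $\eps n r$ once $s\ge \tilde O(kd_{\VC}/\eps^2)$. The multiplicative term $\eps\cost(X_I,C)$ then absorbs the contribution from points in $X_I$ that lie much farther than $r$ from $C$, handled by a separate range-space estimate on the tail of the integration.

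The principal difficulty is adapting the chaining argument to the robust setting itself. A naive chaining would need to index the empirical process by outlier configurations, inflating the net size by $\exp(\Omega(m))$ and completely destroying the bound. The indexed-subset formulation sidesteps this by summing over entire clusters $X_I$ and using capacity-respecting to ensure $\norm{D_I}_1=|X_I|$, so that the net size stays at $\exp(\tilde O(kd_{\VC}))$, matching the sample size $s$. The $\log^3 m$ factor included in $s$ is precisely what drives the per-level failure probability below the required $(\log(k/\eps)\log(mk))^{-1}$ threshold when one unions over the chaining scales.
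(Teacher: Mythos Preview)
Your treatment of capacity-respecting and the $\eps$-range space approximation is fine and matches the paper's argument (Lemma~\ref{lemma_event} and Lemma~\ref{lemma_range}): reduce $D$ to $U$ via the event $u_i\in(1\pm\eps)n_i$, then apply VC theory to $U$.

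The gap is in the indexed-subset cost approximation. First, your covering-number claim is incorrect for VC spaces: an $\alpha$-net of cost vectors built from $k$-Balls has log-size $\tilde O(\alpha^{-1}kd_{\VC})$, not $\tilde O(kd_{\VC}\log\tfrac{1}{\alpha})$ (see Lemma~\ref{lemma_netsize}); the $\log(1/\alpha)$ bound holds only in doubling spaces. More importantly, your chaining is missing the structural ingredient that actually kills the $|I|$ factor. Chaining ``on the space of center sets $C$'' does not by itself control the normalized process $\frac{\cost(U_I,C)-\cost(X_I,C)}{\cost(X_I,C)+nr}$: clusters in $I$ can sit at wildly different distances from $C$, so the per-point increments at a single net scale are not comparable, and there is no small coarsest net to anchor the telescoping. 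Your claim that each scale contributes $\tilde O(\sqrt{kd_{\VC}/s})\cdot nr$ is unjustified without specifying the variance bound, and in fact fails for clusters at distance $\gg r$.

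The paper resolves this by first decomposing $J$ into distance levels $J_{i,C}=\{j\in J:\dist(a_j,C)\in(2^ir,2^{i+1}r]\}$ for $i=0,\dots,\lceil\log(1/\eps)\rceil$ plus a ``far'' part handled multiplicatively (Lemma~\ref{lemma_relate}). At each level $i$ the chaining is run on the normalized process with denominator $(1+2^i|J_{i,C}|/k_0)\cdot nr$, and the $(r,k_0)$-\emph{regularity} of $X$ is used to show $\E_U\sup_{J,C}\frac{\|U\cap X_{J_{i,C}}\|_0^{1/2}}{1+2^i|J_{i,C}|/k_0}\lesssim\sqrt{s/2^i}$ (Lemma~\ref{lemma_chaining_var1}); this $2^{-i/2}$ gain is what makes the sum over $i$ converge. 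The initial telescoping term is handled separately by observing that a $2^{-i}$-net at level $i$ can be taken \emph{piecewise constant} on clusters (Lemma~\ref{lemma_netsize}), so the Gaussian sum factors through a maximum over at most $k_0$ cluster-sums. Both the level decomposition and the piecewise-constant initial net are essential and absent from your outline.
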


By Algorithm~\ref{alg_coreset}, $D$ is always capacity-respecting with respect to $X$, since
\[
    w_D(D\cap X_i)=\|U\cap X_i\|_0\cdot \frac{n_i}{u_i}\cdot \frac{n}{s}=\frac{su_i}{n}\cdot \frac{n_i}{u_i}\cdot \frac{n}{s}=n_i.
\]
Thus, it suffices to prove the other two properties of $D$.
We shall prove that $D$ is an $\eps$-range space approximation of $X$ in Section~\ref{sec:alg_guarantee_proof} and that $D$ is an $\eps$-indexed-subset cost approximation of $X$ in Section \ref{sec_proof_subset}.

Equipped with the preceding two lemmata, the proof of Theorem~\ref{thm_main} is immediate.

\begin{proof}[Proof of Theorem~\ref{thm_main} (VC instances)]
    We can combine Lemmata~\ref{lemma_main_tech} and~\ref{alg_guarantee} to obtain directly the coreset guarantee in Theorem~\ref{thm_main}, with a rescaling of $\eps$. The overall failure probability is $O\big(\frac{1}{\log(k/\eps)\cdot \log (mk)}\big)$ by Lemma~\ref{alg_guarantee}.
    %
\end{proof}


\subsection{Proof of Lemma~\ref{lemma_main_tech}: Sufficient condition on bounded VC instances}\label{sec:lemma_main_tech}

In the following, we assume that $D$ is simultaneously capacity-respecting, an $\eps$-range space approximation and an $\eps$-indexed-subset cost approximation of $X$.

Fix a $C\in \binom{M}{k}$ and an integer $t\in \{0,1,\cdots,m\}$. Observe that $|B(C,r)\cap X|$ is monotone in $r$. Let $r^\ast = \inf \{r>0\;\big|\; |B(C,r)\cap X|\geq n-t\}$. We partition $X_1,\dots,X_{k_0}$ into three groups. Specifically, let
\begin{itemize}[itemsep=0pt]
\item $A_1=\{i\in [k_0]\mid X_i\subseteq B(C,r^\ast)\}$,
\item $A_2=\{i\in [k_0]\mid X_i\cap B(C,r^\ast)=\emptyset\}$,
\item $A_3=[k_0]\setminus (A_1\cup A_2)$.
\end{itemize}

Let $X_{A_i}=\bigcup_{j\in A_i} X_j$ and $D_{A_i}=\bigcup_{j\in A_i} D_j$ for $i\in \{1,2,3\}$. 
Figure~\ref{fig:subset} shows an illustration of such a partition. We have the following observations. 

\begin{lemma}[Geometric properties of $D$] \label{proof_fact} The following statements hold.
\begin{enumerate}[label=(\roman*),itemsep=0pt]
\item $|X_{A_i}| = \norm{D_{A_i}}_1$ for every $i\in \{1,2,3\}$. \label{fact0}
\item All points in $X_{A_1}$ are inliers of $\cost[t](X,C)$ and all points in $X_{A_2}$ are outliers of $\cost[t](X,C)$.\label{fact1}
\item All points in $D_{A_1}$ are inliers of $\cost[t](D,C)$ and all points in $D_{A_2}$ are outliers of $\cost[t](D,C)$.\label{fact2}
\item $X_{A_3}\subseteq B(C,r^\ast+2r)\setminus B(C,(r^\ast-2r)^+)$. \label{proof_fact3}
\item $D_{A_3}\subseteq B(C,r^\ast+2r)\setminus B(C,(r^\ast-2r)^+)$. \label{proof_fact4}
\end{enumerate}
\end{lemma}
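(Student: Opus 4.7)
The plan is to verify the five assertions using three ingredients: the capacity-respecting property, the defining inequality $|X \setminus B(C, r^\ast)| \leq t$ for the critical radius, and the uniform-radius diameter bound $\diam(X_i) \leq 2r$ inherent to the $(r, k_0)$-regular structure. Part (i) is immediate: summing the capacity-respecting identity $\norm{D \cap X_j}_1 = |X_j|$ over $j \in A_i$ yields $\norm{D_{A_i}}_1 = |X_{A_i}|$ for each $i \in \{1,2,3\}$.

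Parts (iv) and (v) reduce to a triangle-inequality calculation. For any $i \in A_3$, the definition of $A_3$ supplies witnesses $y \in X_i \cap B(C, r^\ast)$ and $z \in X_i \setminus B(C, r^\ast)$; combined with $\dist(x, y), \dist(x, z) \leq \diam(X_i) \leq 2r$ for every $x \in X_i$, we obtain
$\dist(x, C) \leq \dist(x, y) + \dist(y, C) \leq r^\ast + 2r$
and
$\dist(x, C) \geq \dist(z, C) - \dist(x, z) > r^\ast - 2r$,
proving (iv). Since $D_i \subseteq X_i$, the same bound proves (v).

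For (ii) and (iii), I will use that every point in $X_{A_2}$ (hence in $D_{A_2}$) has distance strictly greater than $r^\ast$, while every point in $X_{A_1}$ (hence in $D_{A_1}$) has distance at most $r^\ast$; in particular, $A_2$ points are uniformly farther from $C$ than $A_1$ points. Since $|X_{A_2}| \leq |X \setminus B(C, r^\ast)| \leq t$, an optimal outlier selection for $\cost[t](X, C)$ may be chosen to include all of $X_{A_2}$ and to cover any residual deficit from the boundary region $X_{A_3}$, which straddles the sphere of radius $r^\ast$ and therefore supplies points at distance $\geq r^\ast$; this arrangement keeps $X_{A_1}$ entirely inliers and proves (ii). The proof of (iii) is parallel, using part (i) to obtain $\norm{D_{A_2}}_1 = |X_{A_2}| \leq t$ and drawing the residual outlier weight from $D_{A_3}$ rather than $D_{A_1}$.

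The main obstacle will be the careful tie-breaking for points at distance exactly $r^\ast$: both for $X$ and for $D$, the claim requires that whenever $t$ strictly exceeds the weight lying strictly outside $B(C, r^\ast)$, the residual outlier budget can be absorbed by the boundary cluster family $A_3$ rather than by $A_1$. This is admissible because the $\min$ in the definition of $\cost[t]$ leaves us free to resolve ties among equidistant points, and the cost attained is independent of the chosen tie-break, so the conclusions of (ii) and (iii) hold for an appropriately chosen minimizer.
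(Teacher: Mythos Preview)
Your argument tracks the paper's proof closely. Part (i) is the same capacity-respecting summation; parts (iv) and (v) use exactly the witnesses $y\in X_i\cap B(C,r^\ast)$ and $z\in X_i\setminus B(C,r^\ast)$ together with the diameter bound $2r$ that the paper uses; and for (ii)--(iii) both you and the paper argue that $X_{A_2}$ (resp.\ $D_{A_2}$) has total weight at most $t$ and that the residual outliers can be drawn from $A_3$. The paper phrases (iii) via the equivalent assertion $\norm{D_{A_1}}_1=|X_{A_1}|\le n-t$, which is precisely what your ``residual absorbed by $A_3$'' step needs.

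Your tie-breaking paragraph is a fair observation, but note that it does not actually establish $|X_{A_1}|\le n-t$: in the degenerate case $A_2=A_3=\emptyset$ with $t>0$ (for instance a single cluster whose points all sit at distance exactly $r^\ast$), some point of $X_{A_1}$ must be an outlier regardless of how ties are broken, so the literal statement of (ii) fails there. The paper asserts $|X_{A_1}|\le n-t$ without justification and hence shares this gap; it does not undermine the downstream inequality of Lemma~\ref{lemma_inq}, but your claim that freedom in tie-breaking alone resolves the issue overstates what has been shown.
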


\begin{figure}[t]
    \centering
    \includegraphics[width=1\linewidth]{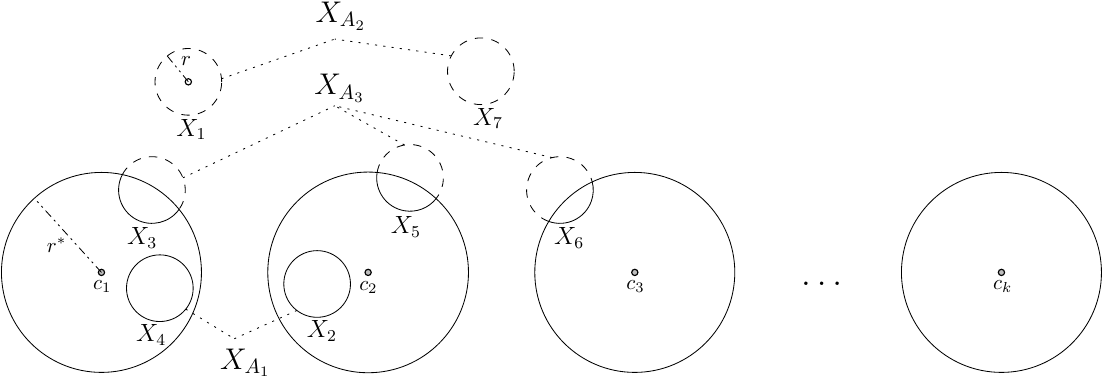}
    \vspace*{-0.7cm}
    \caption{An illustrated example of the partitioning of clusters in Section~\ref{sec:lemma_main_tech} }
    \label{fig:subset}
\end{figure}

\begin{proof}
Part \ref{fact0} follows from the capacity-respecting property of $D$. Part \ref{fact1} follows from the definition of $A_i$ ($i\in \{1,2,3\}$).

To prove \ref{fact2}, we note that by the capacity-respecting property of $D$ and the definition of $A_1$,  
\[
    \norm{D_{A_1}}_1 = \sum_{j\in A_1} \norm{D_j}_1 = \sum_{j\in A_1} |X_j|\leq n-t.
\]
Since $D_{A_1}\subseteq  X_{A_1}\subseteq B(C,r^\ast)$ and $\norm{D_{A_1}}_1 \leq n-t$, points in $D_{A_1}$ must all be inliers. Similarly, we know that $D_{A_2}\cap B(C,r^\ast)=\emptyset$ and $\norm{D_{A_2}}_1 \leq t$, so points in $D_{A_2}$ must all be outliers. So we have proved \ref{fact2}.

Next we prove \ref{proof_fact3} and \ref{proof_fact4}. Consider a point $x\in X_{A_3}$, by the definition of $A_3$, there exists $j\in A_3$ such that $x\in X_j$ and $X_j\cap B(C,r^\ast)$ is neither $X_j$ nor the empty set. Thus, there exists $y,z\in X_j$ such that $y\in B(C,r^\ast)$ and $z\not\in B(C,r^\ast)$, which implies, by triangle inequality, that $\dist(x,C)\leq \dist(x,y)+\dist(y,C)\leq r^\ast+2r$ and $\dist(x,C)\geq \dist(z,C)-\dist(x,z)\geq r^\ast-2r$. So we have proved Part \ref{proof_fact3}. Part \ref{proof_fact4} can be proved identically.
\end{proof}

Using Lemma~\ref{proof_fact}, we can obtain the following inequality that decomposes the cost difference between $X$ and $D$ into two parts.

\begin{lemma}[Cost difference decomposition] 
\label{lemma_inq}
It holds that
\begin{multline*}
|\cost[t](X,C)-\cost[t](D,C)|\leq |\cost(X_{A_1},C)-\cost(D_{A_1},C)| \\ +\int_{(r^\ast-2r)^+}^{r^\ast+2r} \abs[\Big]{ |B(C,u)\cap X_{A_3}| - \norm*{B(C,u)\cap D_{A_3}}_1 } du.
\end{multline*}
\end{lemma}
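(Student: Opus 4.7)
The plan is to exploit the partition structure from Lemma~\ref{proof_fact} to decompose both $\cost[t](X,C)$ and $\cost[t](D,C)$ along the $A_1,A_2,A_3$ split. By parts \ref{fact1} and \ref{fact2} of Lemma~\ref{proof_fact}, an optimal choice of outliers for $\cost[t](X,C)$ (resp.\ $\cost[t](D,C)$) must contain all of $X_{A_2}$ (resp.\ $D_{A_2}$) and none of $X_{A_1}$ (resp.\ $D_{A_1}$), so the remaining outlier weight lives entirely in $X_{A_3}$ (resp.\ $D_{A_3}$). Combined with $|X_{A_2}|=\|D_{A_2}\|_1$ from part \ref{fact0}, this yields the clean decomposition
\[
    \cost[t](X,C)=\cost(X_{A_1},C)+\cost[t'](X_{A_3},C), \qquad \cost[t](D,C)=\cost(D_{A_1},C)+\cost[t'](D_{A_3},C),
\]
where $t':=t-|X_{A_2}|\geq 0$. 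Subtracting and applying the triangle inequality immediately produces the first term of the claimed bound, plus a residual $|\cost[t'](X_{A_3},C)-\cost[t'](D_{A_3},C)|$ to control.

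To bound this residual, I would use the standard integral representation of robust cost: for any weighted set $Y$, $\cost[t'](Y,C)=\int_0^\infty (\|Y\setminus B(C,u)\|_1-t')^+\,du$. By parts \ref{proof_fact3} and \ref{proof_fact4} of Lemma~\ref{proof_fact}, both $X_{A_3}$ and $D_{A_3}$ lie in the annulus $B(C,r^\ast+2r)\setminus B(C,(r^\ast-2r)^+)$. Thus, for $u<(r^\ast-2r)^+$ the two integrands equal $(|X_{A_3}|-t')^+$ and $(\|D_{A_3}\|_1-t')^+$ respectively, and these agree by part \ref{fact0}; for $u\geq r^\ast+2r$ both integrands vanish. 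So the contributions outside the annulus cancel exactly in the subtraction, leaving only the integral over $[(r^\ast-2r)^+,\,r^\ast+2r]$. Inside this range, the $1$-Lipschitz inequality $|(a-t')^+-(b-t')^+|\leq |a-b|$, combined with $\|Y_{A_3}\setminus B(C,u)\|_1=\|Y_{A_3}\|_1-\|Y_{A_3}\cap B(C,u)\|_1$ and one final use of $|X_{A_3}|=\|D_{A_3}\|_1$, converts the integrand into $\bigl||B(C,u)\cap X_{A_3}|-\|B(C,u)\cap D_{A_3}\|_1\bigr|$, which is exactly the form appearing in the statement.

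The rest of the argument is just careful bookkeeping; I do not foresee a major technical obstacle. The one place that warrants care is the cost-splitting identity for the weighted set $D$, where it is essential that the capacity-respecting property (encoded in Lemma~\ref{proof_fact}\ref{fact0} as $\|D_{A_2}\|_1=|X_{A_2}|$ and $\|D_{A_3}\|_1=|X_{A_3}|$) guarantees the reduced outlier budget $t'$ is the same on both sides, and that the inlier/outlier characterization in parts \ref{fact1} and \ref{fact2} actually forces the optimal outlier sets to respect the $A_1,A_2,A_3$ partition rather than mixing across it.
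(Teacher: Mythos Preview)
Your proposal is correct and follows essentially the same approach as the paper: decompose $\cost[t]$ along the $A_1,A_2,A_3$ partition using Lemma~\ref{proof_fact} to isolate the $A_1$ contribution and a residual robust cost on $A_3$ with reduced outlier budget $t'=t-|X_{A_2}|$, then express the latter via the integral representation and restrict to the annulus $[(r^\ast-2r)^+,r^\ast+2r]$ using the $1$-Lipschitz property of $(\cdot)^+$ together with $|X_{A_3}|=\|D_{A_3}\|_1$. The paper's proof is organized identically (with $g=|X_{A_2}|$ playing the role of your $|X_{A_2}|$).
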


\begin{proof}

Let $g=|X_{A_2}\cap B(C,r^\ast)|=\sum_{i\in A_2} n_i$. By the definition of $r^\ast$, we know that $g\leq t$. Since points in $X_{A_1}$ are all in-liers and points in $X_{A_2}$ are all outliers, it holds that
\begin{align*}
\cost[t](X,C) &= \cost(X_{A_1},C)+\cost^{(t-g)}(X_{A_3},C)\\
&= \cost(X_{A_1},C)+\int_{0}^{\infty} \big( |X_{A_3}|-(t-g)-|B(C,u)\cap X_{A_3}|\big)^+du
\end{align*}
where the second equality follows from integrating by parts.

Similarly, we have 
\[
\cost[t](D,C) = \cost(D_{A_1},C) + \int_{0}^{\infty} \big( \norm{D_{A_3}}_1 - (t-g) -\norm*{B(C,u)\cap D_{A_3}}_1 \big)^+du
\]
Since the function $x\mapsto (x)^+$ is $1$-Lipschitz, 
\begin{align*}
& |\cost[t](X,C)-\cost[t](D,C)| \\
\leq{} & |\cost(X_{A_1},C)-\cost(D_{A_1},C)|+\int_{0}^{\infty} \bigg||B(C,u)\cap X_{A_3}|-\norm*{B(C,u)\cap D_{A_3}}_1 \bigg| du \\
={} & |\cost(X_{A_1},C)-\cost(D_{A_1},C)|+\int_{(r^\ast-2r)^+}^{r^\ast+2r} \bigg||B(C,u)\cap X_{A_3}|-\norm*{B(C,u)\cap D_{A_3}}_1 \bigg| du,
\end{align*}
where the inequality is due to $|X_{A_3}| = \norm*{D_{A_3}}_1$ and the equality is due to Lemma~\ref{proof_fact}\ref{proof_fact3}\ref{proof_fact4}.
\end{proof}

Now we are ready to prove Lemma~\ref{lemma_main_tech}.

\begin{proof}[Proof of Lemma~\ref{lemma_main_tech}]

    Combining Lemma~\ref{lemma_inq}, Definition~\ref{def_range} and Definition~\ref{def_subset}, we obtain that $$
|\cost[t](X,C)-\cost[t](D,C)|\leq \eps\cdot \cost[t](X,C) + O(\eps n r)
$$
which completes the proof of Lemma~\ref{lemma_main_tech}.
\end{proof}


\subsection{Proof of Lemma~\ref{alg_guarantee}: Range space approximation}\label{sec:alg_guarantee_proof}

In this section, we prove that $D$ is an $\eps$-range space approximation of $X$.
Recall that $D$ is derived by shifting the point weights of the original sample set $U$; see Algorithm \ref{alg_coreset}.
We first define the following event which implies that $D$ is ``sufficiently close to'' $U$:
\begin{eqnarray} \label{eq_event}
    \mathcal{E} = ``\text{$u_i\in (1\pm \eps)\cdot n_i$ for all $i\in [k_0]$''}.
\end{eqnarray}

We have the following lemma showing that event $\mathcal{E}$ happens with high probability.
The proof of the following lemma is essentially an adaption of Lemma 19 of \cite{Cohen2022Towards}, which we include for completeness.

\begin{lemma}\label{lemma_event} 
It holds that
$$
\mathbb{P}[\mathcal{E}]\geq 1 - O\left(\frac{\epsilon}{sk\log m}\right).
$$
\end{lemma}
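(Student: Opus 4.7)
The plan is to recognize that the event $\mathcal{E}$ is essentially a statement about how well a uniform sample approximates the cluster frequencies $n_i/n$, which is the classical setting for a multiplicative Chernoff bound. Concretely, write $s_i:=|U\cap X_i|_{\text{unweighted}}$ for the number of sample draws falling into cluster $X_i$. Since $U$ is drawn with replacement and each draw is weighted by $n/s$, we have $u_i=(n/s)\cdot s_i$ and $s_i\sim\mathrm{Binomial}(s,n_i/n)$ with mean $\mathbb{E}[s_i]=sn_i/n$. The event $u_i\in(1\pm\eps)n_i$ is therefore equivalent to $|s_i-sn_i/n|\leq\eps\cdot sn_i/n$.

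The crucial structural input is the regularity hypothesis on $X$: since $X$ is an $(r,k_0)$-regular instance, property (iii) of that definition gives $n_i\geq n/(2k_0)\geq n/(2k)$ for every $i\in[k_0]$, so $\mathbb{E}[s_i]\geq s/(2k)$. The standard multiplicative Chernoff bound then yields
\[
\mathbb{P}\!\left[|s_i-sn_i/n|>\eps\cdot sn_i/n\right]\leq 2\exp\!\left(-\tfrac{\eps^2}{3}\cdot \tfrac{sn_i}{n}\right)\leq 2\exp\!\left(-\Omega(\eps^2 s/k)\right).
\]
A union bound over $i\in[k_0]$ (with $k_0\leq k$) gives $\mathbb{P}[\neg\mathcal{E}]\leq 2k\exp(-\Omega(\eps^2 s/k))$.

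Finally, I substitute the choice $s=(kd/\eps^2)\log^{O(1)}(kd/\eps)\cdot\log^3 m$ fixed in Algorithm~\ref{alg_coreset}. This gives $\eps^2 s/k\gtrsim d\cdot\log^{O(1)}(kd/\eps)\cdot\log^3 m$, so
\[
\mathbb{P}[\neg\mathcal{E}]\leq 2k\exp\!\left(-\Omega\!\left(d\cdot\log^{O(1)}(kd/\eps)\cdot\log^3 m\right)\right),
\]
and by choosing the hidden constant in the $\log^{O(1)}(kd/\eps)$ exponent of $s$ sufficiently large, the right-hand side is dominated by $O(\eps/(sk\log m))$.

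The main obstacle is merely bookkeeping: the target failure probability $O(\eps/(sk\log m))$ is tiny, so I need to be careful that the polylog factor in $s$ is calibrated to beat the $\log(s k/\eps)+\log\log m$ that appears after taking logarithms. Since $s$ is itself only polynomial in $k$, $d$, $1/\eps$, and $\log m$, a single additional $\log$ in the exponent of the $\log^{O(1)}(kd/\eps)$ factor suffices; this is the same calibration used in~\cite{Cohen2022Towards}, so no new ideas are required beyond standard concentration.
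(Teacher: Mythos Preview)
Your proposal is correct and follows essentially the same approach as the paper: both arguments apply a standard concentration inequality (you use the multiplicative Chernoff bound on the binomial count $s_i$, the paper uses Bernstein's inequality on the weighted sum $u_i$) to each cluster, exploit the regularity condition $n_i\geq n/(2k_0)$ to lower-bound the mean, and finish with a union bound over $i\in[k_0]$. The paper is slightly tighter in noting that $s\gtrsim (k/\eps^2)\log(kd/\eps)\cdot\log m$ already suffices, but your calibration via the full $\log^{O(1)}$ and $\log^3 m$ factors is perfectly valid.
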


\begin{proof}

Define for each $i\in [k_0]$ and $u\in U$ a random variable $v_{i,u} = \frac{n}{s}\mathbf{1}_{\{u\in X_i\}}$. Then $u_i = \sum_{u\in U} v_{i,u}$ and we can calculate that
\begin{gather*}
\E\sum_{u\in U} v_{i,u}=|X_i|=n_i, \quad \E\sum_{u\in U} v_{i,u}^2 \leq \frac{n\cdot n_i}{s}\leq \frac{2k_0n_i^2}{s}.
\shortintertext{Also note that}
\norm{v_{i,u}}_{\infty}\leq \frac{n}{s}\leq \frac{2k_0n_i}{s}.
\end{gather*}
It follows from Bernstein's inequality that
\[
\mathbb{P} \left\{\abs*{\sum_{u\in U} v_{i,u} - \E \sum_{u\in U} v_{i,u}}\geq \eps n_i\right\}\\
\leq \exp\left(-\frac{\eps^2 n_i^2}{\frac{4k_0n_i^2}{s}+\frac{2}{3}\cdot \frac{2k_0 n_i}{s}\cdot \eps n_i}\right)\\
\lesssim \frac{\epsilon}{k_0 k s\cdot \log m},
\] 
provided that $s\gtrsim \frac{k}{\eps^2} \log \frac{kd_{\VC}}{\epsilon}\cdot \log m$. Taking a union bound yields that $\mathbb{P}[\mathcal{E}]\geq 1-O(\frac{\epsilon}{sk\cdot \log m})$.
\end{proof}

We first prove that $U$ is an $\eps$-range space approximation of $X$. 
Then, we show that $D$ is an $O(\eps)$-range space approximation when conditioned on $\mathcal{E}$.

\begin{lemma}[Range space approximation for $U$] \label{lemma_range} 
With probability at least $1-O\big(\frac{1}{\log \frac{k}{\eps}\cdot \log (mk)}\big)$, the following inequality holds for every $J\subseteq [k_0]$, $C\in \binom{M}{k}$, and each $r>0$:
\begin{equation} \label{inq_range}
\abs[\bigg]{ |B(C,r)\cap X_J| - \norm*{B(C,r)\cap U_J}_1 }\leq \eps n.
\end{equation}
\end{lemma}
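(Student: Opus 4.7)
The plan is to interpret inequality~\eqref{inq_range} as a uniform convergence statement for the uniform sample $U$ over an enriched range space, and then invoke classical VC-based $\eps$-approximation theory. Define
\[
\mathcal{R} := \bigl\{ B(C,r) \cap X_J : J\subseteq [k_0],\ C\in\tbinom{M}{k},\ r>0 \bigr\}.
\]
Since each $u\in U$ is drawn uniformly from $X$ and weighted by $n/s$, the quantity $\norm{B(C,r)\cap U_J}_1$ is an unbiased estimator of $|B(C,r)\cap X_J|$, and \eqref{inq_range} is exactly the $\eps n$-uniform deviation bound for the range space $\mathcal{R}$ on the ground set $X$.

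The next step is to bound the VC dimension of $\mathcal{R}$. The ball family $\Balls$ has VC dimension $d_{\VC}$ by assumption, so the $k$-ball-union family $\Balls_k = \{B(C,r) : C\in\binom{M}{k}, r>0\}$ has VC dimension $O(k d_{\VC}\log k)$ by the standard Blumer--Ehrenfeucht--Haussler--Warmuth bound for unions of range spaces. The auxiliary family $\mathcal{J} := \{X_J : J\subseteq [k_0]\}$ contains at most $2^{k_0}\leq 2^k$ sets, and therefore has VC dimension at most $k$. Since every range in $\mathcal{R}$ is the intersection of one element of $\Balls_k$ with one element of $\mathcal{J}$, another standard Boolean-combination bound gives
\[
\mathrm{VCdim}(\mathcal{R}) \;=\; O\bigl((k d_{\VC} \log k + k)\,\log(k d_{\VC})\bigr) \;=\; O\bigl(k d_{\VC}\,\log^{2}(k d_{\VC})\bigr).
\]

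Having bounded the VC dimension, I would then apply the classical $\eps$-approximation theorem: a uniform sample of size $\Omega\bigl(\frac{d'}{\eps^{2}}\log\frac{d'}{\eps} + \frac{1}{\eps^{2}}\log\frac{1}{\delta}\bigr)$ is, with probability at least $1-\delta$, an $\eps$-approximation for any range space of VC dimension $d'$. Plugging in $d' = O(k d_{\VC}\log^{2}(k d_{\VC}))$ and $\delta = \Theta\bigl(1/(\log(k/\eps)\log(mk))\bigr)$, the required sample size is $\tilde{O}(k d_{\VC}/\eps^{2}) + O(\eps^{-2}\log\log(mk))$, which is comfortably below the chosen $s = \frac{kd}{\eps^{2}}\cdot\log^{O(1)}(kd/\eps)\cdot\log^{3} m$ (noting $d = d_{\VC}$ here). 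This yields the desired uniform $\eps n$ deviation bound for every $(J, C, r)$ simultaneously, with the required failure probability.

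The step I expect to require the most care is verifying that enriching the ball-union range space with the $2^{k_0}$ cluster-selectors $X_J$ does not inflate the VC dimension beyond $\tilde{O}(k d_{\VC})$; once that Boolean-combination bound is in hand, the rest is a direct appeal to standard $\eps$-approximation theory, and the additional $\log^{3} m$ factor in $s$ (which is really needed for the chaining-based indexed-subset cost approximation later) provides more than enough slack to absorb both the polylog in $k d_{\VC}/\eps$ and the tiny $\log\log(mk)$ confidence term.
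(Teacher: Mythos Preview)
Your proposal is correct and follows essentially the same approach as the paper: both arguments exploit that the enriched range space $\{B(C,r)\cap X_J\}$ has shatter function/VC dimension $\tilde O(k d_{\VC})$, and then use standard uniform-convergence machinery. The only cosmetic difference is packaging: you invoke the classical $\eps$-approximation theorem for VC classes as a black box, whereas the paper spells out the same bound via symmetrization, the Gaussian-maximum inequality (Fact~\ref{fact_max_gauss}) over the at most $2^{k_0}\cdot s^{O(k d_{\VC})}$ distinct ranges on $U$, and McDiarmid's inequality for concentration.
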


\begin{proof}

We begin by defining an indicator function $I_{J,C,r}: X \rightarrow \{0,1\}$ for each subset $J\subseteq [k_0]$, $C\in \binom{M}{k}$,  and each $r > 0$. This function is defined as $I_{J,C,r}(x) = 1$ if and only if $x \in X_J\cap B(C,r)$.
We observe that 
\begin{align*}
|B(C,r)\cap U_J|-|B(C,r)\cap X_J| &= \sum_{y\in U} \frac{n}{s}\cdot I_{J,C,r}(y)-\sum_{x\in X} I_{J,C,r}(x) \\
&=\sum_{y\in U} \frac{n}{s}\cdot I_{J,C,r}(y) - \E\left[ \sum_{y\in U} \frac{n}{s}\cdot I_{J,C,r}(y)\right].
\end{align*}

Our goal is to prove that with probability at least $1-O(\frac{1}{\log mk})$,
\begin{equation} \label{lemma1_eq1}
\sup_{C,r} \abs[\bigg]{ \norm*{B(C,r)\cap U_J}_1 - |B(C,r)\cap X_J| } \leq \eps n.
\end{equation}

We shall establish an upper bound for the expectation of equation \eqref{lemma1_eq1} and subsequently apply McDiarmid's inequality to derive a concentration inequality.
 We bound the expectation through the symmetrization trick (see, e.g., \cite[p153]{LT91}) with the maximum of Gaussian variables.


By the standard symmetrization trick,
\begin{align*}
\E\sup_{J,C,r} \abs*{ \sum_{y\in U} \frac{n}{s}\cdot I_{J,C,r}(y)-\E \left[ \sum_{y\in U} \frac{n}{s}\cdot I_{J,C,r}(y)\right] } 
&\leq 2\E\sup_{J,C,r} \abs*{ \sum_{y\in U} b_y\cdot \frac{n}{s}\cdot I_{J,C,r}(y) } \\
&\leq 2\cdot \sqrt{\frac{\pi}{2}} \cdot \E\sup_{J,C,r} \abs*{ \sum_{y\in U} g_y\cdot \frac{n}{s}\cdot I_{J,C,r}(y) }
\end{align*}
where $\{b_y\}_{y\in U}$ are independent Rademacher variables and $\{g_y\}_{y\in D}$ are independent standard Gaussians.


Next, we prove that 
\begin{equation}\label{eqn:supremum_gaussian_sum_range_space}
\E \sup_{J,C,r} \abs*{ \sum_{i=1}^s  g_i\cdot I_{J,C,r}(y_i) } \lesssim \sqrt{kds \log s}.
\end{equation}
To see this, recall that the VC dimension of the ball range space of $(M,\dist)$ is $d_{\VC}$, and since $|U|=s$, there are at most $s
^{O(d_{\VC})}$ essentially different ball range spaces over $X$ (in the sense of considering $X\cap B(c,r)$). Consequently, the number of distinct $k$-ball range spaces $B(C,r)\cap X$ over all choices of $C$ and $r$ is at most $N := s^{O(d_{\VC}k)}$. 
Thus, for each $J\subseteq [k_0]$, by considering $B(C,r)\cap X_J$,  the number of distinct sets 
$I_{J,C,r}$ is at most $2^{k_0}N=s^{O(d_{\VC}k)}$. This implies that the left-hand side of \eqref{eqn:supremum_gaussian_sum_range_space} is an expected supremum of $2^k N$ gaussian variables, each with a variance at most $s$. Applying Fact~\ref{fact_max_gauss}, the left-hand side of \eqref{eqn:supremum_gaussian_sum_range_space} is therefore $O(\sqrt{s}\cdot \sqrt{\log (2^k N)})=O(\sqrt{kd_{\VC}s\log s})$.

It follows that
\begin{gather*}
    \E \sup_{J,C,r} \abs[\Bigg]{ \norm{B(C,r)\cap U_J}_1 - |B(C,r)\cap X_J| } \leq n\cdot \sqrt{\frac{kd\log s}{s}}\leq \eps n
\shortintertext{when}
    s \gtrsim \frac{kd_{\VC}}{\eps^2}\cdot \log \frac{kd_{\VC}}{\eps}.
\end{gather*}
It remains to control the probability
$$
\mathbb{P}\left\{ \sup_{J,C,r} \abs[\Big]{ \norm{B(C,r)\cap U_J}_1 - |B(C,r)\cap X_J| } - \E \sup_{J,C,r} \abs[\Big]{ \norm{B(C,r)\cap U_J}_1 - |B(C,r)\cap X_J| } > \eps n \right\}.
$$

We use McDiarmid's inequality. Observe that for the expression $\sup_{J,C,r} \abs[\big]{ \norm{B(C,r)\cap U_j}_1 - |B(C,r)\cap X_J| }$, changing a single sample in $U$ results in a maximum change of $2\cdot \frac{n}{s}$. Thus, the total variance is bounded by $s\cdot (\frac{2n}{s})^2=\frac{4n^2}{s}$. It follows from McDiarmid's inequality that
\begin{multline*}
\mathbb{P} \left\{ \sup_{J,C,r} \abs[\Big]{ \norm{B(C,r)\cap U_J}_1 - |B(C,r)\cap X_J| } - \E \sup_{J,C,r} \abs[\Big]{ \norm{B(C,r)\cap U_J}_1 - |B(C,r)\cap X_J| } > \eps n \right\}\\
\leq  \exp\left(-\frac{2s (\eps n)^2}{4n^2}\right) 
=   \exp(-2\eps^2 s) 
\lesssim \frac{1}{\log mk},
\end{multline*}
provided that 
\[
    s \gtrsim \frac{1}{\eps^2} \cdot \log \frac{k}{\eps}\cdot \log (mk).
\]
Rescaling $\eps$ completes the proof.
\end{proof}

We are now ready to prove Lemma~\ref{alg_guarantee} for $D$ being a range space approximation.

\begin{proof}[Proof of Lemma~\ref{alg_guarantee}: $D$ is a range space approximation]

By a union bound, with probability at least $1-O\big(\frac{1}{\log(k/\eps)\cdot \log (mk)}\big)$, event $\mathcal{E}$  in \eqref{eq_event}, the event of Lemma~\ref{lemma_range} and the event of Lemma~\ref{lemma_subset} all happen. We condition on these events in the remainder of the proof.

We note that the supports of $U$ and $D$ are the same by construction. We first claim that under $\mathcal{E}$, $w_D(x)\in (1\pm O(\eps))w_U(x)$ for every $x\in U$. Indeed, by the definition of $\mathcal{E}$, we have for each $i\in [k_0]$ that
\begin{eqnarray} \label{eq_weight}
w_D(x)=\frac{n_i}{u_i}\cdot \frac{n}{s}=(1\pm O(\eps))\cdot \frac{n}{s}=(1\pm O(\eps)) \cdot w_U(x).
\end{eqnarray}
This implies for every $J\subseteq [k_0]$, $C\in \binom{M}{k}$ and $r>0$,
\begin{eqnarray} \label{inq_weight}
w_D(D_J\cap B(C,r))\in (1\pm O(\eps)) \cdot w_D(U_J\cap B(C,r)).
\end{eqnarray}
Plugging \eqref{inq_weight} into \eqref{inq_range} gives that
\[
    |B(C,r)\cap X| - O(\eps n)\leq w_D(D_J\cap B(C,r)) \leq |B(C,r)\cap X| + O(\eps n)
\] 
for every $J,C,r$. Rescaling $\eps$ shows that $D$ satisfies Definition~\ref{def_range}.
\end{proof}

\subsection{Proof of Lemma~\ref{alg_guarantee}: Indexed-subset cost approximation}
\label{sec_proof_subset}

In this section, we prove that $D$ is an $\eps$-indexed-subset cost approximation of $X$.
Similar to Section~\ref{sec:alg_guarantee_proof}, we first prove that $U$ is an $\eps$-indexed-subset cost approximation of $X$ and then translate to $D$ conditioned on event $\mathcal{E}$.
We have the following key lemma for $U$.

\begin{lemma}[Indexed-subset cost approximation for $U$] \label{lemma_subset}  
With probability $1-O\big(\frac{1}{\log(k/\eps)\cdot \log (mk)}\big)$, $U$ is an $\eps$-indexed-subset cost approximation, i.e., it holds for every $J\subseteq [k_0]$ and every $C\in \binom{M}{k}$ that
\begin{equation} \label{inq_subset}
    \abs{ \cost(U_J,C)-\cost(X_J,C) } \leq \eps\cdot \cost(X_J,C)+\eps n r.
\end{equation}
\end{lemma}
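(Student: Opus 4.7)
\textbf{Proof proposal for Lemma~\ref{lemma_subset}.}

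The plan is to bound, for each fixed $J \subseteq [k_0]$, the expected supremum of the Gaussian process $Z_C := \sum_{u \in U} g_u \cdot \frac{n}{s} \cdot \dist(u, C)$ indexed by $C \in \binom{M}{k}$ via a chaining argument, then to apply concentration and to take a union bound over the $2^{k_0} \leq 2^k$ choices of $J$. By the standard symmetrization trick (as in the proof of Lemma~\ref{lemma_range}), controlling $\E \sup_C |Z_C|$ will control $\E \sup_{J,C} |\cost(U_J, C) - \cost(X_J, C)|$ up to constants, where the symmetrization absorbs both the multiplicative and additive pieces once we split into small-cost and large-cost regimes.

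The first step is to build a geometric sequence of nets $\mathcal{N}_0, \mathcal{N}_1, \ldots$ of the center-set space at scales $r, r/2, r/4, \ldots$, where ``scale $\delta$'' is measured through the pseudometric $\rho(C, C') = \max_{x \in X} |\dist(x,C) - \dist(x,C')|$. Because $X$ is an $(r, k_0)$-regular instance, all points lie in $B(A, r)$ for some $k$-point set $A$, and the VC dimension of the ball family gives a Sauer--Shelah bound $|\mathcal{N}_j| \leq (C \cdot 2^j/\eps)^{O(k d_{\VC})}$; I would verify this by composing a net of the $k$ individual centers (each net at VC-controlled size over the relevant region near $X$) and taking $k$-fold products, a standard step in the recent chaining-based vanilla coreset constructions~\cite{bansal2024sensitivity,Cohen25}.

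The second step is the chaining itself: write $\dist(u,C) = \dist(u, C_0) + \sum_{j \geq 0} \bigl(\dist(u, C_{j+1}) - \dist(u, C_j)\bigr)$ with $C_j$ the nearest element of $\mathcal{N}_j$ to $C$. The coarsest term is handled by taking maximum of at most $|\mathcal{N}_0|$ Gaussians, each of variance at most $s \cdot r^2$ (since all summands have magnitude $O(r)$ when $C_0$ is chosen close to $A$); the increments at level $j$ are Gaussians of variance at most $s \cdot (2^{-j} r)^2$ over at most $|\mathcal{N}_j|^2$ pairs. Applying Fact~\ref{fact_max_gauss} level by level and summing the Dudley-type series gives
\[
\E \sup_C |Z_C| \lesssim \sqrt{s \cdot k d_{\VC}} \cdot r \cdot \polylog(k d_{\VC}/\eps),
\]
which, after multiplication by $n/s$ and comparison with $\eps n r$, is controlled once $s \gtrsim k d_{\VC} \eps^{-2} \polylog(k d_{\VC}/\eps)$. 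To recover the multiplicative $\eps \cost(X_J, C)$ term for center sets $C$ far from $A$, I would split the distance range at a threshold of order $r$: for $\dist(x, C) \gtrsim r$ the same chaining (now with scale driven by $\dist(x,C)$ instead of $r$) produces relative error $\eps$ by a standard AM-GM pairing of the additive chaining bound with $\cost(X_J, C)$.

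The third step is concentration via McDiarmid: changing one sample alters $\cost(U_J, C)$ by at most $\frac{n}{s} \cdot \max_x \dist(x, C)$, which is a priori unbounded, so I would precondition on the event that no sample falls at distance more than (say) $\poly(n)$ from $A$ (trivially true up to a harmless truncation and logarithmic loss) and obtain concentration of order $\frac{n}{\sqrt{s}} \cdot r \cdot \polylog$. Combining with the chaining bound and taking a union bound over the $2^{k_0}$ subsets $J$—which merely adds an $O(k)$ summand to each $\log|\mathcal{N}_j|$, already absorbed in the nets' size—finishes the proof.

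The main obstacle is calibrating the chaining so that the additive error $\eps n r$ and the multiplicative error $\eps \cost(X_J, C)$ are both achieved with the \emph{same} sample, and in particular controlling the net sizes at the right scale. This is where the uniform-radius assumption of the $(r, k_0)$-regular instance is indispensable: it pins down the natural scale $r$ at which chaining terminates for inlier-heavy $C$, while the monotonic relation between $\rho$ and the ball VC dimension ensures that the net cardinalities remain compatible with the target sample size $s = \tilde{O}(k d_{\VC} \eps^{-2})$.
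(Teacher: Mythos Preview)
Your proposal has two genuine gaps.

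First, the net construction is not valid for general metrics with bounded VC dimension. You propose to net the center-set space in the pseudometric $\rho(C,C') = \max_{x}|\dist(x,C)-\dist(x,C')|$ by ``composing a net of the $k$ individual centers,'' with size $(C\cdot 2^j/\eps)^{O(kd_{\VC})}$. But in an arbitrary metric space there is no way to net $M$ itself---that would require doubling structure. Sauer--Shelah bounds the number of distinct sets $X\cap B(c,r)$, not a covering number of the center space. The paper instead nets the space of \emph{distance vectors} $v^U_{i,J,C}(x)=\dist(x,C)\cdot\mathbf{1}[\pi(x)\in J\cap I_{i,C}]$: rounding each coordinate to multiples of $\alpha\cdot 2^i r$ produces a net whose points are determined by a sequence of $O(\alpha^{-1})$ ball range spaces $B(C,\cdot)\cap U$, giving $\log|N^U_{\alpha,i}|\lesssim \alpha^{-1}d_{\VC}k\log k\cdot\log s$ (Lemma~\ref{lemma_netsize_main}). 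Note the dependence on $\log s$, absent from your bound.

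Second, and more fundamentally, your handling of the multiplicative term $\eps\cost(X_J,C)$ does not go through. You assert that ``for $\dist(x,C)\gtrsim r$ the same chaining (now with scale driven by $\dist(x,C)$) produces relative error $\eps$,'' but different $x$ sit at different distances and there is no single scale to chain at; likewise your coarsest-term variance bound $s\cdot r^2$ is only valid ``when $C_0$ is chosen close to $A$,'' which the supremum does not respect. The paper resolves this by partitioning the clusters $j\in J$ into $O(\log(1/\eps))$ \emph{levels} according to $\dist(a_j,C)\in(2^i r,2^{i+1}r]$, proving the normalized bound \eqref{inq_chaining_main} at each level separately, and treating clusters with $\dist(a_j,C)>r/\eps$ trivially since all their points have equal distance to $C$ up to a $(1\pm\eps)$ factor once $\mathcal{E}$ holds (Lemma~\ref{lemma_relate_main}). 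The denominator $(1+2^i|J_{i,C}|/k_0)\cdot nr$ is precisely what interpolates between the additive and multiplicative regimes, and the piecewise-constant net at the coarsest scale (second part of Lemma~\ref{lemma_netsize_main}) is needed to control the initial chaining term; your outline has no analogue of either device. Finally, McDiarmid cannot be rescued by truncating samples: the bounded difference $\tfrac{n}{s}\max_x\dist(x,C)$ is unbounded over $C$, not over $x$. The paper sidesteps this by applying Markov's inequality to the expectation of the already-normalized supremum.
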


The proof employs a chaining argument developed in recent works \cite{fefferman2016testing,Cohen2022Towards,Cohen25}.  
In the remaining text, we provide the proof structure and highlight our novelties.
The complete version can be found in Appendix \ref{sec_proof_subset_appendix}.

Recall that our instance is an $(r,k_0)$-regular instance, given by $X=X_1\cup\cdots\cup X_{k_0}$, where $X_i\subset B(a_i,r)$ for some $a_i\in M$. For every $x\in X$, let $\pi(x)$ denote the unique index $j\in [k_0]$ such that $x\in X_j$. Then $\dist(x,a_{\pi(x)})\leq r$.

Let $T=\lceil \log 1/\eps\rceil$. We define $A_0=[0,2r]$, and $A_i=(2^i r,2^{i+1}r]$ for each $i\in [T]$. Let $I_{i,C}=\{j\in [k_0]\mid \dist(a_j,C)\in A_i\}$ for each $i\in \{0,\dots,T\}$.

For every $i\in \{0,\dots,T\}$, $J\subseteq [k_0]$, $C\in \binom{M}{k}$, and every subset $S\subseteq X$, the distance vector $v_{i,J,C}^S:S\rightarrow \mathbb{R}_{\geq 0}$ is defined as
\begin{equation}
\label{eq:vS}
v^S_{i,J,C}(x)=
\begin{cases}
    \dist(x,C), & \pi(x)\in J\cap I_{i,C} \\
    0, & \text{otherwise}.
\end{cases}
\end{equation}
Compared to the vanilla case, we have an additional subscript $J$ for distance vectors, which enables the handling of the robust case. 
This notion ensures that if $\dist(x,C) \neq 0$, we must have that $\pi(x)\in J$ and $\dist(a_{\pi(x)}, C)\in A_i$.
Since $i\leq T$, we know that $\dist(a_{\pi(x)}, C) \leq 4r/\eps$.
This inequality motivates our choice of \( T \). Specifically, if \( \dist(a_{\pi(x)}, C) > 4r/\eps \), then all points \( y \in B(a_{\pi(x)}, r) \) are sufficiently far from \( C \) such that $d(y, C) \in (1\pm \eps)\cdot d(x,C)$.
Thus, this cluster $B(a_{\pi(x)}, r)$ only induces a multiplicative error and can be safely ignored.

Define the level-$i$ distance vector set as $V^S_{i}=\{v^S_{i,J,C}\mid J\subseteq [k_0],C\in \binom{M}{k}\}$. The following lemma establishes the relationship between indexed-subset cost approximation and $V_i^U$.

We have the following lemma that reduces Lemma \ref{lemma_subset} to bounding chaining errors with respect to $v^S_{i,J,C}$.

\begin{lemma}[Sufficient chaining error for Lemma \ref{lemma_subset}]
\label{lemma_relate_main} 
If for every $i\in \{0,\dots,T\}$, 
\begin{equation} \label{inq_chaining_main}
\sup_{J,C} \frac{n}{s}\cdot \abs*{ \frac{\sum_{x\in U} v^U_{i,J,C}(x)-\E \sum_{x\in U} v^U_{i,J,C}(x)}{(1+2^i|J_{i,C}|/k_0)\cdot nr} }
\leq \frac{\eps}{\log(1/\eps)},
\end{equation}
where $J_{i,C}=J\cap I_{i,C}$,
and event $\mathcal{E}$ (defined in \eqref{eq_event}) happens, then $U$ is an $\eps$-indexed-subset cost approximation. 
\end{lemma}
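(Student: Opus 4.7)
The plan is to fix $J \subseteq [k_0]$ and $C \in \binom{M}{k}$ and to decompose the deviation as
$$
\cost(U_J, C) - \cost(X_J, C) = \sum_{i=0}^{T} E_i + E_{\mathrm{far}},
$$
where $E_i := (n/s)\sum_{x \in U} v^U_{i,J,C}(x) - \sum_{x \in X} v^X_{i,J,C}(x)$ collects contributions from clusters with $\dist(a_j, C) \in A_i$, and $E_{\mathrm{far}}$ collects the ``far'' clusters $j \in J$ with $\dist(a_j, C) > 2^{T+1} r \geq 4r/\eps$. The elementary identity $(n/s)\E \sum_{x \in U} v^U_{i,J,C}(x) = \sum_{x \in X} v^X_{i,J,C}(x)$ ensures that each $E_i$ is precisely the deviation controlled by hypothesis \eqref{inq_chaining_main}.

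For each level $i$, the hypothesis immediately yields $|E_i| \leq (\eps/\log(1/\eps))\,(1 + 2^i |J_{i,C}|/k_0)\, nr$. Summing the constant term $1$ over $i = 0, \dots, T = O(\log(1/\eps))$ contributes $O(\eps n r)$. For the remaining $2^i |J_{i,C}|$ terms, I would exploit the definition of $A_i$ to observe that $\sum_{i \geq 1} 2^i |J_{i,C}| \leq (1/r)\sum_{j \in J} \dist(a_j, C)$ and $|J_{0,C}| \leq k_0$, and then invoke the regularity hypothesis $n_j \geq n/(2k_0)$ together with the within-cluster triangle inequality to obtain
$$
\frac{1}{k_0}\sum_{j \in J} \dist(a_j, C) \leq \frac{2}{n}\sum_{j \in J} n_j \dist(a_j, C) \leq \frac{2}{n}\bigl(\cost(X_J, C) + n r\bigr).
$$
Plugging this in yields $\sum_{i=0}^T |E_i| \leq O(\eps \cost(X_J, C) + \eps n r)$.

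For $E_{\mathrm{far}}$, every $x \in X_j$ with $\dist(a_j, C) > 4r/\eps$ satisfies $\dist(x, C) = (1 \pm \eps/4)\dist(a_j, C)$, so the additive $r$-error within such a cluster is dominated by the $\eps$-multiplicative error. Event $\mathcal{E}$ from \eqref{eq_event} gives $(n/s) u_j \in (1 \pm \eps) n_j$, so $\cost(U_j, C)$ and $\cost(X_j, C)$ both agree with $n_j \dist(a_j, C)$ up to a $(1 \pm O(\eps))$ factor, and hence $|E_{\mathrm{far}}| \leq O(\eps)\cost(X_J, C)$. Combining the three bounds and rescaling $\eps$ by a constant factor produces $|\cost(U_J, C) - \cost(X_J, C)| \leq \eps \cost(X_J, C) + \eps n r$, as required.

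The main obstacle will be the conversion of the aggregate level penalty $\sum_i 2^i |J_{i,C}|/k_0$ into the form $\cost(X_J, C)/(n r) + O(1)$: this is precisely the step that uses the regular-instance hypothesis $n_j \geq n/(2 k_0)$ to reweight cluster counts into total cost, and it explains why the slightly unusual factor $(1 + 2^i |J_{i,C}|/k_0)$ is the ``right'' normalizer in \eqref{inq_chaining_main}. Everything else reduces to careful bookkeeping of additive $n r$ and multiplicative $\cost(X_J, C)$ terms, with event $\mathcal{E}$ handling the weight-rescaling gap between the uniform weights $n/s$ in $U$ and any per-cluster rescaling.
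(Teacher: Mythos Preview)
Your proposal is correct and follows essentially the same approach as the paper's proof. Both decompose the cost difference into a ``far'' part (handled via event $\mathcal{E}$ and the near-constancy of distances in far clusters) and the levels $i=0,\dots,T$ (handled via the hypothesis \eqref{inq_chaining_main}), and both convert the penalty $2^i|J_{i,C}|/k_0$ into a multiple of $\cost(X_{J_{i,C}},C)/(nr)$ using the regularity bound $n_j\ge n/(2k_0)$; the only cosmetic difference is that the paper bounds each level's penalty by the level-$i$ cost \emph{before} summing, whereas you first sum $\sum_i 2^i|J_{i,C}|$ and then apply a single cost bound.
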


To employ a chaining argument in proving inequality (\ref{inq_chaining_main}), we need to control the net size of $V_i^S$.
A subset $N\subseteq V^S_i$ is called an $\alpha$-net of $V^S_i$ if for every $i,J,C$, there exists a vector $v\in N$ such that
\[
\begin{cases}
|v(x)-v^S_{i,J,C}(x)|\leq \alpha\cdot 2^{i+2}\cdot r & v^S_{i,J,C}(x)>0\\
0 & v^S_{i,J,C}(x)=0
\end{cases}
\]
Note that our net covers cost vectors with respect to all subsets $J$ instead of only $J = [k]$.

We present the following lemma, which provides an upper bound on the net size.  
This result is essentially an adaptation and extension of Lemma~5.3 from \cite{huang2025coresets} and Lemma~3.6 from \cite{Cohen25}, with an additional consideration for the subscript \( J \).  
Fortunately, the inclusion of \( J \) increases the net size by at most a factor of \( 2^k \), which does not affect its overall order.

\begin{lemma}[Bounding net size on VC instances] 
\label{lemma_netsize_main}
For every $i\in \{0,\dots,T\}$, $\alpha\in (0,1)$ and every subset $S\subseteq X$ with $\abs{S}\geq 2$, there exists an $\alpha$-net $N^S_{\alpha,i}$ of $V^S_i$ such that
\[
\log \abs[big]{N^S_{\alpha,i}} \lesssim \alpha^{-1}d k\log k \cdot \log\abs{S}.
\]
Furthermore, if $\alpha\geq 2^{-i}$, it is possible to ensure that $N_{\alpha,i}^S$ consists of piecewise constant vectors. That is, every $v\in N_{\alpha,i}^S$ satisfies that $v(x)$ is constant for all $x\in X_j$ for each $j\in [k_0]$.
\end{lemma}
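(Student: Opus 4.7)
The plan is to build $N^S_{\alpha,i}$ by (a) discretizing the level-$i$ distance function $\dist(\cdot,C)$ at precision $\delta := \alpha\cdot 2^{i+2}r$, (b) using Sauer--Shelah on the $k$-ball range space to count the resulting discretized distance profiles on $S$, and (c) separately enumerating the possible support sets $K=J\cap I_{i,C}\subseteq[k_0]$. First observe that whenever $v^S_{i,J,C}(x)>0$ we have $\pi(x)\in I_{i,C}$, hence $\dist(a_{\pi(x)},C)\in A_i\subseteq [0,2^{i+1}r]$, and the triangle inequality with $\dist(x,a_{\pi(x)})\leq r$ yields $\dist(x,C)\in [(2^i-1)^+ r,(2^{i+1}+1)r]$, a window of width $O(2^i r)$. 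Cover this window by $L = O(\alpha^{-1})$ grid points $t_1<\cdots<t_L$ of spacing $\delta$.

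Next, for each $C\in\binom{M}{k}$ and $\ell\in[L]$, the set $\Sigma_\ell(C):=S\cap B(C,t_\ell)$ belongs to the $k$-ball range space $\Balls_k$. Since the ball range space $\Balls$ of $(M,\dist)$ has VC dimension $d=d_{\VC}$, a standard union-closure lemma gives $\VC(\Balls_k)=O(kd\log k)$, so by Sauer--Shelah the number of distinct sets of the form $\Sigma_\ell(C)\subseteq S$ is at most $|S|^{O(kd\log k)}$. Taking the $L$-tuple $(\Sigma_\ell(C))_{\ell\in[L]}$ bounds the number of discretized distance profiles on $S$ by $|S|^{O(\alpha^{-1}kd\log k)}$; each such profile induces a function $\tilde d_C(x):=t_{\min\{\ell:x\in \Sigma_\ell(C)\}}$ that approximates $\dist(x,C)$ within $\delta$ on the relevant window. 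Separately, $I_{i,C}$ equals the difference of two $k$-ball ranges evaluated on the point set $\{a_1,\dots,a_{k_0}\}$, so it takes at most $k_0^{O(kd\log k)}$ values as $C$ varies, and for each $C$ the set $K=J\cap I_{i,C}$ ranges over at most $2^{k_0}$ subsets of $I_{i,C}$.

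For each realizable pair (discretized profile, support $K\subseteq[k_0]$), place one net vector $v$ into $N^S_{\alpha,i}$ defined by $v(x)=\tilde d_C(x)$ if $\pi(x)\in K$ and $v(x)=0$ otherwise. This satisfies the net condition: if $v^S_{i,J,C}(x)=0$ then either $\pi(x)\notin K$ or the discretized value is forced to $0$, so $v(x)=0$; if $v^S_{i,J,C}(x)>0$ then $|v(x)-\dist(x,C)|\leq \delta$. Multiplying the counts gives
\[
|N^S_{\alpha,i}|\;\leq\;|S|^{O(\alpha^{-1}kd\log k)}\cdot 2^{k_0},
\]
so $\log|N^S_{\alpha,i}|=O(\alpha^{-1}kd\log k\cdot\log|S|)+k_0$, which is dominated by the first term since $k_0\leq k$ and $|S|\geq 2$. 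For the piecewise-constant strengthening when $\alpha\geq 2^{-i}$, we have $\delta\geq 4r$, which exceeds twice the diameter of any cluster ball $B(a_j,r)$; hence $\dist(x,C)$ varies by at most $2r\leq \delta/2$ within each $X_j$. Repeat the construction but apply the Sauer--Shelah counting to the point set $\{a_1,\dots,a_{k_0}\}$ rather than $S$, and assign every $x\in X_j$ with $j\in K$ the common discretized value $\tilde d_C(a_j)$. The induced vector is constant on each $X_j$, and the approximation error is at most $\delta/2+r\leq \delta$, matching the $\alpha\cdot 2^{i+2}r$ net requirement.

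The main obstacle is correctly tracking the two coupled sources of complexity---the VC dimension blow-up from $\Balls$ to $\Balls_k$ (which produces the $\log k$ factor in the bound) and the fact that $J$ ranges over \emph{all} subsets of $[k_0]$ so that the support $K=J\cap I_{i,C}$ must be enumerated independently of $C$. The latter is the new feature compared with the analogous vanilla net lemmas in \cite{huang2025coresets,Cohen25}, but as seen above it only contributes an additive $k_0$ in the exponent, which is harmless.
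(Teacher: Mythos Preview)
Your proof is essentially the paper's argument: discretize $\dist(\cdot,C)$ on the level-$i$ window into $O(\alpha^{-1})$ thresholds, count the resulting profiles on $S$ via Sauer--Shelah for $\Balls_k$ (whose VC dimension is $O(dk\log k)$), and pay an extra $2^{k_0}$ for the support $K=J\cap I_{i,C}$. The only slip is in the piecewise-constant step: with your rounding rule $\tilde d_C(x)=t_{\min\{\ell:x\in\Sigma_\ell\}}$ the discretization error is $\delta$, not $\delta/2$, so the claimed bound $\delta/2+r\le\delta$ is not what you actually get; the paper avoids this by taking grid spacing $\beta=\alpha\,2^{i}r=\delta/4$ (still $O(\alpha^{-1})$ thresholds), which yields error $r+2\beta\le 4\beta=\alpha\cdot 2^{i+2}r$ once $\alpha\ge 2^{-i}$ forces $r\le\beta$.
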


Using a standard chaining argument, we can use this net size to derive Inequality (\ref{inq_chaining_main}) (see Lemma \ref{lemma_chaining}).
\section{Proof of Theorem \ref{thm_main}: First Coreset for Doubling Instances} \label{sec:Doubling}

In this section, we prove Theorem~\ref{thm_main} for the case \( d = d_D \leq d_{\VC} \), where the coreset size is proportional to \( d \).  
Notably, the VC dimension of a doubling instance can be unbounded \cite{huang2018epsilon}, preventing us from directly applying the proof structure from Section~\ref{sec:VC}, as \( D \) is not necessarily a range space approximation.  
To address this challenge, we adopt the approach from \cite{huang2018epsilon} to effectively handle the range space of doubling instances.

We first introduce the following concept that slightly perturbs the distance function. 

\begin{definition}[Smoothed distance function]
\label{def:eps_smooth}
Let $(M,\dist)$ be a metric space. A function $\delta_\eps: M\times M \rightarrow [0,\infty)$ is called an \emph{$\eps$-smoothed distance function} of $(M,\dist)$ if it holds for all $x,y \in M$ that
\[
    (1-\eps ) \dist (x,y) \leq \delta_\eps(x,y)\leq (1+\eps)\dist(x,y).
\]
\end{definition}

When $\eps$ is clear in the context, we may omit the subscript in $\delta_\eps$ and write $\delta$ only.
We denote $B^\delta(x,r)=\{y\mid \delta(x,y)\leq r\} $ as the ball centered at $x$ with respect to the smoothed distance function $\delta$ and by $\cost^{\delta}(X,C)$ the cost function under $\delta$. 
Similarly, we use $\cost^{(t,\delta)}$ to define the cost function with $t$ outliers based on the smoothed distance function $\delta$. 
The same applies to other related definitions, where replacing the original distance function with $\delta$ yields the corresponding modified versions.

We shall use the following variant of Definitions~\ref{def_range} with respect to an $\eps$-smoothed distance function $\delta$. 

\begin{definition}[Range space approximation with respect to smoothed distance function] \label{def:eps_smoothed_range}
    We say $D$ is an $\eps$-range space approximation of $X$ with respect to a smoothed distance function $\delta$ if for every $I\subseteq[k_0]$, $C\in \binom{M}{k}$ and $r>0$,
    \[
        \abs[\big]{ X_I\cap B^\delta (C,r) - \norm{ D_I \cap B^\delta(C,r) }_1 
        }\leq \eps n.
    \]
\end{definition}




%

%
Similar to Section~\ref{sec:VC}, we present two key lemmata below. 
The first provides another sufficient condition for $D$ to be a robust coreset for $(r,k_0)$-regular instance, which is useful for doubling spaces. 
Its proof is deferred to Section~\ref{sec:lemma:main:doubling}.

\begin{lemma}[Sufficient condition for robust coreset on bounded doubling dimension] \label{lemma:main:doubling}
    Let $X$ be an $(r,k_0)$-regular instance of size $n$ and $D$ is a weighted subset of $X$. If $D$ is simultaneously capacity-respecting, an $\eps$-range space approximation with respect to an $\eps$-smoothed distance function, and an $\eps$-indexed-subset cost approximation, then $D$ is an $(O(\eps), m, O(\eps n r))$-robust coreset of $X$.
\end{lemma}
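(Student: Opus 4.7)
The plan is to mirror the proof of Lemma~\ref{lemma_main_tech}, with every appeal to the range space approximation carried out in the $\eps$-smoothed metric $\delta$ and every appeal to the indexed-subset cost approximation kept in the original metric $\dist$; the multiplicative slack between $\delta$ and $\dist$ will then be absorbed into the final $O(\eps)$ error. A first reduction uses the $\eps$-smoothness $\delta(x,y) \in (1 \pm \eps)\dist(x,y)$, which implies $\cost^{(t,\delta)}(Y, C) \in (1 \pm \eps)\cost^{(t)}(Y, C)$ for every weighted set $Y$, every $C \in \binom{M}{k}$, and every $t$ (the same optimal $t$-point removal differs in cost by at most a $(1+O(\eps))$ factor). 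Hence it suffices to bound $|\cost^{(t,\delta)}(X,C) - \cost^{(t,\delta)}(D,C)|$ by $O(\eps)\cost^{(t)}(X,C) + O(\eps n r)$ and then invoke the triangle inequality on both endpoints.

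Next, I would redo the three-way partition of Lemma~\ref{proof_fact} using $\delta$-balls: set $r^\ast := \inf\{u > 0 : |B^\delta(C,u) \cap X| \geq n - t\}$ and split $[k_0]$ into $A_1 = \{j : X_j \subseteq B^\delta(C, r^\ast)\}$, $A_2 = \{j : X_j \cap B^\delta(C, r^\ast) = \emptyset\}$, and $A_3 = [k_0] \setminus (A_1 \cup A_2)$. Since each cluster $X_j$ has $\dist$-diameter at most $2r$ and hence $\delta$-diameter at most $2(1+\eps)r$, the $\delta$-analogues of all five parts of Lemma~\ref{proof_fact} carry over verbatim; capacity-respecting still forces $\|D_{A_i}\|_1 = |X_{A_i}|$, which in turn makes $D_{A_1}, D_{A_2}$ consist entirely of $\delta$-inliers and $\delta$-outliers of $\cost^{(t,\delta)}(D,C)$ respectively, and places $X_{A_3} \cup D_{A_3}$ inside $B^\delta(C, r^\ast + 2(1+\eps)r) \setminus B^\delta(C, (r^\ast - 2(1+\eps)r)^+)$. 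An integration-by-parts argument identical to Lemma~\ref{lemma_inq} then yields
\[
|\cost^{(t,\delta)}(X,C)-\cost^{(t,\delta)}(D,C)| \leq |\cost^\delta(X_{A_1},C)-\cost^\delta(D_{A_1},C)| + \int_{(r^\ast - 2(1+\eps)r)^+}^{r^\ast + 2(1+\eps)r} \bigl| |X_{A_3} \cap B^\delta(C,u)| - \|D_{A_3} \cap B^\delta(C,u)\|_1 \bigr|\, du.
\]
The integral is at most $O(\eps n r)$ by Definition~\ref{def:eps_smoothed_range}. For the $A_1$-term I would convert $\cost^\delta$ to $\cost$ using $\delta \in (1 \pm \eps)\dist$, apply the indexed-subset cost approximation (stated w.r.t.\ $\dist$) to obtain $\eps\cost(X_{A_1},C) + \eps n r$, and convert back, picking up one more $O(\eps)\cost(X_{A_1},C) \leq O(\eps)\cost^{(t)}(X,C)$ of multiplicative slack.

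The main obstacle is the $\delta$-version of the inlier/outlier characterization from Lemma~\ref{proof_fact}, namely that the $\delta$-partition determined by $B^\delta(C, r^\ast)$ correctly identifies the inliers and outliers of $\cost^{(t,\delta)}(D,C)$, not merely those of $\cost^{(t,\delta)}(X,C)$. The subtlety is that $r^\ast$ is defined only as the $\delta$-ball quantile of $X$, so we must argue that the $\delta$-inliers of $D$ still partition according to clusters in the same way; but since $D \subseteq X$ and the capacity-respecting property pins down cluster-level weight totals, the weighted counting argument of Lemma~\ref{proof_fact} goes through without change, with $\dist$-balls replaced by $\delta$-balls throughout. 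Once this is in place, the rest of the proof is the same chain of inequalities as in the VC case, and rescaling $\eps$ delivers the claimed $(O(\eps), m, O(\eps n r))$-robust coreset guarantee.
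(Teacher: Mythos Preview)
Your approach of working entirely in the smoothed metric $\delta$ and then converting back is natural, but there is a genuine gap in the step where you claim the $\delta$-analogues of Lemma~\ref{proof_fact}(iv)(v) ``carry over verbatim''. The original proof of (iv) uses the triangle inequality: if $y\in X_j$ has $\dist(y,C)\le r^\ast$, then $\dist(x,C)\le \dist(x,y)+\dist(y,C)\le 2r+r^\ast$. The smoothed function $\delta$ is \emph{not} a metric, so from $\delta(y,C)\le r^\ast$ and $\delta(x,y)\le 2(1+\eps)r$ you cannot conclude $\delta(x,C)\le r^\ast+2(1+\eps)r$. Going through $\dist$ instead gives only
\[
\delta(x,C)\le (1+\eps)\dist(x,C)\le (1+\eps)\bigl(2r+(1-\eps)^{-1}r^\ast\bigr)=\tfrac{1+\eps}{1-\eps}\,r^\ast+2(1+\eps)r,
\]
so the true integration window is $[(1-O(\eps))r^\ast-O(r),\,(1+O(\eps))r^\ast+O(r)]$, with an extra $O(\eps r^\ast)$ of length. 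Applying the $\eps$-range space bound on this wider interval yields an additional $O(\eps^2 n r^\ast)$ error, and this is \emph{not} in general dominated by $O(\eps)\cost^{(t)}(X,C)+O(\eps nr)$: take $k_0=1$, all of $X$ within radius $r$ of $a_1$, $C$ at distance $R\gg r/\eps$, and $t=m=n-1$; then $\cost^{(t)}(X,C)\approx R$ but $\eps^2 n r^\ast\approx \eps^2 n R$, which can be arbitrarily larger.

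The paper avoids this by keeping $r^\ast$ and the partition $A_1,A_2,A_3$ in the true metric $\dist$ (so the triangle inequality gives a genuine width-$4r$ window for $X_{A_3}$ in $\dist$), switching to $\delta$ only for the $A_3$ integral. The resulting $\delta$-integration limits are $[(1-\eps)(r^\ast-2r)^+,(1+\eps)(r^\ast+2r)]$, which the paper splits into the middle piece of length $\le 4r$ (handled by the range space approximation) and two tails of length $O(\eps r^\ast)$. On each tail the integrand is bounded not by $\eps n$ but by $(|X_{A_3}|-(t-g))^+$ (resp.\ $(\|D_{A_3}\|_1-(t-g))^+$), and since every inlier of $X_{A_3}$ has $\dist$-distance at least $(r^\ast-2r)^+$, this product is at most $\eps\cost^{(t-g)}(X_{A_3},C)\le \eps\cost^{(t)}(X,C)$ (and similarly for $D$). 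Those tail contributions are exactly the extra $\eps\cost^{(t)}(X,C)+\eps\cost^{(t)}(D,C)$ terms in the paper's Lemma~\ref{lemma:diffrerence:doubling}. Your sketch can be repaired along the same lines, but as written the interval bound is wrong and the ``integral is at most $O(\eps n r)$'' conclusion does not follow.
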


The second lemma shows that Algorithm \ref{alg_coreset} indeed outputs a coreset $D$ satisfying the aforementioned properties, whose proof can be found in Section \ref{sec:lemma_main:range:doubling}.

\begin{lemma}[Performance analysis of Algorithm~\ref{alg_coreset} for doubling instances]\label{lem:alg:doubling}
    With probability $1-O(\frac{1}{\log(k/\eps)\log{(km)}})$, Algorithm~\ref{alg_coreset} outputs a weighted set which is simultaneously capacity-respecting w.r.t.\ $X$, an $\eps$-range space approximation with respect to some (random) $\eps$-smoothed distance function, and an $\eps$-indexed-subset cost approximation of $X$.
\end{lemma}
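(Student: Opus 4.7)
My plan is to mirror the three-property decomposition used in Lemma~\ref{alg_guarantee} for the VC case and isolate the single new ingredient. The capacity-respecting property is immediate: by the reweighting in Lines~5--6 of Algorithm~\ref{alg_coreset}, $w_D(D\cap X_i)=u_i\cdot(n_i/u_i)\cdot(n/s)=n_i$ for every $i\in[k_0]$, regardless of the randomness. The indexed-subset cost approximation follows from the same chaining argument as in Lemmata~\ref{lemma_subset} and \ref{lemma_relate_main}, provided I supply a doubling-dimension analogue of the net-size bound of Lemma~\ref{lemma_netsize_main}. Such a bound, $\log|N^S_{\alpha,i}|\lesssim \alpha^{-1}d_D k\log k\cdot\log|S|$, follows from the standard packing bound in a doubling space: for each of the $k_0$ clusters and each $i\in\{0,\dots,T\}$, the relevant range of $\dist(\cdot,C)$ on that cluster sits in an interval of length $O(2^i r)$, which can be covered by $\alpha^{-O(d_D)}$ representatives; the ensuing chaining inequality \eqref{inq_chaining_main} then holds with failure probability $O(1/(\log(k/\eps)\log(km)))$.

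The genuinely new task, and the main obstacle, is Property~2, because in a doubling space the ball range space $\Balls$ can have infinite VC dimension and the counting step in the proof of Lemma~\ref{lemma_range} breaks down. The plan is to invoke the random-perturbation technique of \cite{huang2018epsilon}, formalized here as Lemma~\ref{lemma:randomdist}: sample, independently of $U$ and of $X$, a random $\eps$-smoothed distance function $\delta$ whose associated ``smoothed'' ball system $\{B^\delta(x,r)\mid x\in M,\ r\geq 0\}$ has a high-probability VC dimension bounded by $\tilde O(d_D)$. Conditioning on this bound, the $k$-fold union range space $\{B^\delta(C,r)\mid C\in\binom{M}{k},\ r>0\}$ shatters at most $s^{O(d_D k)}$ subsets of $U$, and refining by the $2^{k_0}$ possible choices of $J\subseteq[k_0]$ multiplies this count by $2^k$, which is absorbed into the same exponent.

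With this effective net in hand I would replay the three steps of the proof of Lemma~\ref{lemma_range} verbatim, substituting $\dist$ by $\delta$: (i) symmetrize and apply Fact~\ref{fact_max_gauss} to obtain
\[
\E\sup_{J,C,r}\bigl|\,|B^\delta(C,r)\cap X_J|-\|B^\delta(C,r)\cap U_J\|_1\,\bigr|\lesssim n\sqrt{k d_D \log s/s}\leq \eps n
\]
whenever $s=\tilde\Omega(kd_D/\eps^2)$; (ii) apply McDiarmid's inequality (single-coordinate change $O(n/s)$, total variance $O(n^2/s)$) to concentrate this supremum around its expectation with failure probability $O(1/(\log(k/\eps)\log(km)))$; and (iii) transfer from $U$ to $D$ via event $\mathcal{E}$, which forces $w_D(x)\in(1\pm O(\eps))w_U(x)$ exactly as in \eqref{eq_weight}--\eqref{inq_weight}. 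A final union bound over the three properties and $\mathcal{E}$ (via Lemma~\ref{lemma_event}) yields the claimed overall failure probability. The subtle point I would need to verify is that the randomness in $\delta$ is independent of the sampling randomness used for $U$, so that the probabilistic VC bound of Lemma~\ref{lemma:randomdist} can be applied conditionally to produce a deterministic net on the realized sample.
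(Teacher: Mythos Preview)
Your overall plan is correct and matches the paper's structure: capacity-respecting is immediate from the reweighting; the range-space property is obtained via the random $\eps$-smoothed distance of Lemma~\ref{lemma:randomdist} (the paper packages your symmetrization/McDiarmid replay into the black-box Lemma~\ref{lemma:randomapprox}, then union-bounds over the $2^k$ choices of $J$ as in Lemma~\ref{lemma:eps_range}); and the indexed-subset cost approximation is the same chaining proof with a doubling-dimension net in place of Lemma~\ref{lemma_netsize_main}.

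The one place where your argument needs repair is the doubling net-size bound. Your sentence ``the relevant range of $\dist(\cdot,C)$ on that cluster sits in an interval of length $O(2^i r)$, which can be covered by $\alpha^{-O(d_D)}$ representatives'' conflates two different objects: covering an interval of \emph{real distance values} at granularity $\alpha 2^i r$ costs $O(\alpha^{-1})$ bins and has nothing to do with $d_D$, while what you actually need is a bound on the number of \emph{distinct rounded vectors}, which in the VC proof came from Sauer--Shelah and does not carry over. The paper's fix (Lemma~\ref{lemma_netsize_doubling}) is to discretize \emph{centers} rather than distances: for each cluster $j$ one takes an $\alpha 2^{i+1} r$-net $N_{i,j}$ of the annulus $B(a_j,2^{i+1}r)\setminus B(a_j,2^i r)$, of size $(1/\alpha)^{O(d_D)}$ by the doubling property, and approximates $v^S_{i,J,C}$ by the vector induced by the snapped center set $\bar C$. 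This yields the sharper bound $\log|N^S_{\alpha,i}|\lesssim k d_D\log(k/\alpha)$ (independent of $|S|$), and also delivers the piecewise-constant net for $\alpha\ge 2^{-i}$ needed in the initial chaining term. Your claimed bound $\alpha^{-1}d_D k\log k\cdot\log|S|$ would suffice for the chaining sum, but your stated justification does not produce it; swapping in the center-discretization argument closes the gap.
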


Now we are ready to prove Theorem~\ref{thm_main} for doubling instances.

\begin{proof}[Proof of Theorem~\ref{thm_main} (doubling instances)]
    Combining Lemmata~\ref{lem:alg:doubling} and~\ref{lemma:main:doubling} yields the coreset guarantee in Theorem~\ref{thm_main}. The overall failure probability is $O(\frac{1}{\log(k/\eps)\cdot \log{mk}})$ by Lemma~\ref{lem:alg:doubling}.
\end{proof}

\subsection{Proof of Lemma~\ref{lemma:main:doubling}: Sufficient condition on bounded doubling instances} \label{sec:lemma:main:doubling}

We suppose that $D$ is capacity-respecting, an $\eps$-range space approximation with respect to an $\eps$-smoothed distance function $\delta$, and an $\eps$-smoothed indexed-subset cost approximation of $X$. 
Let $C\in \binom{M}{k}$ be a fixed center set and $t\in [m]$. 
Let $D_i=D\cap X_i$ and $r^\ast=\inf \{r>0\;\big|\; |B(C,r)\cap X|\geq n-t\}$. 
Recall that we can divide the $X_1\dots,X_{k_0}$ into the following three parts $A_1,A_2,A_3$ as in Section~\ref{sec:VC}. 
\begin{itemize}[itemsep=0pt]
\item $A_1=\{i\in [k_0]\mid X_i\subseteq B 
(C,r^\ast)\}$,
\item $A_2=\{i\in [k_0]\mid X_i\cap B 
(C,r^\ast)=\emptyset\}$,
\item $A_3=[k_0]\setminus (A_1\cup A_2)$.
\end{itemize}

Based on Lemma~\ref{proof_fact}, we can obtain the following inequality for cost under an $\eps$-smoothed distance function $\delta$, again decomposing the difference between $X$ and $D$ into several parts.

\begin{lemma}[Cost difference decomposition for $\delta$]\label{lemma:diffrerence:doubling}
If $D$ is capacity-respecting, it holds that 
\begin{equation} \label{inq:cost:doubling}
	\begin{aligned}
	    \abs{ \cost[t](X,C)-\cost[t](D,C) } &\leq \abs{ \cost(X_{A_1},C)-\cost(D_{A_1},C) } \\
	    &\quad + \int_{(r^\ast-2r)^+}^{r^\ast+2r} \abs*{ |B^\delta(C,u)\cap X_{A_3}|-\norm[\big]{B^\delta(C,u)\cap D_{A_3}}_1 } du \\
	    &\quad +\eps  \cost^{(t)}(X,C) +\eps \cost[t](D,C) + O(\eps n r).
	\end{aligned}
\end{equation}
\end{lemma}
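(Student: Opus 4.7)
My plan is to mirror the proof of Lemma~\ref{lemma_inq}, treating the smoothed distance $\delta$ as a small perturbation that enters only at the very last step. Since the partition $A_1, A_2, A_3$ and the threshold $r^\ast$ are defined purely through the standard distance $\dist$, Lemma~\ref{proof_fact} applies to them unchanged. I would begin by reproducing the argument of Lemma~\ref{lemma_inq} verbatim: using the capacity-respecting property of $D$ together with Lemma~\ref{proof_fact} to write $\cost[t](X,C) = \cost(X_{A_1},C) + \cost[t-g](X_{A_3},C)$ with $g := |X_{A_2}|$, and the identical decomposition for $D$ with the same $g$. Subtracting, applying the integral representation of $\cost[t-g]$, and using the $1$-Lipschitz property of $(\cdot)^+$ then yields
\[
|\cost[t](X,C)-\cost[t](D,C)| \leq |\cost(X_{A_1},C)-\cost(D_{A_1},C)| + \int_{(r^\ast-2r)^+}^{r^\ast+2r}\Bigl|\,|B(C,u)\cap X_{A_3}|-\norm{B(C,u)\cap D_{A_3}}_1\,\Bigr|\,du.
\]

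Next I would upgrade the $B$-balls in the integrand to $B^\delta$-balls via the triangle inequality, producing two additional pointwise discrepancy terms $E_X(u) := \bigl||B(C,u)\cap X_{A_3}| - |B^\delta(C,u)\cap X_{A_3}|\bigr|$ and its $D$-analogue $E_D(u)$. By Fubini, $\int E_X(u)\,du = \sum_{x\in X_{A_3}}\ell_x$, where $\ell_x$ is the Lebesgue measure of the set of $u\in [(r^\ast-2r)^+, r^\ast+2r]$ on which $\mathbf{1}[\dist(x,C)\leq u]$ and $\mathbf{1}[\delta(x,C)\leq u]$ disagree; a case split on the location of $\delta(x,C)$ relative to the integration window gives $\ell_x \leq \min\bigl(|\delta(x,C)-\dist(x,C)|,\,4r\bigr) \leq \min(\eps\,\dist(x,C),\,4r)$. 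Splitting $X_{A_3}$ (and $D_{A_3}$) into inliers and outliers of $\cost[t-g](\cdot,C)$, the inlier contributions sum to at most $\eps\,\cost[t-g](X_{A_3},C) \leq \eps\,\cost[t](X,C)$ and the $D$-analogue via capacity-respecting, accounting for the two multiplicative error terms in the conclusion.

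The main obstacle I anticipate is the outlier portion of the conversion error. An outlier $x\in X_{A_3}$ of $\cost[t-g]$ can contribute up to $4r$ to $\sum_x \ell_x$ but does not appear in $\cost[t](X,C)$, so it cannot be absorbed into the multiplicative term; with at most $t\leq m$ such outliers the naive bound is $O(mr)$, which fits inside the prescribed additive slack $O(\eps nr)$ only when $m = O(\eps n)$. I expect the clean fix is to observe that the $4r$ cap on $\ell_x$ engages only when $\dist(x,C) > 4r/\eps$, and to pair up outlier mass between $X_{A_3}$ and $D_{A_3}$ using the capacity-respecting property so that the discrepancies cancel inside the integrand before being bounded individually; alternatively, one may invoke the standing regime $m \lesssim \eps n$ implicit in the overall coreset construction, under which Algorithm~\ref{alg_coreset} is actually applied.
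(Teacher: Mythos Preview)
The outlier obstacle you flag is genuine, and none of the proposed fixes resolves it. Take $k_0=1$ (so $A_1=A_2=\emptyset$, $g=0$), all $n$ points at distance $\approx r^\ast$ from $C$, and $t=n-1$. Then $\cost[t](X,C)\approx r^\ast$ while the outlier portion of $\sum_x\ell_x$ is $(n-1)\cdot\min(\eps r^\ast,4r)$; with $r^\ast$ on the order of $r/\eps$ this is $\Theta(nr)$, which is not $O\bigl(\eps\cost[t](X,C)+\eps nr\bigr)=O(r+\eps nr)$. The $4r$ cap alone gives $O(mr)$, absorbable into $O(\eps nr)$ only under $m\lesssim\eps n$, but the lemma carries no such hypothesis, and Theorem~\ref{thm_main} must hold for any regular instance regardless of how $m$ compares to $n$. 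The pairing idea does not help either: $E_X$ and $E_D$ involve different point sets, and capacity-respecting equates only total masses, not the per-point discrepancies $\ell_x$.

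The paper sidesteps the problem by switching to $\delta$ at the level of the \emph{robust} cost rather than inside the integrand. From $\delta\in(1\pm\eps)\dist$ and optimality of the excluded sets on each side one gets directly
\[
(1-\eps)\,\cost[t-g](X_{A_3},C)\ \le\ \cost[t-g,\delta](X_{A_3},C)\ \le\ (1+\eps)\,\cost[t-g](X_{A_3},C),
\]
and likewise for $D_{A_3}$; outlier costs never enter this inequality. One then writes $\cost[t-g,\delta]$ as $\int\bigl(|X_{A_3}|-(t-g)-|B^\delta(C,u)\cap X_{A_3}|\bigr)^+\,du$. The integration window widens to $[(1-\eps)(r^\ast-2r)^+,\,(1+\eps)(r^\ast+2r)]$, but on the two $\eps$-thin edge pieces the integrand is at most the inlier count $(|X_{A_3}|-(t-g))^+$, so each edge contributes at most $\eps(r^\ast\pm2r)\cdot(|X_{A_3}|-(t-g))^+\le\eps\cost[t-g](X_{A_3},C)+O(\eps nr)$. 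Keeping the $-(t-g)$ and the $(\cdot)^+$ intact through the conversion is precisely what excises the outlier mass before any pointwise estimate is taken.
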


\begin{proof} \label{proof:diffrerence:doubling}
    Let $g=\abs{ X_{A_2}\cap B(C,r^\ast) }$, then $g\leq t$. The same argument as in the proof of Lemma~\ref{lemma_inq} gives
    \begin{gather*}
    	\cost[t](X,C) = \cost(X_{A_1},C)+\cost[t-g](X_{A_3},C) \\
    	\cost[t](D,C) = \cost(X_{D_1},C)+\cost[t-g](D_{A_3},C).
    \end{gather*}
    Next we shall replace $\cost[t-g](X_{A_3},C)$ and $\cost[t-g](D_{A_3},C)$ with their distance-$\delta$ counterparts, $\cost[t-g,\delta](X_{A_3},C)$ and $\cost[t-g,\delta](D_{A_3},C)$, respectively. We claim that
    \begin{gather}
    	(1-\eps)\cost[t-g](X_{A_3},C) \leq \cost[t-g,\delta](X_{A_3},C) \leq (1+\eps)\cost[t-g](X_{A_3},C), \label{eqn:cost_under_delta_X_A_3} \\
    	(1-\eps)\cost[t-g](D_{A_3},C) \leq \cost[t-g,\delta](D_{A_3},C) \leq (1+\eps)\cost[t-g](D_{A_3},C). 
    	\label{eqn:cost_under_delta_D_A_3}
    \end{gather}
    Now we prove \eqref{eqn:cost_under_delta_X_A_3}.     
    Suppose $Y\subseteq X_{A_3}$ is the subset of points retained in $\cost[t-g](X_{A_3},C)$ and $Y^\delta$ be the subset retained in $\cost[t-g,\delta](X_{A_3},C)$; that is, $\cost[t-g](X_{A_3},C) = \cost(Y,C)$ and $\cost[t-g,\delta](X_{A_3},C) = \cost(Y^\delta,C)$.
	By the definition of the $\eps$-smoothed distance function, it follows that
	\begin{gather*}
		\cost[\delta] (Y^\delta,C)\leq \cost[\delta](Y,C) \leq (1+\eps)\cost(Y,C)
	\shortintertext{and}
		\cost[\delta](Y^\delta,C) \geq (1-\eps)\cost(Y^\delta,C) \geq (1-\eps)\cost(Y,C)
	\end{gather*}
	A similar argument would prove \eqref{eqn:cost_under_delta_D_A_3}. We have completed the proof of our claim.

	So far we have obtained
	\begin{equation}\label{eqn:cost_diff_doubling}
	\begin{aligned}
		 & \abs{\cost[t](X,C)-\cost[t](D,C)}  \\
		\leq{}& \abs{\cost(X_{A_1},C)-\cost(D_{A_1},C)} + \abs{ \cost[t-g,\delta](X_{A_3},C)-\cost[t-g,\delta](D_{A_3},C)} \\
		&\quad + \eps(\cost[t-g](X_{A_3},C) + \cost[t-g](D_{A_3},C)).
	\end{aligned}
	\end{equation}
    Next we deal with the middle term $\abs{ \cost[t-g,\delta](X_{A_3},C)-\cost[t-g,\delta](D_{A_3},C)}$.
    The same argument as in the proof of Lemma~\ref{lemma_inq} allows us to write $\cost[t-g,\delta](X_{A_3},C)$ and $\cost[t-g,\delta](D_{A_3},C)$ as the following integrals.
	\begin{align*}
    	\cost[t-g,\delta](X_{A_3},C)&= \int_{0}^{\infty} (\abs{X_{A_3}} - (t-g)-\vert B^\delta(C,u)\cap X_{A_3} \vert )^+ du, \\
	    \cost[t-g,\delta] (D_{A_3},C )&=\int_0^\infty (\norm{D_{A_3}}_1 - (t-g) - \norm{B^\delta(C,u)\cap D_{A_3}}_1)^+ du.
	\end{align*}
	It follows from Lemma~\ref{proof_fact}\ref{proof_fact3} and~\ref{proof_fact4} along with the fact that $\norm{D_{A_3}}_1 = \abs{X_{A_3}}$ that
    
	\begin{equation}\label{eqn:integral_split}
    \begin{aligned}
		 & \cost[t-g,\delta](X_{A_3},C) - \cost[t-g,\delta](D_{A_3},C) \\
        \leq{} & 
        \int_{(1-\eps)(r^\ast-2r)^+}^{(1+\eps)(r^\ast+2r)} \abs*{(\abs{X_{A_3}} - (t-g) - \abs{B^\delta (C,u)\cap X_{A_3}})^+ - 
		  (\norm{D_{A_3}}_1 - (t-g) -  \norm{B^\delta(C,u)\cap D_{A_3}}_1)^+} du \\
	%
    \leq{} &\int_{(r^\ast-2r)^+}^{r^\ast+2r} \abs*{ |B^\delta(C,u)\cap X_{A_3}|-\norm[big]{B^\delta(C,u)\cap D_{A_3}}_1 } du \\
		&\quad +\int_{(1-\eps)(r^\ast-2r)^+}^{(r^\ast-2r)^+} \abs[\big]{(\abs{X_{A_3}} - (t-g)-\abs{B^\delta(C,u)\cap X_{A_3})}^+-(\norm{D_{A_3}}_1 - (t-g)- \norm{B^\delta(C,u)\cap D_{A_3}}_1)^+} du \\
		&\quad +\int_{(r^\ast+2r)}^{(1+\eps)(r^\ast+2r)} \abs[\big]{(\abs{X_{A_3}} - (t-g)-\abs{B^\delta(C,u)\cap X_{A_3}})^+-(\norm{D_{A_3}}_1 - (t-g)- \norm{B^\delta(C,u)\cap D_{A_3}}_1)^+} du.
		\end{aligned}
	\end{equation}
	We can upper bound the second term on the right-hand side of \eqref{eqn:integral_split} as
	\begin{align*}
		&\int_{(1-\eps)(r^\ast-2r)^+}^{(r^\ast-2r)^+} \abs[\big]{(\abs{X_{A_3}}-(t-g)-\abs{B^\delta(C,u)\cap X_{A_3}})^+ - (\norm{D_{A_3}}_1 - (t-g)- \norm{B^\delta(C,u)\cap D_{A_3}}_1)^+} du \\
		\leq{} & \int_{(1-\eps)(r^\ast-2r)^+}^{(r^\ast-2r)^+} (\abs{X_{A_3}}-(t-g))^+ du + \int_{(1-\eps)(r^\ast-2r)^+}^{(r^\ast-2r)^+} (\norm{D_{A_3}}_1 - (t-g))^+ du \\
		\leq{} & (\abs{X_{A_3}}-(t-g))^+ \cdot \eps 
		(r^\ast-2r)^+  + (\norm{D_{A_3}}_1 - (t-g))^+ \cdot \eps (r^\ast-2r)^+\\
		\leq{} & \eps\cdot \cost[t-g](X_{A_3},C) + \eps\cdot \cost[t-g](D_{A_3},C).
	\end{align*}
	Similarly, the third term on the right-hand side of \eqref{eqn:integral_split} can be bounded as
	\begin{align*}
		&\int_{(r^\ast+2r)}^{(1+\eps)(r^\ast+2r)} \abs[\big]{(\abs{X_{A_3}} - (t-g)-\abs{B^\delta(C,u)\cap X_{A_3}})^+-(\norm{D_{A_3}}_1 - (t-g)- \norm{B^\delta(C,u)\cap D_{A_3}}_1)^+} du \\
		\leq{} & \int_{(r^\ast+2r)}^{(1+\eps)(r^\ast+2r)} (\abs{X_{A_3}}-(t-g))^+ du + \int_{(r^\ast+2r)}^{(1+\eps)(r^\ast+2r)} (\norm{D_{A_3}}_1 - (t-g))^+ du \\
		\leq{} & (\abs{X_{A_3}}-(t-g))^+ \cdot \eps (r^\ast + 2r)  + (\norm{D_{A_3}}_1 - (t-g))^+ \cdot \eps (r^\ast + 2r) \\
		\leq{} & (\abs{X_{A_3}}-(t-g))^+ \cdot \eps (r^\ast - 2r)^+  + (\norm{D_{A_3}}_1 - (t-g))^+ \cdot \eps (r^\ast - 2r)^+ + O(\eps n r) \\
		\leq{} & \eps\cdot \cost[t-g](X_{A_3},C) + \eps\cdot \cost[t-g](D_{A_3},C) + O(\eps n r).
	\end{align*}
	Therefore, we can proceed upper bounding \eqref{eqn:integral_split} as
	\begin{equation}\label{eqn:integral2}
	\begin{aligned}
		&\abs*{\cost[t-g,\delta](X_{A_3},C)-\cost[t-g,\delta](D_{A_3},C)} \\
		\leq{} & \int_{(r^\ast-2r)^+}^{r^\ast+2r} \abs*{ |B^\delta(C,u)\cap X_{A_3}|-\norm[big]{B^\delta(C,u)\cap D_{A_3}}_1 } du \\
		&\quad + 2\eps\cdot \cost[t-g](X_{A_3},C) + 2\eps\cdot \cost[t-g](D_{A_3},C) + O(\eps n r).
	\end{aligned}
	\end{equation}
	Plugging \eqref{eqn:integral2} into \eqref{eqn:cost_diff_doubling} yields that
	\begin{align*}
		        & \abs*{\cost[t-g,\delta](X_{A_3},C)-\cost[t-g,\delta](D_{A_3},C)} \\
		\leq{}  & \abs{\cost(X_{A_1},C)-\cost(D_{A_1},C)} + \int_{(r^\ast-2r)^+}^{r^\ast+2r} \abs*{ |B^\delta(C,u)\cap X_{A_3}|-\norm[big]{B^\delta(C,u)\cap D_{A_3}}_1 } du  \\
			    & \quad + O(\eps)(\cost[t-g](X_{A_3},C) + \cost[t-g](D_{A_3},C)) + O(\eps n r) \\
		\leq{}  & \abs{\cost(X_{A_1},C)-\cost(D_{A_1},C)} + \int_{(r^\ast-2r)^+}^{r^\ast+2r} \abs*{ |B^\delta(C,u)\cap X_{A_3}|-\norm[big]{B^\delta(C,u)\cap D_{A_3}}_1 } du \\
				& \quad + O(\eps)\left( \cost[t](X,C) + \cost[t](D, C) \right) + O(\eps n r).
	\end{align*}
	The proof of Lemma~\ref{lemma:diffrerence:doubling} is now complete.
\end{proof}

Lemma~\ref{lemma:main:doubling} now follows immediately.
\begin{proof}[Proof of Lemma~\ref{lemma:main:doubling}]
    We can combine Lemma~\ref{lemma:diffrerence:doubling}, Definitions~\ref{def_subset} and \ref{def:eps_smoothed_range} to obtain that
    \[
	    \abs{ \cost[t](X,C)-\cost[t](D,C)} \leq O(\eps)\cdot (\cost[t](X,C)+\cost[t](D,C))+O(\eps nr),
    \]
    whence it follows that
    \[
	    \abs{ \cost[t](X,C)-\cost[t](D,C)} \leq O(\eps)\cost[t](X,C) + O(\eps nr),
    \]
    which completes the proof of Lemma~\ref{lemma:main:doubling}.
\end{proof}

\subsection{Proof of Lemma~\ref{lem:alg:doubling}: Performance analysis of Algorithm~\ref{alg_coreset} (doubling)}
\label{sec:lemma_main:range:doubling}

In Algorithm~\ref{alg_coreset}, $D$ is always capacity-respecting with respect to $X$ by Lemma~\ref{alg_guarantee}. Thus, it suffices to show separately the $\eps$-range space approximation property and the indexed-subset cost approximation property.

\paragraph{Range space approximation.}
By Lemma~\ref{lemma_event}, it suffices to prove that $U$ is an $\eps$-range space approximation of $X$ w.r.t. $\delta$, which implies that $D$ is the $O(\eps)$-range space approximation w.r.t. $\delta$ when $\mathcal{E}$ occurs.

We first present two auxiliary lemmata regarding the $\eps$-smoothed distance function and the associated ball range space, both taken from \cite{huang2018epsilon}. To state the lemmata, we need some notations. Let $\Ballh[\delta](C,r)=\{ x\in H \mid \delta(x,C)\leq r  \}$ denote the union of $r$-radius balls centered at points in $C$ with respect to the distance function $\delta$. We define $\Ballsh[\delta]=\{\Ballsh[\delta](C,r) \mid C\in \binom{M}{k},r\geq 0\} $ to represent the set of essentially distinct ball range spaces. 

\begin{lemma}[Random $\eps$-smoothed distance function, {\cite[Eq.(10)]{huang2018epsilon}}] \label{lemma:randomdist}
    Suppose that $(M,\dist)$ is a doubling space of doubling dimension $d$. Let $0<\gamma<1$ be a constant. There exists a random $\eps$-smoothed distance function $\delta$ such that for every nonempty $H\subseteq X$, 
    \begin{equation}\label{eqn:doubling_ball_range_space_bound}
        \Pr_\delta \{ \abs{\Ballsh[\delta]}\leq T(\abs{H},\gamma) \}\geq 1-\gamma,
    \end{equation}
    where $T:\Z^+\times \R^+ \rightarrow \mathbb{R}$ is given by
    \begin{equation}\label{eqn:doubling_T}
       T(q,\gamma) = e^{O(k\cdot d\cdot \log\frac{1}{\eps})}\cdot \log^k{\frac{q}{\gamma}}\cdot q^{6k}.
    \end{equation}
\end{lemma}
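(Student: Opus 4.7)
The plan is to follow the construction from \cite{huang2018epsilon} that originally established this result. The random $\eps$-smoothed distance $\delta$ is built from a hierarchical family of nets: for each scale $i\in \mathbb{Z}$, fix an $(\eps 2^i)$-net $N_i$ of $M$ (nested so $N_{i+1}\subseteq N_i$), and attach to each net point $v\in N_i$ an independent random offset $\xi_v^{(i)}$ drawn uniformly from an interval of length $\Theta(\eps 2^i)$. Given $x,y\in M$, one defines $\delta(x,y)$ by routing through the closest net ancestors at the appropriate scale and adding the offsets. The key check is that $\delta(x,y)\in (1\pm\eps)\dist(x,y)$ for every $x,y$; this is immediate because each offset is bounded by $O(\eps \dist(x,y))$, so the distortion is pointwise multiplicative.

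The combinatorial heart of the argument is to count how many distinct sets of the form $H\cap B^\delta(c,r)$ can arise as $c$ and $r$ vary, for a single center $c\in M$. Once randomness is used to break ties, each such ball is determined by a small ``witness certificate'': a constant number of net points at a bounded range of scales whose offsets pin down which $H$-points are inside versus outside. A standard pseudo-dimension or sauer--shelah type bookkeeping shows this certificate lives in a set of size at most $q^6$ (where $q=|H|$), with the $\log(q/\gamma)$ factor accounting for the number of relevant scales (those between the minimum and maximum interpoint distances within $H$, truncated at the point where all but a $\gamma$-fraction of scales become irrelevant). Taking the product over the $k$ centers in $C$ yields the $q^{6k}\cdot \log^k(q/\gamma)$ factor.

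The doubling dimension enters when we bound, at each scale, how many net points can serve as ``relevant'' witnesses near any ball boundary: a ball of radius $r$ contains at most $e^{O(d)}$ net points at scale $\eps r$, and each net point at scale $\eps 2^i$ has at most $e^{O(d\log(1/\eps))}$ net children at scale $\eps 2^{i-1}$ inside it. Combined over the relevant scales and across $k$ centers, this contributes the prefactor $e^{O(kd\log(1/\eps))}$. The high-probability claim is then a union bound: the bad event is that some $H$-point lies within the ambiguous random-offset window of some net-point witness, and over the $q^{O(k)}$-many candidate configurations this has probability at most $\gamma$ provided the offset intervals are sufficiently wider than the worst-case tie gap.

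The main obstacle is the tension between keeping $\delta$ genuinely $(1\pm \eps)$-multiplicative pointwise and simultaneously making the random perturbations large enough to guarantee that the witness certificates are well-defined with probability $1-\gamma$. Pushing the offsets too small causes many ties (blowing up the number of ball range classes); pushing them too large violates the $\eps$-smoothing condition. Resolving this requires the multi-scale construction: offsets at scale $i$ are of size $\Theta(\eps 2^i)$ so they are always small relative to the distances they perturb, yet at each scale they are large relative to the sub-scale discretization, which is exactly what permits the simultaneous success of both the distortion bound (deterministic, for every pair) and the counting bound (with probability $1-\gamma$). Because this construction and analysis already appear explicitly in \cite[Section 3]{huang2018epsilon}, our proof would simply invoke their result with the appropriate parameter identification.
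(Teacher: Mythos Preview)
The paper does not prove this lemma at all; it is quoted verbatim as \cite[Eq.(10)]{huang2018epsilon} and used as a black box. Your proposal ultimately lands in the same place (``our proof would simply invoke their result''), so it matches the paper's treatment; the preceding sketch of the multi-scale net construction and witness-counting argument is extra exposition that the paper omits, but it is consistent with the ideas in \cite{huang2018epsilon} and not required here.
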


\begin{lemma} [Range space approximation for random distance function, {\cite[Lemma 3.1]{huang2018epsilon}}] \label{lemma:randomapprox}
    Suppose that $\delta$ is a random $\eps$-smoothed distance function and $T:\Z^+ \times \R^+\to \R$ is a function such that \eqref{eqn:doubling_ball_range_space_bound} holds 
    for every nonempty $H\subseteq X$ and constant $0<\gamma<1$.
    Let $S$ be a collection of $s$ independent uniform samples from $X$. Then with probability at least $1-\tau$, $S$ is an $\alpha$-range space approximation of $X$ w.r.t. $\delta$, where
    \begin{equation}\label{eqn:doubling_range_space_alpha}
        \alpha = \sqrt{\frac{48( \log{T(2s,\frac{\tau}{4})} +\log{\frac{8}{\tau}})}{s}}.
    \end{equation}
\end{lemma}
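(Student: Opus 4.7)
The plan is to prove Lemma \ref{lemma:randomapprox} by the classical double-sampling (symmetrization) technique from Vapnik--Chervonenkis theory, modified so that the randomness of the smoothed distance $\delta$ is handled by a preliminary conditioning step. The essential observation is that, while the family of ranges $\{B^\delta(C,r) : C \in \binom{M}{k}, r > 0\}$ is a priori enormous, Lemma \ref{lemma:randomdist} tells us that after drawing $\delta$, the restriction of this family to any fixed $2s$-point subset $H$ contains at most $T(2s,\gamma)$ distinct sets with probability $1-\gamma$; this is exactly what a union-bound step over ``restricted ranges'' requires.

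First I would introduce an independent ghost sample $S'$ of size $s$ and set $H := S \cup S'$. The standard symmetrization inequality gives
\[
\Pr\!\left[\sup_{I,C,r}\left|\tfrac{|X_I\cap B^\delta(C,r)|}{n}-\tfrac{|S\cap X_I\cap B^\delta(C,r)|}{s}\right|>\alpha\right]
\le 2\Pr\!\left[\sup_{I,C,r}\left|\tfrac{|S\cap R|-|S'\cap R|}{s}\right|>\tfrac{\alpha}{2}\right],
\]
writing $R := X_I \cap B^\delta(C,r)$. Next I would condition on $H$ and apply Lemma \ref{lemma:randomdist} with $\gamma = \tau/4$ to the $2s$-point set $H$, so that with probability at least $1-\tau/4$ over $\delta$ the family $\Ballsh[\delta]$ restricted to $H$ has at most $T(2s,\tau/4)$ distinct members; combined with the $2^{k_0}\le 2^k$ choices for $I$, this gives at most $2^k\, T(2s,\tau/4)$ distinct restricted ranges $R \cap H$ to worry about.

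For each fixed restricted range the usual swap argument (conditioning on the multiset $H$, the partition of $H$ into $S$ and $S'$ is uniformly random) together with Hoeffding's inequality yields a per-range tail bound of the form $2\exp(-\alpha^2 s / 8)$. Union-bounding over the $\le 2^k T(2s,\tau/4)$ restricted ranges, and adding the $\tau/4$ failure probability of the $\delta$-conditioning together with the factor $2$ from symmetrization, the total failure probability is at most
\[
4\cdot 2^k\, T(2s,\tau/4)\,\exp(-\alpha^2 s/8) + \tfrac{\tau}{4}.
\]
Requiring this to be $\le \tau$ and solving for $\alpha$ yields $\alpha^2 \gtrsim \frac{1}{s}\bigl(\log T(2s,\tau/4) + \log(1/\tau)\bigr)$; relaxing the constants (which absorbs both the $2^k$ factor, consistent with $T$ already being exponential in $k$, and the Hoeffding constant) reproduces the stated bound with constant $48$.

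The main obstacle is keeping the two sources of randomness (the distance $\delta$ and the sample $S$) cleanly separated, since $\delta$ is used both to define the ranges and to supply the cardinality bound $T$. Drawing $S$ and $S'$ first and applying Lemma \ref{lemma:randomdist} conditional on $H = S \cup S'$ resolves this, because $T(|H|,\gamma)$ depends only on the cardinality $|H|=2s$ and not on the particular points realised. A secondary point is the extra subset-index $I \subseteq [k_0]$ in Definition \ref{def:eps_smoothed_range}, which merely contributes a $2^{k_0}$ multiplicative factor in the union bound; since this is at most $2^k$ and $T$ already contains an $e^{O(kd\log(1/\eps))}$ factor, it is absorbed without changing the form of $\alpha$.
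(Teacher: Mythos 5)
Your proposal is sound, but note that the paper itself gives no proof of this statement: it is imported verbatim from \cite[Lemma 3.1]{huang2018epsilon}, so there is no in-paper argument to compare against, and your reconstruction via the classical double-sampling/symmetrization argument is exactly the kind of proof the citation stands in for. The two places where care is genuinely needed are handled correctly in your sketch: (i) since Lemma~\ref{lemma:randomdist} holds for every \emph{fixed} nonempty $H\subseteq X$ and $\delta$ is drawn independently of the sample, conditioning on $H=S\cup S'$ and invoking the cardinality bound $T(2s,\tau/4)$, which depends only on $|H|$, cleanly decouples the two sources of randomness; (ii) the extra $2^{k_0}\le 2^k$ factor coming from the subset index $I$ in Definition~\ref{def:eps_smoothed_range} is dominated by $\log T(2s,\tau/4)=\Omega(k)$ (see \eqref{eqn:doubling_T}), so it costs only a constant factor in $\alpha$, which the slack in the constant $48$ absorbs. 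One bookkeeping difference worth noting: your version proves the statement with the quantifier over $I$ built in, whereas the paper handles the subsets downstream, in the proof of Lemma~\ref{lemma:eps_range}, by shrinking $\tau$ by a factor of $2^{-k}$; either accounting works, and yours matches the literal statement (which is phrased via Definition~\ref{def:eps_smoothed_range}) more directly. The only hypotheses you should make explicit are routine: the symmetrization step needs $s\alpha^2\ge 2$ (automatic here since $\alpha^2 s\ge 48$), the swap/Hoeffding step relies on exchangeability of the $2s$ draws (Hoeffding also covers the without-replacement split), and $T(q,\gamma)$ must be monotone in $q$ so that $|H|\le 2s$ can be replaced by $2s$, which holds for the explicit form \eqref{eqn:doubling_T}.
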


We are now ready to establish the guarantee for our sample set $U$.
\begin{lemma} [Range space approximation for $U$]
\label{lemma:eps_range}
    With probability $1 - O(\frac{1}{\log(k/\eps)\log{mk}})$, the following inequality holds for every $J\subseteq [k_0]$, $C\in \binom{M}{k}$, and each $r>0$:
    \[
    \abs[\big]{\abs{ B^\delta(C,r)\cap X_J } - \norm{B^\delta(C,r)\cap  U_{J}}_1 } \leq \eps n.
    \]
\end{lemma}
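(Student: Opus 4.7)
The plan is to mirror the structure of the proof of Lemma~\ref{lemma_range} (the VC case), but with two substitutions: (i) replace the true distance function by a random $\eps$-smoothed distance function $\delta$ drawn from Lemma~\ref{lemma:randomdist}, and (ii) replace the VC-based bound on the number of essentially distinct $k$-ball ranges by the bound $T(n,\gamma)$ from \eqref{eqn:doubling_T}. Throughout, $n=|X|$ and $s$ is as in Algorithm~\ref{alg_coreset}.

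First, I would draw $\delta$ via Lemma~\ref{lemma:randomdist} with the parameter $\gamma = \Theta(1/(\log(k/\eps)\log(mk)))$ and condition on the event $\mathcal{E}_\delta$ that $|\Ballsh[\delta]| \le T(n,\gamma)$, which has probability at least $1-\gamma$. Under $\mathcal{E}_\delta$ the extended collection of ranges
\[
    \mathcal{R} \;=\; \bigl\{\, X_J \cap B^\delta(C,r) \;:\; J\subseteq [k_0],\ C\in \tbinom{M}{k},\ r>0 \,\bigr\}
\]
has cardinality at most $2^{k_0}\cdot T(n,\gamma) \le \exp\bigl(O(k\,d_D\log(1/\eps) + k\log(n/\gamma))\bigr)$, since fixing a ball range $B^\delta(C,r)$ and then intersecting with $X_J$ for each of at most $2^{k_0}\le 2^k$ choices of $J$ enumerates all members of $\mathcal{R}$.

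Second, I would run the symmetrization-and-Gaussian-chaining argument from Lemma~\ref{lemma_range} verbatim, but over $\mathcal{R}$. Writing $I^\delta_{J,C,r}(x)=\mathbf{1}\{x\in X_J\cap B^\delta(C,r)\}$, the quantity to control is
\[
    \sup_{J,C,r}\abs[\Big]{\tfrac{n}{s}\sum_{y\in U}I^\delta_{J,C,r}(y) - \mathbb{E}\tfrac{n}{s}\sum_{y\in U}I^\delta_{J,C,r}(y)}.
\]
By the standard symmetrization trick and passing from Rademachers to Gaussians, it suffices to bound $\mathbb{E}\sup_{J,C,r}|\sum_{i=1}^{s}g_i\,I^\delta_{J,C,r}(y_i)|$. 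Since the vectors $(I^\delta_{J,C,r}(y_1),\dots,I^\delta_{J,C,r}(y_s))$ take at most $|\mathcal{R}|$ distinct values on $U$, Fact~\ref{fact_max_gauss} gives an upper bound of $O(\sqrt{s\log|\mathcal{R}|})=O(\sqrt{s\cdot (kd_D\log(1/\eps)+k\log(n/\gamma))})$. Multiplying by $n/s$ yields an expectation bound of at most $\eps n / 2$ once $s\gtrsim \eps^{-2}\,k\,d_D\,\polylog(kd_D/\eps)\cdot\log^{O(1)}(mk)$, which is satisfied by the $s$ chosen in Algorithm~\ref{alg_coreset}.

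Third, I would apply McDiarmid's inequality to concentrate the supremum around its expectation: changing one sample in $U$ alters the supremum by at most $2n/s$, so the deviation probability is at most $\exp(-\Omega(\eps^2 s))$, which is $\le 1/(\log(k/\eps)\log(mk))$ for the chosen $s$. A union bound over $\mathcal{E}_\delta$ and the McDiarmid event, with $\gamma$ as above, yields the claimed overall failure probability $O(1/(\log(k/\eps)\log(mk)))$. The main obstacle is simply making sure the range-space size blow-up from indexing by $J$ is absorbed: the extra $2^{k_0}\le 2^k$ factor contributes only an additive $k$ inside $\log|\mathcal{R}|$, which is dominated by the $kd_D\log(1/\eps)+k\log(n/\gamma)$ already present, so the sample size remains $\tilde O(kd_D\eps^{-2})$; in particular the uniformity of the radii inside each $X_i$ is not needed here (only the weighted subsetting by $J$ is), and all dependence on $|M|$ is avoided precisely because the random smoothing bound \eqref{eqn:doubling_ball_range_space_bound} depends only on $|X|$.
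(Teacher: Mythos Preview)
Your overall strategy---random smoothing via Lemma~\ref{lemma:randomdist}, symmetrization, a Gaussian-max bound over a finite range collection, then McDiarmid---is sound and mirrors the VC argument in Lemma~\ref{lemma_range}. However, there is a genuine gap: you invoke \eqref{eqn:doubling_ball_range_space_bound} with $H=X$, which yields the range-count bound $T(n,\gamma)$. From \eqref{eqn:doubling_T} with $q=n$ one gets $\log T(n,\gamma)=O(kd_D\log(1/\eps))+6k\log n + k\log\log(n/\gamma)$, so your bound on $\log|\mathcal{R}|$ carries a $k\log n$ term and the Gaussian-max step then forces $s\gtrsim \eps^{-2}k\log n$. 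This is \emph{not} satisfied by the sample size $s$ in Algorithm~\ref{alg_coreset}, which is independent of $n$; hence your claim that $s\gtrsim \eps^{-2}k d_D\polylog(kd_D/\eps)\log^{O(1)}(mk)$ suffices is incorrect as written. Avoiding dependence on $|M|$ is not enough---the coreset size must also be independent of $n=|X|$.

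The remedy is to bound the number of distinct $\delta$-ball ranges \emph{on the sample} rather than on all of $X$, i.e., to work with $T(O(s),\cdot)$ instead of $T(n,\cdot)$. This is not immediate in your framework: $U$ is random, so you cannot first fix $\delta$ via an event $\mathcal{E}_\delta$ stated for $H=U$ and only afterwards draw $U$, and union-bounding \eqref{eqn:doubling_ball_range_space_bound} over all size-$s$ subsets of $X$ would reintroduce $\log n$. The paper sidesteps this by invoking Lemma~\ref{lemma:randomapprox} (i.e., \cite[Lemma~3.1]{huang2018epsilon}) as a black box; that result already packages the double-sampling/ghost-sample argument so that only $T(2s,\tau/4)$ enters the error~\eqref{eqn:doubling_range_space_alpha}, giving $\log T(2s,\tau/4)=O(kd_D\log(1/\eps)+k\log s+k\log\log(1/\tau))$. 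The paper then handles the $J$-indexing by setting $\tau=2^{-k}\cdot O\big(1/(\log(k/\eps)\log(mk))\big)$ and union-bounding over the $2^k$ subsets, which only contributes an additional $O(k)$ inside the logarithm and is absorbed.
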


\begin{proof}
We use the method from \cite[Lemma 3.1]{huang2018epsilon}. 
Combining Lemmata~\ref{lemma:randomdist} and~\ref{lemma:randomapprox}, we obtain that there exists a random $\eps$-smoothed distance function $\delta$ such that 
$U$ is an $\alpha$-range space approximation of $X$ w.r.t. $\delta$ with probability at least $1-\tau$, where $\alpha$ is the same as in \eqref{eqn:doubling_range_space_alpha} and $T$ in \eqref{eqn:doubling_T}.

Since $J\subseteq [k_0]$ and $k_0\leq k$, the number of distinct $X_J$ is at most $2^k$ and we thus take $\tau=\frac{1}{2^k}\cdot O(\frac{1}{\log{\frac{k}{\eps}\log{mk}}})$. It then follows that 
\begin{align*}
    48(\log{T(2s,\frac{\tau}{4})} + \log{\frac{8}{\tau}})
    &\lesssim k\cdot d \cdot \log\frac{1}{\eps} + k\log s + k \log\log\frac{1}{\tau} + \log\frac{1}{\tau} \\
    &\lesssim k\cdot d \cdot \log\frac{1}{\eps} + k\log s + k\log k + \log\left(\log\frac{k}{\eps}\log(mk)\right)\\
    &\lesssim \eps^2 s
\end{align*}
provided that 
\[
    s \gtrsim \frac{kd_{D}}{\eps^2} \log \frac{kd_{D}}{\eps} + \frac{1}{\eps^2}\log \log m.
\]
This implies that $\alpha \lesssim \eps$. By a union bound, we see that with probability $1-O(\frac{1}{\log(k/\eps)\log{mk}})$, for every $J\subseteq [k_0]$, $C\in \binom{M}{k}$ and each $r>0$, $U_J$ is an $O(\eps)$-range space approximation of $X_J$ w.r.t. $\delta$. Rescaling $\eps$ completes the proof.
\end{proof}


\paragraph{Indexed-subset cost approximation.} We also need to prove that Definition~\ref{def_subset} is satisfied for doubling metrics. The proof follows essentially the same approach as in Section~\ref{sec_proof_subset}, with the only difference being the net size bound in Lemma~\ref{lemma_netsize_main}. In fact, the net size is much smaller for doubling metrics, as shown by the following lemma.

\begin{lemma}[Bounding net size on doubling instances] \label{lemma_netsize_doubling}
    For every $i\in [T]$, $\alpha\in (0,1)$ and every subset $S\subseteq X $ with $\abs{S}\geq 2$, there exists an $\alpha$-net $N_{\alpha,i}^{S}$ for $V_{i}^S$ such that
    \[
        \log {\abs{N_{\alpha,i}^S}}\lesssim k d \log (k/\alpha).
    \]
    Furthermore, if $\alpha \geq 2^{-i}$, it is possible to ensure that $N_{\alpha,i}^S$ consists of piecewise constant vectors. That is, every $v\in N_{\alpha,i}^S$ satisfies that $v(x)$ is constant for all $x\in X_j$ for each $j\in [k_0]$.
\end{lemma}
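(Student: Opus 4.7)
The plan is to construct the net via local doubling nets around each cluster center $a_j$, then assemble a global net on center sets by replacing each $c_l \in C$ with a nearby net point (or declaring it ``irrelevant'' when it is far from every $a_j$). Concretely, for each $j \in [k_0]$, I would apply the doubling property to obtain a net $N_j \subseteq B(a_j, 2^{i+3} r)$ of mesh size $\alpha 2^i r / 2$; iterating the halving rule $O(\log(1/\alpha))$ times gives $|N_j| \leq (1/\alpha)^{O(d)}$. Let $N = \bigcup_j N_j$, and adjoin a sentinel symbol $\ast$ with the convention $\dist(x, \ast) = +\infty$.

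For each $C = \{c_1,\ldots,c_k\}$, I would build a proxy $\tilde C = \{\tilde c_1,\ldots,\tilde c_k\}$ as follows: assign each $c_l$ a tag $t_l = \min\{j : c_l \in B(a_j, 2^{i+3}r)\}$ if any such $j$ exists, else $t_l = \ast$; then let $\tilde c_l$ be the nearest point of $N_{t_l}$ when $t_l \neq \ast$, and $\tilde c_l = \ast$ otherwise. The number of distinct proxies is at most $(k_0 (1/\alpha)^{O(d)} + 1)^k$. Independently enumerate a support label $T \subseteq [k_0]$ (contributing a factor $2^{k_0} \leq 2^k$) and, for each pair, place into the net the vector $v_{\tilde C, T}(x) = \dist(x, \tilde C)\cdot \mathbf{1}[\pi(x) \in T]$. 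Taking logarithms yields $\log |N_{\alpha,i}^S| \lesssim k + k\log k_0 + kd\log(1/\alpha) \lesssim kd\log(k/\alpha)$ as required.

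The verification step is to show that for every $v^S_{i,J,C}$, the element $v_{\tilde C, J \cap I_{i,C}}$ sits within $\alpha 2^{i+2} r$ of $v^S_{i,J,C}$ on the latter's support. The key structural fact (exploiting the uniform radius $r$ of the regular instance) is that for $x \in X_j$ with $j \in I_{i,C}$, the nearest center $c^\ast \in C$ to $a_j$ satisfies $\dist(a_j, c^\ast) \leq 2^{i+1} r$, so $c^\ast \in B(a_j, 2^{i+3} r)$ and is therefore faithfully tracked by the net $N_j$ with error $\alpha 2^i r / 2$; conversely, any $c_l$ with $c_l \notin \bigcup_{j'} B(a_{j'}, 2^{i+3} r)$ satisfies $\dist(x, c_l) \geq 2^{i+3} r - r > 2^{i+2} r \geq \dist(x, c^\ast)$, so such $c_l$ never realises the minimum and collapsing it to the sentinel is lossless. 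Combining this observation with the triangle inequality in both directions yields $|\dist(x, \tilde C) - \dist(x, C)| \leq \alpha 2^i r / 2$, which is within the required $\alpha 2^{i+2} r$ tolerance.

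For the piecewise constant refinement when $\alpha \geq 2^{-i}$, I would instead use $v_{\tilde C, T}(x) = \dist(a_{\pi(x)}, \tilde C)\cdot \mathbf{1}[\pi(x) \in T]$, which is constant on each $X_j$. The extra error from swapping $x$ with $a_{\pi(x)}$ is at most $\dist(x, a_{\pi(x)}) \leq r$, and the assumption $\alpha 2^i \geq 1$ absorbs this into the overall budget $\alpha 2^{i+2} r$. The main obstacle is the sentinel bookkeeping: one must carefully justify that assigning $+\infty$ to ``far'' centers neither decreases $\dist(x, \tilde C)$ below $\dist(x, C) - \alpha 2^i r / 2$ nor spoils the minimum. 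This relies precisely on the uniform-radius assumption which, together with the ring condition $\dist(a_j, C) \in (2^i r, 2^{i+1} r]$, pins down the scale at which the nearest center lives and thus makes the local doubling nets sufficient.
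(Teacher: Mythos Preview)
Your proposal is correct and follows essentially the same route as the paper: build local doubling nets around each cluster center $a_j$, replace each $c_l\in C$ by a nearby net point to form a proxy $\tilde C$, and count proxies together with a support label $T\subseteq[k_0]$. The paper nets the ring $B(a_j,2^{i+1}r)\setminus B(a_j,2^ir)$ at mesh $\alpha 2^{i+1}r$ rather than the full ball $B(a_j,2^{i+3}r)$ at mesh $\alpha 2^i r/2$, but this is cosmetic; your explicit sentinel bookkeeping for far centers in fact makes rigorous a step the paper leaves implicit.
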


\begin{proof}
    For each $i\in \{0,\dots,T\}$, let $\alpha_i=\alpha  2^{i+1} r$. Let $N_{i,j}$ be an $\alpha_i$-net of $B(a_j,2^{i+1}r)\setminus B(a_j,2^i r)$. It follows from the definition of doubling dimension that
    \[
    	\abs{N_{i,j}} \leq (\frac{2^{i+1} r}{\alpha_i})^{d} \leq (\frac{k}{\alpha})^{d}.
    \]
    Fix $J\in [k_0]$ and consider a distance vector $v_{i,J,C}^S$, where $C = \{c_1,\dots,c_k\}$. For each $j\in J$, there exists $\bar{c}_j\in N_{i,j}$ such that $\dist(c_j, \bar{c}_j)\leq \alpha_i$. Define $\bar{v}_{i,J,C}:S\to \R_{\geq 0}$ as
    \[
    	\bar{v}_{i,J,C}(x) = 
    	\begin{cases}
    		\dist(x,\bar{C}),	& v_{i,J,C}^S(x) > 0\\
    		0,				 	& v_{i,J,C}^S(x) = 0.
    	\end{cases}
    \]
    This implies that 
    \[
    	\abs{v_{i,J,C}^S(x) - \bar{v}_{i,J,C}^S(x)} \leq \max_{j} \dist(c_j, \bar{c}_j) \leq \alpha_i.
    \]
    for all $x\in S$. Let $N_{i,J}$ denote the set of all possible vectors $\bar{v}_{i,J,C}$, then $N_{i,J}$ is an $\alpha$-net of $V_{i,J}^S := \{v_{i,J,C}^S \mid C\in \binom{M}{k}\}$. The number of vectors in $N_{i,J}$ is at most the number of possible center sets $\bar{C}$ (since $\bar{C}$ determines $\bar{v}_{i,J,C}(x)$ for all $x\in S$), hence
    \[
    	\abs{N_{i,J}} \leq  \left(\frac{k}{\alpha}\right)^{d k}\leq \left(\frac{k}{\alpha}\right)^{d k}.
    \]
    Observe that that $N_{\alpha,i}^S := \bigcup_J N_{i,J}$ is an $\alpha$-net of $V_{i}^S = \bigcup_J V_{i,J}^S$. Since there are at most $2^{k_0}$ subsets $J$, it follows that
    \begin{gather*}
    	\abs{ N_{\alpha,i}^S } \leq 2^{k_0} \cdot \left(\frac{k}{\alpha}\right)^{d k} \leq 2^{k} \cdot \left(\frac{k}{\alpha}\right)^{d k}
    \shortintertext{and}
    	\log \abs{ N_{\alpha,i}^S }  \lesssim k + kd \log{\frac{k}{\alpha}} \lesssim k d \log\frac{k}{\alpha}. 
    \end{gather*}
    Next, we show that we can choose a net consisting of piecewise constant vectors when $\alpha\geq 2^{-i}$. For $u\in N_{\alpha,i}^S$, define a vector $\bar{u}$ as
    \[
        \bar{u}(x) = \begin{cases}
                    \dist(a_{\pi(x)}, \bar{C}),    & u(x) > 0 \\
                    0,                      & u(x) = 0.
                  \end{cases}
    \]
    Then for a vector $v_{i,J,C}^S$, there exists $u\in N_{\alpha, i}^S$ such that $\abs{v_{i,J,C}(x) - \bar{u}(x)} \leq \alpha_i$. It follows that 
    \[
        \abs{v_{i,J,C}(x) - \bar{u}(x)} \leq \abs{v_{i,J,C}(x) - u(x)} + \abs{u(x) - \bar{u}(x)} \leq \alpha_i + r \leq \alpha2^{i+2}r.
    \]
    If $v_{i,J,C}^S(x)=0$, we know that $u(x)=0$, then $\bar{u}(x)=0$ by construction.
    Therefore, $\bar{N}_{\alpha,i}^S=\{\bar{u}\mid u\in N_{\alpha,i}^S\}$ is an $\alpha$-net of $V_{i,J,C}^S$ and every vector in  $\bar{N}_{\alpha,i}^S$ is a constant in each cluster. Furthermore,
    \[
        \log \abs{\bar{N}_{\alpha,i}^S} \leq \log |N_{\alpha,i}^S|\lesssim kd\log (k/\alpha). \qedhere
    \]

\end{proof}


\section{Proof of Lemma~\ref{lemma_Euc}: Third Coreset for Euclidean instances}

\label{sec_euc}


Recall that $X_I = \bigcup_{i\in I} X_i$.
Similar to Section~\ref{sec:alg_guarantee_proof}, we let $D_i=D\cap X_i$ for each $i\in [k]$, and $U_I = U \cap X_I$ and $D_I = D \cap X_I$ for each $I\subseteq [k]$.

We need the following adapted version of indexed-subset cost approximation (Definition \ref{def_subset}).

\begin{definition}[Strong indexed-subset cost approximation]\label{def_strong_subset} $S$ is an $\eps$-strong indexed-subset cost approximation of $X$ if for every $J\subset [k]$ and $C\in \binom{\mathbb{R}^d}{k}$,
\begin{eqnarray}\label{inq_strong_subset}
|\cost(S_J,C)-\cost(X_J,C)|\leq \eps\cdot (\cost(X_J,C)+\cost(X,A)).
\end{eqnarray}
\end{definition}

Note that we replace the term \( \eps n r \) in Definition \ref{def_subset} with \( \eps \cost(X, A) \).  
Since \( X \subset B(A, r) \), it follows that \( \cost(X, A) \leq n r \), making this adapted definition more stringent.



The following are the two main technical lemmata for the proof of Lemma~\ref{lemma_Euc}. Their proofs are deferred to subsequent subsections.

\begin{lemma}[General sufficient condition for robust coreset]
\label{lem_euc_coreset_guarantee}
Let $X$ be an $(r,k)$-instance. Suppose $D\subset X$ is capacity-respecting (Definition~\ref{def_cap}) and is an $\eps$-strong indexed-subset cost approximation of $X$, then $D$ is an $(\eps,m,6mr + \eps\cost(X,A))$-robust coreset of $X$.
\end{lemma}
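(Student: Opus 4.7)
The plan is to fix an arbitrary center set $C\in \binom{\mathbb{R}^d}{k}$ and outlier budget $t\le m$, construct surrogate outlier sets with at most one ``mixed'' cluster for both $X$ and $D$, and then bound the cluster-by-cluster difference using the strong indexed-subset cost approximation on the non-mixed clusters together with the small-diameter property on the mixed one. This mirrors the ``at most one mixed cluster'' idea sketched in the technical overview for the Euclidean case.

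First, I would define $\tilde L\subseteq X$ greedily: sort the $k$ clusters by $\dist(a_i,C)$ in decreasing order; successively include whole clusters into $\tilde L$ until adding another would exceed $t$; then top up with the top-distance points of the next cluster $X_{i_0}$ to reach exactly $t$ outliers. Let $J_1$ denote the fully-included indices, $i_0$ the (at most one) mixed cluster, and $J_0=[k]\setminus J_1\setminus\{i_0\}$ the remaining fully-inlier indices. I then construct an analogous weighted $\tilde E$ of $D$: include $D_i$ entirely for $i\in J_1$, and include a fractional top-distance subset of $D_{i_0}$ of weight $\ell:=|\tilde L\cap X_{i_0}|$. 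The capacity-respecting property $\|D_i\|_1=|X_i|$ is what ensures $\|\tilde E\|_1=t$ and gives a common partition for $X$ and $D$.

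Next, I would show $\cost(X\setminus \tilde L,C)\le \cost[t](X,C)+2mr$ and likewise $\cost(D\setminus \tilde E,C)\le \cost[t](D,C)+2mr$. Starting from an optimal outlier selection and transforming it into the greedy structure via cluster-to-cluster swaps, each unit swap moves one unit of outlier weight from a closer cluster to a farther one; since each cluster has radius $r$, a short triangle-inequality argument shows that a single unit swap decreases the discarded distance sum by at most $2r$, and the total outlier weight moved is at most $t\le m$.

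For the final step I split
\[
\cost(X\setminus \tilde L,C)-\cost(D\setminus \tilde E,C)=\bigl[\cost(X_{J_0},C)-\cost(D_{J_0},C)\bigr]+\bigl[\cost(X_{i_0}\setminus \tilde L_{i_0},C)-\cost(D_{i_0}\setminus \tilde E_{i_0},C)\bigr].
\]
The first bracket is controlled by the $\eps$-strong indexed-subset cost approximation with $J=J_0$. For the second, all distances of points in $X_{i_0}\cup D_{i_0}$ to $C$ lie in an interval of length $2r$, so $\cost(X_{i_0}\setminus \tilde L_{i_0},C)=(1-\ell/|X_{i_0}|)\cost(X_{i_0},C)\pm \ell r$ and analogously for $D_{i_0}$; the second bracket thus reduces to $(1-\ell/|X_{i_0}|)\abs{\cost(X_{i_0},C)-\cost(D_{i_0},C)}+2\ell r$, whose first factor is again controlled by the strong indexed-subset cost approximation with $J=\{i_0\}$. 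The combined multiplicative error is at most $\eps\bigl[\cost(X_{J_0},C)+(1-\ell/|X_{i_0}|)\cost(X_{i_0},C)\bigr]+2\eps\cost(X,A)$; the bracketed quantity equals $\cost(X\setminus\tilde L,C)\pm\ell r$, which is within $O(mr)$ of $\cost[t](X,C)$ by the previous step, giving the target bound after rescaling $\eps$. The main obstacle I anticipate is the careful bookkeeping needed to absorb the several $O(mr)$ contributions---from the greedy reassignment slack on each of $X$ and $D$, from linearising the mixed cluster, and from terms of order $\eps mr$---into the precise additive budget $6mr$, without inflating the multiplicative $\eps\cost[t](X,C)$ beyond $\eps$.
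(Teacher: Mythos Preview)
Your proposal is correct and follows essentially the same approach as the paper: the paper packages your greedy construction as a separate structural lemma (Lemma~\ref{proof_Euc_order}), which asserts that both $\cost^{(t)}(X,C)$ and $\cost^{(t)}(D,C)$ are within $2mr$ of the \emph{same} combination $\alpha_t^C\cost(\,\cdot_{j_t^C},C)+\cost(\,\cdot_{I_t^C},C)$, and then applies the strong indexed-subset cost approximation twice (to $J=I_t^C$ and $J=\{j_t^C\}$) exactly as you do. The only cosmetic difference is that the paper works directly with the fractional weight $\alpha_t^C$ on the mixed cluster rather than first building explicit $\tilde L,\tilde E$ and then linearising, which slightly streamlines the bookkeeping you flagged as the main obstacle.
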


We note that Lemma \ref{lem_euc_coreset_guarantee} holds for any metric space, including Euclidean spaces, and may be of independent research interest.

\begin{lemma}[Performance analysis of Algorithm \ref{alg_coreset3}]
\label{lemma_Euc_tech}
    For every $(r,k)$-instance $X=X_1\cup\cdots\cup X_k$, with probability at least $0.9$, Algorithm~\ref{alg_coreset3} computes a weighted subset $D\subset X$ in $O(nk)$ time, such that $D$ is capacity-respecting and an $O(\eps)$-strong indexed-subset cost approximation of $X$.
\end{lemma}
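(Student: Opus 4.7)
The plan is to verify the two conditions of the lemma separately. The capacity-respecting property is immediate from the construction: for any $i\in [k]$, we have
\[
    \sum_{x\in D\cap X_i} w_D(x) = \sum_{x\in U\cap X_i} \frac{|X_i|}{u_i} w_U(x) = \frac{|X_i|}{u_i} \cdot u_i = |X_i|.
\]
The main technical work lies in proving the $O(\eps)$-strong indexed-subset cost approximation property, which I would split into two parts: (i) show that the unadjusted sample $U$ already satisfies the strong indexed-subset cost approximation; (ii) show that the capacity-respecting rescaling from $U$ to $D$ perturbs costs by at most an $O(\eps)$ multiplicative factor.

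For (ii), note that $w_D(x) = (|X_{\pi(x)}|/u_{\pi(x)}) \cdot w_U(x)$, so it suffices to show $u_i \in (1\pm\eps)\,\E u_i$ simultaneously for all $i\in [k]$ with high probability. The key observation is that the second summand $\frac{1}{k|X_{\pi(x)}|}$ in $p(x)$ ensures $\sum_{x\in X_i} p(x) \geq \frac{1}{4k}$, so $\E u_i \geq s/(4k)$, which is large enough (given the choice of $s$) for Bernstein's inequality to yield the desired concentration after a union bound over $i\in [k]$. Conditioned on this, $w_D(x)/w_U(x) \in (1\pm O(\eps))$ uniformly, and consequently $|\cost(D_J,C) - \cost(U_J,C)| \leq O(\eps)\cost(U_J,C) \leq O(\eps)\cost(X_J,C) + O(\eps)\cost(X,A)$ whenever $U$ itself satisfies the desired approximation.

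For (i), I would employ the chaining framework developed in \cite{bansal2024sensitivity,Cohen25} for the vanilla Euclidean setting, adapted to the indexed-subset version. The sampling probability $p(x)$ is designed so that the three summands collectively upper-bound the relevant sensitivities: the first summand $(\cost(x,A)+\Delta_{\pi(x)})/\cost(X,A)$ dominates the pointwise contribution of $x$ to any $\cost(X_J,C)$ via the triangle inequality (after absorbing the error into the additive $\cost(X,A)$ term), while the remaining two summands provide a per-cluster reserve that tames the variance when $J$ selects only a few clusters. Standard symmetrization followed by Dudley's chaining over nets of center sets $C$ at geometrically decreasing scales, combined with Bernstein-type tail bounds, gives the desired additive error $\eps\cdot(\cost(X_J,C)+\cost(X,A))$ for each fixed $(J,C)$. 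Taking a union bound over the $2^k$ choices of $J$ inflates $\log(\text{net size})$ by an additive $O(k)$ term, which is absorbed into the $\polylog(k/\eps)$ factor without changing the order of the sample size stated in Line~\ref{alg_coreset3_size}.

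The main obstacle is controlling the chaining argument so that the additive error is $\eps\cost(X,A)$ rather than $\eps\cost(X,C)$, uniformly across all subsets $J\subseteq [k]$. In particular, when $J$ picks a small number of clusters, $\cost(X_J,C)$ can be arbitrarily small relative to $\cost(X,C)$, so the standard sensitivity-based bound is insufficient; this is precisely why the $\cost(X,A)$ additive slack (and the per-cluster mass in $p(x)$) is indispensable. The runtime $O(nk)$ is immediate: computing $\pi(x)$ and all $\Delta_i$'s requires $O(nk)$ time, after which $p(x)$ and the sampling step run in $O(n+s)$ time.
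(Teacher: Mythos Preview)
Your overall architecture matches the paper's: capacity-respecting is immediate from Line~\ref{alg3_scale}; the crux is (i) showing the unadjusted sample $U$ is an $\eps$-strong indexed-subset cost approximation, then (ii) transferring to $D$ via concentration of the ratios $|X_i|/u_i$. Your treatment of (ii) is essentially the paper's Lemma~\ref{lemma_Euc_UD}, with one small inaccuracy: $u_i=\|U\cap X_i\|_1$ has $\E u_i=|X_i|$, not $s/(4k)$; what is bounded below by $s/(4k)$ is the expected \emph{number} of samples falling in $X_i$, and that is indeed what makes the Bernstein argument work.

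Part (i), however, has a genuine gap. You correctly identify the obstacle---achieving additive error $\eps\cost(X,A)$ uniformly over all $J\subseteq[k]$---but you do not resolve it; observing that the $\cost(X,A)$ slack and the per-cluster summands in $p(x)$ are ``indispensable'' is not yet an argument. The paper's proof (adapting \cite{bansal2024sensitivity}) is substantially more structured than a single symmetrization-plus-Dudley chaining with a union bound over $2^k$ choices of $J$. For each $(J,C)$ it first decomposes $J$ into $O(\log^2(k/\eps))$ pieces $J^C_{j,l}$ according to \emph{bands} $j$ (geometric buckets of $\cost(X_i,A)$) and \emph{levels} $l$ (geometric buckets of $\dist(a_i,C)/\Delta_i$). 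The extreme regimes---Band-$0$ where $\cost(X_i,A)\le(\eps/k)\cost(X,A)$, and Level-$(-1)$ where $\dist(a_i,C)>\eps^{-1}\Delta_i$---are dispatched by direct triangle-inequality calculations (Lemmata~\ref{lemma_EUC_chaining_minorcase0} and~\ref{lemma_Euc_chaining_minorcases1}) under the joint event $\mathcal{E}_0\cap\mathcal{E}_1$; note that both events are needed here, not just the $\mathcal{E}_0$ you invoke in (ii). The remaining main regimes are further stratified by \emph{interaction numbers} $N(J,C)$, counting pairs $(i,c)\in J\times C$ with $\|a_i-c\|_2\in[32\Delta_i,16\dist(a_i,C)]$; this additional stratification is precisely what balances net size against variance in the chaining and delivers the $\tilde O(\min\{k^{4/3}\eps^{-2},k\eps^{-3}\})$ sample complexity (Lemma~\ref{lemma_Euc_chaining_main}). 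Without the band/level/interaction decomposition, the variance of the chaining increments does not normalize to $\cost(X_J,C)+\cost(X,A)$ in the required way, and a vanilla Dudley bound as you sketch would either give a worse size or fail to be uniform over $J$.
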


Lemma~\ref{lemma_Euc} now follows immediately.

\begin{proof}[Proof of Lemma~\ref{lemma_Euc}]
By combining Lemmata~\ref{lem_euc_coreset_guarantee} and~\ref{lemma_Euc_tech}, we conclude that the weighted set $D$ returned by Algorithm~\ref{alg_coreset3} is an $(O(\eps),m,6mr + O(\eps)\cdot\cost(X,A))$-robust coreset with probability at least $0.9$. The size of $D$ is guaranteed by the algorithm, prescribed by Line~\ref{alg_coreset3_size}. Rescaling $\eps$ completes the proof of correctness. Finally, it is clear that the algorithm runs in $O(nk)$ time.
\end{proof}

\begin{remark}[Comparison with \cite{jiang2025coresets}]
\label{remark:comparison}
Recall that \cite{jiang2025coresets} provided two coreset sizes for Euclidean robust \kMedian: $O(km\eps^{-1}) + (\text{vanilla size})$ and $\tilde{O}(m\eps^{-2}) + (\text{Vanilla size})$. We compare our approach against both bounds separately below.

To achieve the first bound \(O(km\eps^{-1}) + \text{(vanilla size)}\), \cite{jiang2025coresets} first decomposes the dataset into \(k\) dense components along with \(O(km\eps^{-1})\) residual points.  
They build a vanilla coreset for the union of dense components and retain all residual points, yielding a total size of \(O(km\eps^{-1}) + \text{(vanilla size)}\).  
Their analysis must accommodate up to \(m\) outliers in each of the \(k\) dense components, effectively handling \(mk\) outliers. 
In contrast, we apply a single vanilla‐reduction to the entire \((r,k)\)-instance, which only needs to manage \(m\) outliers in total.  
This joint treatment reduces the number of residual points from \(O(km\eps^{-1})\) down to \(O(m\eps^{-1})\).

To achieve the second bound \(\tilde{O}(m\eps^{-2}) + \text{(vanilla size)}\), \cite{jiang2025coresets} first partitions the dataset into \(\Gamma = \tilde{O}(m\eps^{-2})\) radius‐bounded components, then builds a vanilla coreset for the \ProblemName{$(k + \Gamma)$‐Medians} problem, using specific geometric properties of the decomposition rather than generic vanilla coreset algorithms.
%
%
They then show that the vanilla coreset is robust by leveraging a specific sparsity condition satisfied by these components (see Lemma 4.3 and Definition 4.4 of~\cite{jiang2025coresets}).
In contrast, our decomposition produces at most 
$
\Gamma' = k + O\bigl(m\eps^{-1}\bigr)
$
radius‐bounded components---namely the \(k\) balls in the \((r,k)\)-instance plus \(O(m\eps^{-1})\) singleton points. However, these components do not satisfy the sparsity condition of \cite{jiang2025coresets}, so our coreset \(S\), produced by Algorithm~\ref{alg_coreset3}, is not a vanilla \ProblemName{$(k + \Gamma')$-Medians} coreset. This prevents the use of their vanilla reduction argument on $S$. 
Instead, we exploit the uniform‐radius structure of \((r,k)\)-instances to prove directly that \(S\) is a robust coreset. 
By reducing the number of radius‐bounded components by a factor of \(\eps^{-1}\), we shrink the residual component size from \(O(m\eps^{-2})\) down to \(O(m\eps^{-1})\).
\end{remark}

\subsection{Proof of Lemma~\ref{lem_euc_coreset_guarantee}: General sufficient condition}

First, we present the following structural lemma, which approximately decomposes \( \cost^{(t)}(X,C) \) and \( \cost^{(t)}(D,C) \) into their respective vanilla costs.

\begin{lemma}[Robust cost decomposition]
\label{proof_Euc_order} 
If $D$ is a capacity-respecting weighted subset of $X$, we have that
	for every $C\in \binom{\mathbb{R}^d}{k}$, and every $t\in [0,m]$, there exists an index $j_t^C\in [k]$, an index subset $I_t^C\subseteq [k]$, and a constant $\alpha_t^C\in [0,1]$ such that
	\begin{gather} 
		|\cost^{(t)}(X,C)- \alpha_t^C\cost(X_{j_t^C},C)-\cost(X_{I_t^C},C)|\leq 2mr \label{inq_order_X} \\
		|\cost^{(t)}(D,C)- \alpha_t^C\cost(D_{j_t^C},C)-\cost(D_{I_t^C},C)|\leq 2mr \label{inq_order_D}
	\end{gather}
\end{lemma}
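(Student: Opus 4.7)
The plan is to define the tuple $(j_t^C, I_t^C, \alpha_t^C)$ by ordering clusters by their distance to $C$ and then to reduce both inequalities to a single ``fractional outlier'' comparison. Concretely, I would sort the indices $i\in[k]$ by $d_i:=\dist(a_i,C)$ in descending order, let $j=j_t^C$ be the unique index with $\sum_{i<j}|X_i|\le t < \sum_{i\le j}|X_i|$, and set $\beta:=t-\sum_{i<j}|X_i|\in[0,|X_j|]$, $\alpha_t^C:=1-\beta/|X_j|$, and $I_t^C:=\{j+1,\ldots,k\}$. Since $D$ is capacity-respecting, $\norm{D_i}_1=|X_i|$ for every $i$, so this same tuple governs both \eqref{inq_order_X} and \eqref{inq_order_D} uniformly.

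For \eqref{inq_order_X}, the plan is to rewrite both sides as $\cost(X,C)$ minus ``outlier cost.'' Define the cluster-based fractional assignment $f_1:X\to[0,1]$ by $f_1\equiv 1$ on $X_1\cup\cdots\cup X_{j-1}$, $f_1\equiv \beta/|X_j|$ on $X_j$, and $f_1\equiv 0$ on $X_{I_t^C}$; then $\sum_x f_1(x)=t$ and a direct computation gives $\alpha_t^C\cost(X_j,C)+\cost(X_{I_t^C},C)=\cost(X,C)-\sum_x f_1(x)\dist(x,C)$. Likewise $\cost^{(t)}(X,C)=\cost(X,C)-\sum_x f_2(x)\dist(x,C)$, where $f_2$ optimizes $\max\{\sum_x f(x)\dist(x,C):f\in[0,1]^X,\sum_x f(x)=t\}$ (the LP relaxation is tight here because the weights are unit and $t$ is an integer). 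Thus \eqref{inq_order_X} reduces to showing $0\le \sum_x(f_2-f_1)(x)\dist(x,C)\le 2mr$. Nonnegativity is immediate from optimality of $f_2$. For the upper bound I would use the threshold structure of the LP optimum: there is a $\theta$ with $f_2=1$ on $\{\dist(\cdot,C)>\theta\}$ and $f_2=0$ on $\{\dist(\cdot,C)<\theta\}$. On $\{f_2>f_1\}$ one has $f_2>0$ (so $\dist(x,C)\ge \theta$) and $f_1<1$ (so $x\in X_j\cup\cdots\cup X_k$); combined with $X_i\subset B(a_i,r)$ and $d_i\le d_j$ for $i\ge j$ this forces $\dist(x,C)\le d_j+r$. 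A symmetric argument shows $\{f_2<f_1\}\subseteq\{\dist(x,C)\in[d_j-r,\theta]\}$. Letting $M$ denote the common positive/negative mass of $f_2-f_1$, I conclude $\sum_x(f_2-f_1)(x)\dist(x,C)\le (d_j+r)M-(d_j-r)M=2Mr\le 2mr$, using $M\le\sum_x f_2(x)=t\le m$.

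For \eqref{inq_order_D} the same argument runs verbatim in the weighted setting with $\sum_x$ replaced by $\sum_x w_D(x)$ everywhere. The capacity-respecting property $\norm{D_i}_1=|X_i|$ guarantees $\sum_x w_D(x)f_1^D(x)=t$ for the analogous cluster-based $f_1^D$, and since the paper's definition of $\cost^{(t)}(D,C)$ already uses fractional (weighted) outliers, the weighted LP $\max\{\sum_x w_D(x)f(x)\dist(x,C):f\in[0,1]^D,\sum_x w_D(x)f(x)=t\}$ is exactly $\cost(D,C)-\cost^{(t)}(D,C)$, and its optimum $f_2^D$ is still of threshold type. The main obstacle I anticipate is verifying that this threshold property and the pairing estimate transfer cleanly to the weighted sum; what drives the whole proof is the uniform radius $r$ of an $(r,k)$-instance, which is precisely what confines the ``swap region'' between $f_1$ and $f_2$ to a window of width $2r$ and delivers the $2mr$ additive error.
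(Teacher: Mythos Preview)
Your proof is correct and defines exactly the same parameters $(j_t^C, I_t^C, \alpha_t^C)$ as the paper. The technical route differs: the paper upper bounds the outlier cost $\cost(L,C)$ by integrating a ``cut'' function $\mathrm{Cut}(L,C,\beta)$ (the fractional $\beta$-th largest distance in $L$) and proving the slab estimate $\mathrm{Cut}(L,C,\beta)\le \cost(X_{i_l},C)/|X_{i_l}|+2r$, then combining with the trivial lower bound from optimality of $L$. You instead compare two fractional outlier profiles $f_1$ (cluster-based) and $f_2$ (LP-optimal) and use the threshold structure of $f_2$ to pin all the swap mass $f_2-f_1$ into a window of width $2r$ around $d_j$. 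Your mass-exchange argument is a bit more direct and transfers to the weighted set $D$ with no new notation; the paper's cut formulation is perhaps more natural when extending to general $z\ge 1$ (Lemma~\ref{lemma_Euc_order_kzC}), where the generalized triangle inequality enters pointwise on $\mathrm{Cut}$. In both versions the engine is identical: the uniform radius $r$ of an $(r,k)$-instance confines the inlier/outlier boundary to an annulus of width $2r$, which is what produces the $2mr$ additive error. One trivial edge to tidy in your write-up: when $d_j<r$ the line ``$(d_j+r)M-(d_j-r)M$'' should read ``$(d_j+r)M-(d_j-r)^+M\le 2Mr$'', since distances are nonnegative; the final bound is unaffected.
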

\begin{proof}
	Fix $C\in \binom{\mathbb{R}^d}{k}$ and $t\in [0,m]$. We sort $\{a_1,\dots,a_k\}$ by their distances to $C$, denoted by
	$$
	\dist(a_{i_1},C)\geq \dist(a_{i_2},C)\geq\cdots\geq \dist(a_{i_k},C),
	$$ where ties are broken arbitrarily. Let $q\in [k]$ denote the unique integer such that $$\sum_{j=1}^{q-1} |X_{i_j}|<t\leq \sum_{j=1}^q |X_{i_j}|.$$ 
	Let $\alpha_t^C = (\sum_{j=1}^q |X_{i_j}|-t)/|X_{i_q}|$, $j_t^C=q$ and $I_t^C=\{i_{q+1},\dots,i_k\}$. It remains to prove \eqref{inq_order_X} and \eqref{inq_order_D} for these parameters.
	
	Let $L\subseteq X$ with $\|L\|_1=t$ denote the outlier set excluded by $\cost^{(t)}(X,C)$. For every $\beta\in (0,t]$ define the $\beta$-cut of $L$ (with respect to $C$) as, 
	$$
	\mathrm{Cut}(L,C,\beta)=\max_{R\subseteq L,\|R\|_1=\beta} \min_{x\in R} \dist(x,C).
	$$ 
	Note that a $\beta$-cut can be interpreted as the fractional extension of the $\beta$-th largest distance for integral $\beta$.
	
	We claim that
	\[
	\mathrm{Cut}(L,C,\beta)\leq \frac{\cost(X_{i_l},C)}{|X_{i_l}|}+2r \quad\text{when}\quad \sum_{j=1}^{l-1}|X_{i_j}|<\beta\leq \sum_{j=1}^{l}|X_{i_j}|.
	\]
	To see this, note we would otherwise have 
	\[
	\mathrm{Cut}(L,C,\beta)> \frac{\cost(X_{i_l},C)}{|X_{i_l}|}+2r>\dist(a_{i_l},C)+r\geq \max_{x\in \bigcup_{j=l}^k X_{i_j}} \dist(x,C).
	\]
	Thus, there exists an $R\subseteq L$ with $|R|=\beta$ such that $R$ does not intersect $\bigcup_{j=l}^k X_{i_j}$, so $|R|\leq \sum_{j=1}^{l-1} |X_{i_j}|<\beta$, which is a contradiction.
	
	It then follows that
	\begin{align*}
		\cost(L,C) 
		&\leq \int_{0}^t \mathrm{Cut}(L,C,\beta) d\beta\\
		&\leq \sum_{j=1}^{q-1} |X_{i_j}| \left(\frac{\cost(X_{i_j},C)}{|X_{i_j}|}+2r\right) + (1 - \alpha_t^C) |X_{i_q}| \left(\frac{\cost(X_{i_q},C)}{|X_{i_q}|}+2r\right)\\
		&\leq \cost\left(\bigcup_{j=1}^{q-1} X_{i_j}, C\right) + (1-\alpha_t^C)\cost(X_{i_q},C)+2mr,
	\end{align*}
	where the second inequality is due to the fact that $\sum_{j=1}^{q-1} |X_{i_j}|+(1-\alpha_t^C)|X_{i_q}|=t$. By the definition of $\cost^{(t)}(X,C)$, we also have
	\[
	\cost(L,C)\geq \sum_{i=1}^{q-1} \cost(X_{i_j},C)+(1-\alpha_t^C)\cost(X_{i_q},C).
	\]
	Since $\cost^{(t)}(X,C)=\cost(X,C)-\cost(L,C)$, it follows from the upper and lower bounds of $\cost(L,C)$ that
	\begin{align*}
		\cost^{(t)}(X,C) &\geq \alpha_t^C \cost(X_{j_t^C},C)+\cost(X_{I_t^C},C)-2mr \\ 
		\cost^{(t)}(X,C) &\leq \alpha_t^C \cost(X_{j_t^C},C)+\cost(X_{I_t^C},C),
	\end{align*}
	establishing \eqref{inq_order_X}. Since $D$ is capacity-respecting, \eqref{inq_order_D} can be proved in the same manner with the same set of parameters.
\end{proof} 

Lemma~\ref{lem_euc_coreset_guarantee} now follows easily from Lemma~\ref{proof_Euc_order}.
\begin{proof}[Proof of Lemma~\ref{lem_euc_coreset_guarantee}]
	By Lemma~\ref{proof_Euc_order}, the triangle inequality and the assumption that $D$ is an $\eps$-strong indexed-subset cost approximation of $X$, we have that for every $t\in [0,m]$ and $C\in \binom{\R^d}{k}$,
	\begin{align*}
		&\quad\ \abs{\cost^{(t)}(X,C)-\cost^{(t)}(D,C)}\\
		&\leq 2mr+2mr+\alpha_t^C\cdot |\cost(D_{j_t^C},C)-\cost(X_{j_t^C},C)|+|\cost(D_{I_t^C},C)-\cost(X_{I_t^C},C)|\\
		&\leq 4mr+\eps\cdot (\cost(X_{I_t^C},C)+\alpha_t^C\cost(X_{j_t^C},C))+\eps\cdot \cost(X,A)\\
		&\leq \eps\cdot \cost^{(m)}(X,C)+6mr+\eps\cdot \cost(X,A). \qedhere
	\end{align*}
\end{proof}


\subsection{Proof of Lemma~\ref{lemma_Euc_tech}: Performance analysis of Algorithm \ref{alg_coreset3}} \label{sec_Euc_analysis}

Recall that $(U, w_U)$ is a vanilla coreset constructed by the importance sampling procedure of \cite{bansal2024sensitivity} (Lines 1--8 of Algorithm \ref{alg_coreset3}).
Define two events
\begin{align*}
\mathcal{E}_0 &=``\text{$u_i\in (1\pm \eps)\cdot |X_i|$ for all $i\in [k]$''} \\
\mathcal{E}_1 &=``\text{$\cost(U\cap X_i,A)\in (1\pm \eps)\cdot \cost(X_i,A)$ for all $i\in [k]$''}
\end{align*}

The following lemma suggests that both events are likely to occur.

\begin{lemma}[Adaption of {\cite[Lemma~2.17]{bansal2024sensitivity}}]
\label{lm:event_prob}
When $s\gtrsim k\eps^{-2}\log(k\eps^{-1})$, it holds that $$
\mathbb{P}[\mathcal{E}_0\cup\mathcal{E}_1]\geq 1-\frac{\eps}{k}.
$$
\end{lemma}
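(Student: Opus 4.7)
The plan is to prove both events hold with high probability by applying Bernstein's inequality separately for each cluster and each event, and then taking a union bound over the $2k$ resulting bad events. The key observation is that the sampling probability $p(x)$ decomposes as a $1/4$-convex combination of three terms, and each of the two ``extra'' terms (the second and third) was tailored precisely to control the two quantities at hand. Specifically, for every $x \in X_i$ we have the two lower bounds
\[
    p(x) \;\geq\; \frac{1}{4k\abs{X_i}}, \qquad p(x) \;\geq\; \frac{\cost(x,A)}{4k\,\cost(X_i,A)}.
\]

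\textbf{Analysis of $\mathcal{E}_0$.} Fix $i \in [k]$ and, for each of the $s$ samples, write $V_j = \mathbf{1}[u_j \in X_i]/(s\,p(u_j))$, so that $u_i = \sum_j V_j$ with $\E u_i = \abs{X_i}$. The first lower bound gives $V_j \le 4k\abs{X_i}/s$ almost surely, and
\[
    \sum_{j=1}^{s}\E V_j^2 \;=\; \frac{1}{s}\sum_{x \in X_i}\frac{1}{p(x)} \;\leq\; \frac{4k\abs{X_i}^2}{s}.
\]
Bernstein's inequality then gives $\Pr\{\abs{u_i-\abs{X_i}}>\eps\abs{X_i}\} \lesssim \exp(-c\eps^2 s/k)$ for some absolute constant $c>0$.

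\textbf{Analysis of $\mathcal{E}_1$.} Similarly, set $W_j = \dist(u_j,A)\mathbf{1}[u_j \in X_i]/(s\,p(u_j))$, so that $\cost(U \cap X_i,A) = \sum_j W_j$ and $\E\sum_j W_j = \cost(X_i,A)$. The second lower bound yields $W_j \leq 4k\,\cost(X_i,A)/s$, and
\[
    \sum_{j=1}^{s}\E W_j^2 \;=\; \frac{1}{s}\sum_{x \in X_i}\frac{\dist(x,A)^2}{p(x)} \;\leq\; \frac{4k\,\cost(X_i,A)}{s}\sum_{x \in X_i}\dist(x,A) \;=\; \frac{4k\,\cost(X_i,A)^2}{s}.
\]
Another application of Bernstein's inequality gives the same tail bound $\lesssim \exp(-c\eps^2 s/k)$.

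\textbf{Union bound.} Choosing $s \gtrsim k\eps^{-2}\log(k\eps^{-1})$ makes each of the $2k$ failure probabilities at most $\eps/(2k^2)$. A union bound over all $i \in [k]$ and both events concludes that $\Pr[\mathcal{E}_0 \cap \mathcal{E}_1] \geq 1 - \eps/k$. I do not anticipate any serious obstacle here: the sampling distribution was engineered precisely to make these two Bernstein arguments go through with a sample size depending only on $k$ (not on $\abs{X}$ or any geometric parameter), and the only care needed is to verify that the two component lower bounds on $p(x)$ give the correct scaling for the variance and the $L^\infty$ bound.
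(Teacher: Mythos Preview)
Your proof is correct and is exactly the intended argument: the paper does not spell out a proof but simply cites this as an adaptation of \cite[Lemma~2.17]{bansal2024sensitivity}, and your Bernstein-plus-union-bound argument using the two lower bounds $p(x)\ge \tfrac{1}{4k|X_i|}$ and $p(x)\ge \tfrac{\cost(x,A)}{4k\cost(X_i,A)}$ is precisely that adaptation. One cosmetic point: the paper writes $\mathcal{E}_0\cup\mathcal{E}_1$ throughout, but as you implicitly noticed, what is both proved and used is $\mathcal{E}_0\cap\mathcal{E}_1$; your conclusion $\Pr[\mathcal{E}_0\cap\mathcal{E}_1]\ge 1-\eps/k$ is the right one and a fortiori implies the stated inequality.
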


The following lemma connects $U$ and $D$.

\begin{lemma}[Connection between $U$ and $D$] 
\label{lemma_Euc_UD}
If $\mathcal{E}_0$ happens and $U$ is an $\eps$-strong indexed-subset cost approximation of $X$, then $D$ is capacity-respecting and an $O(\eps)$-strong indexed-subset cost approximation of $X$.
\end{lemma}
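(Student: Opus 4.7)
The plan is to verify the two conclusions directly from the construction of $D$ in Algorithm~\ref{alg_coreset3}, namely that $D$ is obtained from $U$ by scaling weights within each cluster $X_i$ by the factor $|X_i|/u_i$. The capacity-respecting property will follow immediately from this rescaling (and does not require $\mathcal{E}_0$), while the strong indexed-subset cost approximation will follow from a two-step comparison: first compare $\cost(D_J,C)$ to $\cost(U_J,C)$ using $\mathcal{E}_0$, then invoke the assumed bound between $\cost(U_J,C)$ and $\cost(X_J,C)$.

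For capacity-respecting, I would compute
\[
    \|D\cap X_i\|_1 \;=\; \sum_{x\in U\cap X_i} w_D(x) \;=\; \frac{|X_i|}{u_i}\sum_{x\in U\cap X_i} w_U(x) \;=\; \frac{|X_i|}{u_i}\cdot u_i \;=\; |X_i|,
\]
where the second equality uses Line~\ref{alg3_scale} of Algorithm~\ref{alg_coreset3} and the last uses the definition $u_i = \|U\cap X_i\|_1$. Hence $D$ is capacity-respecting.

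For the strong indexed-subset cost approximation, fix $J\subseteq [k]$ and $C\in \binom{\R^d}{k}$, and decompose the weighted cost cluster by cluster. By the weight rescaling,
\[
    \cost(D_J,C) - \cost(U_J,C) \;=\; \sum_{i\in J}\left(\frac{|X_i|}{u_i}-1\right)\cost(U\cap X_i,C).
\]
Under $\mathcal{E}_0$, each factor $|X_i|/u_i - 1$ lies in an interval of size $O(\eps)$, so
\[
    |\cost(D_J,C) - \cost(U_J,C)| \;\leq\; O(\eps)\cdot \cost(U_J,C).
\]
Combining this with the assumption that $U$ is an $\eps$-strong indexed-subset cost approximation of $X$, which yields $\cost(U_J,C) \leq (1+\eps)\cost(X_J,C) + \eps\cost(X,A)$, gives
\[
    |\cost(D_J,C) - \cost(U_J,C)| \;\leq\; O(\eps)\cdot(\cost(X_J,C) + \cost(X,A)).
\]
A final application of the triangle inequality with the assumed bound $|\cost(U_J,C) - \cost(X_J,C)|\leq \eps(\cost(X_J,C)+\cost(X,A))$ yields the desired $O(\eps)$-strong indexed-subset cost approximation for $D$.

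No serious obstacle arises: the argument is a direct reweighting computation, and the main subtlety is only ensuring that the uniform multiplicative closeness of $|X_i|/u_i$ to $1$ under $\mathcal{E}_0$ translates into an additive error controlled by $\cost(X_J,C)+\cost(X,A)$ rather than by $\cost(X,C)$, which is precisely why we bound $\cost(U_J,C)$ through the assumed indexed-subset guarantee rather than through a global vanilla coreset guarantee on $U$.
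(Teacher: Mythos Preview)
Your proof is correct and follows essentially the same approach as the paper: establish capacity-respecting directly from the reweighting in Line~\ref{alg3_scale}, then use $\mathcal{E}_0$ to get $\cost(D_J,C)\in(1\pm O(\eps))\cost(U_J,C)$ and combine with the assumed strong indexed-subset guarantee on $U$ via the triangle inequality. Your cluster-by-cluster decomposition of $\cost(D_J,C)-\cost(U_J,C)$ is slightly more explicit than the paper's pointwise statement $w_D(x)\in(1\pm\eps)w_U(x)$, but the arguments are otherwise identical.
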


\begin{proof}

By Line~\ref{alg3_scale} of Algorithm~\ref{alg_coreset3}, $D$ is capacity-respecting.

Recall that when $\mathcal{E}_0$ happens, we have $w_D(x)\in (1\pm x)\cdot w_U(x)$ for each $x\in U$. Thus, $\cost(D_I,C)\in (1\pm \eps)\cdot \cost(U_I,C)$ for each $I\subseteq [k]$ and $C\in \binom{\mathbb{R}^d}{k}$. 
It follows that
\begin{align*}
    \cost(D_I,C) - \cost(X_I,C) &\leq (1+\eps)(\cost(U_I,C) - \cost(X_I,C)) + \eps \cost(X_I,C)\\
    &\leq (1+\eps)(\cost(X_I,C) + \cost(X,A)) + \eps\cost(X_I,C) \\
    &\leq (1+2\eps)(\cost(X_I,C) + \cost(X,A))
\end{align*}
and similarly that $\cost(D_I,C) - \cost(X_I,C)\geq (1-2\eps)(\cost(X_I,C) + \cost(X,A))$. The proof is complete.
\end{proof}

Lemma~\ref{lemma_Euc_UD} implies that it suffices to prove that $U$ is an $\eps$-strong indexed-subset cost approximation of $X$, namely, (\ref{inq_strong_subset}) holds when $S = U$. See Lemma \ref{lemma_Euc_U_subset}.

Recall that for each $x\in X$, $\pi(x)$ is the unique index $i\in [k]$ such that $x\in X_i$. 
Also recall that for each $i\in [k]$, $\Delta_i = \frac{\cost(X_i,A)}{|X_i|}$ denote the average cost of $X_i$.
By the definition of $p(x)$ and $w_U(x)=\frac{1}{s\cdot p(x)}$, we have the following lemma.

\begin{lemma}[Fact B.1 of \cite{bansal2024sensitivity}]\label{lemma_Euc_weight} 

For every $x\in U$,
$$w_U(x)\leq \frac{4}{s}\cdot \min\bigg(\frac{\cost(X,A)}{\dist(x,A)},k|X_{\pi(x)}|,\frac{\cost(X,A)}{\Delta_{\pi(x)}},\frac{k\cost(X_{\pi(x)},A)}{\dist(x,A)}\bigg).$$
\end{lemma}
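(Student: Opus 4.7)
The plan is to exploit the fact that $p(x)$, as defined in Line~3 of Algorithm~\ref{alg_coreset3}, is a sum of four non-negative summands, each prefixed by a factor of $1/4$. Concretely, I would first expand the parenthesised expression into
\begin{equation*}
p(x) = \tfrac{1}{4}\cdot\tfrac{\dist(x,A)}{\cost(X,A)} + \tfrac{1}{4}\cdot\tfrac{\Delta_{\pi(x)}}{\cost(X,A)} + \tfrac{1}{4}\cdot\tfrac{1}{k\,|X_{\pi(x)}|} + \tfrac{1}{4}\cdot\tfrac{\dist(x,A)}{k\,\cost(X_{\pi(x)},A)},
\end{equation*}
using the trivial identity $\cost(x,A) = \dist(x,A)$ for an unweighted singleton.

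From here the argument is a short four-way case analysis. Since each summand is non-negative, $p(x)$ is at least any one of them individually. I would then take reciprocals and multiply by $1/s$ to convert each such lower bound on $p(x)$ into an upper bound on $w_U(x) = 1/(s\,p(x))$. The first summand yields $\tfrac{4}{s}\cdot \cost(X,A)/\dist(x,A)$; the second yields $\tfrac{4}{s}\cdot \cost(X,A)/\Delta_{\pi(x)}$; the third yields $\tfrac{4}{s}\cdot k|X_{\pi(x)}|$; and the fourth yields $\tfrac{4}{s}\cdot k\cost(X_{\pi(x)},A)/\dist(x,A)$. Taking the pointwise minimum of the four recovers exactly the inequality claimed.

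I do not anticipate any genuine obstacle. The whole argument is a transparent algebraic consequence of the specific form chosen for the sampling probability, which is precisely why the statement is credited as a fact borrowed from \cite{bansal2024sensitivity} rather than being proved from scratch. The only sanity check worth performing is that the four summands of the decomposition above biject with the four entries appearing inside $\min(\cdot)$ on the right-hand side, which is immediate by inspection and does not rely on anything beyond non-negativity of each term.
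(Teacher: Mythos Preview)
Your proposal is correct and matches the paper's approach exactly: the paper does not spell out a proof but simply prefaces the lemma with ``By the definition of $p(x)$ and $w_U(x)=\frac{1}{s\cdot p(x)}$,'' which is precisely the four-term lower-bound-on-$p(x)$ argument you describe.
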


We need the following two-sided bounds for $\cost(U_i,C)$, which are derived from the triangle inequality.

\begin{lemma}[Two-sided bounds for $\cost(U_i,C)$]
\label{lemma_Euc_useful_triangle}
For each $i\in [k]$,
\begin{align}
    \cost(U_i,C) &\leq \cost(U_i,a_i)+\frac{u_i}{|X_i|}\cdot (\cost(X_i,a_i)+\cost(X_i,C)) \label{inq_Euc_tri}\\
                &\leq 4\cost(X,A)+4k\cdot \cost(X_i,a_i)+4k\cdot \cost(X_i,C), \label{inq_tri_worst}
\shortintertext{and}
    \cost(U_i,C) &\geq \frac{u_i}{|X_i|}(\cost(X,C)-\cost(X_i,a_i))-\cost(U_i,a_i).\label{inq_Euc_tri_reverse}
\end{align}
\end{lemma}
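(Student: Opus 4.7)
The proof is essentially the triangle inequality applied twice, with a routine averaging step to convert pointwise inequalities into cost inequalities, and a case analysis on the weight bounds from Lemma~\ref{lemma_Euc_weight} to obtain the worst-case bound~\eqref{inq_tri_worst}. Recall that $a_i$ is the closest point in $A$ to every $x\in X_i$, so in particular $\dist(x,a_i)=\dist(x,A)$ for $x\in X_i$; this identity will be used repeatedly.

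For the upper bound~\eqref{inq_Euc_tri}, the plan is to first note that for every $x\in U_i$ we have $\dist(x,C)\le \dist(x,a_i)+\dist(a_i,C)$, so that summing weighted by $w_U$ yields
\[
\cost(U_i,C)\le \cost(U_i,a_i)+u_i\cdot \dist(a_i,C).
\]
Then I bound $\dist(a_i,C)$ by averaging the triangle inequality $\dist(a_i,C)\le \dist(a_i,y)+\dist(y,C)$ over $y\in X_i$, which gives $|X_i|\dist(a_i,C)\le \cost(X_i,a_i)+\cost(X_i,C)$. Substituting produces~\eqref{inq_Euc_tri}.

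For~\eqref{inq_tri_worst}, the plan is to feed~\eqref{inq_Euc_tri} into Lemma~\ref{lemma_Euc_weight} in two different ways. For the term $\cost(U_i,a_i)=\sum_{x\in U_i} w_U(x)\dist(x,A)$, I use the bound $w_U(x)\le \tfrac{4}{s}\cdot \tfrac{\cost(X,A)}{\dist(x,A)}$, which gives each summand at most $\tfrac{4}{s}\cost(X,A)$, and $|U_i|\le s$ then yields $\cost(U_i,a_i)\le 4\cost(X,A)$. For the factor $u_i/|X_i|$, I use $w_U(x)\le \tfrac{4k|X_{\pi(x)}|}{s}$, so $u_i\le |U_i|\cdot \tfrac{4k|X_i|}{s}\le 4k|X_i|$, i.e.\ $u_i/|X_i|\le 4k$. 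Plugging both bounds into~\eqref{inq_Euc_tri} gives~\eqref{inq_tri_worst}.

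For the lower bound~\eqref{inq_Euc_tri_reverse}, I apply the reverse triangle inequality $\dist(x,C)\ge \dist(a_i,C)-\dist(x,a_i)$ and sum weighted by $w_U$ to obtain $\cost(U_i,C)\ge u_i\dist(a_i,C)-\cost(U_i,a_i)$. Averaging $\dist(a_i,C)\ge \dist(y,C)-\dist(a_i,y)$ over $y\in X_i$ yields $|X_i|\dist(a_i,C)\ge \cost(X_i,C)-\cost(X_i,a_i)$, and substitution finishes the proof (with the $\cost(X,C)$ in the statement being a typo for $\cost(X_i,C)$). No step is genuinely hard; the only minor subtlety is remembering to pick the right term from the four-way minimum in Lemma~\ref{lemma_Euc_weight} for each of $\cost(U_i,a_i)$ and $u_i$ in the proof of~\eqref{inq_tri_worst}.
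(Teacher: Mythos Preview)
Your proposal is correct and follows essentially the same approach as the paper: the paper phrases \eqref{inq_Euc_tri} and \eqref{inq_Euc_tri_reverse} as a three-step ``transport'' process (move $U_i$ to $a_i$, spread over $X_i$, send to $C$), but this is exactly your two applications of the triangle inequality with the averaging step over $X_i$, and the paper derives \eqref{inq_tri_worst} from Lemma~\ref{lemma_Euc_weight} by the same two weight bounds you chose. Your identification of the typo $\cost(X,C)$ for $\cost(X_i,C)$ in \eqref{inq_Euc_tri_reverse} is also correct.
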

\begin{proof}
We consider the following process of sending $U_i$ to $C$: first, move all points in $U_i$ to $a_i$, incurring a cost of $\cost(U_i,a_i)$; next, uniformly distribute $u_i$ points to $x\in X_i$, incurring a cost of $\frac{u_i}{|X_i|}\cdot \cost(X_i,a_i)$; finally, send points from $X_i$ to $C$, with a cost of
$\frac{u_i}{|X_i|}\cdot \cost(X,C)$. Thus, \eqref{inq_Euc_tri} follows from the triangle inequality. Similarly, \eqref{inq_Euc_tri_reverse} follows from applying the triangle inequality to the reversed process.

Additionally, by Lemma~\ref{lemma_Euc_weight}, we know that 
\begin{gather*}
\cost(U_i,a_i)=\sum_{x\in U_i}w_U(x)\|x-a_i\|_2\leq \sum_{x\in U_i} \frac{4}{s}\cdot \cost(X,A)\leq 4\cost(X,A),
\shortintertext{and}
u_i=\sum_{x\in U_i} w_U(x)\leq \sum_{x\in U_i}\frac{4}{s}\cdot k |X_i|\leq 4k|X_i|.
\end{gather*}
Then \eqref{inq_tri_worst} follows.
\end{proof}

Similar to \cite{bansal2024sensitivity}, we define bands, levels, and interaction numbers. However, a major difference is that we define bands and levels for every subset $J\subseteq [k]$, rather than partitioning $[k]$ into subsets with the same band and level as in \cite{bansal2024sensitivity}. This modification is necessary since (\ref{inq_strong_subset}) requires approximation on every $J\subseteq[k]$.

\paragraph{Bands}
Let $T=\frac{\eps}{k}\cdot \cost(X,A)$, $b_{max}=\lceil\log \frac{k}{\eps}\rceil$. For $j\in [b_{max}]$, a subset $J\subseteq [k]$ is called a Band-$j$ subset if $\cost(X_i,A)\in (2^{j-1}T,2^jT]$ for all $i\in J$. In addition, we call $J$ a Band-$0$ subset if $\cost(X_i,A)\leq T$ for all $i\in J$. For $j=0,\dots,b_{max}$, let $\mathcal{B}_j=\{J\subseteq [k]\mid J\text{ is Band-}j\}$ denote the collection of all Band-$j$ subsets.

Note that the maximal subset $J\in \mathcal{B}_j$ contains all $i\in [k]$ such that $\cost(X_i,A)\in (2^{j-1}T,2^jT]$, and $\mathcal{B}_j$ is exactly the power set of this $J$.

\paragraph{Levels} Let $l_{max}=\lceil\log \eps^{-1} \rceil$.
For any subset $J\subseteq[k]$, any $k$-center $C\in \binom{\mathbb{R}^d}{k}$, and any $j\in [l_{max}]$, $C$ is called a level-$j$ center of $J$ if $\dist(a_i,C)\in (2^{j-1}\Delta_i,2^j \Delta_i]$ for all $i\in J$. 
In addition, $C$ is called a level-$0$ center of $J$ if $\dist(a_i,C)\leq \Delta_i$ for all $i\in J$. Let $L^J_j\subset \binom{\mathbb{R}^d}{k}$ denote the set of all level-$j$ centers of $J$. Furthermore, let $L^{J}_{-1}=\{C\in \binom{\mathbb{R}^d}{k}\mid \forall i\in J,\ \dist(a_i,C)>\eps^{-1} \Delta_i\}$.

Intuitively, $\frac{\cost(X_J, C)}{\cost(X_J, A)} \approx 2^j$ if $C\in L_j^J$.
If $C\in L_{-1}^J$, we note that $\frac{\cost(X_J, C)}{\cost(X_J, A)} \gtrsim \eps^{-1}$.

Next, we provide three lemmata to handle errors induced by a subset $J$ of clusters (Lemmata \ref{lemma_EUC_chaining_minorcase0}--\ref{lemma_Euc_chaining_main}).
We consider three cases according to the scale of $\cost(X_J, A)$ or $\cost(X_J, C)$, respectively.

\begin{lemma}[Handling error when $\cost(X_J, A)$ is small]
\label{lemma_EUC_chaining_minorcase0}
\begin{eqnarray*}
\E\sup_{J\in \mathcal{B}_0}\sup_{C\in \binom{\mathbb{R}^d}{k}} \bigg|\frac{\cost(U_J,C)-\cost(X_J,C)}{\cost(X_J,C)+\cost(X,A)}\bigg|\lesssim \eps.
\end{eqnarray*}
\end{lemma}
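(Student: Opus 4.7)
The plan is to exploit the defining condition of Band-$0$: for every $J\in\mathcal{B}_0$ and every $i\in J$ one has $\cost(X_i,A)\le T=\tfrac{\eps}{k}\cost(X,A)$, and therefore $\cost(X_J,A)\le \eps\cost(X,A)$. This smallness of the radial mass makes it harmless to collapse each cluster to its centre $a_i$. Concretely, expanding $\dist(x,C)=\dist(a_{\pi(x)},C)\pm \dist(x,a_{\pi(x)})$ inside both $\cost(U_J,C)$ and $\cost(X_J,C)$ yields
\[
\bigl|\cost(U_J,C)-\cost(X_J,C)\bigr|\;\le\;\underbrace{\Bigl|\sum_{i\in J}(u_i-|X_i|)\,\dist(a_i,C)\Bigr|}_{=:E_J(C)}\;+\;\cost(U_J,A)\;+\;\cost(X_J,A).
\]
The last summand is already $\le \eps\cost(X,A)$ by the Band-$0$ assumption.

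The second step is to control $\cost(U_J,A)$. By Lemma~\ref{lemma_Euc_weight} each sample $x\in U$ contributes $w_U(x)\dist(x,A)\le \tfrac{4}{s}\cost(X,A)$ to $\cost(U_J,A)$, and $\mathbb{E}\,\cost(U_J,A)=\cost(X_J,A)\le \eps\cost(X,A)$. A Bernstein bound together with a union bound over the at most $2^k$ subsets in $\mathcal{B}_0$ (all contained in the single maximal Band-$0$ set) yields $\sup_{J\in\mathcal{B}_0}\cost(U_J,A)\lesssim \eps\cost(X,A)$ with probability at least $1-\eps/k$ once $s\gtrsim k\eps^{-2}\polylog(k/\eps)$, which is dominated by the sample size fixed in Line~\ref{alg_coreset3_size}. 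In expectation this contributes $O(\eps\cost(X,A))$, as required.

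The main obstacle is the scalar process $E_J(C)$. It is exactly the $\kMedian$-cost discrepancy between the two weighted point sets $\{(a_i,|X_i|)\}_{i\in J}$ and $\{(a_i,u_i)\}_{i\in J}$, both supported on the fixed $k$-point set $\{a_1,\dots,a_k\}$. I would control $\sup_C E_J(C)/(\cost(X_J,C)+\cost(X,A))$ by a Rademacher/Gaussian chaining argument in the spirit of Lemma~\ref{lemma_subset}: at each scale $2^\ell\Delta$ one bounds the increments via Fact~\ref{fact_max_gauss} against an $\alpha$-net built as in Lemma~\ref{lemma_netsize_main}, specialised to the finite point set $\{a_i\}$ so that the net size has exponent $O(dk\log k)$. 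The extra supremum over $J\in\mathcal{B}_0$ inflates the log of the net cardinality by at most $k$, which is absorbed by the sample size $s=\min(k^{4/3}\eps^{-2},k\eps^{-3})\polylog(k/\eps)$.

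The subtle point will be the normaliser $\cost(X_J,C)+\cost(X,A)$ in the denominator. For centres $C$ with $\cost(X_J,C)\gtrsim \cost(X,A)$ the standard relative-error chaining applies. For the complementary regime one uses the summand $\cost(X,A)$ as a variance floor: the per-sample variance of each term in $E_J(C)$ is bounded by $(\dist(a_i,C))^2\cdot w_U$, which after summing telescopes against $\cost(X,A)$ thanks to the third component of $p(x)$ in Algorithm~\ref{alg_coreset3}, exactly the component designed to handle points whose distance to $A$ is small compared with $\Delta_{\pi(x)}$. Combining the three contributions and rescaling $\eps$ finishes the proof.
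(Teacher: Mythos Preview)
Your decomposition is correct and the individual pieces can be bounded, but the route you propose for $E_J(C)$ is far more elaborate than necessary. The paper's proof is entirely elementary and avoids chaining altogether: one simply conditions on the event $\mathcal{E}_0\cup\mathcal{E}_1$ of Lemma~\ref{lm:event_prob}, which guarantees $u_i\in(1\pm\eps)|X_i|$ and $\cost(U_i,a_i)\in(1\pm\eps)\cost(X_i,a_i)$ for every $i$. Plugging these into the two-sided triangle bounds of Lemma~\ref{lemma_Euc_useful_triangle} and using the Band-$0$ property $\cost(X_i,a_i)\le(\eps/k)\cost(X,A)$ gives $|\cost(U_i,C)-\cost(X_i,C)|\le\eps\cost(X_i,C)+O(\eps/k)\cost(X,A)$ for each $i\in J$, hence the ratio is $\lesssim\eps$ \emph{deterministically and uniformly} over all $J,C$. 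On the complement event the crude bound \eqref{inq_tri_worst} gives a ratio $\lesssim k$, and since $\Pr[\overline{\mathcal{E}_0\cup\mathcal{E}_1}]\le\eps/k$ this contributes another $O(\eps)$ to the expectation.

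In particular, your ``main obstacle'' $E_J(C)$ dissolves once you condition on $\mathcal{E}_0$: from $|u_i-|X_i||\le\eps|X_i|$ one gets immediately $E_J(C)\le\eps\sum_{i\in J}|X_i|\dist(a_i,C)\le\eps(\cost(X_J,C)+\cost(X_J,A))$, with no need for symmetrisation, nets, or the sampling distribution $p(x)$. Your chaining machinery would eventually recover this, but it reproves in weaker form what event $\mathcal{E}_0$ already hands you for free. The lesson is that for Band-$0$ (and likewise for level $-1$ in Lemma~\ref{lemma_Euc_chaining_minorcases1}) the clusters are so ``trivial'' that the coarse concentration events $\mathcal{E}_0,\mathcal{E}_1$ suffice; chaining is reserved for Lemma~\ref{lemma_Euc_chaining_main}, where neither shortcut applies.
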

\begin{proof}
Fix $J\in \mathcal{B}_0$ and $C\in \binom{\mathbb{R}^d}{k}$.

If $\mathcal{E}_0\cup\mathcal{E}_1$ happens, by \eqref{inq_Euc_tri} in  Lemma~\ref{lemma_Euc_useful_triangle}, we have  for each $i\in J$,
\begin{align*}
\cost(U_i,C)&\leq \cost(U_i,a_i)+\frac{u_i}{|X_i|}\cdot (\cost(X_i,a_i)+\cost(X_i,C))\\
&\leq
(1+\eps)  \cost(X_i,a_i)+(1+\eps)(\cost(X_i,a_i)+\cost(X_i,C))\\
&\leq (1+\eps) \cost(X_i,C)+O(\frac{\eps}{k})\cdot \cost(X,A).
\end{align*}
Similarly, by \eqref{inq_Euc_tri_reverse} in Lemma~\ref{lemma_Euc_useful_triangle}, we have
$$
\cost(U_i,C)\geq (1-\eps) \cost(X_i,C)-O(\frac{\eps}{k}) \cdot \cost(X,A).
$$
Summing over $i\in J$, we obtain that when $\mathcal{E}_0\cup\mathcal{E}_1$ happens,
$$
\abs*{ \frac{\cost(U_J,C)-\cost(X_J,C)}{\cost(X_J,C)+\cost(X,A)} }\lesssim \eps.
$$
On the other hand, when $\mathcal{E}_0\cup\mathcal{E}_1$ does not happen, we know from \eqref{inq_tri_worst} that
$$
\abs*{ \frac{\cost(U_J,C)-\cost(X_J,C)}{\cost(X_J,C)+\cost(X,A)} }\lesssim k.
$$
Therefore,
\begin{align*}
 \E\sup_{J\in \mathcal{B}_0}\sup_{C\in \binom{\mathbb{R}^d}{k}} \bigg|\frac{\cost(U_J,C)-\cost(X_J,C)}{\cost(X_J,C)+\cost(X,A)}\bigg|
&\lesssim  \eps\cdot \mathbb{P}(\mathcal{E}_0\cup\mathcal{E}_1)+k\cdot (1-\mathbb{P}(\mathcal{E}_0\cup\mathcal{E}_1))\\
&\lesssim  \eps+k\cdot\frac{\eps}{k}\\
&\lesssim \eps. \qedhere
\end{align*}
\end{proof}

\begin{lemma}[Handling error when $\cost(X_J, C)$ is large] 
\label{lemma_Euc_chaining_minorcases1}
\begin{eqnarray*}
\E\sup_{J\subseteq [k]}\sup_{C\in L^J_{-1}} \bigg|\frac{\cost(U_J,C)-\cost(X_J,C)}{\cost(X_J,C)+\cost(X,A)}\bigg|\lesssim \eps.
\end{eqnarray*}
\end{lemma}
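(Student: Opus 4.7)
The plan is to mirror the proof of Lemma~\ref{lemma_EUC_chaining_minorcase0}, with the Band-$0$ condition replaced by the far-level condition $C \in L_{-1}^J$. I would fix $J \subseteq [k]$ and $C \in L_{-1}^J$, so that $\dist(a_i, C) > \eps^{-1}\Delta_i$ for every $i \in J$, and then split the expected supremum into contributions from the event $\mathcal{E}_0 \cap \mathcal{E}_1$ and its complement. The key observation is that the far-level condition is tailor-made so that the triangle-inequality additive terms appearing in Lemma~\ref{lemma_Euc_useful_triangle} are absorbed into $O(\eps)\cost(X_i, C)$, producing a purely multiplicative approximation of $\cost(U_i, C)$ by $\cost(X_i, C)$.

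Concretely, by the triangle inequality applied to $X_i \subseteq B(a_i, r)$, one has $\cost(X_i, C) \geq |X_i|\dist(a_i, C) - \cost(X_i, a_i) = |X_i|(\dist(a_i, C) - \Delta_i)$. The assumption $\dist(a_i,C) > \eps^{-1}\Delta_i$ then yields the crucial bound $|X_i|\Delta_i \lesssim \eps \cdot \cost(X_i, C)$. Under $\mathcal{E}_0 \cap \mathcal{E}_1$, the events force $\cost(U_i, a_i) \leq (1+\eps)|X_i|\Delta_i \lesssim \eps\cost(X_i, C)$ and $u_i/|X_i| \leq 1+\eps$, so plugging into (\ref{inq_Euc_tri}) gives
$$\cost(U_i, C) \leq O(\eps)\cost(X_i, C) + (1+\eps)(|X_i|\Delta_i + \cost(X_i, C)) \leq (1+O(\eps))\cost(X_i, C).$$
The analogous lower bound from (\ref{inq_Euc_tri_reverse}) yields $\cost(U_i, C) \geq (1 - O(\eps))\cost(X_i, C)$. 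Summing over $i \in J$ then gives $|\cost(U_J, C) - \cost(X_J, C)| \lesssim \eps \cost(X_J, C)$, so the ratio in the statement is $O(\eps)$ on this event.

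On the complementary event, which has probability at most $\eps/k$ by Lemma~\ref{lm:event_prob}, I would invoke the worst-case bound (\ref{inq_tri_worst}) to obtain $\cost(U_J, C) \lesssim k(\cost(X_J, C) + \cost(X, A))$, so the ratio is at most $O(k)$ pointwise. Combining the two regimes yields $\E \sup \lesssim \eps + k \cdot (\eps/k) = O(\eps)$, which is the claimed bound. There is no serious obstacle beyond careful bookkeeping of the additive triangle-inequality terms $|X_i|\Delta_i$ and $\cost(U_i, a_i)$; the definition of $L_{-1}^J$ via the threshold $\eps^{-1}\Delta_i$ is precisely what guarantees that each such term is absorbed into an $O(\eps)$ multiple of $\cost(X_i, C)$, so that the structure of the proof of Lemma~\ref{lemma_EUC_chaining_minorcase0} transfers essentially verbatim.
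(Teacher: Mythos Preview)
Your proposal is correct and follows essentially the same approach as the paper: both exploit the far-level condition $\dist(a_i,C)>\eps^{-1}\Delta_i$ to show $\cost(X_i,a_i)\lesssim \eps\cost(X_i,C)$, then use the two-sided bounds from Lemma~\ref{lemma_Euc_useful_triangle} under the good event $\mathcal{E}_0\cap\mathcal{E}_1$ (the paper writes $\mathcal{E}_0\cup\mathcal{E}_1$, apparently a typo) to obtain $\cost(U_i,C)\in(1\pm O(\eps))\cost(X_i,C)$, and handle the bad event via the worst-case bound \eqref{inq_tri_worst} together with $\mathbb{P}[\text{bad}]\le\eps/k$.
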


\begin{proof}
Fix $J\subseteq[k]$ and $C\in L_{-1}^J$. By definition, we know that for each $i\in J$, 
\[
    \dist(a_i,C)\geq \eps^{-1}\Delta_i,
\]
which implies that
\[
    \cost(X_i,C)\geq |X_i|\dist(a_i,C)-\cost(X_i,a_i)\geq (\eps^{-1}-1)|X_i|\cdot \Delta_i\gtrsim \eps^{-1}\cdot \cost(X_i,a_i).
\]
Hence, when $\mathcal{E}_0\cup\mathcal{E}_1$ happens, it follows from Lemma~\ref{lemma_Euc_useful_triangle} that for each $i\in J$,
\begin{align*}
\cost(U_i,C)&\leq \cost(U_i,a_i)+\frac{u_i}{|X_i|}\cdot (\cost(X_i,a_i)+\cost(X_i,C))\\
&\leq
(1+\eps)  \cost(X_i,a_i)+(1+\eps)(\cost(X_i,a_i)+\cost(X_i,C))\\
&\leq
(1+O(\eps)) \cost(X_i,C).
\end{align*}
Similarly, by \eqref{inq_Euc_tri_reverse}, we have that
$$
\cost(U_i,C)\geq (1-O(\eps))\cdot \cost(X_i,C).
$$
Summing over $i\in J$, we obtain that when $\mathcal{E}_0\cup\mathcal{E}_1$ happens,
$$
\bigg|\frac{\cost(U_J,C)-\cost(X_J,C)}{\cost(X_J,C)+\cost(X,A)}\bigg|\lesssim \eps.
$$
On the other hand, when $\mathcal{E}_0\cup\mathcal{E}_1$ does not happen, we know from \eqref{inq_tri_worst} that
$$
\bigg|\frac{\cost(U_J,C)-\cost(X_J,C)}{\cost(X_J,C)+\cost(X,A)}\bigg|\lesssim k.
$$
Therefore,
\begin{align*}
 \E\sup_{J\subseteq[k]}\sup_{C\in L^J_{-1}} \bigg|\frac{\cost(U_J,C)-\cost(X_J,C)}{\cost(X_J,C)+\cost(X,A)}\bigg|
&\lesssim \eps\cdot \mathbb{P}(\mathcal{E}_0\cup\mathcal{E}_1)+k\cdot (1-\mathbb{P}(\mathcal{E}_0\cup\mathcal{E}_1))\\
&\lesssim \eps+k\cdot\frac{\eps}{k}\\
&\lesssim \eps. \qedhere
\end{align*}
\end{proof}

The remaining case is for a medium scale of $\cost(X_J,C)$.
In this case, we do a more careful partition to center sets $C$ according to so-called \emph{interaction numbers}.
This approach follows from \cite{bansal2024sensitivity}, which is useful for reducing the coreset size.

\paragraph{Interaction Numbers} Fix a subset $J\subseteq [k]$ and a $k$-center $C\in \binom{\mathbb{R}^d}{k}$. For each $c\in C$ and $i\in J$, we say center $c$ interacts with $i$ if $\|a_i-c\|_2\in [32\Delta_i,16\dist(a_i,C)]$. Let $N(J,C)$ denote the number of interacting pairs over $(i,c)\in J\times C$. Clearly, $N(J,C)\leq k|J|$. 

Let $\psi_{max}=\lceil2\log k\rceil$. For each $\psi \in [\psi_{max}]$, let $\mathcal{N}^J_\psi=\{C\in \binom{\mathbb{R}^d}{k}\mid N(J,C)\in [2^{\psi-1},2^\psi)\}$. 
Intuitively, as $\psi$ increases, the complexity of interaction between $J$ and $C$ also increases.
This complexity affects both the net size and the variance in a chaining argument; as shown in Lemmata \ref{lemma_Euc_variance2} and \ref{lemma_Euc_netsize}, respectively.

Moreover, we let $\mathcal{N}^J_0=\{C\in \binom{\mathbb{R}^d}{k}\mid N(J,C)=0\}$. 
If $C\in \mathcal{N}^J_0$, we have $\dist(a_i, C) = \min_{c\in C} \dist(a_i, c) \notin [32\Delta_i,16\dist(a_i,C)]$.
Hence, we have $\dist(a_i, C)\leq 32\Delta_i$, which implies that the location of $C$ is ``close to'' $A$.

Let $L_{j,\psi}^J := L_j^J\cap N_\psi^J$ for each $j=0,\dots,l_{max}$ and $\psi=0,\dots,\psi_{max}$.
We provide the following lemma based on $L_{j,\psi}^J$.

\begin{lemma}[Handling error when $\cost(X_J, C)$ is medium] 
\label{lemma_Euc_chaining_main}
For each $j\in [b_{max}]$, $l\in \{0,\dots,l_{max}\}$, and $\psi \in \{0,\dots,\psi_{max}\}$,
\begin{eqnarray} \label{inq_Euc_main}
\E\sup_{J\in \mathcal{B}_j}\sup_{C\in L^J_{l,\psi}} \bigg|\frac{\cost(U_J,C)-\cost(X_J,C)}{\cost(X_J,C)+\cost(X,A)}\bigg|\lesssim \frac{\eps}{\log^5\frac{k}{\eps}}.
\end{eqnarray}
\end{lemma}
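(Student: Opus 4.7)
}
The plan is to establish \eqref{inq_Euc_main} via a chaining argument, adapting the strategy of \cite{bansal2024sensitivity} to accommodate the supremum over the subset $J$. Fix $j\in [b_{max}]$, $l\in\{0,\dots,l_{max}\}$ and $\psi\in\{0,\dots,\psi_{max}\}$. For each Band-$j$ subset $J\in\mathcal{B}_j$ and each center set $C\in L^J_{l,\psi}$, first derive a clean lower bound for the denominator: using the Band definition $\cost(X_i,A)\asymp 2^{j-1}T$ for $i\in J$ and the Level definition $\dist(a_i,C)\leq 2^l \Delta_i$ (resp.\ $>2^{l-1}\Delta_i$), triangle inequality gives
$\cost(X_J,C)+\cost(X,A)\gtrsim 2^l\,|J|\,2^j T + \cost(X,A)$.
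The goal is then to bound the absolute deviation in the numerator by $\frac{\eps}{\log^5(k/\eps)}$ times this quantity.

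Next, write the numerator as $Z_{J,C}=\sum_{x\in U} f_{J,C}(x) - \E\sum_{x\in U} f_{J,C}(x)$, where $f_{J,C}(x)=w_U(x)\dist(x,C)\mathbf{1}[\pi(x)\in J]$. I would apply the standard symmetrization trick to replace $Z_{J,C}$ with a Gaussian process $G_{J,C}=\sum_{x\in U} g_x f_{J,C}(x)$ and then run Dudley's chaining. The key ingredients are (i) the variance of each $f_{J,C}(x)$ under the importance distribution, which via Lemma~\ref{lemma_Euc_weight} and the definition of $p(x)$ is controlled by $\frac{1}{s}(\cost(X,A)+k\cost(X_{\pi(x)},A))\cdot\dist(x,C)$, so that for $C\in L^J_{l,\psi}$ the quadratic variation telescopes into a multiple of the denominator; and (ii) a chain of $\alpha$-nets $N_\alpha$ of the function class $\{f_{J,C}: J\in\mathcal{B}_j, C\in L^J_{l,\psi}\}$. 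I would construct $N_\alpha$ by (a) discretizing each $a_i$-neighbourhood at scale $\alpha\cdot 2^l\Delta_i$ with an $\eps$-net in $\mathbb{R}^d$, and (b) taking the power set of $J\subseteq J^\ast_j$ where $J^\ast_j$ is the maximal Band-$j$ subset. The interaction bound $N(J,C)<2^\psi$ restricts the number of $(i,c)$-pairs that can be ``non-trivially assigned'', which—together with the standard argument from \cite[\S B]{bansal2024sensitivity}—reduces the net exponent from $kd_{\mathrm{eff}}$ to roughly $2^\psi\cdot\mathrm{polylog}$. The factor $2^{|J^\ast_j|}\leq 2^k$ contributed by the subset power set is absorbed into the $k$-dependent terms.

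Plugging these net-size and variance bounds into Dudley's entropy integral yields
$\E\sup_{J\in\mathcal{B}_j,C\in L^J_{l,\psi}}|Z_{J,C}|\lesssim \sqrt{\frac{k+2^\psi\log(k/\eps)}{s}}\cdot(\cost(X_J,C)+\cost(X,A))$,
and the choice $s=\min(k^{4/3}\eps^{-2},k\eps^{-3})\cdot\log^{O(1)}(k/\eps)$ in Line~\ref{alg_coreset3_size} is calibrated precisely so that the right-hand side, after dividing by the denominator, becomes at most $\eps/\log^5(k/\eps)$ uniformly over $j,l,\psi$. A final McDiarmid concentration step would promote the expectation bound to an almost-sure bound, but the statement only asks for the expectation.

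\paragraph{Main obstacle.}
The hard part will be the simultaneous handling of the \emph{subset} $J$, the \emph{level} $l$, and the \emph{interaction number} $\psi$ in the net construction. In the vanilla argument of \cite{bansal2024sensitivity}, $J=[k]$ is fixed and the chaining proceeds over center sets only; here the extra supremum over $J\in\mathcal{B}_j$ introduces a $2^k$ factor in the net size which must be carefully absorbed into the chaining bound without degrading the final $\eps/\log^5(k/\eps)$ bound. I expect this to be manageable because the $2^k$ contribution enters only additively into $\log|N_\alpha|$, and the variance scaling from the importance sampling leaves enough slack to absorb it. The other delicate point is the $l=0$ case, where the denominator lacks the $2^l$ amplification and the relative error must be controlled directly by $\cost(X,A)/|J|2^jT$; this is why the statement is split into Lemmata~\ref{lemma_EUC_chaining_minorcase0}, \ref{lemma_Euc_chaining_minorcases1} and \ref{lemma_Euc_chaining_main}.
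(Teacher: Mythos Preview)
Your overall architecture---symmetrize, pass to a Gaussian process, run a telescoping chaining with $\alpha$-nets adapted from \cite{bansal2024sensitivity}, and absorb the extra $2^k$ factor coming from the supremum over $J\subseteq[k]$---is exactly the approach the paper takes. The observation that the subset power set only adds $k$ to $\log|N_\alpha|$ and is therefore harmless is also correct and is the main new ingredient here relative to \cite{bansal2024sensitivity}.

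However, there is a genuine gap in how you handle the interaction number $\psi$. You use $\psi$ only to bound the net size, writing $\log|N_\alpha|\lesssim k + 2^\psi\cdot\mathrm{polylog}$, and then claim the Dudley integral gives $\sqrt{(k+2^\psi\log(k/\eps))/s}$. But plug in $\psi=\psi_{\max}=\lceil 2\log k\rceil$, so $2^\psi\asymp k^2$, and $s=k^{4/3}\eps^{-2}\cdot\mathrm{polylog}$: your bound becomes $\sqrt{k^{2/3}}\cdot\eps\cdot\mathrm{polylog}=k^{1/3}\eps\cdot\mathrm{polylog}$, which is not $\lesssim\eps/\log^5(k/\eps)$. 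The alternative choice $s=k\eps^{-3}$ fails similarly unless $k\lesssim\eps^{-1}$. So the sample size in Line~\ref{alg_coreset3_size} is \emph{not} ``calibrated precisely'' for your stated bound; something is missing.

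What is missing is that in \cite{bansal2024sensitivity} (and in the paper, via Lemma~\ref{lemma_Euc_variance2}) the interaction number governs the \emph{variance} as well as the net size. Roughly, when $N(J,C)\asymp 2^\psi$ is large, many centers $c\in C$ sit in the annulus $[32\Delta_i,16\dist(a_i,C)]$ around many $a_i$, and this geometric constraint forces the per-sample variance of the chaining increments to shrink by a compensating factor in $\psi$. It is the trade-off between the $\psi$-dependent net size and the $\psi$-dependent variance that produces the $\min(k^{4/3}\eps^{-2},k\eps^{-3})$ threshold; a $\psi$-independent variance bound of the form $\frac{1}{s}(\cost(X,A)+k\cost(X_{\pi(x)},A))\cdot\dist(x,C)$ cannot do this by itself. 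You need to reproduce (or cite precisely) the $\psi$-dependent variance lemma and show how it combines with the net-size lemma across the chaining levels.
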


The proof of this lemma is mostly an adaptation of \cite{bansal2024sensitivity}.
We defer the proof to Section~\ref{sec_Euc_chaining_main} for completeness. 
Now we are ready to prove the following lemma for $U$.

\begin{lemma}[Properties of $U$] 
\label{lemma_Euc_U_subset}
With probability at least $1-\frac{1}{10\log \frac{k}{\eps}}$, $U$ is an $O(\eps)$-strong indexed-subset cost approximation of $X$.
\end{lemma}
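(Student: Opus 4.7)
The plan is to bound the supremum $Z := \sup_{J, C}|\cost(U_J, C) - \cost(X_J, C)|/(\cost(X_J, C) + \cost(X, A))$ by $O(\eps)$ with probability at least $1 - 1/(10\log(k/\eps))$, by partitioning each pair $(J, C)$ into pieces that each fall under exactly one of Lemmata~\ref{lemma_EUC_chaining_minorcase0}, \ref{lemma_Euc_chaining_minorcases1}, and \ref{lemma_Euc_chaining_main}. For fixed $J \subseteq [k]$ and $C \in \binom{\R^d}{k}$, I would split $J$ into the Band-$0$ part $J^{(0)} := J \cap \{i : \cost(X_i, A) \leq T\}$, the Level-$(-1)$ part $J^{(-1)} := (J \setminus J^{(0)}) \cap \{i : \dist(a_i, C) > \eps^{-1}\Delta_i\}$, and, for each $(j, l) \in [b_{max}] \times \{0, \ldots, l_{max}\}$, the Band-$j$/Level-$l$ part $J^{(j, l)}$. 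By construction, each medium piece $J^{(j,l)}$ lies in $\mathcal{B}_j$ and $C$ lies in $L^{J^{(j,l)}}_{l, \psi(j, l)}$ for a uniquely determined $\psi(j, l) \in \{0, \ldots, \psi_{max}\}$ (set by the interaction number $N(J^{(j,l)}, C)$), while $J^{(0)} \in \mathcal{B}_0$ and $C \in L^{J^{(-1)}}_{-1}$. The triangle inequality then bounds $|\cost(U_J, C) - \cost(X_J, C)|$ by a sum of analogous quantities over these pieces, each of which is directly controlled by the matching lemma.

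To convert these individual expectation bounds into a simultaneous high-probability bound, I would observe that the proofs of Lemmata~\ref{lemma_EUC_chaining_minorcase0} and \ref{lemma_Euc_chaining_minorcases1} in fact yield $O(\eps)$ deterministically on the event $\mathcal{E}_0 \cap \mathcal{E}_1$, which has probability at least $1 - \eps/k$ by Lemma~\ref{lm:event_prob}. For the medium pieces, summing the expectation bound of Lemma~\ref{lemma_Euc_chaining_main} over all $O(\log^3(k/\eps))$ triples $(j, l, \psi)$ produces
\begin{equation*}
\E \sum_{(j, l, \psi)} \alpha_{(j, l, \psi)} \;\leq\; O\!\left(\log^3(k/\eps)\right) \cdot O\!\left(\frac{\eps}{\log^5(k/\eps)}\right) \;=\; O\!\left(\frac{\eps}{\log^2(k/\eps)}\right),
\end{equation*}
where $\alpha_{(j, l, \psi)}$ denotes the sup in Lemma~\ref{lemma_Euc_chaining_main}. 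Markov's inequality then gives $\sum_{(j, l, \psi)} \alpha_{(j, l, \psi)} = O(\eps/\log(k/\eps))$ except with probability $1/(20\log(k/\eps))$. A union bound with the minor-case failure events keeps the total failure below $1/(10\log(k/\eps))$. On the resulting good event, the contribution of each piece $P$ is at most $\alpha_P (\cost(X_P, C) + \cost(X, A))$; combining $\sum_P \cost(X_P, C) = \cost(X_J, C)$ with $\sum_P \alpha_P = O(\eps)$ yields the required $O(\eps)$-strong indexed-subset cost approximation uniformly in $(J, C)$.

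The main obstacle will be the logarithmic bookkeeping: the denominator $\log^5(k/\eps)$ in Lemma~\ref{lemma_Euc_chaining_main} is tuned precisely to absorb the $\log^3(k/\eps)$ union-bound factor over triples $(j, l, \psi)$ together with the $\log(k/\eps)$ Markov slack, leaving a safety factor of $\log(k/\eps)$ to ensure that the medium sum remains $O(\eps)$ rather than blowing up to $O(\eps \cdot \polylog(k/\eps))$. A related subtlety is that the $\cost(X, A)$ additive terms from the pieces add up (not max) when summed, so one must use the sum of the $\alpha_P$'s, not merely their maximum, to control the inflation from that additive term; this is precisely why the definition of the strong indexed-subset cost approximation allows the $\cost(X, A)$ slack to be present on every sub-subset in the first place.
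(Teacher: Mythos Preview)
Your proposal is correct and follows essentially the same approach as the paper: partition each $(J,C)$ by band and level, invoke Lemmata~\ref{lemma_EUC_chaining_minorcase0}, \ref{lemma_Euc_chaining_minorcases1}, \ref{lemma_Euc_chaining_main} on the pieces, apply Markov's inequality to the summed expectations, and then sum the per-piece errors. The paper's proof is terser---it partitions directly into the $O(\log^2(k/\eps))$ pieces $J_{j,l}^C$ without separating out the minor cases, and states the Markov conclusion as a uniform $\eps/\log^2(k/\eps)$ bound per piece rather than working with the sum $\sum_{(j,l,\psi)}\alpha_{(j,l,\psi)}$---but your more explicit handling of the $\psi$ dimension and of the deterministic nature of the minor-case bounds on $\mathcal{E}_0\cap\mathcal{E}_1$ is a faithful elaboration of the same argument.
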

\begin{proof} For every $C$, we can partition $J$ into $O(\log^2\frac{k}{\eps})$ subsets as 
\[
    \left\{ J_{j,l}^C \;\middle\vert\; j=0,\dots,b_{max},l=-1,\dots,l_{max} \right\}
\]
where $J_{j,l}^C$ is the maximal subset $J'\subseteq J$ such that $J'\in \mathcal{B}_j$ and $C\in L_l^{J'}$. 
By Lemmata~\ref{lemma_EUC_chaining_minorcase0},~\ref{lemma_Euc_chaining_minorcases1},~\ref{lemma_Euc_chaining_main}, and Markov's inequality, we know that with probability at least $1-\frac{1}{10\log \frac{k}{\eps}}$, it holds for every choice of $J,C,j,l$ that
\[
|\cost(U_{J_{j,l}^C},C)-\cost(X_{J_{j,l}^C},C)|\lesssim \frac{\eps}{\log^2\frac{k}{\eps}} (\cost(X_{J_{j,l}^C},C)+\cost(X,A)).
\]
Summing over all $j$ and $l$ yields that
$$
|\cost(U_J,C)-\cost(X_J,C)|\lesssim \eps \cdot (\cost(X_J,C)+\cost(X,A)),
$$
which is the definition of $O(\eps)$-strong indexed-subset cost approximation.
\end{proof}

Lemma~\ref{lemma_Euc_tech} is a direct corollary of Lemmata~\ref{lemma_Euc_U_subset} and~\ref{lemma_Euc_UD}.

\bibliographystyle{alphaurl}
\bibliography{ref}

\appendix
\section{Size lower bound for robust coreset}
\label{sec:lb}

The following theorem gives a size lower bound of $\max\{m,\;Q\}$ for robust coresets.
Since $\max\{m,\;Q\} \ge (m+Q)/2$, this also implies an $\Omega(m+Q)$ lower bound.

\begin{theorem}[\bf{Coreset lower bound}]
\label{thm:lower}
Let $Q(k,\varepsilon)$ denote a lower bound on the coreset size for vanilla $\kMedian$ with $k$ centers and error $\varepsilon$.
For any $k,\varepsilon,m$, there exists a dataset $X$ such that every $(\varepsilon,m)$-robust coreset of $X$ must have size at least $\max\{m,\;Q(k,\varepsilon)\}$.
\end{theorem}

\begin{proof}
If $m\ge Q(k,\varepsilon)$, then by Theorem~4.1 of \cite{huang2022near} there exists a dataset $X$ of size $m+1$ such that every $(\varepsilon,m)$-robust coreset of $X$ has size at least $m$.
Although \cite{huang2022near} states the construction for $k=1$, it extends to any $k$ by placing all $k$ centers at the same location.

If $m\le Q(k,\varepsilon)$, then by the definition of $Q(k,\varepsilon)$ there exists a dataset $X$ such that every $\varepsilon$-coreset for vanilla $\kMedian$ on $X$ has size at least $Q(k,\varepsilon)$.
By Definition~\ref{def:rbcoreset}, any $(\varepsilon,m)$-robust coreset is, in particular, an $\varepsilon$-vanilla coreset by taking $t=0$.
Thus, any $(\varepsilon,m)$-robust coreset of $X$ has size at least $Q(k,\varepsilon)$.

Combining the two cases completes the proof.
\end{proof}
\section{Proof of Lemma~\ref{lemma_subset}: Indexed-subset cost approximation}
\label{sec_proof_subset_appendix}

Recall that our instance is an $(r,k_0)$-regular instance, given by $X=X_1\cup\cdots\cup X_{k_0}$, where $X_i\subset B(a_i,r)$ for some $a_i\in M$. For every $x\in X$, let $\pi(x)$ denote the unique index $j\in [k_0]$ such that $x\in X_j$. Then $\dist(x,a_{\pi(x)})\leq r$.

Let $T=\lceil \log(1/\eps)\rceil$. We define $A_0=[0,2r]$, and $A_i=(2^i r,2^{i+1}r]$ for each $i\in [T]$. Let $I_{i,C}=\{j\in [k_0]\mid \dist(a_j,C)\in A_i\}$ for each $i\in \{0,\dots,T\}$.

For every $i\in \{0,\dots,T\}$, $J\subseteq [k_0]$, $C\in \binom{M}{k}$, and every subset $S\subseteq X$, the distance vector $v_{i,J,C}^S:S\rightarrow \mathbb{R}_{\geq 0}$ is defined as
\[
v^S_{i,J,C}(x)=
\begin{cases}
    \dist(x,C), & \pi(x)\in J\cap I_{i,C} \\
    0, & \text{otherwise}.
\end{cases}
\]

Define the level-$i$ distance vector set as $V^S_{i}=\{v^S_{i,J,C}\mid J\subseteq [k_0],C\in \binom{M}{k}\}$. The following lemma establishes the relationship between indexed-subset cost approximation and $V_i^U$.

\begin{lemma}[Sufficient chaining error for Lemma \ref{lemma_subset}]
\label{lemma_relate} 
If for every $i\in \{0,\dots,T\}$, 
\begin{equation} \label{inq_chaining}
\sup_{J,C} \frac{n}{s}\cdot \abs*{ \frac{\sum_{x\in U} v^U_{i,J,C}(x)-\E \sum_{x\in U} v^U_{i,J,C}(x)}{(1+2^i|J_{i,C}|/k_0)\cdot nr} }
\leq \frac{\eps}{\log(1/\eps)},
\end{equation}
where $J_{i,C}=J\cap I_{i,C}$,
and event $\mathcal{E}$ (defined in \eqref{eq_event}) happens, then \eqref{inq_subset} holds. 
In other words, $U$ is an $\eps$-indexed-subset cost approximation of $X$.
\end{lemma}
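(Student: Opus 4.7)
The plan is to fix $J \subseteq [k_0]$ and $C \in \binom{M}{k}$, partition $J$ according to the distance of its cluster centers to $C$, control the near clusters via the chaining hypothesis, and control the far clusters via the triangle inequality together with event $\mathcal{E}$. Specifically, set $J_{\mathrm{near}} = J \cap \bigcup_{i=0}^{T} I_{i,C}$ (clusters with $\dist(a_j,C) \leq 2^{T+1}r$) and $J_{\mathrm{far}} = J \setminus J_{\mathrm{near}}$. Since $T = \lceil \log(1/\eps)\rceil$, we have $2^{T+1}r \geq 2r/\eps$, so every $j \in J_{\mathrm{far}}$ satisfies $r \leq (\eps/2)\dist(a_j,C)$. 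Using that $(n/s)\sum_{x\in U} v^U_{i,J,C}(x) = \cost(U_{J_{i,C}},C)$ and $\E[(n/s)\sum_{x\in U} v^U_{i,J,C}(x)] = \cost(X_{J_{i,C}},C)$, the error splits as
$$\cost(U_J,C)-\cost(X_J,C) = \sum_{i=0}^{T}\bigl[\cost(U_{J_{i,C}},C)-\cost(X_{J_{i,C}},C)\bigr] + \bigl[\cost(U_{J_{\mathrm{far}}},C)-\cost(X_{J_{\mathrm{far}}},C)\bigr].$$

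For the near part, the chaining hypothesis \eqref{inq_chaining} directly gives $|\cost(U_{J_{i,C}},C) - \cost(X_{J_{i,C}},C)| \leq \frac{\eps}{\log(1/\eps)}(1 + 2^i|J_{i,C}|/k_0) nr$ for each $i$. The key step is to absorb $\sum_i 2^i |J_{i,C}|/k_0 \cdot nr$ into $\cost(X_J,C) + nr$. Here the $(r,k_0)$-regular structure is crucial: it ensures $|X_j| \geq n/(2k_0)$, and for each $j \in J_{i,C}$ with $i \geq 1$ every $x \in X_j$ satisfies $\dist(x,C) \geq 2^i r - r \gtrsim 2^i r$, which yields $\cost(X_{J_{i,C}},C) \gtrsim 2^i r|J_{i,C}|n/k_0$, i.e.\ $2^i|J_{i,C}|/k_0 \cdot nr \lesssim \cost(X_{J_{i,C}},C)$. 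Summing the chaining bounds over $i=0,\dots,T$ and using $T+1 = O(\log(1/\eps))$ to convert $(T+1)\cdot \eps/\log(1/\eps)$ into $O(\eps)$, the total near-part error is at most $O(\eps)\cost(X_J,C) + O(\eps)nr$.

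For the far part, every $x \in X_j$ with $j \in J_{\mathrm{far}}$ satisfies $\dist(x,C) \in (1\pm\eps/2)\dist(a_j,C)$, so $\cost(X_j,C) = (1\pm\eps/2)n_j\dist(a_j,C)$. Moreover, $\cost(U_j,C) = (n/s)\sum_{x\in U\cap X_j}\dist(x,C) = (1\pm\eps/2)u_j\dist(a_j,C)$, and under event $\mathcal{E}$ one has $u_j \in (1\pm\eps)n_j$. Combining these estimates and summing over $j \in J_{\mathrm{far}}$ yields $|\cost(X_{J_{\mathrm{far}}},C)-\cost(U_{J_{\mathrm{far}}},C)| \leq O(\eps)\cost(X_{J_{\mathrm{far}}},C) \leq O(\eps)\cost(X_J,C)$. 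Combining the near and far estimates and rescaling $\eps$ proves \eqref{inq_subset}.

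The main obstacle in this plan is the step that trades $2^i |J_{i,C}|/k_0 \cdot nr$ for $\cost(X_{J_{i,C}},C)$. This is precisely where the $(r,k_0)$-regular assumption, requiring balanced cluster sizes $|X_j| = \Theta(n/k_0)$, plays a decisive role: without such a balance, tiny clusters living in high-distance levels would blow up $\sum_i 2^i |J_{i,C}|/k_0 \cdot nr$ without being matched by the cost lower bound, and the overall additive error would exceed the allowed $\eps nr$ slack.
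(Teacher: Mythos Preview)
Your proposal is correct and follows essentially the same approach as the paper: the same near/far split of $J$ according to the level sets $I_{i,C}$, the same use of the chaining hypothesis on each level together with the regular-instance lower bound $\cost(X_{J_{i,C}},C)\gtrsim 2^i r\,|J_{i,C}|\,n/k_0$ to absorb the normalizer, and the same triangle-inequality-plus-$\mathcal{E}$ argument for the far clusters. If anything, you are slightly more careful than the paper in isolating the $i=0$ level (where the cost lower bound fails but $2^0|J_{0,C}|/k_0\le 1$ so the term is absorbed into the additive $nr$).
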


\begin{proof}
Let $I_{far,C}=[k_0]\setminus \bigcup_{i=0}^T I_{i,C}$ and $J_{\far,C}=J\cap I_{\far,C}$. We observe that
\begin{gather*}
    \cost(X_J,C)=\sum_{i=0}^T \cost(X_{J_{i,C}},C)+\cost(X_{J_{\far,C}},C), 
\shortintertext{and similarly}
    \cost(U_J,C)=\sum_{i=0}^T \cost(U_{J_{i,C}},C)+\cost(U_{J_{\far,C}},C).
\end{gather*}
We shall show termwise approximations. First, we show that $$\cost(U_{J_{\far,C}},C)\in (1\pm O(\eps))\cdot \cost(X_{J_{\far,C}},C).$$ To see this, note that for each $x\in X_{J_{\far,C}}$, $\dist(x,C)\geq \eps^{-1} r$. Thus for every $j\in J_{\far,C}$, all points in $X_j$ have the same distance to $C$, up to a multiplicative $1+O(\eps)$ factor. When $\mathcal{E}$ happens, we have $w(U\cap X_j)\in (1\pm \eps) |X_j|$ for each $j\in [k_0]$, which implies $\cost(U_{J_{\far,C}},C)\in (1\pm O(\eps)) \cost(X_{J_{\far,C}},C)$.

Now, for $i\in \{0,\dots,T\}$, observe that
\[
    \cost(U_{J_{i,C}},C)=\frac{n}{s}\cdot \sum_{x\in U} v^U_{i,J,C}(x)
    \quad\text{and}\quad
    \cost(X_{J_{i,C}},C)=\E\frac{n}{s}\cdot \sum_{x\in U} v^U_{i,J,C}(x).
\]
It follows from the assumption \eqref{inq_chaining} that
\[
\abs*{ \cost(U_{J_{i,C}},C)-\cost(X_{J_{i,C}},C) }\leq \frac{\eps}{\log \eps^{-1}}\cdot (1+2^i|J_{i,C}|/k_0)\cdot nr.
\]
Note that
\begin{gather*}
    \cost(X_{J_{i,C}},C)\geq |J_{i,C}|\cdot \frac{n}{2 k_0}\cdot 2^i r,
\shortintertext{thus}
    |\cost(U_{J_{i,C}},C)-\cost(X_{J_{i,C}},C)|\lesssim \frac{\eps}{\log \eps^{-1}}\cdot (\cost(X_{J_{i,C}},C)+nr)).
\end{gather*}
Since there are $T+1=O(\log \eps^{-1})$ many $J_{i,C}$'s, by the triangle inequality,
we have that
$$
|\cost(U_J,C)-\cost(X_J,C)|\lesssim \eps \cost(X_J,C)+\eps n r.
$$
Rescaling $\eps$ completes the proof.
\end{proof}

A subset $N\subseteq V^S_i$ is called an $\alpha$-net of $V^S_i$ if for every $i,J,C$, there exists a vector $v\in N$ such that
\[
\begin{cases}
|v(x)-v^S_{i,J,C}(x)|\leq \alpha\cdot 2^{i+2}\cdot r & v^S_{i,J,C}(x)>0\\
0 & v^S_{i,J,C}(x)=0
\end{cases}
\]

The following lemma is essentially an adaptation and extension of Lemma~5.3 of \cite{huang2025coresets} and Lemma~3.6 of \cite{Cohen25}. We include its proof for completeness.

\begin{lemma}[Bounding net size on VC instances] 
\label{lemma_netsize}
For every $i\in \{0,\dots,T\}$, $\alpha\in (0,1)$ and every subset $S\subseteq X$ with $\abs{S}\geq 2$, there exists an $\alpha$-net $N^S_{\alpha,i}$ of $V^S_i$ such that
\[
\log \abs[big]{N^S_{\alpha,i}} \lesssim \alpha^{-1}d k\log k \cdot \log\abs{S}.
\]
Furthermore, if $\alpha\geq 2^{-i}$, it is possible to ensure that $N_{\alpha,i}^S$ consists of piecewise constant vectors. That is, every $v\in N_{\alpha,i}^S$ satisfies that $v(x)$ is constant for all $x\in X_j$ for each $j\in [k_0]$.
\end{lemma}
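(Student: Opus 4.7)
The plan is to adapt the net constructions of \cite{huang2025coresets,Cohen25}, which address the special case $J=[k_0]$, and to show that the extra freedom of choosing $J\subseteq[k_0]$ only multiplies the net size by at most $2^{k_0}\le 2^k$. Set the rounding width $\beta:=\alpha\cdot 2^{i+2}r$, and associate to every single center $c\in M$ the discretized distance function $\bar f_c(x):=\lfloor\dist(x,c)/\beta\rfloor$ on $S$. Only values $t\in\{0,1,\dots,O(\alpha^{-1})\}$ matter, because whenever $\dist(a_{\pi(x)},C)>2^{i+1}r$ the vector $v^S_{i,J,C}$ already vanishes at $x$, and the triangle inequality forces $\dist(x,C)\le 2^{i+2}r$ on the nonzero part. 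For each threshold $t\beta$, Sauer--Shelah applied to the VC-$d$ ball range space gives at most $|S|^{O(d)}$ distinct sets of the form $B(c,t\beta)\cap S$, so there are at most $|S|^{O(d/\alpha)}$ distinct $\bar f_c$ as $c$ ranges over $M$.

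Taking a product over the $k$ centers in $C$, the tuple $(\bar f_{c_1},\dots,\bar f_{c_k})$ has at most $|S|^{O(dk/\alpha)}$ distinct realizations, and $\min_j \bar f_{c_j}(x)\cdot\beta$ approximates $\dist(x,C)$ within the required tolerance $\beta$. To also recover the support indicator $I_{i,C}$ we only need the configurations $B(c,2^ir)\cap A$ and $B(c,2^{i+1}r)\cap A$ over $c\in C$; Sauer--Shelah on the fixed set $A=\{a_1,\dots,a_{k_0}\}$ (of size at most $k$) contributes a factor $k^{O(dk)}$. For each resulting equivalence class and each $J\subseteq[k_0]$, form the canonical pair $J^*:=J\cap I_{i,C}$ and any $C^*\in\binom{M}{k}$ realizing the class with $J^*\subseteq I_{i,C^*}$ (such a $C^*$ exists because every class is carved out by actual ball intersections). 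This choice ensures that $v^S_{i,J^*,C^*}$ has exactly the same support as $v^S_{i,J,C}$ and approximates it within $\beta$ on that support, so $v^S_{i,J^*,C^*}\in V^S_i$ can be placed in $N^S_{\alpha,i}$. The total cardinality is at most $2^k\cdot|S|^{O(dk/\alpha)}\cdot k^{O(dk)}$, which yields $\log|N^S_{\alpha,i}|\lesssim k+dk\log k+\alpha^{-1}dk\log|S|\lesssim \alpha^{-1}dk\log k\cdot\log|S|$.

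For the piecewise-constant refinement assume $\alpha\ge 2^{-i}$, so $\beta\ge 4r\ge 2\diam(X_j)$. Replacing $\dist(x,C)$ with $\dist(a_{\pi(x)},C)$ introduces only an additive error of at most $r\le\beta/4$, which still fits within the $\alpha\cdot 2^{i+2}r$ budget. Rerunning the discretization on the finite set $A$ in place of $S$ yields a (strictly smaller) net whose elements depend on $x$ only through $\pi(x)$, and are hence constant on every cluster $X_j$.

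The main obstacle is the requirement $N^S_{\alpha,i}\subseteq V^S_i$: the approximating vectors must themselves be of the form $v^S_{i,J^*,C^*}$, not arbitrary discretized functions. This forces us to align, through one canonical choice of $(J^*,C^*)$, both the rounded distance tuple on $S$ and the support mask $J\cap I_{i,C}$; the choice $J^*=J\cap I_{i,C}$ together with a discretization that also tracks the two thresholds $2^ir,2^{i+1}r$ makes the two support sets coincide. Handling this additional $J$-coordinate is the new ingredient beyond \cite{huang2025coresets,Cohen25}, and contributes only the harmless $2^k$ factor, which is absorbed into the final $\alpha^{-1}dk\log k\cdot\log|S|$ bound.
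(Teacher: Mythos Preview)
Your proof is correct and follows the same high-level outline as the paper (discretize distances at scale $\Theta(\alpha 2^i r)$, bound the number of discretized vectors via Sauer--Shelah, then for $\alpha\ge 2^{-i}$ replace $\dist(x,C)$ by $\dist(a_{\pi(x)},C)$), but the implementations differ. The paper rounds the whole vector $v^S_{i,J,C}$ directly and counts its sublevel sets as intersections with $B(C,\cdot)$, invoking the VC bound $O(dk\log k)$ for $\Balls_k$; consequently its net consists of rounded vectors, not elements of $V^S_i$ (the stated inclusion $N\subseteq V^S_i$ is not actually used downstream). You instead round each single-center distance $\bar f_c$, take a $k$-fold product, and then select a canonical representative $(J^*,C^*)$ so that the net point $v^S_{i,J^*,C^*}$ genuinely lies in $V^S_i$. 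Your route avoids the extra $\log k$ incurred by the $\Balls_k$ VC bound and is more faithful to the stated definition of $\alpha$-net; the paper's route is shorter because it never needs to realize the net points by actual centers.

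One minor point to tighten: you set $\beta=\alpha\cdot 2^{i+2}r$, which equals the allowed net tolerance, so in the piecewise-constant step the additional shift $|\dist(x,C)-\dist(a_{\pi(x)},C)|\le r$ makes the total error $r+\beta>\beta$. The paper avoids this by taking $\beta=\alpha 2^i r$ so the tolerance $\alpha 2^{i+2}r=4\beta$ has room for the extra $r$; you should likewise shrink your rounding width by a constant factor (which only changes the hidden constant in $|S|^{O(dk/\alpha)}$).
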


\begin{proof}

Let $\beta=\alpha 2^{i} r$. For a distance vector $v_{i,J,C}^S\in V_i^S$, we define its $\alpha$-rounding to be 
$$
\bar{v}_{i,J,C}^S(x) =  \left\lceil \frac{v_{i,J,C}^S(x)}{\beta} \right\rceil \cdot \beta.
$$
It is clear that $\bar{v}_{i,J,C}^S(x) \in \{0,\beta,\dots, \beta t\}$ for $t=\lceil 2/\alpha \rceil$.

We show that the collection of all $\alpha$-roundings
\[
    N_{\alpha,i}^S := \left\{\bar{v}_{i,J,C}^S\mid v_{i,J,C}^S\in V_i^S\right\}
\] 
is the desired net. 
The construction yields immediately that for every $x\in S$, $|\bar{v}_{i,J,C}^S(x)-v_{i,J,C}^S(x)|\leq \beta\leq \alpha\cdot 2^i\cdot r$ and $\bar{v}_{i,J,C}^S(x)=0$ if $v_{i,J,C}^S(x)=0$, which implies that $N_{\alpha,i}^S$ is indeed an $\alpha$-net of $V_i^S$. 

It remains to bound the size of $N_{\alpha,i}^S$. Consider each $u\in N_{\alpha,i}^S$, define for each $j\in \{0,\dots,t\}$ a ball $B_j^S=\{x\in S\mid u(x)\leq \beta \cdot j\}$. Since all coordinates of $u$ are fully determined by the sequence $B_0^S,\dots,B_t^S$, it suffices to bound the number of possible balls $B_j^S$. 
By construction, $B_0^S=(C\cap S)\cup (S\setminus X_{I_{i,C}})$, thus, the number of possible choices for $B_0^S$ is at most 
\[
    \binom{\abs{S}}{k}\cdot 2^k = |S|^{O(k)}.
\]
For $i\geq 1$, write $B_i^S$ as $B_i^S=(B(C, \beta\cdot i)\cap S)\cap X_{I_{i,C}}$. Since the VC dimension of $\mathrm{Balls}_k$ is at most $O(dk\log k)$~\cite{eisenstat2007vc}, the number of distinct choices for $B(C, \beta\cdot i)\cap S$ is
at most $|S|^{O(dk\log k)}$. Additionally, there are at most $2^k$ choices of $X_{I_{i,C}}$, leading to at most \[
    |S|^{O(dk\log k)}\cdot 2^k = |S|^{O(dk\log k)}
\]
distinct possibilities for each $B_i^S$, for each $i\in [t]$.
Consequently, the different number of combinations of $B_0^S,\dots,B_t^S$ can be bounded by 
\[
    |S|^{O(k)}\cdot (|S|^{O(dk\log k)})^{t}=|S|^{O(\alpha^{-1}dk\log k)},
\]
which is an upper bound on the size of $N_{\alpha,i}^S$.

When $\alpha\geq 2^{-i}$, we show that every $u\in N_{\alpha,i}^S$ can be made constant in each cluster. Define 
\[
\bar{u}(x)=
\begin{cases}
    \lceil \frac{\dist(a_{\pi(x)},C)}{\beta}\rceil\cdot \beta & u(x)>0\\
    0 & u(x)=0
\end{cases}
\]
Since $|u(x)-v_{i,J,C}^S(x)|\leq \beta$ for all $x\in S$, it follows that 
$|\bar{u}(x)-v_{i,J,C}^S(x)|\leq r + 2\beta\leq \alpha\cdot 2^{i+2}\cdot r$. 
If $v_{i,J,C}^S(x)=0$, we know that $u(x)=0$, then $\bar{u}(x)=0$ by construction.
Therefore, $\bar{N}_{\alpha,i}^S=\{\bar{u}\mid u\in N_{\alpha,i}^S\}$ is an $\alpha$-net of $V_{i,J,C}^S$ and every vector in  $\bar{N}_{\alpha,i}^S$ is a constant in each cluster. Furthermore,
\[
 \abs{\bar{N}_{\alpha,i}^S} \leq |N_{\alpha,i}^S|\leq |S|^{O(\alpha^{-1}dk\log k)}. \qedhere
\]
\end{proof}

It remains to prove the following lemma.

\begin{lemma}[Bounding chaining error for Lemma~\ref{lemma_relate}] 
\label{lemma_chaining}
The bound \eqref{inq_chaining} holds for every $i\in \{0,\dots,T\}$ with probability at least $1-O\big(\frac{1}{\log \frac{k}{\eps}\cdot \log (mk)}\big)$.
\end{lemma}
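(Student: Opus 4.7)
The plan is to prove \eqref{inq_chaining} for each fixed $i \in \{0,\dots,T\}$ via a Dudley-type chaining argument, then union-bound over the $O(\log(1/\eps))$ levels $i$. Because the normalization $(1+2^i|J_{i,C}|/k_0)\cdot nr$ depends on $|J_{i,C}|$, I would first partition the pairs $(J,C)$ into $O(\log k)$ buckets indexed by $q$ such that $|J_{i,C}| \in [2^{q-1}, 2^q]$, fix such a bucket and set $K := 2^q$, and prove the claim uniformly within the bucket. Throughout I condition on the event $\mathcal{E}$ from Lemma~\ref{lemma_event}, so that $\|U \cap X_j\|_0 = (1\pm\eps) s n_j/n$ for every $j$, in particular $\|U_{J_{i,C}}\|_0 \lesssim Ks/k_0$.

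Fix $(i,q)$. By the symmetrization trick already used in Lemma~\ref{lemma_range}, it suffices to bound
\[
\E \sup_{(J,C)\in \text{bucket}} \abs*{\sum_{x\in U} g_x\, v^U_{i,J,C}(x)}
\]
for i.i.d.\ standard Gaussians $\{g_x\}_{x\in U}$, where the supremum runs over vectors in $V_i^U$ of support size $\lesssim Ks/k_0$ and entries of magnitude $O(2^i r)$. Take nets $N_j := N^U_{2^{-j},i}$ from Lemma~\ref{lemma_netsize} for $j=0,1,\dots,J_{\max}$ with $J_{\max} = O(\log(1/\eps))$, noting that $N_0$ consists of piecewise constant vectors (valid since $\alpha_0 = 1 \geq 2^{-i}$), so its entropy is only $O(k\log k \cdot \log s)$ by the piecewise-constant case of Lemma~\ref{lemma_netsize}. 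Chain each $v := v^U_{i,J,C}$ by a sequence $v_0, v_1, \dots$ with $v_j \in N_j$ and $|v_j(x)-v(x)| \leq 2^{-j}\cdot 2^{i+2} r$ when $v(x)>0$ and $v_j(x)=0$ otherwise; then telescope
\begin{equation*}
\sum_{x} g_x v(x) = \sum_{x} g_x v_0(x) + \sum_{j\geq 1} \sum_x g_x\bigl(v_j(x)-v_{j-1}(x)\bigr).
\end{equation*}
At level $j=0$, each $v_0$ has $\lesssim Ks/k_0$ nonzero coordinates of size $O(2^i r)$, giving variance $O(2^{2i} r^2 Ks/k_0)$ and, by Fact~\ref{fact_max_gauss}, a contribution $O(2^i r \sqrt{Ks/k_0}\cdot \sqrt{k\log k\cdot \log s})$. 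At level $j\geq 1$, each difference $v_j-v_{j-1}$ has $\lesssim Ks/k_0$ nonzero coordinates of size $O(2^{-j}\cdot 2^i r)$, while $|N_{j-1}\times N_j| \leq \exp(O(2^j \cdot dk\log k\cdot \log s))$; Fact~\ref{fact_max_gauss} gives a contribution $O(2^{-j}\cdot 2^i r\sqrt{Ks/k_0}\cdot \sqrt{2^j\cdot dk\log k\cdot \log s})$. Geometric summation yields
\begin{equation*}
\E \sup_{(J,C)\in \text{bucket}} \abs*{\sum_{x\in U} g_x v^U_{i,J,C}(x)} \;\lesssim\; 2^i r\sqrt{Ks/k_0}\cdot \sqrt{dk\log k\cdot \log s}.
\end{equation*}

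Dividing by $(s/n)\cdot(1+2^iK/k_0)nr = (1+2^iK/k_0) sr$ and using $2^i r\sqrt{K/k_0}\lesssim r\sqrt{(1+2^iK/k_0)\cdot 2^i}\lesssim r(1+2^iK/k_0)\cdot 2^{i/2}/\sqrt{1+2^iK/k_0}$, or more simply the inequality $\sqrt{ab}\leq \frac12(a+b)$ applied to $a = 2^i$ and $b = 2^iK/k_0$, shows the normalised expected supremum in this bucket is $O(\sqrt{dk\log k\cdot \log s/s})$, which is at most $\eps/\log^{3}(k/\eps)$ once $s \gtrsim \eps^{-2} kd\cdot \log^{O(1)}(kd/\eps)\cdot \log^3 m$ as in Algorithm~\ref{alg_coreset}. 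Converting expectation to high probability is done via McDiarmid's inequality: changing one sample in $U$ alters $\frac{n}{s}\sum_x v^U_{i,J,C}(x)$ by at most $\frac{n}{s}\cdot 2^{i+2} r$, so the total bounded-differences variance is $O(s\cdot (nr\cdot 2^i/s)^2) = O((nr)^2 \cdot 2^{2i}/s)$; since the normalization contains a $(1+2^iK/k_0)nr$ factor whose square dominates this variance after dividing by $s/\log(k/\eps)$, McDiarmid gives failure probability $\exp(-\Omega(\eps^2 s/\mathrm{polylog})) \lesssim 1/(\log^2(k/\eps)\cdot \log(mk)\cdot \log k)$. A union bound over the $O(\log k)\cdot O(\log(1/\eps))$ pairs $(q,i)$ delivers the claimed probability.

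The main obstacle is balancing the base-level variance $O(2^{2i}r^2 Ks/k_0)$ against the normalization $(1+2^iK/k_0)sr$ uniformly across all regimes of $(i,K)$; the bucketing by $|J_{i,C}|$ is precisely what makes this possible, at the cost of an $O(\log k)$ union bound absorbed into the poly-logarithmic factor of $s$. A secondary subtlety is that the nets $N^U_{\alpha,i}$ depend on the random sample $U$, handled as usual by conditioning on $U$ before invoking the Gaussian chaining and then taking expectation over $U$.
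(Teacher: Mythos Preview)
Your proposal has a genuine gap in the central normalization step. After the chaining you correctly obtain a total Gaussian contribution of order $2^i r\sqrt{Ks/k_0}\cdot\sqrt{dk\log k\cdot\log s}$ in a bucket with $|J_{i,C}|\approx K$, and you then divide by $(1+2^iK/k_0)sr$. The resulting ratio is
\[
\frac{2^i\sqrt{K/k_0}}{1+2^iK/k_0}\cdot\sqrt{\frac{dk\log k\cdot\log s}{s}}\,,
\]
and you claim the first factor is $O(1)$ via AM--GM. But this is false: writing $y=2^iK/k_0$, the factor equals $2^{i/2}\sqrt{y}/(1+y)\le 2^{i/2}/2$, and equality (up to constants) is attained when $y\approx 1$, i.e.\ $K\approx k_0/2^i$. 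So for $i$ near $T=\lceil\log(1/\eps)\rceil$ and small $K$ the factor is $\Theta(\eps^{-1/2})$, and your bound degrades to $s\gtrsim kd\eps^{-3}$, not the target $kd\eps^{-2}$. Your AM--GM with $a=2^i$, $b=2^iK/k_0$ gives $\frac{1}{2}(2^i+2^iK/k_0)/(1+2^iK/k_0)$, which is $\Theta(2^i)$ when $2^iK/k_0\ll 1$ and $\Theta(k_0/K)$ when $2^iK/k_0\gg 1$; neither is $O(1)$. The McDiarmid step inherits the same defect, since the bounded-difference parameter $2^{2i}/(s(1+2^iK/k_0)^2)$ is not $O(1/s)$ for small $K$.

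The paper's fix is to start the telescope at scale $h=i$ rather than $h=0$, so the chaining sum becomes $\sum_{h\ge i}2^{i-h/2}=O(2^{i/2})$, which then cancels against the variance bound $\sup_{J,C}\gamma_{i,J,C}\|U\cap X_{J_{i,C}}\|_0^{1/2}\lesssim\sqrt{s/2^i}$ (this is Lemma~\ref{lemma_chaining_var1}, obtained by keeping $\gamma_{i,J,C}$ inside the supremum and applying AM--GM to $1+2^i|J_{i,C}|/k_0\ge 2\sqrt{2^i|J_{i,C}|/k_0}$). The base term at $h=i$ then has a large net, but its piecewise-constant structure lets one bound $\sum_{x\in U}g_x b^U_{i,J,C}(x)$ cluster-by-cluster and reduce to a maximum over $k_0$ Gaussians (Lemma~\ref{lemma_chaining_var2}), bypassing the net-size argument entirely. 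Your bucketing idea is natural but cannot replace this cancellation; the missing $\sqrt{s/2^i}$ is exactly what your approach lacks. A minor separate point: the piecewise-constant clause of Lemma~\ref{lemma_netsize} does not reduce the entropy to $O(k\log k\cdot\log s)$ as you assert---it only says the net vectors can be taken piecewise constant, with entropy still $O(dk\log k\cdot\log s)$.
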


\begin{proof}
It suffices to prove that for every $i\in\{0,\dots,T\}$,
\begin{eqnarray} \label{lemma_chaining_maininq}
    \E\sup_{J,C} \frac{1}{s}\cdot \abs*{ \frac{\sum_{x\in U} v^U_{i,J,C}(x)-\E \sum_{x\in U} v^U_{i,J,C}(x)}{(1+2^i|J_{i,C}|/k_0)\cdot r} }
    \lesssim \frac{\eps}{\log^2 \frac{1}{\eps}\cdot \log \frac{k}{\eps}\cdot \log mk},
\end{eqnarray}
then use Markov's inequality and take a union bound.

To ease the notation, let $\gamma_{i,J,C} = 1/(1+2^i|J_{i,C}|/k_0)$. The standard symmetrization trick gives that
\begin{equation}\label{eqn:chaining_error_symmetrization}
\E\sup_{J,C} \frac{\gamma_{i,J,C}}{rs}  \abs*{ \sum_{x\in U} v^U_{i,J,C}(x)-\E \sum_{x\in U} v^U_{i,J,C}(x) } 
\lesssim
\E_U \E_g \frac{1}{s} \sup_{J,C} \frac{\gamma_{i,J,C}}{r} \abs*{ \sum_{x\in U} g_x v^U_{i,J,C}(x) }
\end{equation}
For each $h\geq 0$, let $N_{2^{-h},i}^U$ be a ($2^{-h}$)-net of $V_i^U$ satisfying that
\[
    \log \abs[big]{N_{2^{-h},i}^U} \lesssim 2^h dk \log k \log s.
\]
which is possible by Lemma~\ref{lemma_netsize}. Furthermore, when $h = i$, one can ensure that $N_{2^{-i},i}^U$ consists of piecewise constant vectors.

By the net property, there exists a vector $b_{h,J,C}^U\in N_{2^{-h},i}^U$ such that
\[
|b_{h,J,C}^U(x)-v_{i,J,C}^U(x)|\leq 2^{i+2-h} r,\quad \forall x \in U.
\] 
By writing $v_{i,J,C}^U(x)$ as the following telescoping sum
\[
v_{i,J,C}^U(x)=\sum_{h=i}^{\infty} \big(b_{h+1,J,C}^U(x)-b_{h,J,C}^U(x)\big)+b_{i,J,C}^U(x),
\]
we have
\begin{align}
&\quad\ \E_U \E_g \frac{1}{s} \sup_{J,C} \frac{\gamma_{i,J,C}}{r}\abs*{ \sum_{x\in U} g_x v^U_{i,J,C}(x) } \notag \\
&\leq \E_U \frac{1}{s} \sum_{h=i}^\infty \E_g \sup_{J,C} \frac{\gamma_{i,J,C}}{r} \abs*{\sum_{x\in U} g_x (b_{h+1,J,C}^U(x)-b_{h,J,C}^U(x))} 
+ \E_U\frac{1}{s}\E_g \sup_{J,C} \frac{\gamma_{i,J,C}}{r} \abs*{\sum_{x\in U} g_x \cdot b_{i,J,C}^U(x)} \label{eqn:chaining_error_aux}
\end{align}

\sloppy
We bound the two terms individually, starting with the first term. Consider the sum $\sum_{x\in U} g_x \big(b_{h+1,J,C}^U(x)-b_{h,J,C}^U(x)\big)$. 
Observe that $|b_{h+1,J,C}^U(x)-b_{h,J,C}^U(x)|\lesssim 2^{i-h} r$ and, by the construction of $V_i^U$, there are at most $\|U\cap X_{J_{i,C}} \|_0$ nonzero terms in the summation. Thus, the supremum in the first term is taken over $\abs[\big]{N^U_{2^{-h},i}}$ gaussians, each with a variance at most 
\[
    \gamma_{i,J,C}^2 \cdot 2^{2i-2h}\|U\cap X_{J_{i,C}}\|_0 
    \leq 2^{2i-2h} \sup_{J,C} \gamma_{i,J,C}^2 \|U\cap X_{J_{i,C}}\|_0.
\]
By the inequality for the maximum of Gaussians (Fact~\ref{fact_max_gauss}), we conclude that 
\begin{equation}\label{eqn:chaining_error_first_term_aux}
\begin{aligned}
&\quad\ \E_U \frac{1}{s} \sum_{h=i}^\infty \E_g \sup_{J,C} \frac{\gamma_{i,J,C}}{r} \abs*{\sum_{x\in U} g_x (b_{h+1,J,C}^U(x)-b_{h,J,C}^U(x))}   \\
&\lesssim  \frac{1}{s}\sum_{h=i}^{\infty} 2^{i-h/2} \cdot \sqrt{dk\log k \log s}\cdot \E_U \sup_{J,C} \gamma_{i,J,C}\|U\cap X_{J_{i,C}}\|_0^{1/2}.
\end{aligned}
\end{equation}


We have the following lemma to bound the expected supremum on the right-hand side.
\begin{lemma} \label{lemma_chaining_var1}
It holds that $$
\E_U \sup_{J,C} \gamma_{i,J,C}\|U\cap X_{J_{i,C}}\|_0^{1/2} \lesssim \sqrt{\frac{s}{2^i}}
$$
\end{lemma}
\begin{proof}
Recall the event $\mathcal{E}$ defined in \eqref{eq_event}. When $\mathcal{E}$ happens, we have $\norm{U\cap X_\ell}_1 \in (1\pm \eps)\cdot |X_\ell|$ for every $\ell\in [k_0]$, which implies that $\frac{n}{s}\cdot \|U\cap X_{J_{i,C}}\|_0 \in (1\pm\eps) \cdot |X_{J_{i,C}}|$. Since $X$ is an $(r,k_0)$-regular instance, we know that $|X_{J_{i,C}}|\leq \frac{2|J_{i,C}|}{k_0}\cdot n$. It follows that $\|U\cap X_{J_{i,C}}\|_0\leq \frac{4|J_{i,C}|}{k_0}\cdot s$
and
\[
\E_U \left[\sup_{J,C}\frac{\|U\cap X_{J_{i,C}}\|^{1/2}_0}{ 1+2^i|J_{i,C}|/k_0} \middle| \mathcal{E}\right]
\leq  \E \left[\sup_{J,C}\frac{\|U\cap X_{J_{i,C}}\|^{1/2}_0}{ 2\cdot (2^{i}|J_{i,C}|/k_0)^{1/2}} \middle| \mathcal{E} \right]\\
\lesssim \sqrt{\frac{s}{2^i}}
\]
where the first inequality is by $a^2+b^2\geq 2ab$ and the second inequality is by $\|U\cap X_{J_{i,C}}\|_0\leq \frac{4|J_{i,C}|}{k_0}\cdot s$.

On the other hand, it always holds that
\[
    \sup_{J,C}\frac{\|U\cap X_{J_{i,C}}\|^{1/2}_0}{ 1+2^i|J_{i,C}|/k_0} \leq k_0\sqrt{s}.
\]
Therefore,
\begin{multline*}
\E_U \sup_{J,C} \gamma_{i,J,C}\|U\cap X_{J_{i,C}}\|_0^{1/2} = \E_U \sup_{J,C}\frac{\|U\cap X_{J_{i,C}}\|^{1/2}_0}{ 1+2^i|J_{i,C}|/k_0}
\leq \sqrt{\frac{s}{2^i}}\cdot \mathbb{P}(\mathcal{E}) + k_0\sqrt{s}\cdot \mathbb{P}(\bar{\mathcal{E}})\\
\leq \sqrt{\frac{s}{2^i}} + k_0\sqrt{s}\cdot \frac{\eps}{sk}
\lesssim \sqrt{\frac{s}{2^i}}.
\end{multline*}
Here, the last inequality follows from the fact that $\frac{1}{2^{i/2}}\geq \frac{1}{2^i}\geq \frac{1}{2^T}\gtrsim \eps$.
\end{proof}

Hence, for the first term in \eqref{eqn:chaining_error_aux}, by plugging Lemma~\ref{lemma_chaining_var1} into \eqref{eqn:chaining_error_first_term_aux}, we obtain that
\begin{equation}\label{eqn:chaining_error_first_term_final}
\begin{aligned}
\E_U \frac{1}{s} \sum_{h=i}^\infty \E_g \sup_{J,C} \frac{\gamma_{i,J,C}}{r} \abs*{\sum_{x\in U} g_x (b_{h+1,J,C}^U(x)-b_{h,J,C}^U(x))}   
\lesssim{} &  \frac{1}{\sqrt{s}}\sum_{h=i}^{\infty} 2^{i/2-h/2} \cdot \sqrt{dk\log k \log s} \\
\lesssim{} &  \sqrt{\frac{dk\log k \log s}{s}}\\
\leq{} &  \frac{\eps}{\log^2 \eps^{-1}\cdot \log \frac{k}{\eps}\cdot \log (mk)},
\end{aligned}
\end{equation}
provided that
\[
    s \gtrsim \frac{kd}{\eps^2} \cdot \log^{O(1)}(\frac{kd}{\eps})\cdot  \log^3 m.
\]

Next, we bound the second term in \eqref{eqn:chaining_error_aux}, starting with
\[
\E_U\frac{1}{s}\E_g \sup_{J,C} \frac{\gamma_{i,J,C}}{r} \abs*{\sum_{x\in U} g_x \cdot b_{i,J,C}^U(x)}
 =
\E_U\frac{1}{s}\E_g \sup_{J,C} \abs*{ \frac{\sum_{x\in U} g_x \cdot b_{i,j,C}^U(x)}{(1+2^i|J_{i,C}|/k_0)\cdot r} }.
\]

If $J_{i,C}=\emptyset$, we have $b_{i,J,C}^U(x)=0$ for all $x\in U$ and the gaussian sum vanishes. Thus, it suffices to consider $J_{i,C}\neq\emptyset$ for the supremum. Then
\begin{align*}
\E_g \sup_{J,C} \abs*{ \frac{\sum_{x\in U} g_x \cdot b_{i,J,C}^U(x)}{(1+2^i|J_{i,C}|/k_0)\cdot r}
}
&\leq \E_g \sup_{J,C} \frac{\sum_{j\in J_{i,C}} \abs*{\sum_{x\in U\cap X_j} g_x \cdot b_{i,J,C}^U(x)}}{2^ir\cdot|J_{i,C}|/k_0} \\
&\leq \E_g\sup_{J,C}\max_{j\in J_{i,C}} \abs*{ \frac{\sum_{x\in U\cap X_j} g_x\cdot b_{i,J,C}^U(x)}{2^ir/k_0} }\\
&\leq \E_g \sup_{C}\max_{j\in I_{i,C}} \abs*{ \frac{\sum_{x\in U\cap X_j} g_x\cdot b_{i,J,C}^U(x)}{2^ir/k_0} }\\
&= \E_g \sup_{C}\max_{j\in I_{i,C}} \abs*{ \sum_{x\in U\cap X_j} g_x }\cdot \frac{k_0\cdot \norm{b_{i,J,C}^U}_{\ell^\infty(X_j)}}{2^ir}. 
\end{align*}
In the above, the third inequality is due to the definition that $J_{i,C}=J\cap I_{i,C}\subseteq I_{i,C}$, and the last equality the fact that $b_{i,J,C}^U(x)$ is a constant over each cluster.

Since $b_{i,J,C}^U(x)\leq v_{i,J,C}^U(x)+4r=\dist(x,C)+4r\lesssim 2^i r$, we can then bound
\begin{equation}\label{eqn:chaining_error_second_term_semifinal}
\begin{aligned}
\E_U\frac{1}{s}\E_g \sup_{J,C} \abs*{ \frac{\sum_{x\in U} g_x \cdot b_{i,J,C}^U(x)}{(1+2^i|J_{i,C}|/k_0)\cdot r} } 
&\leq  \E_U\frac{1}{s}\E_g \sup_{C}\max_{j\in I_{i,C}} \abs*{ \sum_{x\in U\cap X_j} g_x 
}\cdot \frac{k_0\cdot \norm{b_{i,J,C}^U}_{\ell^\infty(X_j)}}{2^ir} \\
&\lesssim \E_U\frac{k_0}{s}\cdot \E_g \sup_{C}\max_{j\in I_{i,C}}  \abs*{ \sum_{x\in U\cap X_j} g_x }\\
&\leq   \E_U\frac{k_0}{s}\cdot \E_g \max_{j\in [k_0]}  \abs*{ \sum_{x\in U\cap X_j} g_x }\\
&\lesssim \frac{k_0}{s}\cdot \sqrt{\log k_0} \cdot \E_U \max_{j\in [k_0]} \|U\cap X_j\|^{1/2}_0.
\end{aligned}
\end{equation}
where the last inequality follows from the observation that  $\max_{j\in [k_0]} \big|\sum_{x\in U\cap X_j} g_x \big|$ is the maximum of $k_0$ ganssians, each with a variance at most $\max_{j\in [k_0]} \|U\cap X_j\|_0$.

\begin{lemma} \label{lemma_chaining_var2}
$$
\E_U \max_{j\in [k_0]} \|U\cap X_j\|^{1/2}_0\lesssim \sqrt{\frac{s}{k_0}}.
$$
\end{lemma}

\begin{proof}
Identical to the proof of Lemma~\ref{lemma_chaining_var1}, we know that if $\mathcal{E}$ happens,  $\|U\cap X_j\|_0\leq \frac{4s}{k_0}$ for every $j\in [k_0]$. It follow from Lemma~\ref{lemma_event} that
\[
\E_U \max_{j\in [k_0]} \|U\cap X_j\|^{1/2}_0
\lesssim \sqrt{\frac{s}{k_0}}\cdot\mathbb{P}(\mathcal{E}) + \sqrt{s}\cdot \mathbb{P}(\bar{\mathcal{E}})
\lesssim \sqrt{\frac{s}{k_0}} + \sqrt{s}
\cdot \frac{\eps}{ks}
\lesssim \sqrt{\frac{s}{k_0}}. \qedhere
\]
\end{proof}

Therefore, the second term in \eqref{eqn:chaining_error_aux} can be upper bounded as
\begin{equation}\label{eqn:chaining_error_second_term_final}
\E_U\frac{1}{s}\E_g \sup_{J,C} \abs*{ \frac{\sum_{x\in U} g_x \cdot b_{i,J,C}^U(x)}{(1+2^i|J_{i,C}|/k_0)\cdot r} }
\lesssim \sqrt{\frac{k_0\log k_0}{s}}
\leq \frac{\eps}{\log^2 \frac{1}{\eps}\cdot \log \frac{k}{\eps} \cdot \log (mk)},
\end{equation}
provided that 
\[
    s\gtrsim \frac{k}{\eps^2}\cdot \log^{O(1)} \frac{k}{\eps}\cdot \log^2 m.
\]
The bound \eqref{lemma_chaining_maininq} follows immediately from combining \eqref{eqn:chaining_error_symmetrization}, \eqref{eqn:chaining_error_aux}, \eqref{eqn:chaining_error_first_term_final} and \eqref{eqn:chaining_error_second_term_final}. The proof of Lemma~\ref{lemma_chaining} is complete.
\end{proof}

The claim of indexed-subset cost approximation in Lemma~\ref{alg_guarantee} follows immediately by combining Lemmata~\ref{lemma_event}, \ref{lemma_relate}, \ref{lemma_chaining}, and rescaling $\eps$.

\begin{proof}[Proof of Lemma~\ref{alg_guarantee}: $D$ is a indexed-subset cost approximation]

\sloppy
We follow the notation of the proof that $D$ is a range space approximation at the end of Section~\ref{sec:alg_guarantee_proof} and we also condition on all three events: $\mathcal{E}$  in \eqref{eq_event}, the event of Lemma~\ref{lemma_range} and the event of Lemma~\ref{lemma_subset} all happen.

Note that \eqref{eq_weight} implies that for every $J\subseteq [k_0]$ and $C\in \binom{M}{k}$,
\begin{eqnarray} \label{inq_cost}
(1 - O(\epsilon)) \cost(U_J,C) \leq \cost(D_J,C)\leq (1 + O(\epsilon)) \cost(U_J,C).
\end{eqnarray}
Plugging \eqref{inq_cost} into \eqref{inq_subset} yields immediately 
\[
    (1 - O(\eps)) \cost(X_J,C) - O(\eps n r) \leq \cost(D_J,C)\leq (1 + O(\eps)) \cost(X_J,C) + O(\eps n r),
\]
which, by rescaling $\eps$, establishes that $D$ satisfies Definition \ref{def_subset}.
\end{proof}

\section{Proof of Lemma~\ref{lemma_Euc_chaining_main}: Handling Error for Medium $\cost(X_J, C)$} \label{sec_Euc_chaining_main}

In this section, we fix indices
$j\in [b_{max}]$, $l\in \{0,\dots,l_{max}\}$ and $\psi \in \{0,\dots,\psi_{max}\}$, and adapt the chaining method from \cite{bansal2024sensitivity}. Our key difference lies in the denominator in \eqref{inq_Euc_main}: we use $\cost(X_J,C)+\cost(X,A)$ instead of the larger $\cost(X,C)$ used in \cite{bansal2024sensitivity}. Since $\cost(X_J,C)+\cost(X,A)$ can be much smaller than $\cost(X,C)$ in the general case, we cannot expect a black-box reduction from their analysis. Indeed, extra care is needed to refine the variance bounds in order to achieve our more sensitive bound.

\paragraph{Distance Vectors} For a subset $J\in \mathcal{B}_j$, and a $k$-center $C\in L_{l,\psi}^J$, we define then cost vector $v_C^J:U\rightarrow \mathbb{R}_{\geq 0}$ as,
$$
v_C^J(x)=
\begin{cases}
\dist(x,C) & \text{if }x\in U_J\\
0 & \text{otherwise}.
\end{cases}
$$
The distance vector set is defined as $V^J=\{v_C^J\mid C\in L_{l,\psi}^J\}$. We observe that $\cost(U_J,C)=\sum_{x\in U} w_U(x)\cdot v_C^J(x)$ and $\cost(X_J,C)=\E \sum_{x\in U} w_U(x)\cdot v_C^J(x)$. The symmetrization trick implies that
\begin{align}
 \E\sup_{J\in \mathcal{B}_j,\,C\in L^J_{l,\psi}} \abs*{\frac{\cost(U_J,C)-\cost(X_J,C)}{\cost(X_J,C)+\cost(X,A)} } 
={} & \E\sup_{J,C} \abs*{ \frac{\sum_{x\in U} w_U(x) v_C^J(x)-\E \sum_{x\in U} w_U(x) v_C^J(x)}{\cost(X_J,C)+\cost(X,A)} } \notag\\
\lesssim{} & \label{inq_Euc_sym}\E\sup_{J,C}\bigg|\frac{\sum_{x\in U} g_x w_U(x)v_C^J(x)}{\cost(X_J,C)+\cost(X,A)}\bigg|,
\end{align}
where $\{g_x\}_{x\in U}$ are i.i.d.\ standard Gaussians. The rest of the proof focuses on bounding \eqref{inq_Euc_sym}.

\paragraph{Vector Nets} An $\alpha$-net $N_{\alpha}=N_{\alpha,J,j,l,\psi}$ for $V^J$ is a subset of $V^J$ such that for any $v_C^J\in V^J$, there exists a vector $q\in N_{\alpha}$ satisfying the following properties for all $x\in U$,
\begin{enumerate}
\item $|q(x)-v_C^J(x)|\leq \alpha \cdot (\dist(x,A)+\Delta_{\pi(x)})$ if $v_C^J(x)>0$;
\item $q(x)=0$ if $v_C^J(x)=0$.
\end{enumerate}

\begin{lemma}[Adaption of Lemma G.3 of \cite{bansal2024sensitivity}]\label{lemma_Euc_netsize}
For any $\alpha\in (0,1]$, there exists an $\alpha$-net\footnote{We remark that \cite{bansal2024sensitivity} constructs $\alpha$-net for $\alpha\in (0,\frac{1}{2}]$. However, for $\alpha\in(\frac{1}{2},1]$, we can simply let $N_{\alpha}:=N_{\frac{1}{2}}.$} of $V^J$ such that
$$
\log |N_{\alpha}|\lesssim \min(2^\psi+k\alpha^{-2},2^{2l}k\alpha^{-2})\cdot \log (k\alpha^{-1}\eps^{-1}).
$$
\end{lemma}

  \paragraph{Telescoping} Let $h_{max}=3\lceil \log \frac{k}{\eps}\rceil$ and for each $h\in \{0,\dots,h_{max}\}$, let $q_C^h\in N_{2^{-h}}$ denote the net vector of $v_C^J$. Write $v_C^J(X)$ as
\begin{equation} \label{eq_Euc_tel}
v_C^J(x) = \left(v_C^J(x)-q_C^{h_{max}}(x)\right) + \sum_{h=1}^{h_{max}} \left(q_C^{h}(x)-q_C^{h-1}(x)\right) + q_C^0(x).
\end{equation}

Substituting \eqref{eq_Euc_tel} into \eqref{inq_Euc_sym} and applying the triangle inequality yields that
\begin{align*}
\E\sup_{J,C}\bigg|\frac{\sum_{x\in U} g_x w_U(x)v_C^J(x)}{\cost(X_J,C)+\cost(X,A)}\bigg|
&\leq \sum_{h=1}^{h_{max}}\E\sup_{J,C}\bigg|\frac{\sum_{x\in U} g_x w_U(x)\cdot (q_C^h(x)-q_C^{h-1}(x))}{\cost(X_J,C)+\cost(X,A)}\bigg|\\
&\qquad +\E\sup_{J,C}\bigg|\frac{\sum_{x\in U} g_x w_U(x)\cdot q_C^0(x)}{\cost(X_J,C)+\cost(X,A)}\bigg|\\
&\qquad +\E\sup_{J,C}\bigg|\frac{\sum_{x\in U} g_x w_U(x)\cdot (v_C^J(x)-q_C^{h_{max}}(x))}{\cost(X_J,C)+\cost(X,A)}\bigg|.
\end{align*}
We shall bound each term on the right-hand side individually. We first handle the last term.

In the remainder of this section, we set $\nu = 5$. 

\begin{lemma}[Residual term] \label{lemma_euc_residual}
\begin{eqnarray*}
\E\sup_{J,C}\bigg|\frac{\sum_{x\in U} g_x w_U(x)\cdot (v_C^J(x)-q_C^{h_{max}}(x))}{\cost(X_J,C)+\cost(X,A)}\bigg|\lesssim \frac{\eps}{\log^\nu \frac{k}{\eps}}.
\end{eqnarray*}
\end{lemma}

\begin{proof}
Let $\beta \asymp \frac{\eps}{\log^\nu \frac{k}{\eps}}$. By our choice of $h_{max}$, 
we have  
$|v_C^J(x)-q_C^{h_{max}}(x)| \leq \beta^2 (\dist(x,A)+\Delta_{\pi(x)})$,
whence it follows from Lemma~\ref{lemma_Euc_weight} that
$$
\frac{w_U(x)\cdot |v_C^J(x)-q_C^{h_{max}}(x)|}{\cost(X_J,C)+\cost(X,A)}\lesssim \frac{\beta^2 \cost(X,A)}{s\cost(X,A)}= \frac{\beta^2}{s}.
$$
Then, by Cauchy-Schwarz inequality, 
\[
\E\sup_{J,C}\bigg|\frac{\sum_{x\in U} g_x w_U(x)\cdot (v_C^J(x)-q_C^{h_{max}}(x))}{\cost(X_J,C)+\cost(X,A)}\bigg|\leq \E \|g\|_2 \cdot \frac{\beta}{\sqrt{s}}\leq \beta. \qedhere
\]
\end{proof}

The other two terms can be bounded as in the following two lemmata, whose proofs are deferred to Sections \ref{sec_Euc_chaining_coarse} and \ref{sec_Euc_chaining_main_main}, respectively.

\begin{lemma}[Initial telescoping term] \label{lemma_Euc_chaining_coarse}
\begin{eqnarray*}
\E\sup_{J,C}\bigg|\frac{\sum_{x\in U} g_x w_U(x)\cdot q_C^0(x)}{\cost(X_J,C)+\cost(X,A)}\bigg|\lesssim \frac{\eps}{\log^\nu \frac{k}{\eps}}.
\end{eqnarray*}
\end{lemma}

\begin{lemma}[Main telescoping terms] \label{lemma_Euc_chaining_main_main}
For each $h\in [h_{max}]$,
$$
\E\sup_{J,C}\bigg|\frac{\sum_{x\in U} g_x w_U(x)\cdot (q_C^h(x)-q_C^{h-1}(x))}{\cost(X_J,C)+\cost(X,A)}\bigg|\lesssim \frac{\eps}{\log^{\nu+1}\frac{k}{\eps}}.
$$
\end{lemma}

Lemma~\ref{lemma_Euc_chaining_main} follows immediately from combining the three preceding lemmata.

\subsection{Proof of Lemma~\ref{lemma_Euc_chaining_coarse}: Initial telescoping term}

\label{sec_Euc_chaining_coarse}

For each $i\in J$, 
we round $\dist(a_i,c)$ to an integer multiple of $\Delta_i$ by setting
\[
    d_i = \left\lceil\frac{\dist(a_i,C)}{\Delta_{i}}\right\rceil\cdot \Delta_i.
\]
Thus, $|\dist(a_i,C)-d_i|\leq \Delta_i$ and $d_i\in \{0,\Delta_i,\dots,2^l\Delta_i\}$. Notably, there are at most $(2^l+1)^{|J|}\leq 2^{O(kl)}<|N_{1/2}|$ distinct tuples of $(d_i)_{i\in J}$. 

By the definition of the net and the triangle inequality, we have that 
\begin{align}
&\E\sup_{J,C}\bigg|\frac{\sum_{x\in U} g_x w_U(x)\cdot q_C^0(x)}{\cost(X_J,C)+\cost(X,A)}\bigg| \notag \\
\lesssim{} & \E\sup_{J,C}\bigg|\frac{\sum_{x\in U_J} g_x w_U(x)\cdot d_{\pi(x)}}{\cost(X_J,C)+\cost(X,A)}\bigg|
+\E\sup_{J,C}\bigg|\frac{\sum_{x\in U_J} g_x w_U(x)\cdot (q_C^0(x)-d_{\pi(x)})}{\cost(X_J,C)+\cost(X,A)}\bigg| \label{eqn:chaining_euc_initial_aux}
\end{align}
We bound the second term first. By the definition of the net, $|q_0^C(x)-d_{\pi(x)}|\lesssim \cost(x,A)+\Delta_{\pi(x)}$. 
Also note that the number of distinct vectors $q_0^C(x) - d_\pi(x)$ is at most 
$2^k|N_{\frac{1}{2}}|\cdot |N_{\frac{1}{2}}| = 2^k |N_{1/2}|^2$. Hence, both the variance and number of gaussians in the second term are of the same order as those in
$$
\E\sup_{J,C}\bigg|\frac{\sum_{x\in U} g_x w_U(x)\cdot (q_C^1(x)-q_C^{0}(x))}{\cost(X_J,C)+\cost(X,A)}\bigg|.
$$
An identical argument as in Section~\ref{sec_Euc_chaining_main_main} gives, for $s\gtrsim k\eps^{-2}\log^{2\nu+1}\frac{k}{\eps}$, that
\begin{equation}\label{eqn:chaining_euc_initial_second_term_final}
\E\sup_{J,C}\bigg|\frac{\sum_{x\in U} g_x w_U(x)\cdot (q_C^0(x)-d_{\pi(x)})}{\cost(X_J,C)+\cost(X,A)}\bigg|\lesssim \frac{\eps}{\log^{\nu}\frac{k}{\eps}}.
\end{equation}

\paragraph{Remark} We remark that discretizing $\dist(a_i,C)$ into $d_i$ is essential. If one were to set $d_i=\dist(a_i,C)$, the number of Gaussians over which a supremum is taken could become unbounded. This detail was not fully addressed in \cite{bansal2024sensitivity}. Specifically, while the authors define in Section 2.5 that $u^{S,0}(x):=\dist(a_{\pi(x)},S)$, they directly bound in Section 2.6 the number of pairs $(u^{S,0},u^{S,1})$ by $|N_{1/2}|^2$ without further discussion.

It remains to control the first term of \eqref{eqn:chaining_euc_initial_aux}. Note that
\begin{equation}\label{eqn:chaining_euc_initial_first_aux}
	\begin{aligned}
		     \E\sup_{J,C} \abs*{\frac{\sum_{x\in U} g_x w_U(x)\cdot d_{\pi(x)}}{\cost(X_J,C)+\cost(X,A)}} 
		&= \E\sup_{J,C}\abs*{\frac{\sum_{i\in J}\sum_{x\in U_i} g_x w_U(x)\cdot d_i}{\cost(X_J,C)+\cost(X,A)}}\\
		&\leq \E\sup_{i\in J}\sup_{C\in L_{l,\psi}^j} \abs*{\frac{\sum_{x\in U_i} g_x w_U(x)\cdot d_i}{\cost(X_i,C)+\cost(X_i,A)}}\\
		&= \E\sup_{i\in J}\sup_{C\in L_{l,\psi}^j} \abs*{\sum_{x\in U_i} g_x w_U(x)} \cdot \frac{ d_i}{\cost(X_i,C)+\cost(X_i,A)}\\
		&\lesssim \E\sup_{i\in J} \abs*{\sum_{x\in U_i} g_x \cdot \frac{w_U(x)}{|X_i|}},
	\end{aligned}
\end{equation}
where in the last inequality, we used the triangle inequality to obtain that $$|X_i|d_i\lesssim \cost(X_i,C)+\cost(X_i,A).$$
Let $G_{U,i} = \sum_{x\in U_i} g_x \cdot \frac{w_U(x)}{|X_i|}$. We have
\[
    \E_{U,g} \sup_{i\in J} \; \abs{G_{U,i}} = \E_U \left[ \E_g \sup_{i\in J} \; \abs{G_{U,i}} \;\middle\vert\; U\right] \leq \sqrt{\log k}\E_U \sqrt{\Var[G_{U,i} \mid U]}
\]
since the supremum is taken over at most $k$ Gaussians. Next we bound $\E_U \sqrt{\Var[G\mid U]}$, for which we need the following lemma.
\begin{lemma}[Variance bound]
\label{lemma_Euc_minorcase_1}
We have that
\[
\Var[G_{U,i}\mid U]\leq \frac{k\|U_i\|}{s|X_i|}
\qquad\text{and}\qquad
\E\sup_{i\in [k]} \sqrt\frac{\|U_i\|_1}{|X_i|}\lesssim 1.
\]
\end{lemma}
\begin{proof}
Lemma~\ref{lemma_Euc_weight} implies that $w_U(x)\leq\frac{k|X_i|}{m}$ for each $x\in U_i$, then
\[
\Var[G\mid U]=\frac{1}{|X_i|^2}\sum_{x\in U_i} w_U(x)^2\leq \frac{k}{m|X_i|}\cdot \sum_{x\in U_i} w_U(x)=\frac{k\|U_i\|_1}{s|X_i|}. \qedhere
\]

By Lemma~\ref{lemma_Euc_weight}, $\|U_i\|_1=\sum_{x\in U_i} w_U(x)\leq \sum_{x\in U_i}\frac{k|X_i|}{m}\leq k|X_i|$ and so it always holds that $\frac{\|U_i\|_1}{|X_i|}\leq k$. Meanwhile, when $\mathcal{E}_0$ happens, we have $\|U_i\|_1\leq (1+\eps)\cdot |X_i|$. Thus,
\[
\E\sup_{i\in [k]} \sqrt\frac{\|U_i\|_1}{|X_i|} \leq \sqrt{1+\eps}\cdot \mathbb{P}[\mathcal{E}_0] + \sqrt{k}\cdot (1-\mathbb{P}[\mathcal{E}_0])
\lesssim 1+\sqrt{k}\cdot \frac{\eps}{k}
\lesssim 1. \qedhere
\]
\end{proof}

By Lemma~\ref{lemma_Euc_minorcase_1}, we obtain that
$$
\E\sup_{i\in J}\bigg|\sum_{x\in U_i} g_x \cdot \frac{w_U(x)}{|X_i|}\bigg|\leq \sqrt\frac{k\log k}{s}\lesssim\frac{\eps}{\log^\nu\frac{k}{\eps}},
$$
provided that $s\gtrsim k\eps^{-2}\log^{2\nu+1}\frac{k}{\eps}$, which, together with \eqref{eqn:chaining_euc_initial_first_aux}, implies that
\begin{equation}\label{eqn:chaining_euc_initial_first_term_final}
	\E\sup_{J,C} \abs*{\frac{\sum_{x\in U} g_x w_U(x)\cdot d_{\pi(x)}}{\cost(X_J,C)+\cost(X,A)}} \lesssim \frac{\eps}{\log^\nu\frac{k}{\eps}}.
\end{equation}
Combining \eqref{eqn:chaining_euc_initial_aux}, \eqref{eqn:chaining_euc_initial_first_term_final} and \eqref{eqn:chaining_euc_initial_second_term_final} completes the proof of Lemma~\ref{lemma_Euc_chaining_coarse}.

\subsection{Proof of Lemma~\ref{lemma_Euc_chaining_main_main}: Main telescoping terms} \label{sec_Euc_chaining_main_main}

Fix $h\in [h_{max}]$. Let 
\[
    G(J,C)=\frac{\sum_{x\in U} g_x w_U(x)\cdot (q_C^h(x)-q_C^{h-1}(x))}{\cost(X_J,C)+\cost(X,A)}.
\]
For a fixed $U$, $G_{J,C}$ is a Gaussian variable and its variance is bounded by
\begin{align}
\mathrm{Var}[G(J,C)\mid U]
&= \frac{\sum_{x\in U} w_U(x)^2 (q_C^h(x)-q_C^{h-1}(x))^2}{(\cost(X_J,C)+\cost(X,A))^2} \notag\\
&\lesssim \frac{2^{-2h}\cdot \sum_{x\in U_J}  w_U(x)^2 (\dist(x,A)^2+\Delta_{\pi(x)}^2)}{(\cost(X_J,C)+\cost(X,A))^2} \label{inq_chaining_main_main}
\end{align}

Next, we bound $\mathrm{Var}[G(J,C) \mid U]$ in two different ways.

\begin{lemma}[First variance bound] 
\label{lemma_Euc_variance1}
\begin{gather}
\mathrm{Var}[G(J,C)\mid U]\leq \frac{2^{-2h}}{s}\cdot \frac{\cost(U_J,A)+\sum_{i\in J}u_i\Delta_i}{\cost(X_J,C)+\cost(X,A)} \notag
\shortintertext{and}
    \E_U\bigg[\sup_{J,C} \bigg(\frac{\cost(U_J,A)+\sum_{i\in J}u_i\Delta_i}{\cost(X_J,C)+\cost(X,A)}\bigg)^{1/2}\bigg]\lesssim 2^{-l/2}. \label{inq_Euc_chaining_var1}
\end{gather}
\end{lemma}

\begin{proof}
Lemma~\ref{lemma_Euc_weight} implies that $w_U(x)\dist(x,A)\leq \frac{\cost(X,A)}{s}$ and $w_U(x)\Delta_{\pi(x)}\leq \frac{\cost(X,A)}{s}$.
Then,
\begin{align*}
\sum_{x\in U_J}  w_U(x)^2 (\dist(x,A)^2+\Delta_{\pi(x)}^2)
&\leq \frac{\cost(X,A)}{s}\sum_{x\in U_J} w_U(x)(\dist(x,A)+\Delta_{\pi(x)})\\
&= \frac{\cost(X,A)}{s}\cdot (\cost(U_J,A)+\sum_{i\in J}u_i\cdot \Delta_i)
\end{align*}
Plugging the inequality above into \eqref{inq_chaining_main_main} completes the proof for the first inequality.

Next, we prove the second inequality.
we show that it always holds
$$
\frac{\cost(U_J,A)+\sum_{i\in J}u_i\Delta_i}{\cost(X_J,C)+\cost(X,A)}\leq 8.
$$
To see this, note that by Lemma~\ref{lemma_Euc_weight}, 
\begin{gather*}
\cost(U_J,A)=\sum_{x\in U_J} w_U(x)\cdot \dist(x,A)\leq \sum_{x\in U_J}\frac{4\cost(X,A)}{s}\leq 4\cost(X,A)
\shortintertext{and}
\sum_{i\in J}u_i\Delta_i=\sum_{x\in U_J} w_U(x)\Delta_{\pi(x)}\leq \sum_{x\in U_J} \frac{4\cost(X,A)}{s}=4\cost(X,A).
\end{gather*}
Thus \eqref{inq_Euc_chaining_var1} always holds for $l \leq 2$.

For $l\geq 3$, we note that by the triangle inequality and $C\in L^j_{l,\psi}$, 
\begin{equation}\label{eqn:euc_main_main_aux1}
    \cost(X_J,C)\geq (2^{l-1}-1)\cost(X_J,A).
\end{equation}
When $\mathcal{E}_0\cup\mathcal{E}_1$ happens, we have that 
\begin{gather}
    \cost(U_J,A) \leq (1+\eps)\cost(X_J,A) \label{eqn:euc_main_main_aux2}\\
    \sum_{i\in J} u_i\Delta_i \leq (1+\eps)\sum_{i\in J} |X_i|\Delta_i=(1+\eps)\cdot \cost(X_J,A). \label{eqn:euc_main_main_aux3}
\end{gather}
Thus,
\begin{align*}
    & \E_U\left[\sup_{J,C} \bigg(\frac{\cost(U_J,A)+\sum_{i\in J}u_i\Delta_i}{\cost(X_J,C)+\cost(X,A)}\bigg)^{1/2} \;\middle\vert\; \mathcal{E}_0\cup\mathcal{E}_1\right] \\
    \leq{} & \E_U\left[\sup_{J} \bigg(\frac{2(1+\eps)\cost(X_J,A)}{(2^{l-1}-1)\cost(X_J,A)+\cost(X,A)} \bigg)^{1/2} \;\middle\vert\; \mathcal{E}_0\cup\mathcal{E}_1 \right] \\
    \leq{} & \E_U\left[ \bigg(\frac{2(1+\eps)\cost(X,A)}{(2^{l-1}-1)\cost(X,A)+\cost(X,A)} \bigg)^{1/2} \middle\vert\; \mathcal{E}_0\cup\mathcal{E}_1 \right] \\
    \lesssim{} & 2^{-l/2}.
\end{align*}
Therefore,
\begin{multline*}
 \E_U \bigg[\sup_{J,C} \bigg(\frac{\cost(U_J,A)+\sum_{i\in J}u_i\Delta_i}{\cost(X_J,C)+\cost(X,A)}\bigg)^{1/2}\bigg] 
\lesssim 2^{-l/2}\cdot \mathbb{P}[\mathcal{E}_0\cup\mathcal{E}_1]+2\cdot (1-\mathbb{P}[\mathcal{E}_0\cup\mathcal{E}_1])\\
\lesssim 2^{-l/2}+\frac{\eps}{k}
\lesssim 2^{-l/2},
\end{multline*}
where the last inequality is due to the fact that $2^{-l/2}\geq 2^{-l_{max}/2}\gtrsim \eps^{1/2}\geq \frac{\eps}{k}$.
This completes the proof of Lemma \ref{lemma_Euc_variance1}.
\end{proof}

\begin{lemma}[Second variance bound] 
\label{lemma_Euc_variance2}
\begin{eqnarray*}
\mathrm{Var}[G(J,C)\mid U]\leq 2^{-2h-\psi}\cdot \frac{k^2}{s}\cdot \frac{(\cost(U_J,A)+\sum_{i\in J} u_i\Delta_i)\cdot \cost(X_J,A)}{(\cost(X_J,C)+\cost(X,A))^2}
\end{eqnarray*}
and
\begin{eqnarray} \label{inq_Euc_chaining_var2}
\E_U\bigg[\sup_{J,C} \bigg(\frac{(\cost(U_J,A)+\sum_{i\in J} u_i\Delta_i)\cdot \cost(X_J,A)}{(\cost(X_J,C)+\cost(X,A))^2}\bigg)^{1/2}\bigg]\lesssim 2^{-l}.
\end{eqnarray}
\end{lemma}

\begin{proof}
By the last bound in Lemma~\ref{lemma_Euc_weight}, we have that 
\begin{align*}
\sum_{x\in U_J} w_U(x)^2\dist(x,A)^2
=\sum_{i\in J}\sum_{x\in U}w_U(x)^2\dist(x,A)^2
&\leq \sum_{i\in J}\sum_{x\in U} w_U(x)\dist(x,A)\cdot \frac{k\cost(X_i,A)}{s}\\
&\lesssim\frac{k}{s\cdot |J|}\cdot \cost(U_J,A)\cdot \cost(X_J,A)\\
&\lesssim \frac{k^2}{s\cdot 2^\psi} \cdot \cost(U_J,A)\cdot \cost(X_J,A).
\end{align*}
Here, for the second inequality, we used the definition of $J\in \mathcal{B}_j$, which means that $\cost(X_i,A)\lesssim \frac{\cost(X_J,A)}{|J|}$ for all $i\in J$, and for the last inequality, we used the fact that $k|J|\gtrsim 2^\psi$ by the definition of $L^J_{l,\psi}$. 

Similarly, by the second bound in Lemma~\ref{lemma_Euc_weight}, we have
\begin{align*}
\sum_{x\in U_J} w_U(x)^2\Delta_{\pi(x)}^2
= \sum_{i\in J}\sum_{x\in U_i} w_U(x)^2\Delta_i^2
&\leq \sum_{i\in J}\sum_{x\in U_i} w_U(x)\Delta_i\cdot \frac{k\cost(X_i,A)}{s}\\
&\lesssim \frac{k}{s\cdot |J|}\cdot \sum_{i\in J} u_i\Delta_i\cdot \cost(X_J,A)\\
&\lesssim \frac{k^2}{s\cdot 2^\psi} \cdot\sum_{i\in J} u_i\Delta_i\cdot \cost(X_J,A).
\end{align*}
Plugging the above inequalities into \eqref{inq_chaining_main_main} completes the proof for the first inequality.

Next, we prove the second inequality.
Similar to the proof of Lemma~\ref{lemma_Euc_variance1}, it always holds that
\begin{gather*}
\cost(U_J,A)+\sum_{i\in J} u_i\Delta_i\leq 2\cost(X,A),
\shortintertext{whence it follows that}
\sup_{J,C} \bigg(\frac{(\cost(U_J,A)+\sum_{i\in J} u_i\Delta_i)\cdot \cost(X_J,A)}{(\cost(X_J,C)+\cost(X,A))^2}\bigg)^{1/2}\leq \sqrt{2}
\end{gather*}
and \eqref{inq_Euc_chaining_var2} holds for $l\leq 2$.

For $l\geq 3$, it follows from \eqref{eqn:euc_main_main_aux1}, \eqref{eqn:euc_main_main_aux2} and \eqref{eqn:euc_main_main_aux3} that
\begin{align*}
& \E_U\left[\sup_{J,C} \bigg(\frac{(\cost(U_J,A)+\sum_{i\in J} u_i\Delta_i)\cdot \cost(X_J,A)}{(\cost(X_J,C)+\cost(X,A))^2}\bigg)^{1/2}\middle\vert \mathcal{E}_0\cup \mathcal{E}_1\right]\\
\lesssim{} & \E_U \left[\sup_J \frac{(2(1+\eps))^{1/2} \cost(X_J,A)}{(2^{l-1}-1)\cost(X_J,A) + \cost(X,A)} \middle\vert \mathcal{E}_0\cup \mathcal{E}_1\right]\\
\leq{} & \E_U \left[\frac{(2(1+\eps))^{1/2} \cost(X,A)}{(2^{l-1}-1)\cost(X,A) + \cost(X,A)} \middle\vert \mathcal{E}_0\cup \mathcal{E}_1\right]\\
\lesssim{} & 2^{-l}.
\end{align*}
Therefore,
\begin{align*}
&\quad\ \E_U\left[\sup_{J,C} \bigg(\frac{(\cost(U_J,A)+\sum_{i\in J} u_i\Delta_i)\cdot \cost(X_J,A)}{(\cost(X_J,C)+\cost(X,A))^2}\bigg)^{1/2}\right]\\
&\lesssim 2^{-l}\cdot \mathbb{P}[\mathcal{E}_0\cup\mathcal{E}_1]+2\cdot (1-\mathbb{P}[\mathcal{E}_0\cup\mathcal{E}_1])\\
&\lesssim 2^{-l}+\frac{\eps}{k}\\
&\lesssim 2^{-l},
\end{align*}
where the last inequality is due to the fact that $2^{-l}\geq 2^{-l_{max}}\gtrsim \eps\geq \frac{\eps}{k}$.
This completes the proof of Lemma \ref{lemma_Euc_variance2}.
\end{proof}

We are now ready to bound $\E\sup_{J,C} |G(J,C)|$. 
Note that $\E\sup_{J,C} |G(J,C)|$ is a maximum over $2^k\cdot |N_{2^{-h}}|^2$ many Gaussians. It follows from Fact~\ref{fact_max_gauss} that
\[
\E\sup_{J,C} |G(J,C)|
\lesssim \E_U \E_g \left[\sup_{J,C}G(J,C) \;\middle\vert\; U\right]
\lesssim \E_U \sup_{J,C} \sqrt{\mathrm{Var}[G(J,C)\mid U]} \cdot  \sqrt{k+\log |N_{2^{-h}}|}.
\]
Now we discuss the following three cases. While we do not present them as mutually exclusive, together they cover all possible scenarios.

\begin{enumerate}[label=(\roman*)]
    \item $2^\psi\leq k\cdot 2^{2h}$. By the first bound in Lemma~\ref{lemma_Euc_netsize} and Lemma~\ref{lemma_Euc_variance1}, we obtain that
    \[
        \E\sup_{J,C} |G(J,C)|
        \lesssim \frac{2^{-h-l/2}}{\sqrt{s}}\cdot \sqrt{k \log (k/\eps)}\cdot 2^{h} 
        \leq \sqrt{\frac{ k\log (k/\eps)}{s}}
        \lesssim \frac{\eps}{\log^{\nu+1}\frac{k}{\eps}},
    \]
    provided that $s\gtrsim k\eps^{-2}\log^{2\nu+3}\frac{k}{\eps}$.

    \item $2^\psi > k$ and $k^{1/3} \geq \eps^{-1}$. By the second bound in Lemma~\ref{lemma_Euc_netsize} and Lemma~\ref{lemma_Euc_variance1}, we obtain that,
    \begin{align}
        \E\sup_{J,C} |G(J,C)|
        \lesssim \frac{2^{-h-l/2}}{\sqrt{s}}\cdot \sqrt{k \log (k/\eps)}\cdot 2^{h+l} 
        &\leq \sqrt{\frac{2^l k\log (k/\eps)}{s}} \notag \\
        &\lesssim \sqrt\frac{2^{l_{max}}\cdot k\log (k/\eps)}{s}  \notag\\
        &\lesssim \sqrt\frac{k\eps^{-1}\log (k/\eps)}{s} \label{inq_Euc_tradeoff1}\\
        &\lesssim \frac{\eps}{\log^{\nu+1} \frac{k}{\eps}} \notag,
    \end{align}
    provided that $s\gtrsim k\eps^{-3}\log^{2\nu+3}\frac{k}{\eps}$, which holds when $k^{1/3} \geq \eps^{-1}$.

    \item $k^{1/3} < \epsilon^{-1}$. By the first bound of Lemma~\ref{lemma_Euc_netsize} and Lemma~\ref{lemma_Euc_variance2}, we obtain that
    \begin{equation}\label{inq_Euc_tradeoff2}
        \E\sup_{J,C} |G(J,C)|  \lesssim \frac{k\cdot 2^{-h-l-\psi/2}}{\sqrt{s}}\cdot 2^{h+\psi/2}\cdot \sqrt{\log (k/\eps)}  
        \lesssim \sqrt{\frac{2^{-2l} k^2\log (k/\eps)}{s}}.
    \end{equation}
    Combining \eqref{inq_Euc_tradeoff1} and \eqref{inq_Euc_tradeoff2} and noticing that $\min(2^l k, 2^{-2l} k^2)\lesssim k^{4/3}$, we obtain 
    \[
        \E\sup_{J,C} |G(J,C)| \lesssim  \sqrt{\frac{\min(2^lk,2^{-2l} k^2)\cdot \log (k/\eps)}{s}}
        \lesssim \sqrt\frac{k^{4/3}\log (k/\eps)}{s}.
        \lesssim \frac{\eps}{\log^{\nu+1}\frac{k}{\eps}},
    \]
    provided that $s\gtrsim k^{4/3}\cdot \eps^{-2}\log^{2\nu+3}\frac{k}{\eps}$, which holds when $k^{1/3} < \eps^{-1}$.
\end{enumerate}

The proof of Lemma~\ref{lemma_Euc_chaining_main_main} is now complete.

\section{Extension to General $z\geq 1$ for VC and Doubling Instances}
\label{sec:kzC}

The following theorem extends both Theorem \ref{thm_gen} and Theorem \ref{thm_doubling} to general robust \kzC for constant $z\geq 1$.

\begin{theorem}[General robust \kzC for VC and doubling instances] 
\label{thm:kzC} 
Let $(M,\dist)$ be a metric space with VC dimension $d_{\VC}\geq 1$ and doubling dimension $d_D\geq 1$. 
Let $d = \min\{d_{\VC}, d_D\}$.
There exists an algorithm that given a dataset $X\subseteq M$ of size $n\geq 1$, constructs an $(\eps,m)$-robust coreset of 
$X$ for robust \kzC with size $O(m)+\tilde{O}(k\cdot d\cdot \eps^{-2z} )$ in $O(nk)$ time.
\end{theorem}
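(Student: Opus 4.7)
The proof proposal mirrors the $z=1$ arguments for Theorems \ref{thm_gen} and \ref{thm_doubling}, lifting each step to general $z \geq 1$ via the generalized triangle inequality (Lemma \ref{lm:triangle}) and by strengthening the uniform-sampling target from an $\eps$-approximation to an $\eps^z$-approximation.

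First, I would extend the point-set decomposition of Lemma \ref{thm_decomp} to the $z$-cost. Let $C^\ast$ be an $O(1)$-approximation for robust \kzC. Define $F$ as the $m + O(\eps^{-z})$ points farthest from $C^\ast$, carve the remainder by distance-to-$C^\ast$ into $l = O(\log(mk))$ layered $(r_j, k)$-instances $R_j$, and collect the innermost layer as an $(r_{\inn}, k)$-instance with $r_{\inn}^z = \tfrac{\eps}{mk}\cdot \cost_z^{(m)}(X,C^\ast)$ up to $2^{O(z)}$ factors. Then $|F| = O(m) + O(\eps^{-z})$, which is absorbed into $\tilde{O}(kd\eps^{-2z})$; the sums $\sum_{j\in[l]} r_j^z |R_j|$ and $mk r_{\inn}^z$ are both $O(\cost_z^{(m)}(X,C^\ast))$ and $O(\eps\cdot \cost_z^{(m)}(X,C^\ast))$ respectively.

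Next, I would generalize the first two coreset constructions. Algorithm \ref{alg_coreset2} extends directly by replacing the oracle $\calB$ with a vanilla $\eps$-coreset for \kzC of size $\tilde{O}(kd\eps^{-2z})$; the proof of Lemma \ref{lemma_innergroup} goes through verbatim with every triangle-inequality step replaced by Lemma \ref{lm:triangle}, so the $2mr$ perturbation bounds become $2^{O(z)} m r^z$ and we obtain an $(\eps, m, 2^{O(z)} m k r^z)$-robust coreset after a constant rescaling of $\eps$. For Algorithm \ref{alg_coreset}, we keep the algorithm but set $s = \tilde{O}(kd\eps^{-2z})$ and ask $D$ to simultaneously be capacity-respecting, an $\eps^z$-range space approximation, and an $\eps^z$-indexed-subset cost approximation (both defined as before but with $\cost$ replaced by $\cost_z$ and additive target $\eps^z n r^z$). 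The sampling bounds in Lemmata \ref{lemma_range} and \ref{lemma:eps_range} supply the range-space property at the tightened accuracy $\eps^z$ with the stated sample size, and the chaining argument behind Lemma \ref{lemma_chaining} gives the indexed-subset property once distance vectors $v^U_{i,J,C}$ are replaced by $z$-distance vectors and per-cluster fluctuations are controlled by Lemma \ref{lm:triangle}.

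The main obstacle is redoing the cost-difference decomposition (Lemmata \ref{lemma_inq} and \ref{lemma:diffrerence:doubling}) in the $z$-cost setting. Starting from
\[
\cost_z^{(t)}(X,C) = \int_0^\infty z u^{z-1} \bigl(|X \setminus B(C,u)| - t\bigr)^+ du,
\]
and partitioning clusters into $A_1 \cup A_2 \cup A_3$ as in Section \ref{sec:lemma_main_tech}, the $A_1$ error is bounded by the $\eps^z$-indexed-subset cost approximation and $A_2$ contributes nothing by capacity-respecting. For $A_3$, the integrand is bounded by $z u^{z-1}\eps^z n$ via the strengthened range-space guarantee, and is supported on an interval of length $O(r)$ around $r^\ast$. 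When $r^\ast \lesssim r$, direct integration yields a $2^{O(z)} \eps^z n r^z$ error. When $r^\ast \gg r$, points in $A_3$ are at distance $\Theta(r^\ast)$ from $C$, so $\cost_z^{(t)}(X,C) \gtrsim |A_3| (r^\ast)^z$; combining $\int_{(r^\ast-2r)^+}^{r^\ast+2r} z u^{z-1} du \leq 2^{O(z)} (r^\ast)^{z-1} r$ with a Young-type splitting via Lemma \ref{lm:triangle} absorbs the result into $\eps \cdot \cost_z^{(t)}(X,C) + 2^{O(z)} \eps n r^z$. The doubling case needs the same $(1\pm\eps)$-distortion correction as in Lemma \ref{lemma:diffrerence:doubling}, handled identically. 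Assembling $F$, the layer coresets, and the inner coreset by mergeability (Fact \ref{fact_merge}) yields an $(O(\eps), m, O(\eps\cdot \cost_z^{(m)}(X,C^\ast)))$-robust coreset of total size $O(m) + \tilde{O}(kd\eps^{-2z})$ (the $2^{O(z)}$ factors being constants for constant $z$), and a final rescaling of $\eps$ completes the proof.
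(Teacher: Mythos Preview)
Your overall strategy matches the paper's: the same decomposition into far points, $O(\log(mk))$ layered $(r,k)$-instances, and an inner instance, followed by the generalized versions of the first and second coreset constructions. The range-space strengthening to accuracy $\eps^z$ is exactly what drives the $\eps^{-2z}$ in the size, as in the paper. (One minor point: you ask for an $\eps^z$-indexed-subset cost approximation, but the paper only needs the $\eps$-level version; requiring $\eps^z$ is unnecessarily strong and it is not clear $\tilde{O}(kd\eps^{-2z})$ samples achieve it.)

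There is, however, a genuine gap in your treatment of the inner instance. You write that ``the $2mr$ perturbation bounds become $2^{O(z)} m r^z$'' after swapping the triangle inequality for Lemma~\ref{lm:triangle}. This is false for $z>1$: Lemma~\ref{lm:triangle} only gives $|d^z(a,c)-d^z(b,c)|\le \eps\, d^z(a,c)+(1+2z/\eps)^{z-1}d^z(a,b)$, so moving a point by $\le 2r$ changes its $z$-cost by $\eps$ times the cost plus an additive $(z/\eps)^{z-1}\cdot O(r^z)$, not $2^{O(z)}r^z$. Consequently, the second coreset (Lemma~\ref{lemma_innergroup_kzC} in the paper) has additive error $z^{O(z)} mkr^z/\eps^{z-1}$, not $2^{O(z)}mkr^z$. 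With your choice $r_{\inn}^z=\tfrac{\eps}{mk}\cost_z^{(m)}(X,C^\ast)$, the inner additive error becomes $\eps^{2-z}\cost_z^{(m)}(X,C^\ast)$, which blows up for $z\ge 2$. The paper fixes this by taking the smaller radius $r_{\inn}=\eps\bigl(\cost_z^{(m)}(X,C^\ast)/(mk)\bigr)^{1/z}$, i.e.\ $r_{\inn}^z=\tfrac{\eps^z}{mk}\cost_z^{(m)}(X,C^\ast)$, so that $mkr_{\inn}^z/\eps^{z-1}=\eps\cost_z^{(m)}(X,C^\ast)$. Relatedly, your claim in the $A_3$ analysis that $\cost_z^{(t)}(X,C)\gtrsim |A_3|(r^\ast)^z$ can fail when most of $X_{A_3}$ are outliers; the clean fix is a two-regime split at $r^\ast\lessgtr r/\eps$: for $r^\ast\le 2r/\eps$ the $\eps^z$-range-space bound gives integral $\le 2^{O(z)}\eps n r^z$ directly, while for $r^\ast>2r/\eps$ all $A_3$-points lie within a $(1+O(\eps))$-multiplicative distance band, so $|\cost_z^{(t-g)}(X_{A_3},C)-\cost_z^{(t-g)}(D_{A_3},C)|\le O(z\eps)\cost_z^{(t)}(X,C)$ follows from capacity-respecting alone.
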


We first introduce the following lemmas for $(r,k)$-instances to prove this theorem.
The first one is a generalization of Lemma \ref{lemma_mainring}, whose proof can be found in Appendix \ref{sec:proof_1}.
Note that the coreset size extends to \( \tilde{O}(k d \eps^{-2z}) \cdot \log m \).

\begin{lemma}[First coreset for $(r,k)$-instances] 
\label{lemma_mainring_kzC}
Let $(M,\dist)$ be a metric space with VC dimension $d_{\mathrm{VC}}$ and doubling dimension $d_D$.
Let $d = \min\{d_{\mathrm{VC}}, d_D\}$.
There is a randomized algorithm which, given an $(r,k)$-instance $X$, computes with probability at least $1- O(\frac{1}{\log (mk)})
	$  an $(\eps,m,\eps r^z |X| )$-robust coreset with size $\tilde{O}(kd\eps^{-2z})\cdot \log m$ of $X$ for robust \kzC.
\end{lemma}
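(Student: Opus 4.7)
The plan is to mirror the proof of Lemma~\ref{lemma_mainring} with modifications for the $z$-th power cost. The essential ingredient will be an extension of Theorem~\ref{thm_main} to $(k,z)$-clustering: for an $(r,k_0)$-regular instance $X$, Algorithm~\ref{alg_coreset} (with the sample size $s$ increased to $\tilde{O}(kd\eps^{-2z})\cdot \mathrm{polylog}\,m$) returns an $(\eps,m,\eps r^{z}|X|)$-robust coreset for robust \kzC. Granting this, the rest of the proof is largely a matter of splitting $X$ into $O(z\log(k/\eps))$ regular instances and invoking Fact~\ref{fact_merge}.

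First I would write $X = X_1 \cup \cdots \cup X_k$ with $X_i \subseteq B(a_i,r)$, set $n = |X|$, and partition the indices by cluster size: $P_0 = \{i : 0 < |X_i| \leq \eps^{z} n/k\}$ and $P_j = \{i : 2^{j-1}\eps^{z} n/k < |X_i| \leq 2^{j}\eps^{z} n/k\}$ for $j = 1,\dots,\lceil \log(k/\eps^{z}) \rceil$. For each $j \geq 1$ the set $Y_j := \bigcup_{i\in P_j} X_i$ is an $(r,|P_j|)$-regular instance, so applying the generalized Theorem~\ref{thm_main} yields an $(\eps,m,\eps r^{z}|Y_j|)$-robust coreset $D_j$ of size $\tilde{O}(kd\eps^{-2z})\cdot \mathrm{polylog}\,m$, each succeeding with probability $1-O(1/(z\log(k/\eps)\log(mk)))$.

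For the small-cluster part $Y_0 := \bigcup_{i\in P_0} X_i$ I would pick an arbitrary representative $x_i \in X_i$ for each $i\in P_0$ and take $D_0 = \{(x_i,|X_i|) : i\in P_0\}$. The error bound uses the generalized triangle inequality (Lemma~\ref{lm:triangle}) in the form
\[
\big|\dist(x,C)^{z} - \dist(x_i,C)^{z}\big| \;\leq\; \eps\,\dist(x_i,C)^{z} + O_{z}\!\left(\eps^{\,1-z}\right) r^{z},
\]
valid because $\dist(x,x_i) \leq 2r$. Summing over $x \in X_i$ and $i \in P_0$, the multiplicative $\eps$ term is absorbed into $\eps\cost^{(t)}_z(X,C)$, while the additive contribution is at most
\[
O_{z}(\eps^{\,1-z}\,r^{z}) \cdot \sum_{i\in P_0}|X_i| \;\leq\; O_{z}(\eps^{\,1-z}\,r^{z}) \cdot \eps^{z} n \;=\; O_{z}(\eps\, r^{z}\,n),
\]
which is exactly the reason for choosing the $\eps^{z} n/k$ threshold in defining $P_0$. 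Hence $D_0$ is an $(\eps,m,O_{z}(\eps r^{z} n))$-robust coreset of $Y_0$. Merging all pieces via Fact~\ref{fact_merge} gives a coreset of $X$ with total additive error $O_{z}(\eps r^{z} n)$ and total size $\tilde{O}(kd\eps^{-2z})\cdot \log m$ after rescaling $\eps$ by a constant depending on $z$.

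The main obstacle is the generalized Theorem~\ref{thm_main}: the range-space argument (Sections~\ref{sec:VC} and~\ref{sec:Doubling}) must be strengthened from an $\eps$-approximation to an $\eps^{z}$-approximation, since the integrand in the analogue of Lemma~\ref{lemma_inq} involves $u^{z-1}\cdot|\,|B(C,u)\cap X_{A_3}| - \|B(C,u)\cap D_{A_3}\|_1\,|$ and the relevant distance interval has length $O(r)$; this is the source of the $\eps^{-2z}$ factor. Similarly, the chaining argument underlying the indexed-subset cost approximation (Section~\ref{sec_proof_subset}) must be adapted to bound $|\dist(x,C)^{z} - \dist(x,C')^{z}|$ via the generalized triangle inequality, which refines the level partition $A_0,\dots,A_T$ and the net construction of Lemma~\ref{lemma_netsize} at granularity $\alpha 2^{i z} r^{z}$; fortunately the net-size bound only incurs the necessary $\eps^{-z}$ factor, so the overall sample complexity scales as claimed.
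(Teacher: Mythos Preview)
Your proposal is correct and follows essentially the same approach as the paper: partition the $(r,k)$-instance into $O(z\log(k/\eps))$ regular instances using the $\eps^z n/k$ size threshold, invoke the $z$-generalized version of Theorem~\ref{thm_main} on each, handle the small-cluster bucket $Y_0$ via the generalized triangle inequality (exploiting $|Y_0|\le \eps^z n$ to absorb the $O_z(\eps^{1-z})r^z$ additive term), and merge. Your diagnosis of the bottleneck is also on target: the paper upgrades the range-space approximation to accuracy $\eps^z$ (this is what forces the $\eps^{-2z}$ sample size), while the chaining/indexed-subset part only needs $\tilde{O}(kd\eps^{-z-1})$ samples and is not the bottleneck.
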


The second one is a generalization of Lemma \ref{lemma_innergroup}, whose proof can be found in Appendix \ref{sec:proof_2}.
Note that the additive error term extends to \( mkr^z/\eps^{z-1} \).

\begin{lemma}[Second coreset for $(r,k)$-instances] 
\label{lemma_innergroup_kzC} 
Suppose there exists a randomized algorithm $\mathcal{B}$ that on every \kzC instance, with probability at least $0.9$, computes a vanilla $\eps$-coreset with size $Q(\eps)$.
Given an $(r,k)$-instance $X$, with probability at least $0.9$, Algorithm~\ref{alg_coreset2} computes an $(\eps,m,mkr^z/\eps^{z-1})$-robust coreset with size $k+Q(\eps)$ of $X$ for robust \kMedian.
\end{lemma}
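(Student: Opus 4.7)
My plan is to adapt the $z=1$ proof of Lemma~\ref{lemma_innergroup} by replacing each use of the ordinary triangle inequality with the generalized triangle inequality (Lemma~\ref{lm:triangle}). Applied with parameter $\Theta(\eps)$, this gives the point-wise bound
\[|\dist^z(p,C) - \dist^z(q,C)| \leq \eps\cdot \min(\dist^z(p,C),\dist^z(q,C)) + O(\eps^{-(z-1)})\cdot \dist^z(p,q)\]
for all $p,q\in M$ and $C\in\binom{M}{k}$. Since every pair of matched points across the coreset and the dataset will lie in a common ball $B(a_i,r)$, I will always substitute $\dist^z(p,q)\leq (2r)^z$, which is the source of the new $r^z/\eps^{z-1}$ additive error per unit of moved weight.

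I would run Algorithm~\ref{alg_coreset2} unchanged and use the same outlier-matching setup: fix $C$ and $t\in\{0,\ldots,m\}$; let $J^\ast$ denote the optimal outlier set for $\cost_z^{(t)}(X,C)$ with $t_i := |J^\ast\cap X_i|\leq |H_i|$; pick $E_i^\ast\subseteq D\cap X_i$ with weight $t_i$; and complete $J = (J^\ast\cap H)\cup J'$ and $E = (E^\ast\cap D_H)\cup E'$ so that the per-cluster outlier counts match. Exactly as in the $z=1$ proof this yields
\[|\cost_z^{(t)}(D,C) - \cost_z^{(t)}(X,C)| \leq \text{(I)} + \text{(II)} + \text{(III)} + \text{(IV)},\]
where (I) is the representative difference $|\cost_z(D_H\setminus E,C)-\cost_z(H\setminus J,C)|$, (II) is the vanilla coreset difference $|\cost_z(D',C)-\cost_z(X',C)|$, and (III), (IV) are the two swap differences $|\cost_z(E',C)-\cost_z(E^\ast\setminus D_H,C)|$ and $|\cost_z(J',C)-\cost_z(J^\ast\setminus H,C)|$.

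The key step is bounding each of these four terms via the inequality above. For (I), each of the $\sum_i(|H_i|-t_i)\leq mk$ matched pairs lies in a common ball of radius $r$, giving $\text{(I)} \leq \eps\,\cost_z(H\setminus J,C) + O(mkr^z/\eps^{z-1})$. For (III) and (IV), at most $m$ matched pairs lie in a common cluster of radius $r$, giving an error of the form $\eps\cdot\min(\cost_z(E',C),\cost_z(E^\ast\setminus D_H,C)) + O(mr^z/\eps^{z-1})$; since $E'\subseteq D_H\setminus E^\ast$ and $J'\subseteq H\setminus J^\ast$, the relevant ``min'' can be bounded using $\cost_z(E',C)\leq \cost_z^{(t)}(D,C)$ and $\cost_z(J',C)\leq \cost_z^{(t)}(X,C)$. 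Term (II) contributes $\eps\,\cost_z(X',C)\leq \eps\,\cost_z(X\setminus J,C) \leq \eps(1+O(\eps))\cost_z^{(t)}(X,C) + O(mr^z/\eps^{z-1})$ after absorbing (IV). Letting $\Delta := |\cost_z^{(t)}(D,C)-\cost_z^{(t)}(X,C)|$, summing produces an inequality of the form $\Delta \leq O(\eps)\cost_z^{(t)}(X,C) + O(\eps)\Delta + O(mkr^z/\eps^{z-1})$, and solving for $\Delta$ yields the claimed bound. I expect the main obstacle to be precisely this self-referencing absorption: unlike the $z=1$ case where the triangle inequality gives only a purely additive error, the generalized inequality forces a multiplicative $\eps$ factor attached to $\cost_z^{(t)}(D,C)$ in terms (III) and (IV), which must be moved to the left-hand side without blowing up constants. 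The rest of the argument is a careful but routine transcription of the $z=1$ bookkeeping, now tracking the $\eps^{-(z-1)}$ factor introduced at every application of Lemma~\ref{lm:triangle}.
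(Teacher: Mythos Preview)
Your proposal is correct and follows essentially the same approach as the paper: decompose into the same four terms (I)--(IV), apply Lemma~\ref{lm:triangle} with parameter $\Theta(\eps)$ to each, and solve the resulting self-referential inequality. One small slip to fix: the bound $\cost_z(E',C)\leq \cost_z^{(t)}(D,C)$ is not justified, since $E'$ need not avoid the \emph{optimal} outlier set of $D$; what you do have is $E'\subseteq D\setminus E^\ast$, hence $\cost_z(E',C)\leq \cost_z(D\setminus E^\ast,C)$. The paper handles this by bounding $\cost_z(D\setminus E^\ast,C)-\cost_z^{(t)}(X,C)$ (rather than $\Delta$) by the four terms, absorbing the $\eps\,\cost_z(D\setminus E^\ast,C)$ from (III) to get $(1-\eps)\cost_z(D\setminus E^\ast,C)\leq (1+O(\eps))\cost_z^{(t)}(X,C)+O(mkr^z/\eps^{z-1})$, and only then using $\cost_z^{(t)}(D,C)\leq \cost_z(D\setminus E^\ast,C)$. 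With this adjustment your argument goes through identically.
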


We also need the following lemma that generalizes Lemma \ref{thm_decomp}.

\begin{lemma}[Point set decomposition for general $z$] 
\label{thm_decomp_kzC}
	There is an $O(nk)$ time algorithm $\mathcal{A}$ that, given any constant-factor approximation $C^\ast$ for robust \kzC on $X$, decomposes $X$ as
	\[
		X=F\cup R\cup G,
	\]
	where
	\begin{itemize}[itemsep=0pt,topsep=0pt]
	\item $F$ is a finite subset of $X$ with $|F| = O(m+\eps^{-1})$, 
	\item $R$ is the union $R=R_1\cup\cdots\cup R_l$ with $l = O(\log (zmk/\eps))$, each $R_i$ being an $(r_i,k)$-instance satisfying $\sum_{i\in [l]} r_i^z|R_i| = O(\cost_z^{(m)}(X,C^\ast))$, 
	\item $G$ is an $(\eps \cdot (\frac{\cost_z^{(m)}(X,C^\ast)}{mk})^{1/z}, k)$-instance.
	\end{itemize}
\end{lemma}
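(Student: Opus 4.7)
The plan is to mirror the proof of Lemma~\ref{thm_decomp}, replacing the linear cost with the $z$-th power cost throughout. First, let $L$ consist of the $m + \lceil \eps^{-1} \rceil$ points of $X$ that are farthest from $C^\ast$, and set $F := L$, $Y := X \setminus L$. Define the generalized average radius $r := (\cost_z(Y,C^\ast)/|Y|)^{1/z}$, partition $Y$ into clusters $Y_1,\dots,Y_k$ by closest center in $C^\ast$ (so $Y_i := \{y\in Y \mid i = \operatorname{argmin}_j \dist(y,c_j^\ast)\}$), and form the balls $B_j := \{y \in Y : \dist(y,C^\ast) \leq 2^j \eps r\}$ and rings $R_0 := B_0$, $R_j := B_j \setminus B_{j-1}$ for $j \geq 1$, stopping at the largest $T$ with $R_T \neq \emptyset$.

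Next, set the cutoff $s := \min\{\lceil 2 + \tfrac{1}{z}\log(mk) + \tfrac{z-1}{z}\log(1/\eps) \rceil,\; T\} = O(\log(zmk/\eps))$, and define $R := R_{T-s+1} \cup \cdots \cup R_T$ and $G := R_0 \cup \cdots \cup R_{T-s}$. The bound $|F| = O(m + \eps^{-1})$ is immediate. Each $R_j$ is contained in $B(C^\ast, 2^j \eps r) = \bigcup_i B(c_i^\ast, 2^j \eps r)$, hence an $(r_j,k)$-instance with $r_j := 2^j \eps r$. To bound $\sum_{j=T-s+1}^T r_j^z |R_j|$, I use that every $y \in R_j$ with $j \geq 1$ satisfies $\dist(y,C^\ast) > 2^{j-1}\eps r$, so $r_j^z |R_j| \leq 2^z \cost_z(R_j, C^\ast)$; summing over $j$ and using $\cost_z(Y,C^\ast) \leq \cost_z^{(m)}(X,C^\ast)$ gives $\sum_j r_j^z |R_j| = O(\cost_z^{(m)}(X,C^\ast))$, absorbing the $2^{O(z)}$ factor since $z$ is constant.

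The main technical step is showing $G$ is an $(\eps(\cost_z^{(m)}(X,C^\ast)/(mk))^{1/z}, k)$-instance, which reduces to bounding $\diam(G \cap Y_i)$ for each $i$. Since $|L| = m + \lceil \eps^{-1}\rceil$, at least $\lceil\eps^{-1}\rceil$ of the points counted by $\cost_z^{(m)}(X,C^\ast)$ sit outside $B_T$ and hence have distance at least $2^T \eps r$ from $C^\ast$; this gives $\cost_z^{(m)}(X,C^\ast) \geq \lceil \eps^{-1}\rceil (2^T \eps r)^z$, which rearranges to $2^T \eps r \leq \eps^{1/z}(\cost_z^{(m)}(X,C^\ast))^{1/z}$. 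Since $G \cap Y_i \subseteq B(c_i^\ast, 2^{T-s}\eps r)$, its diameter is at most $2\cdot 2^{T-s}\eps r = 2^{1-s}\cdot 2^T \eps r$, and the choice of $s$ is precisely what makes $2^{1-s}\eps^{1/z} \leq \eps(mk)^{-1/z}$, yielding the required diameter bound.

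The main obstacle is this final diameter computation: in the $z=1$ case of Lemma~\ref{thm_decomp} the exponents collapse neatly and $s = O(\log(mk))$ suffices, whereas for general $z$ the cutoff $s$ must additionally absorb the $\tfrac{z-1}{z}\log(1/\eps)$ term arising from the mismatch between the distance-based ring thresholds and the $z$-th power cost aggregation. This is exactly why $l$ ends up as $O(\log(zmk/\eps))$ rather than $O(\log(mk))$. The $O(nk)$ runtime is unchanged from Lemma~\ref{thm_decomp}: finding $L$ and computing the nearest-center partition dominates, while bucketing points into rings adds only $O(n)$.
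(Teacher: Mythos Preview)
Your approach is correct and follows the same overall strategy as the paper's proof. The one technical difference is the ring parameterization: you threshold by $\dist(y,C^\ast)\le 2^j\eps r$ (geometric in distance), whereas the paper thresholds by $\dist^z(y,C^\ast)\le 2^i\eps r^z$ (geometric in the $z$-th power of distance). As a result, the paper's $r_j=(2^j\eps r^z)^{1/z}$ gives $r_j^z|R_j|\le 2\cost_z(R_j,C^\ast)$ and needs only $s=\lceil z+\log(mk)\rceil$ outer rings (no $\eps$ dependence), while your $r_j=2^j\eps r$ gives $r_j^z|R_j|\le 2^z\cost_z(R_j,C^\ast)$ and forces $s$ to absorb the extra $\tfrac{z-1}{z}\log(1/\eps)$ term you identified. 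Both choices land within the stated $l=O(\log(zmk/\eps))$ bound; the paper's parameterization is a bit cleaner in the constants, but yours is the more literal lift of the $z=1$ argument.

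One minor imprecision, which the paper's own proof shares: the $\lceil\eps^{-1}\rceil$ retained points of $L$ are only guaranteed to have distance at least $\max_{y\in Y}\dist(y,C^\ast)>2^{T-1}\eps r$ (since $R_T\neq\emptyset$), not necessarily $2^T\eps r$. So your inequality $2^T\eps r\le \eps^{1/z}(\cost_z^{(m)}(X,C^\ast))^{1/z}$ should carry an extra factor of $2$; this is harmless and is absorbed by bumping $s$ up by one.
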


\begin{proof}
    The proof is similar to the proof of Lemma~\ref{thm_decomp}. 
    Let $L$ be the set of $m+\ceil{\eps^{-1}}$ farthest point to $C^\ast$ in $X$ and $Y = X\setminus L$. We also define $r=(\frac{\cost_z(Y,C^\ast)}{\abs{Y}})^{1/z}$. For each $i\in [k]$, let $Y_i=\{y\in Y\mid i=\argmin_{i\in [k]}\dist(y,c_i^\ast)  \}$ denote the set of inliers whose closest center is $c_i^\ast$. Observe that $Y_1, \dots, Y_k$ form a partition of $Y$.

    \emph{Balls} around $C^\ast$ are defined as $B_i=\{y\in Y\mid \dist^z(y,C^\ast)\leq 2^i\cdot \eps r^z  \}$. \emph{Rings} are defined as $R_0=B_0$ and $R_i=B_i\setminus B_{i-1}$ for $i=1,2,\dots,T$, where $T$ is the largest number such that $R_T\neq\emptyset$, which implies that $T\leq \log{\frac{\abs{Y}}{\eps}}$.

    Now we define the decomposition. Let
    \begin{itemize}[itemsep=0pt]
        \item $F=L$;
        \item $R=\bigcup_{j=T-s+1}^T R_j$ for $s=\min\{\ceil{z + \log (mk)},T\}$;
        \item $G=\bigcup_{j=0}^{T-s} R_j$.
    \end{itemize}

    By definition we have $\abs{F}=O(1/\eps+m )$. For each $j\in \{T-s+1,\dots,T\}$, $R_j$ is contained in the ball $B_j$, so $R_j$ is an $(r_j,k)$-instance with $r_j=(2^j\cdot \eps r^z)^{1/z}$. Hence we have

    \[
    \sum_{j=T-s+1}^T r_j^z \abs{R_j} =\sum_{j=T-s+1}^T 2^j  \eps r^z   \abs{R_j} \leq \sum_{j=T-s+1}^T 2\cost_z(R_j,C^\ast) \leq 2\cost_z(Y,C^\ast) \leq 2\cost[m]_z(X,C^\ast).
    \]

    It remains to prove that $G$ is an $(\eps (\frac{\cost[m]_z(X,C^\ast)}{mk})^{1/z},k) $-instance. Let $G_i=G\cap  Y_i$. It suffices to prove that $\diam(G_i)\leq \eps \cdot  (\frac{\cost[m]_z(X,C^\ast)}{mk})^{1/z}$.

    Since $X=Y\cup L$ and $L=O(m+\eps^{-1})$, which implies that $\cost[m]_z(X,C^\ast)$ aggregates at least $\eps^{-1}$ points outside $B_T$ and thus
    \[
        \eps^{z} \cdot \cost[m]_z(X,C^\ast)\geq \cost[m]_z(X,C^\ast)/\ceil{\eps^{z}} \geq 2^T \eps r^z. 
    \]   
    It follows that for each $i\in [k]$,
	\[
	   \diam(G_i) \leq 2\left(2^{T-s} \eps r^z\right)^{\frac{1}{z}}\leq 2 \left(\frac{2^T \eps r^z}{2^z mk}\right)^{\frac{1}{z}}\leq 2\left(\frac{\eps^z \cost[m]_z(X,C^\ast)}{2^zmk} \right)^{\frac{1}{z}} 
       = \eps\left( \frac{\cost[m]_z(X,C^\ast)}{mk} \right)^{\frac{1}{z}}.
	\]
    This completes the proof.
\end{proof}

We are ready to prove Theorem \ref{thm:kzC}, following a near-identical proof of Theorems~\ref{thm_gen} and~\ref{thm_doubling} in Section~\ref{sec:proof_thm}.

\begin{proof}[Proof of Theorem \ref{thm:kzC}]
    We first find a constant-factor approximation $C^\ast$ of $X$, i,e, $\cost[m]_z(X,C^\ast)\lesssim \min_{C\in \binom{M}{k}} \cost[m]_z(X,C) $. Then we apply Lemma~\ref{thm_decomp_kzC} to decompose $X$ into $X=F\cup R\cup G$, where $R=R_1\cup R_2\cup \dots \cup R_l$ for $l=O(\log mk)$ and each $R_i$ is an $(r_i, k)$-instance, $G$ is an $(\eps \cdot (\frac{\cost[m]_z(X,C^\ast)}{mk})^{1/z}, k)$-instance. Then we can construct $D$ as follows.

    \begin{itemize}[itemsep=0pt,topsep=0pt]
        \item We add $F$ into $D$ with unit weight for each $x\in F$.
        \item For each $R_i$ ($i\in [l]$), we apply Lemma~\ref{lemma_mainring_kzC} to construct an $(O(\eps),m,\eps r_i^z\cdot \abs{R_i} )$-robust coreset $D_i$ for $R_i$ and add $D_i$ to $D$.
        \item For $G$, we apply Lemma~\ref{lemma_innergroup_kzC} to construct an $(\eps,m,\eps\cdot \cost[m]_z(X,C^\ast))$-coreset $D_{\inn}$ and add $D_{\inn}$ to $D$.
    \end{itemize}

    We first bound the size of $D$. Note that $\norm{D}_0=\norm{F}_0+\sum_{i=1}^l\norm{D_i}_0+\norm{D_{\inn}}_0$, and $\norm{F}_0\leq O(m + \eps^{-1})$ by Lemma~\ref{thm_decomp_kzC}, $\norm{R_i}_0=\tilde{O}(kd\eps^{-2z})\cdot \log m$ by Lemma~\ref{lemma_mainring_kzC} and $\norm{D_{\inn}}_0=k+Q(\eps)=k+\tilde{O}(kd\eps^{-2})$ by Lemma~\ref{lemma_innergroup_kzC}. Thus we have the coreset size $\norm{D}_0=O(m)+\tilde{O}(kd\eps^{-2z})\log^3{m}=O(m)+ \tilde{O}(kd\eps^{-2z})$.

    Next we show that $D$ is an $(O(\eps),m)$-coreset of $X$ and the proof will be complete by rescaling $\eps$. Recall that $F$ is an $(\eps,m,0)$-coreset for $F$, each $D_i$ is an $(\eps,m,r_i^z \abs{R_i})$-coreset for $R_i$, and $D_{\inn}$ is an $(\eps,m,\eps\cdot \cost[m]_z(X,C^\ast)$-coreset for $G$. Since $\sum_{i=1}^lr_i^z \abs{R_i}=O(\cost[m]_z(X,C^\ast)) $, by Fact~\ref{fact_merge} and $\cost[m]_z(X,C^\ast)=O(\cost[m]_z(X,C))$, we see that $D$ is an $(\eps,m,O(\eps \cdot \cost[m]_z(X,C^\ast)))$-coreset for $X$, and is an $(O(\eps), m)$-coreset for $X$.
\end{proof}

\subsection{Proof of Lemma \ref{lemma_mainring_kzC}: First coreset for $(r,k)$-instances}
\label{sec:proof_1}

As preparation, we recall the following well-known generalized triangle inequality.

\begin{lemma}[\bf{Generalized triangle inequality~\cite[Lemma 2.1]{braverman2022power}}]
    \label{lm:triangle}
    Let $a,b,c\in X$ and $z\geq 1$.
    For every $t > 0$, the following inequalities hold:
    \begin{gather*}
    d^z(a,b) \leq (1+t)^{z-1} d^z(a,c) + \Bigl(1+\frac{1}{t}\Bigr)^{z-1} d^z(b,c),
    \shortintertext{and}
    \left| d^z(a,c) - d^z(b,c) \right| \leq t\cdot d^z(a,c) + \Bigl(1 + \frac{2z}{t}\Bigr)^{z-1}\cdot d^z(a,b).
    \end{gather*}
\end{lemma}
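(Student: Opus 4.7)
The plan is to derive both inequalities from a single scalar convexity inequality: for $x, y \geq 0$, $z \geq 1$, and $t > 0$,
\[
(x+y)^z \leq (1+t)^{z-1}\, x^z + (1+1/t)^{z-1}\, y^z.
\]
This scalar inequality follows from Jensen's inequality applied to the convex function $\phi(u) = u^z$ on $[0,\infty)$: writing $x+y = \lambda \cdot (x/\lambda) + (1-\lambda)\cdot (y/(1-\lambda))$ with $\lambda = 1/(1+t)$ and invoking convexity yields $(x+y)^z \leq \lambda^{1-z} x^z + (1-\lambda)^{1-z} y^z$, and substituting the weights recovers the claim. The first inequality of the lemma then follows directly by applying this scalar bound with $x = d(a,c)$ and $y = d(b,c)$ to the metric triangle inequality $d(a,b) \leq d(a,c) + d(b,c)$.

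For the second inequality, the strategy is to reduce to the scalar inequality with a cleverly chosen parameter. Without loss of generality assume $d(b,c) \geq d(a,c)$; the other case is symmetric upon swapping the roles of $a$ and $b$. Then $|d^z(a,c) - d^z(b,c)| = d^z(b,c) - d^z(a,c)$, and the triangle inequality gives $d(b,c) \leq d(a,c) + d(a,b)$. Applying the scalar inequality with $x = d(a,c)$, $y = d(a,b)$, and parameter $s = t/(2z)$ yields
\[
d^z(b,c) \leq \bigl(1 + t/(2z)\bigr)^{z-1}\, d^z(a,c) + \bigl(1 + 2z/t\bigr)^{z-1}\, d^z(a,b).
\]
Subtracting $d^z(a,c)$ from both sides reduces the task to the elementary inequality $(1 + t/(2z))^{z-1} - 1 \leq t$, which would complete the proof.

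The main obstacle will be the verification of this elementary inequality uniformly in $z \geq 1$ and $t > 0$. For $z \in [1,2]$ the function $u \mapsto (1+u)^{z-1}$ is concave, so the tangent-line bound gives $(1+t/(2z))^{z-1} \leq 1 + (z-1)t/(2z) \leq 1 + t/2$, comfortably below $1+t$. For $z > 2$ the function is convex and one has to work harder: invoking $(1+u)^{z-1} \leq e^{(z-1)u}$ yields $(1+t/(2z))^{z-1} \leq e^{t/2}$, and a short case split on $t$ (handling small $t$ by $e^{t/2} \leq 1 + t$ for $t$ in a suitable range, and large $t$ by observing that the dominant term $t \cdot d^z(a,c)$ already swamps the left-hand side once combined with the crude bound $d^z(b,c) \leq 2^{z-1}(d^z(a,c) + d^z(a,b))$) completes the argument.
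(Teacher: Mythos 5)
The paper offers no proof of this lemma (it is imported verbatim from \cite{braverman2022power}), so your argument has to stand on its own. Your first inequality is fine: the scalar bound $(x+y)^z\le(1+t)^{z-1}x^z+(1+1/t)^{z-1}y^z$ via convexity of $u\mapsto u^z$ is the standard route and your derivation is correct. (Minor point: the case $d(a,c)\ge d(b,c)$ is not literally ``symmetric,'' since the multiplicative term sits on $d^z(a,c)$; after swapping $a$ and $b$ you get $t\,d^z(b,c)$ on the right and must add the observation $d(b,c)\le d(a,c)$ to conclude.)

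The genuine gap is in the second inequality. Your reduction with $s=t/(2z)$ hinges on $(1+t/(2z))^{z-1}-1\le t$, and this is false for $z>2$ once $t$ is moderately large: for $z=3$, $t=100$ the left side is $(1+100/6)^2-1\approx 311$. Your proposed patches do not close this: $e^{t/2}\le 1+t$ only holds for $t\lesssim 1.59$, and the crude bound $d^z(b,c)\le 2^{z-1}\bigl(d^z(a,c)+d^z(a,b)\bigr)$ yields the claim only when simultaneously $t\ge 2^{z-1}-1$ and $2^{z-1}\le(1+2z/t)^{z-1}$, i.e.\ $t\le 2z$; already for $z=3$ the ranges $t\in(1.59,3)$ and $t>6$ are uncovered. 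More importantly, no patch can exist for the statement as printed (``for every $t>0$''): taking the worst case $d(b,c)=d(a,c)+d(a,b)$ and optimizing over $d(a,c)$ shows the inequality holds precisely when $(1+t)^{1/(z-1)}\ge 1+t/(2z)$, which fails for $z>2$ and large $t$. Concretely, with $z=3$, $t=100$ and collinear points $c=0$, $a\approx 0.1105$, $b\approx 1.1105$ on the real line, the left side is $\approx 1.368$ while the right side is $100\cdot d^3(a,c)+(1.06)^2\approx 1.259$. So the lemma must be read with $t$ restricted (e.g.\ $t\in(0,1]$, which is how it is typically invoked with $t=\eps$); in that regime your argument is already complete without any case split, since $(1+t/(2z))^{z-1}\le e^{(z-1)t/(2z)}\le e^{t/2}\le 1+t$ for $t\le 1$.
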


We first provide the following theorem that generalizes Theorem \ref{thm_main}.

\begin{theorem}[$(r,k_0)$-regular instances for general $z\geq 1$]
\label{thm_main_kzC} 
Let $(M,\dist)$ be a metric space with VC dimension $d_{\mathrm{VC}}$ and doubling dimension $d_D$.
Let $d = \min\{d_{\mathrm{VC}}, d_D\}$.
Assume that $X\subseteq M$ is an $(r,k_0)$-regular instance with $|X|=n$.
With probability $1 - O\big(\frac{1}{\log(k/\eps)\cdot \log (mk)}\big)$, Algorithm~\ref{alg_coreset} returns an $(\eps,m,O(\eps n r^z))$-robust coreset of $X$ for robust \kMedian with size $\frac{kd}{\eps^{2z}}\cdot \log^{O(1)} \frac{kd}{\eps}\cdot \log^2 m$.
\end{theorem}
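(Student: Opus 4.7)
I would mirror the two-lemma structure used for the $z = 1$ case in Section~\ref{sec:VC} (and Section~\ref{sec:Doubling}), generalizing each definition and each step using the generalized triangle inequality (Lemma~\ref{lm:triangle}). Specifically, I would (i) formulate a sufficient condition stating that if $D$ is capacity-respecting, an $\eps^{z}$-range space approximation, and an $\eps$-indexed-subset cost approximation of $X$ in the $z$-power sense (i.e.\ $|\cost_z(X_I,C)-\cost_z(D_I,C)|\leq \eps\cost_z(X_I,C)+\eps n r^{z}$ for every $I\subseteq[k_0]$ and $C$), then $D$ is an $(O(\eps),m,O(\eps n r^{z}))$-robust coreset of $X$ for robust \kzC; and (ii) show that Algorithm~\ref{alg_coreset} with sample size $s = (kd/\eps^{2z})\cdot\log^{O(1)}(kd/\eps)\cdot \log^{2}m$ produces such a $D$ with the claimed probability. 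The capacity-respecting property is preserved from the $z=1$ case since Algorithm~\ref{alg_coreset} is unchanged.

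\textbf{Generalized cost difference decomposition.} Fix $C$ and $t$, and form the three groups $A_1,A_2,A_3$ of clusters exactly as in Section~\ref{sec:lemma_main_tech}. The analogue of Lemma~\ref{lemma_inq} comes from the layer-cake identity $\dist^{z}(x,C)=\int_{0}^{\infty}\mathbf{1}[\dist(x,C)>v]\cdot z v^{z-1}dv$, giving
\[
\cost_z^{(t)}(X,C)=\cost_z(X_{A_1},C)+\int_{0}^{\infty}\bigl(|X_{A_3}|-(t-g)-|B(C,v)\cap X_{A_3}|\bigr)^{+}\cdot z v^{z-1}\,dv,
\]
and the same identity holds for $D_{A_3}$ by the capacity-respecting property. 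By Lemma~\ref{proof_fact}\ref{proof_fact3}\ref{proof_fact4} the integrand vanishes outside $v\in[(r^{\ast}-2r)^{+},r^{\ast}+2r]$, and on this window $z v^{z-1}\lesssim z(r^{\ast}+2r)^{z-1}$. Combining with the $\eps^{z}$-range space approximation gives an additive error of order $\eps^{z}n\cdot z(r^{\ast}+2r)^{z-1}\cdot r$. In the regime $r^{\ast}\lesssim r/\eps$ this is bounded by $\eps n r^{z}$ directly; in the regime $r^{\ast}\gg r/\eps$, I apply Lemma~\ref{lm:triangle} with $t=\eps$ to replace $(r^{\ast})^{z-1}r$ by $\eps\cdot (r^{\ast})^{z}$, absorbing the excess into a multiplicative $\eps\cost_z^{(t)}(X,C)$ term. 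Combined with the indexed-subset cost approximation applied to $A_1$, this yields the sufficient condition.

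\textbf{Performance analysis and chaining.} For the range space part I would reuse verbatim the proofs of Lemmas~\ref{lemma_range} and~\ref{lemma:eps_range}, only rescaling $\eps$ to $\eps^{z}$; the sample requirement becomes $s\gtrsim kd\eps^{-2z}\log^{O(1)}(kd/\eps)$. For the indexed-subset cost approximation, the distance vectors become $v_{i,J,C}^{S}(x)=\dist^{z}(x,C)\cdot\mathbf{1}[\pi(x)\in J\cap I_{i,C}]$. The net construction of Lemma~\ref{lemma_netsize_main} (resp.\ Lemma~\ref{lemma_netsize_doubling}) still applies, because rounding $\dist(x,C)$ to scale $\alpha 2^{i}r$ induces a rounding of $\dist^{z}(x,C)$ to scale $\alpha(2^{i}r)^{z}$ via Lemma~\ref{lm:triangle}, and the number of rounding levels remains $O(\alpha^{-1})$. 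Carrying through the chaining argument of Lemma~\ref{lemma_chaining} with the $z$-th power distances, the per-coordinate variance at level $h$ scales as $(2^{i-h}r)^{2z}$ instead of $(2^{i-h}r)^{2}$; balancing the entropy and variance terms then forces $s\gtrsim kd\eps^{-2z}$, matching the claimed coreset size.

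\textbf{Main obstacle.} The delicate point is that the naive application of the integral bound yields an additive error of $\eps^{z}n (r^{\ast})^{z-1}r$, which for $r^{\ast}\gg r$ can exceed the target $\eps n r^{z}$. Resolving this requires using the generalized triangle inequality to convert the ``far'' part of the error into a \emph{multiplicative} error of the form $\eps\cost_z(X_{A_3},C)$. This is the same mechanism that appears in the doubling analysis (Lemma~\ref{lemma:diffrerence:doubling}) but must now be integrated with weight $z v^{z-1}$ and tracked uniformly over all subsets $J\subseteq[k_0]$. A secondary subtlety is the $z$-dependent constants in Lemma~\ref{lm:triangle}, which are hidden in the $z^{O(z)}$ factor absorbed by the paper's convention that $z$ is a constant.
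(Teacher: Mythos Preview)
Your overall plan matches the paper's approach (Lemma~\ref{lemma:main:kzC} and the surrounding discussion). However, there is a genuine gap in your treatment of the integral term when $r^{\ast}\gg r/\eps$. You keep the range-space bound $\eps^{z}n$ on the integrand, obtaining at most $\eps^{z}n\cdot z(r^{\ast})^{z-1}\cdot 4r\le 2^{O(z)}\eps^{z+1}n(r^{\ast})^{z}$, and then propose to ``absorb'' this into $\eps\,\cost_z^{(t)}(X,C)$. That absorption would require $\eps^{z}n(r^{\ast})^{z}\lesssim\cost_z^{(t)}(X,C)$, which is false in general: take $k_0=2$ with $X_1\subset B(a_1,r)$ sitting on a point of $C$ and $X_2\subset B(a_2,r)$ with $\dist(a_2,C)\approx r^{\ast}$, and choose $t=|X_2|-1$; then $A_3=\{2\}$, $\cost_z^{(t)}(X,C)\approx (r^{\ast})^{z}$, yet $\eps^{z}n(r^{\ast})^{z}$ is larger by the unbounded factor $\eps^{z}n$. (Writing $\eps\cost_z(X_{A_3},C)$ instead, as in your obstacle paragraph, does not help either: that vanilla cost can exceed $\cost_z^{(t)}(X,C)$ and is not the quantity on the right-hand side of the robust-coreset guarantee.)

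The fix is to abandon the range-space bound in the far regime and use only the capacity-respecting structure. Since $\|D_{A_3}\|_1=|X_{A_3}|$, both $(|X_{A_3}|-(t-g)-|B(C,u)\cap X_{A_3}|)^{+}$ and its $D$-counterpart lie in $\bigl[0,(|X_{A_3}|-(t-g))^{+}\bigr]$, so the integral is at most $(|X_{A_3}|-(t-g))^{+}\bigl((r^{\ast}+2r)^{z}-(r^{\ast}-2r)^{z}\bigr)$. When $r^{\ast}>r/\eps$ one has $(r^{\ast}+2r)/(r^{\ast}-2r)\le 1+O(\eps)$, so the bracket is at most $2^{O(z)}\eps\,(r^{\ast}-2r)^{z}$; combined with $\cost_z^{(t)}(X,C)\ge (|X_{A_3}|-(t-g))^{+}(r^{\ast}-2r)^{z}$ (which follows from Lemma~\ref{proof_fact}\ref{proof_fact3}), this yields the desired $2^{O(z)}\eps\,\cost_z^{(t)}(X,C)$. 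The paper records exactly this lower bound in its proof of Lemma~\ref{lemma:diffrerence:doubling_kzC}, and it is what closes the final step of Lemma~\ref{lemma:main:kzC}.
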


Similar to Section \ref{sec_mainring}, we can prove Lemma \ref{lemma_mainring_kzC} via this theorem.

\begin{proof}[Proof of Lemma \ref{lemma_mainring_kzC}]

Let $X=X_1\cup X_2\cup \cdots \cup X_k$ be an $(r,k)$-regular instance where $X_i\subset B(a_i,r)$ and suppose that $\abs{X}=n$. Let $P_0=\{i\in [k]\mid 0<\abs{X_i} <\eps^z n/k\}$, $P_j=\{i\in [k]\mid 2^{j-1}\eps^z n/k \leq \abs{X_i} \leq 2^j \eps^z n/k    \}$ for all $j=1,2, \dots ,\ceil{\log(k/\eps^z)}$,  hence $P_j$'s form a partition of $[k]$.

Now we can construct the coreset $D$ for $X$. 

Let $Y_j=\bigcup_{i\in P_j} X_i$ for each $j=0, \dots, \ceil{\log(k/\eps^z)}$. By definition of $Y_j$ is an $(r,\abs{Y_j})$-regular instance for each $j\geq 1$. Applying Theorem~\ref{thm_main_kzC} to $Y_j$, we can construct an $(\eps,m,\abs{Y_j}\cdot \eps r^z)$-robust coreset $D_j$ for $Y_j$ with probability at least $1-O(\frac{1}{\log(k/\eps)\log{mk}})$ with $\abs{D_j}=\tilde{O}(\frac{kd}{\eps^z})\cdot \log^2{m}$. We include all $D_j$ in coreset $D$.

Next, it suffices to handle $Y_0$. We initialize $D_0 = \emptyset$, then for each $i\in P_0$, select an arbitrary point $x_i\in X_i$ and add $(x_i,\abs{X_i})$ to $D_0$. We add $D_0$ to $D$.

For every $i\in P_0$ and every $x \in X_i$, $C\in \binom{M}{k}$, we have from Lemma~\ref{lm:triangle} that
\begin{align*}
    \abs{\dist^z(x,C)-\dist^z(x_i,C)} \leq \eps \dist^z(x,C) + (\frac{3z}{\eps})^{z-1} \dist^z(x_i,x) 
    \leq \eps \dist^z(x,C) + (\frac{3z}{\eps})^{z-1}(2r)^z.
\end{align*}
It then follows that $D_0$ is an $(\eps,m,(\eps\cdot (3z)^{z-1}\cdot 2^{z}))$-robust coreset of $Y_0$.

By mergeablity of robust coreset, we conclude that $D$ is an $(\eps,m, (6z)^z\eps n r^z)$-robust coreset of $X$ for robust \kzC. The size of $D$ satisfies that
\[
    \norm{D}_0=k+\ceil{\log{\frac{k}{\eps^z}}}\cdot \tilde{O}(\frac{kd}{\eps^{2z}})\cdot \log ^2 m =\tilde{O}(\frac{kd}{\eps^{2z}})\cdot \log ^2 m.
\]

The overall failure probability is most $\ceil{\log{\frac{k}{\eps^z}}}\cdot O(\frac{1}{\log(k/\eps)\cdot \log(mk)}) =O(\frac{1}{\log {(mk)}})$. 
\end{proof}

It remains to prove Theorem \ref{thm_main_kzC}.
The algorithm is almost identical to Algorithm \ref{alg_coreset} except that the sample size is changed to $s = \frac{kd}{\eps^{2z}}\cdot \log^{O(1)} \frac{kd}{\eps}\cdot \log^2 m$.
Since the sample size is at least $\tilde{O}(\frac{kd}{\eps^2})$, the following properties still holds:
\begin{itemize}[itemsep=0pt]
\item The output $D$ is capacity-respecting.
\item $\mathbb{P}[\text{$\norm{U\cap X_i}_1 \in (1\pm \eps)\cdot n_i$ for all $i\in [k_0]$}]\geq 1 - O(\frac{\epsilon}{sk\log m})$, i.e., Lemma \ref{lemma_event} holds.
\end{itemize}

We use the definition of range space approximation as given in Definition \ref{def:eps_smoothed_range}.  
Additionally, the definition of index-subset cost approximation remains the same as in Definition \ref{def_subset}, except that \( \cost \) is replaced with \( \cost_z \).  
With these definitions, we establish the following lemma, which generalizes Lemma \ref{lemma:main:doubling}.
Note that the original distance function \( \dist \) is a 0-smoothed distance function.  
As a result, the following lemma also extends Lemma \ref{lemma_main_tech}.

\begin{lemma}[Sufficient condition for general $z\geq 1$] \label{lemma:main:kzC}
    Let $X$ be an $(r,k_0)$-regular instance of size $n$ and $D$ is a weighted subset of $X$. 
    If $D$ is simultaneously capacity-respecting, an $\eps^z$-range approximation with respect to an $\frac{\eps}{10z}$-smoothed distance function $\delta$, and an $\eps$-indexed-subset cost approximation, then $D$ is an $(O(\eps), m, O(\eps n r^z))$-robust coreset of $X$.
\end{lemma}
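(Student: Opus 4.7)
The plan is to mirror the proof of Lemma~\ref{lemma:main:doubling}, the $z=1$ case for smoothed distances, and upgrade it to general $z\geq 1$ using the integral representation $\cost_z(S,C)=\int_0^\infty |S\setminus B(C,v)|\,zv^{z-1}dv$ together with the generalized triangle inequality (Lemma~\ref{lm:triangle}). First I would reuse the same partition of the clusters into $A_1$ (entirely inliers of the true-distance ball $B(C,r^\ast)$), $A_2$ (entirely outliers) and $A_3$ (the boundary clusters), where $r^\ast=\inf\{r>0:|B(C,r)\cap X|\geq n-t\}$. The analog of Lemma~\ref{proof_fact} goes through verbatim because it only uses capacity respecting and the geometric definition of $A_i$, so $|X_{A_i}|=\|D_{A_i}\|_1$ for $i=1,2,3$, the points of $X_{A_1}\cup D_{A_1}$ are inliers of $\cost_z^{(t)}(\cdot,C)$, those of $X_{A_2}\cup D_{A_2}$ are outliers, and both $X_{A_3}$ and $D_{A_3}$ lie in the ring $B(C,r^\ast+2r)\setminus B(C,(r^\ast-2r)^+)$. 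With $g=|X_{A_2}|$, this gives the decomposition
\[
|\cost_z^{(t)}(X,C)-\cost_z^{(t)}(D,C)|\leq |\cost_z(X_{A_1},C)-\cost_z(D_{A_1},C)|+|\cost_z^{(t-g)}(X_{A_3},C)-\cost_z^{(t-g)}(D_{A_3},C)|,
\]
where the first summand is at most $\eps\cost_z(X_{A_1},C)+\eps n r^z$ directly by the $\eps$-indexed-subset cost approximation hypothesis (now with $\cost_z$ in place of $\cost$).

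To bound the $A_3$ summand, I would first transfer from $d$ to $\delta$: since $(1\pm\eps/(10z))^z\subseteq 1\pm O(\eps)$, this swap costs only $O(\eps)\cost_z^{(t-g)}(X_{A_3},C)+O(\eps)\cost_z^{(t-g)}(D_{A_3},C)$ of multiplicative error on each side (mirroring the $(1\pm\eps)$ step in the proof of Lemma~\ref{lemma:diffrerence:doubling}). After this, I would expand both $\cost_z^{(t-g,\delta)}(X_{A_3},C)$ and $\cost_z^{(t-g,\delta)}(D_{A_3},C)$ through the integral representation with weight $zv^{z-1}$. The two integrands coincide outside the (slightly $\delta$-distorted) ring $[(r^\ast-2r)^+,r^\ast+2r]$, and on this ring the $\eps^z$-range space approximation with respect to $\delta$ gives a pointwise bound of $\eps^z n$ on the count difference $\bigl||X_{A_3}\cap B^\delta(C,v)|-\|D_{A_3}\cap B^\delta(C,v)\|_1\bigr|$. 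The endpoint distortion coming from the $\delta$-smoothing is absorbed exactly as in Lemma~\ref{lemma:diffrerence:doubling}, contributing another $O(\eps)\cost_z^{(t)}+O(\eps n r^z)$.

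The remaining task---and the main obstacle---is bounding $\eps^z n\int_{\text{ring}} zv^{z-1}dv$. In the easy regime $r^\ast\leq 4r$, the integral is $O_z(r^z)$ and we obtain additive $O_z(\eps^z n r^z)\leq O(\eps n r^z)$. In the hard regime $r^\ast>4r$, the integral equals $(r^\ast+2r)^z-(r^\ast-2r)^z=O_z((r^\ast)^{z-1}r)$, and the difficulty is to convert this into the allowed form $O(\eps)\cost_z^{(t)}(X,C)+O(\eps n r^z)$. I would apply Young's inequality $(r^\ast)^{z-1}r\lesssim \lambda(r^\ast)^z+\lambda^{-(z-1)}r^z$ with $\lambda=\Theta(\eps^{(z-1)/z})$, so that the second piece yields $O(\eps nr^z)$, and absorb the first piece through the lower bound $\cost_z^{(t-g)}(X_{A_3},C)\geq (|X_{A_3}|-(t-g))(r^\ast/2)^z$, which holds because in this regime every point of $X_{A_3}$ has distance at least $r^\ast/2$ from $C$. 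When $|X_{A_3}|-(t-g)$ is too small for this absorption to work, I would instead use the trivial bound $\cost_z^{(t-g)}(\cdot,C)\leq (|X_{A_3}|-(t-g))(r^\ast+2r)^z$ on both sides to dominate the difference directly, and rely on the $(r,k_0)$-regularity (each cluster has $\geq n/(2k_0)$ points) plus the presence of at least one inlier at distance $r^\ast$ to absorb the residual $(r^\ast)^z$ term into $\eps\cost_z^{(t)}(X,C)$. Summing the $A_1$ error, the smoothing error, and the $A_3$ integral bound, then rescaling $\eps$ by a constant depending only on $z$, delivers the claimed $(O(\eps),m,O(\eps nr^z))$-robust coreset guarantee.
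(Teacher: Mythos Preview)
Your approach mirrors the paper's proof (Lemma~\ref{lemma:diffrerence:doubling_kzC} together with the concluding paragraph of Lemma~\ref{lemma:main:kzC}): the $A_1/A_2/A_3$ partition, the indexed-subset bound on $A_1$, the passage to the smoothed distance $\delta$ on $A_3$, the integral representation with weight $zu^{z-1}$, and the endpoint-smoothing absorption are all the same. The only substantive work beyond the $z=1$ case is bounding
\[
\eps^z n\int_{(r^\ast-2r)^+}^{r^\ast+2r} zu^{z-1}\,du \;=\; \eps^z n\bigl[(r^\ast+2r)^z-((r^\ast-2r)^+)^z\bigr],
\]
and here your fallback argument has a gap. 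When $p:=|X_{A_3}|-(t-g)$ is too small to absorb, you propose to bound both $\cost_z^{(t-g)}(X_{A_3},C)$ and $\cost_z^{(t-g)}(D_{A_3},C)$ from above by $p(r^\ast+2r)^z$ and then absorb this into $\eps\cost_z^{(t)}(X,C)$ via ``one inlier at distance $r^\ast$'', i.e.\ $\cost_z^{(t)}(X,C)\gtrsim (r^\ast)^z$. That would require $p\lesssim\eps$, but $p\ge 1$ whenever $A_3\neq\emptyset$; the regularity bound $|X_j|\ge n/(2k_0)$ does not help, since nothing prevents almost all of $X_{A_3}$ from being outliers.

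The clean fix, which also renders the Young detour unnecessary, is to split on $r^\ast$ rather than on $p$. If $r^\ast\le c_z r/\eps$ for a suitable $c_z=2^{O(z)}$, then by the mean-value theorem $(r^\ast+2r)^z-((r^\ast-2r)^+)^z\le 4zr(r^\ast+2r)^{z-1}\le 2^{O(z)}r^z\eps^{-(z-1)}$, and multiplying by $\eps^z n$ gives the desired $2^{O(z)}\eps n r^z$. If instead $r^\ast>c_z r/\eps$, use the \emph{two-sided} trivial bound: since every point of $X_{A_3}\cup D_{A_3}$ has distance in $[r^\ast-2r,r^\ast+2r]$ and $\|D_{A_3}\|_1=|X_{A_3}|$, both $\cost_z^{(t-g)}(X_{A_3},C)$ and $\cost_z^{(t-g)}(D_{A_3},C)$ lie in $[\,p(r^\ast-2r)^z,\;p(r^\ast+2r)^z\,]$, so their difference is at most $p[(r^\ast+2r)^z-(r^\ast-2r)^z]\le 2^{O(z)}(r/r^\ast)\cdot p(r^\ast-2r)^z\le 2^{O(z)}(r/r^\ast)\cost_z^{(t)}(X,C)\le O(\eps)\cost_z^{(t)}(X,C)$. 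Note that this large-$r^\ast$ branch does not use the range-space hypothesis at all, which also explains why $\eps^z$ (rather than $\eps$) is exactly the right range-space accuracy: it is consumed precisely in the small-$r^\ast$ branch, where $\eps^z\cdot r^z\eps^{-(z-1)}=\eps r^z$.
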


\begin{proof}
We suppose that $D$ is capacity-respecting, an $\eps^z$-range approximation with respect to an $\frac{\eps}{10z}$-smoothed distance function $\delta$, and an $\eps$-indexed-subset cost approximation of $X$. 

Fix a center set $C\in \binom{M}{k}$.
Recall that we define $r^\ast := \inf \{r>0\;\big|\; |B(C,r)\cap X|\geq n-t\}$. 
Note that for any center set \( C \), the outlier sets within \( X \) (or \( D \)) are identical with respect to both \(\cost^{(t)}(X,C)\) and \(\cost_z^{(t)}(X,C)\) (or similarly, \(\cost^{(t)}(D,C)\) and \(\cost_z^{(t)}(D,C)\)).
Thus, we can again partition \([k_0]\) into three sets \(A_1, A_2, A_3\) according to $r^\ast$, each retaining the same geometric properties as stated in Lemma \ref{proof_fact}.
We first have the following lemma that generalizes Lemma \ref{lemma:diffrerence:doubling}.

\begin{lemma}[Cost difference decomposition for $\delta$ for general $z\geq 1$]
\label{lemma:diffrerence:doubling_kzC}
If $D$ is capacity-respecting, it holds that 
\begin{equation*}
	\begin{aligned}
	    \abs{ \cost[t]_z(X,C)-\cost[t]_z(D,C) } &\leq \abs{ \cost_z(X_{A_1},C)-\cost_z(D_{A_1},C) } \\
	    &\quad + \int_{(r^\ast-2r)^+}^{r^\ast+2r} \abs*{ |B^\delta(C,u)\cap X_{A_3}|-\norm[\big]{B^\delta(C,u)\cap D_{A_3}}_1 } z u^{z-1} du \\
	    &\quad +\eps 2^{O(z)}\cost^{(t)}_z(X,C) +\eps 2^{O(z)}\cost[t]_z(D,C) + \eps 2^{O(z)} n r^z.
	\end{aligned}
\end{equation*}
\end{lemma}

\begin{proof}
Let $g=\abs{ X_{A_2}\cap B(C,r^\ast) }$, then $g\leq t$.
Since $\delta$ is an $\frac{\eps}{10z}$-smoothed distance function, we know that for any $x\in X$,
\[
(1-\eps) \dist^z(x,C)\leq (1-\frac{\eps}{10z})^z \dist^z(x,C) \leq \delta^z(x,C) \leq (1+\frac{\eps}{10z})^z \dist^z(x,C) \leq (1+\eps) \dist^z(x,C).
\]
Then using the same argument for Lemma \ref{lemma:diffrerence:doubling}, we obtain the following inequality:
\begin{align*}
\abs{ \cost[t]_z(X,C)-\cost[t]_z(D,C) } &\leq \abs{ \cost_z(X_{A_1},C)-\cost_z(D_{A_1},C) } \\
& \quad + (1+\eps)\abs{ \cost[t-g, \delta]_z(X_{A_3},C)-\cost[t-g, \delta]_z(D_{A_3},C) }.
\end{align*}
Thus, it suffices to prove the following inequality
\begin{align} \label{eq:cost_kzC}
\begin{aligned}
& \quad \abs{ \cost[t-g, \delta]_z(X_{A_3},C)-\cost[t-g, \delta]_z(D_{A_3},C) } \\
\leq & \quad \int_{(r^\ast-2r)^+}^{r^\ast+2r} \abs*{ |B^\delta(C,u)\cap X_{A_3}|-\norm[\big]{B^\delta(C,u)\cap D_{A_3}}_1 } z u^{z-1} du \\
	    &\quad +\eps 2^{O(z)} \cost^{(t)}_z(X,C) +\eps  2^{O(z)} \cost[t]_z(D,C) + \eps 2^{O(z)} n r^z.
\end{aligned}
\end{align}
To prove this, we first note that
	\begin{align*}
    	\cost[t-g,\delta]_z(X_{A_3},C)&= \int_{0}^{\infty} (\abs{X_{A_3}} - (t-g)-\vert B^\delta(C,u)\cap X_{A_3} \vert )^+ z u^{z-1} du, \\
	    \cost[t-g,\delta]_z (D_{A_3},C )&=\int_0^\infty (\norm{D_{A_3}}_1 - (t-g) - \norm{B^\delta(C,u)\cap D_{A_3}}_1)^+ z u^{z-1} du.
	\end{align*}
By a similar argument for Lemma \ref{lemma:diffrerence:doubling}, we obtain that
\begin{align}\label{eq:kzC_0}
\begin{aligned}
& \quad \abs{ \cost[t-g, \delta]_z(X_{A_3},C)-\cost[t-g, \delta]_z(D_{A_3},C) } \\
= & \quad \int_{(1-\frac{\eps}{10z})(r^\ast-2r)^+}^{(1+\frac{\eps}{10z})(r^\ast+2r)} \abs*{ |B^\delta(C,u)\cap X_{A_3}|-\norm[\big]{B^\delta(C,u)\cap D_{A_3}}_1 } z u^{z-1} du \\
\leq & \quad \int_{(r^\ast-2r)^+}^{r^\ast+2r} \abs*{ |B^\delta(C,u)\cap X_{A_3}|-\norm[\big]{B^\delta(C,u)\cap D_{A_3}}_1 } z u^{z-1} du \\
	    &\quad + \left((\abs{X_{A_3}}-(t-g))^+ + (\norm{D_{A_3}}_1 - (t-g))^+ \right) \left((1+\frac{\eps}{10z})^z(r^\ast +2r)^z - (r^\ast +2r)^z \right) \\
        & \quad + \left((\abs{X_{A_3}}-(t-g))^+ + (\norm{D_{A_3}}_1 - (t-g))^+ \right) \left((r^\ast -2r)^z - (1-\frac{\eps}{10z})^z(r^\ast - 2r)^z \right)
\end{aligned}
\end{align}

Note that 
\begin{align*}
\cost^{(t)}_z(X,C) \geq \cost[t-g, \delta]_z(X_{A_3},C) \geq & \ (\abs{X_{A_3}}-(t-g))^+ \cdot ((r^\ast - 2r)^+)^z \\
\cost^{(t)}_z(D,C) \geq \cost[t-g, \delta]_z(D_{A_3},C) \geq & \ \norm{D_{A_3}}_1 - (t-g))^+ \cdot ((r^\ast - 2r)^+)^z.
\end{align*}
Thus, given Inequality \eqref{eq:kzC_0}, it suffices to establish the following inequalities to prove Inequality \eqref{eq:cost_kzC}:
\begin{align*}
(1+\frac{\eps}{10z})^z(r^\ast +2r)^z - (r^\ast + 2r)^z \leq & \ \eps 2^{O(z)} (r^\ast - 2r)^z + \eps 2^{O(z)} r^z \\
(r^\ast -2r)^z - (1-\frac{\eps}{10z})^z(r^\ast - 2r)^z \leq & \ \eps 2^{O(z)} ((r^\ast - 2r)^+)^z + \eps 2^{O(z)} r^z.
\end{align*}
The following inequality follows from 
\[
(1+\frac{\eps}{10z})^z(r^\ast +2r)^z - (r^\ast +2r)^z \leq \eps (r^\ast +2r)^z = \eps (r^\ast-2r + 4r)^z 
\leq  \eps 2^{O(z)} (r^\ast - 2r)^z + \eps 2^{O(z)} r^z,
\]
and the second inequality can be proved similarly.
Thus, we complete the proof of Lemma \ref{lemma:diffrerence:doubling_kzC}.
\end{proof}

Now we go back to prove Lemma \ref{lemma:main:kzC}.
By the definition of range space approximation and indexed-subset cost approximation, we obtain that
    \[
	    \abs{ \cost[t]_z(X,C)-\cost[t]_z(D,C)} \leq \eps 2^{O(z)}\cdot (\cost[t]_z(X,C)+\cost[t]_z(D,C))+ \eps 2^{O(z)} nr^z,
    \]
    whence it follows that
    \[
	    \abs{ \cost[t]_z(X,C)-\cost[t]_z(D,C)} \leq \eps 2^{O(z)} \cost[t](X,C) + \eps 2^{O(z)} nr^z,
    \]
    which completes the proof of Lemma~\ref{lemma:main:kzC} after rescaling \(\eps\) by a factor of \(1/2^{O(z)}\).
\end{proof}

Similarly to the analysis in Section \ref{sec:lemma_main:range:doubling}, it is sufficient to show that with high probability, \( U \) is an \( \eps^z \)-range space approximation and an \( \eps \)-indexed-subset cost approximation.  
With this, Theorem \ref{thm_main_kzC} follows directly as a corollary of Lemma \ref{lemma:main:kzC}.
We consider two cases: $d = d_{\VC}$ and $d = d_D$.

\paragraph{VC instances: $d = d_{\VC}$.}
For the range space approximation, the key is to establish a generalization of Lemma \ref{lemma_range}, specifically, that with probability at least \( 1 - O\big(\frac{1}{\log \frac{k}{\eps}\cdot \log(mk)}\big) \), \( U \) is an \( \eps^z \)-range space approximation of \( X \).  
This result follows directly since the coreset size is increased to \( \tilde{O}\big(\frac{k}{\eps^{2z}}\big) \), and the remainder of the proof mirrors that of Section \ref{sec:alg_guarantee_proof}, substituting \( \eps \) with \( \eps^z \).

Hence, it remains to prove Lemma \ref{lemma_subset} (replacing $\cost$ by $\cost_z$), i.e., with probability $1-O\big(\frac{1}{\log(k/\eps)\cdot \log (mk)}\big)$, $U$ is an $\eps$-indexed-subset cost approximation.
Similarly, the proof employs a chaining argument developed in recent works \cite{fefferman2016testing,Cohen2022Towards,Cohen25}. 
Below, we only highlight the key technical differences from these works.

For the net construction, we continue to use the notation of the distance vector \( v^S_{i,J,C} \) defined in Equation~\eqref{eq:vS} by replacing $\dist$ with $\dist^z$.  
The only modification is that we set \( T = \lceil z\log(16z/\eps) \rceil \), define \( A_0 = [0, 2^{1/z} r] \), and \( A_i = (2^{i/z} r, 2^{(i+1)/z} r] \) for each \( i \in [T] \).  
We still define $I_{i,C}=\{j\in [k_0]\mid \dist(a_j,C)\in A_i\}$ for each $i\in \{0,\dots,T\}$.
This adjustment ensures that for every $j\notin \bigcup_{i=0}^T I_{i,C}$, we have $\dist(a_j, C) > \frac{16z r}{\eps}$.
This implies that for any $x,y\in B(a_j, r)$, $\dist^z(x, C) \in (1\pm \eps)\dist^z(y, C)$.
Thus, such a ball only induces a multiplicative error, which can be safely ignored in the chaining argument.
Thus, we only need to consider the following collections of distance vectors: for $0\leq i \leq T$, define the level-$i$ distance vector set $V^S_{i}=\{v^S_{i,J,C}\mid J\subseteq [k_0], C\in \binom{M}{k}\}$.
Consequently, Lemma \ref{lemma_relate_main} still holds, and it suffices to prove Inequality \eqref{inq_chaining_main}.

To this end, we adapt the notion of nets as follows: A subset $N\subseteq V^S_i$ is called an $\alpha$-net of $V^S_i$ if for every $i,J,C$, there exists a vector $v\in N$ such that
\[
\begin{cases}
|v(x)-v^S_{i,J,C}(x)|\leq \alpha\cdot 2^{i+2}\cdot (r\cdot  (v^S_{i,J,C}(x))^{\frac{z-1}{z}} + r^z) & v^S_{i,J,C}(x)>0\\
0 & v^S_{i,J,C}(x)=0
\end{cases}
\]
The key is to bound the net size by the following lemma, which generalizes Lemma~\ref{lemma_netsize_main}.

\begin{lemma}[Bounding net size on VC instances for general $z\geq 1$] 
\label{lemma_netsize_main_kzC}
For every $i\in \{0,\dots,T\}$, $\alpha > 0$ and every subset $S\subseteq X$ with $\abs{S}\geq 2$, there exists an $\alpha$-net $N^S_{\alpha,i}$ of $V^S_i$ such that
\[
\log \abs[big]{N^S_{\alpha,i}} \lesssim \alpha^{-1} d_{\VC}\cdot  k\log k \cdot \log\abs{S}.
\]
Furthermore, if $\alpha\geq 2^{-i}\cdot z2^{z-1}$, it is possible to ensure that $N_{\alpha,i}^S$ consists of piecewise constant vectors. That is, every $v\in N_{\alpha,i}^S$ satisfies that $v(x)$ is constant for all $x\in X_j$ for each $j\in [k_0]$.
\end{lemma}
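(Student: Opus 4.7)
The plan is to adapt the $z=1$ rounding scheme of Lemma \ref{lemma_netsize_main} to the $z$-th power setting, with the rounding scale measured in units of $r^z$ rather than $r$. Set $\beta_i := \alpha \cdot 2^{i+2} r^z$, and for each $v^S_{i,J,C}\in V^S_i$ define its coordinate-wise rounding
\[
    \bar v(x) := \left\lceil \frac{v^S_{i,J,C}(x)}{\beta_i}\right\rceil \cdot \beta_i,
\]
and take $N^S_{\alpha,i}$ to be the image of this map. If $v^S_{i,J,C}(x) > 0$ then $|\bar v(x) - v^S_{i,J,C}(x)| \leq \beta_i = \alpha 2^{i+2} r^z \leq \alpha 2^{i+2}\bigl(r\cdot v^S_{i,J,C}(x)^{(z-1)/z} + r^z\bigr)$, and if $v^S_{i,J,C}(x) = 0$ then $\bar v(x) = 0$ by construction, so $N^S_{\alpha,i}$ is an $\alpha$-net.

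Next I will bound $|N^S_{\alpha,i}|$ by reconstructing each $\bar v$ from its super-level sets, mirroring the $z=1$ argument. Whenever $v^S_{i,J,C}(x) > 0$ we have $\pi(x) \in J \cap I_{i,C}$ and hence $\dist(x,C) \leq \dist(a_{\pi(x)}, C) + r \leq 2\cdot 2^{(i+1)/z} r$, giving $v^S_{i,J,C}(x) \leq 2^{z+i+1} r^z$. Hence $\bar v$ takes values in $\{0, \beta_i, \dots, t\beta_i\}$ with $t \leq 2^{z-1}/\alpha$. The set $B_0 = \{x \in S : \bar v(x) = 0\} = S \setminus X_{J \cap I_{i,C}}$ is determined by $J \cap I_{i,C} \subseteq [k_0]$, giving at most $2^{k_0}$ choices; for $1 \leq j \leq t$,
\[
    B_j = (S \setminus X_{J \cap I_{i,C}}) \cup \bigl(X_{J \cap I_{i,C}} \cap B(C, (j\beta_i)^{1/z})\bigr),
\]
and each ball intersection $B(C,\cdot) \cap S$ admits at most $|S|^{O(d k \log k)}$ realizations by the standard $O(dk\log k)$ VC-dimension bound on the $k$-ball range space \cite{eisenstat2007vc}. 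Multiplying across $j$ and absorbing factors depending only on $z$ into $\lesssim$ yields $\log|N^S_{\alpha,i}| \lesssim \alpha^{-1} d k \log k \log|S|$.

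For the piecewise constant refinement under $\alpha \geq 2^{-i}\cdot z 2^{z-1}$, I first replace $v^S_{i,J,C}(x)$ by the cluster-wise constant $\tilde v(x) := \dist^z(a_{\pi(x)}, C)\cdot \mathbf{1}[\pi(x) \in J \cap I_{i,C}]$, then round $\tilde v$ in the same fashion. The elementary inequality $|d_1^z - d_2^z| \leq z \max(d_1, d_2)^{z-1}|d_1 - d_2|$ applied with $|d_1 - d_2| \leq \dist(x,a_{\pi(x)}) \leq r$ gives $|\tilde v(x) - v^S_{i,J,C}(x)| \leq z 2^{z-1}\bigl(r\cdot v^S_{i,J,C}(x)^{(z-1)/z} + r^z\bigr)$, and the assumed lower bound on $\alpha$ forces $z 2^{z-1} \leq \alpha 2^{i+1}$, so this cluster-wise approximation lies within $\alpha 2^{i+1}(\cdot)$ of the target budget. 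Rounding $\tilde v$ up to multiples of $\alpha 2^{i+1} r^z$ adds at most another $\alpha 2^{i+1} r^z$ of error, keeping the total within the required $\alpha 2^{i+2}(\cdot)$ budget; the rounded vector is piecewise constant by construction, and the same super-level-set counting yields the same size bound.

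The main obstacle is choosing the rounding scale correctly. For $z=1$ an additive rounding of distances at scale $\alpha 2^i r$ suffices and its level sets are balls. For general $z$ the generalized triangle inequality (Lemma \ref{lm:triangle}) produces an error with both an additive $r^z$ and a multiplicative $v(x)^{(z-1)/z}$ part, and a naive $r$-scale rounding of $\dist(x,C)$ does not simultaneously control both. Rounding the $z$-th power $v(x) = \dist^z(x,C)$ at scale $\beta_i \asymp \alpha 2^i r^z$ resolves this because the sub-level sets $\{v(x) \leq j\beta_i\}$ remain balls $B(C, (j\beta_i)^{1/z})$, allowing the ball-range-space VC counting of the $z=1$ proof to carry through with only a $2^{O(z)}$ overhead that is absorbed into $\lesssim$ under our constant-$z$ convention.
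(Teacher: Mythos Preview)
Your proposal is correct and follows essentially the same approach as the paper: both round $v^S_{i,J,C}(x)=\dist^z(x,C)$ to a grid of scale $\Theta(\alpha 2^i r^z)$, observe that the sub-level sets of the rounded vector are still $k$-ball intersections with $S$ (so the VC bound of \cite{eisenstat2007vc} applies exactly as for $z=1$), and for the piecewise-constant refinement replace $\dist^z(x,C)$ by $\dist^z(a_{\pi(x)},C)$ using the mean-value-type inequality $|d_1^z-d_2^z|\le z\max(d_1,d_2)^{z-1}|d_1-d_2|$. The only cosmetic differences are your choice of $\beta_i=\alpha 2^{i+2}r^z$ versus the paper's $\alpha 2^i r^z$, and that you explicitly track the $2^{O(z)}$ factor in $t$ whereas the paper silently absorbs it; neither affects the bound since $z$ is a constant.
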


\begin{proof}

The net construction is similar to the construction in Lemma~\ref{lemma_netsize}. 
The only difference is that we let $\beta = \alpha 
2^{i} r^z$. 
By definition it is clear that $\tilde{v}_{iJ,C}^S(x) \in \{0,\beta,...,t\beta  \}$ for $t=\ceil{2 /\alpha}$. 
By the same argument, we can still construct an $\alpha$-net $N_{\alpha,i}^S$ for $V_i^S$ with size
\[
\abs{N_{\alpha,i}^S} \leq \abs{S}^{O(\alpha^{-1} kd \log k)} \leq \abs{S}^{O(\alpha^{-1} kd \log k)}.
\]
%

%
Next, we prove that if $\alpha\geq 2^{-i}\cdot z2^{z-2}$, it is possible to ensure that $N_{\alpha,i}^S$ consists of piecewise constant vectors. 
Define for $u\in N_{\alpha,i}^S$
\[
\tilde{u}(x)=
\begin{cases}
    \ceil{\frac{\dist^z(\alpha_{\pi(x)},C)}{\beta}}\cdot \beta \ &u(x)>0 \\
    0 \ &u(x)=0
\end{cases}
\]
For every $v_{i,J,C}^S\in V_i^S$ there exists $u\in N_{\alpha,i}^S$ such that $\abs{u(x)-v_{i,J,C}^S(x)}\leq \beta $ for all $x\in S$. Then 
\begin{align*}
    \abs{v_{i,J,C}^S(x) - \tilde{u}(x)} &\leq \abs{\dist^z(x,C) - \dist^z(a_{\pi(x)},C)} + \abs{\dist^z(a_{\pi(x)},C) - \tilde{u}(x)} \\
    &\leq \abs{\dist^z(x,C)-\dist^z(a_{\pi(x)},C)} + \beta.
\end{align*}
\sloppy
Now we bound $\abs{\dist^z(x,C) - \dist^z(a_{\pi(x)},C)}$. By the mean-value theorem, there exists $\xi\in (\dist(x,C), \dist(a_{\pi(x)},C)$ such that
\[
	\dist^z(x,C) - \dist^z(a_{\pi(x)},C) = z\cdot (\dist(x,C) - \dist(a_{\pi(x)},C)) \cdot \xi^{z-1},
\]
hence it follows that
\begin{equation}\label{eqn:dist^z_to_center_difference_kzC}
	\abs{\dist^z(x,C)-\dist^z(a_{\pi(x)},C)}
    \leq zr (\dist(x,C) + r)^{z-1}\\
    \leq z2^{z-1}(r\dist^{z-1}(x,C) + r^z).
\end{equation}
Thus, by our assumption on $\alpha$,
\[
	\abs{v_{i,J,C}^S(x) - \tilde{u}(x)} \leq \alpha 2^{i+2} (r (v_{i,J,C}^S(x))^{\frac{z-1}{z}} + r^z ).
\]
The remainder of the proof follows exactly as in the proof of Lemma~\ref{lemma_netsize}.
\end{proof}

We combine this net size with a chaining argument to obtain the following inequality, which is analogous to Inequality \eqref{inq_chaining_main}.  
\begin{equation*} 
\sup_{J,C} \frac{n}{s}\cdot \abs*{ \frac{\sum_{x\in U} v^U_{i,J,C}(x)-\E \sum_{x\in U} v^U_{i,J,C}(x)}{(1+2^i|J_{i,C}|/k_0)\cdot nr^z} }
\leq \frac{\eps}{\log(z/\eps)},
\end{equation*}
This can be shown using a similar approach as in the proof of Lemma~\ref{lemma_chaining}. 
The bound for the expected value on the right-hand side of \eqref{eqn:chaining_error_first_term_aux} (with a modified expression) becomes \( \sqrt{s/2^{i/z}} \) (cf.\ \cite[Section 4.2]{Cohen25}) while the bound for the expected value on the rightmost-hand of \eqref{eqn:chaining_error_second_term_semifinal} remains unchanged. 
In the end, 
we can conclude that $|D| = |U| = \tilde{O}( k d \eps^{-z-1})$ suffices to ensure that $U$ is an $\eps$-indexed-subset cost approximation.

\paragraph{Doubling instances: $d = d_D$.}
The argument is similar to the VC instances.
We can prove Lemma \ref{lemma:eps_range} with the error term $\eps^z$ for the range space approximation.
Again, it follows directly since the coreset size is increased to \( \tilde{O}\big(\frac{k}{\eps^{2z}}\big) \).
For the indexed-subset cost approximation, it suffices to bound the net size.
To this end, we give the following lemma, which generalizes Lemma \ref{lemma_netsize_doubling}.

\begin{lemma}[Bounding net size on doubling instances for general $z\geq 1$] \label{lemma_netsize_doubling_kzC}
    For every $i\in [T]$, $\alpha\in (0,1)$ and every subset $S\subseteq X $ with $\abs{S}\geq 2$, there exists an $\alpha$-net $N_{\alpha,i}^{S}$ for $V_{i}^S$ such that
    \[
        \log {\abs{N_{\alpha,i}^S}}\lesssim k d \log (zk/\alpha).
    \]
    Furthermore, if $\alpha \geq 2^{-i} \cdot 6^{z-1}$, it is possible to ensure that $N_{\alpha,i}^S$ consists of piecewise constant vectors. That is, every $v\in N_{\alpha,i}^S$ satisfies that $v(x)$ is constant for all $x\in X_j$ for each $j\in [k_0]$.
\end{lemma}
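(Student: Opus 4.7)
The plan is to mirror the structure of the $z=1$ argument in Lemma~\ref{lemma_netsize_doubling} and carefully propagate the power of $z$ via the mean-value theorem (in the spirit of~\eqref{eqn:dist^z_to_center_difference_kzC}).

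\textbf{Net construction.} For each $i\in\{0,\ldots,T\}$ I will set granularity $\alpha_i:=\alpha\cdot 2^i r/z$. For each $j\in[k_0]$, the doubling property supplies an $\alpha_i$-net $N_{i,j}$ of $B(a_j,2^{(i+1)/z}r)$ with
\[
|N_{i,j}|\leq (2^{(i+1)/z}r/\alpha_i)^d = (z\cdot 2^{(i+1)/z-i}/\alpha)^d \leq (2z/\alpha)^d,
\]
where the last step uses $2^{(i+1)/z-i}\leq 2^{1/z}\leq 2$ for $z\geq 1$. For a fixed pair $(J,C)$ and each $j\in J$, let $c_j\in C$ be the center closest to $a_j$. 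When $j\in J\cap I_{i,C}$, the definition of $I_{i,C}$ forces $c_j\in B(a_j,2^{(i+1)/z}r)$, so I may pick $\bar c_j\in N_{i,j}$ with $\dist(c_j,\bar c_j)\leq \alpha_i$. Define $\bar v_{i,J,C}(x):=\dist^z(x,\bar C)$ whenever $v_{i,J,C}^S(x)>0$ and $0$ otherwise, where $\bar C:=\{\bar c_j:j\in J\}$.

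\textbf{Error analysis (the crux).} As in the $z=1$ case, one checks $|\dist(x,C)-\dist(x,\bar C)|\lesssim r+\alpha_i$ and that both $\dist(x,C)$ and $\dist(x,\bar C)$ are at most $O(2^{(i+1)/z}r)$. Applying the mean-value theorem to $t\mapsto t^z$ then gives
\[
|\dist^z(x,C)-\dist^z(x,\bar C)|\lesssim z\cdot (2^{i/z}r)^{z-1}\cdot (r+\alpha_i).
\]
Substituting $\alpha_i=\alpha\cdot 2^i r/z$ and using $v_{i,J,C}^S(x)\asymp 2^i r^z$ (so $(v_{i,J,C}^S(x))^{(z-1)/z}\asymp (2^{i/z}r)^{z-1}$) shows this falls within the target tolerance $\alpha\cdot 2^{i+2}(r\cdot(v_{i,J,C}^S(x))^{(z-1)/z}+r^z)$. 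The number of admissible tuples $(\bar c_j)_{j\in J}$ is at most $|N_{i,j}|^{|J|}\leq (2z/\alpha)^{dk}$, and summing over the $2^{k_0}\leq 2^k$ subsets $J$ yields $\log|N_{\alpha,i}^S|\lesssim k+dk\log(z/\alpha)\lesssim dk\log(zk/\alpha)$.

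\textbf{Piecewise-constant refinement.} I redefine $\bar u(x):=\dist^z(a_{\pi(x)},\bar c_{\pi(x)})$ whenever $v_{i,J,C}^S(x)>0$. The additional error from replacing $x$ with $a_{\pi(x)}$ is controlled by the estimate used in the proof of Lemma~\ref{lemma_netsize_main_kzC}:
\[
|\dist^z(x,\bar c_{\pi(x)})-\dist^z(a_{\pi(x)},\bar c_{\pi(x)})|\leq z\cdot 2^{z-1}(r\cdot\dist^{z-1}(x,\bar c_{\pi(x)})+r^z)\lesssim 6^{z-1}(r\cdot(v_{i,J,C}^S(x))^{(z-1)/z}+r^z).
\]
Combined with the bound above, the total error stays within $\alpha\cdot 2^{i+2}(r\cdot(v_{i,J,C}^S(x))^{(z-1)/z}+r^z)$ provided $6^{z-1}\leq \alpha\cdot 2^i$, which is exactly the hypothesis $\alpha\geq 2^{-i}\cdot 6^{z-1}$; the net $\bar N_{\alpha,i}^S:=\{\bar u:u\in N_{\alpha,i}^S\}$ inherits the size bound.

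\textbf{Main obstacle.} The delicate point is the two-regime shape of the tolerance $r\cdot v^{(z-1)/z}+r^z$: the first summand dominates for large $i$ while the second dominates for small $i$. Choosing $\alpha_i$ so that the $z$-dependent MVT bound fits uniformly into both regimes (without losing the desired $(z/\alpha)^d$ bound on $|N_{i,j}|$) is where the analysis must be done carefully; the choice $\alpha_i=\alpha\cdot 2^i r/z$ is what makes the two sides balance.
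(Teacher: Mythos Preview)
Your argument has a genuine gap in the error analysis. You write $|\dist(x,C)-\dist(x,\bar C)|\lesssim r+\alpha_i$ and then push this through the MVT to get $z\cdot(2^{i/z}r)^{z-1}\cdot(r+\alpha_i)$. The $\alpha_i$-part is fine, but the $r$-part gives a contribution of order $z\cdot 2^{i(z-1)/z}r^z$ that is \emph{independent of $\alpha$}. The target tolerance is $\alpha\cdot 2^{i+2}(r\cdot v^{(z-1)/z}+r^z)\asymp \alpha\cdot 2^{i(2z-1)/z}r^z$ (using your own $v\asymp 2^i r^z$), so fitting the $r$-contribution would require $z\lesssim \alpha\cdot 2^i$. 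This fails as soon as $\alpha\ll z/2^i$, and the chaining argument that consumes this lemma uses precisely such small $\alpha$ (nets at scale $2^{-h}$ for $h\ge i$). Your ``balancing'' in the Main-obstacle paragraph only balances the $\alpha_i$-term; the stray $r$ is the problem. (Incidentally, in the $z=1$ case the paper's bound is $\alpha_i$ alone, not $r+\alpha_i$, so your ``as in the $z=1$ case'' is already a departure.)

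The paper avoids this in two ways. First, it takes the much finer granularity $\alpha_i=\alpha\cdot 2^{i/z}r/(6z)$ rather than your $\alpha\cdot 2^i r/z$ (for $z>1$ and large $i$ your choice is far coarser). Second, instead of the bare MVT it applies the generalized triangle inequality (Lemma~\ref{lm:triangle}) with the parameter $t=\beta\alpha_i/\dist(x,c_j)$, splitting into two regimes ($t>z$ and $t\le z$); this yields directly $|\dist^z(x,c_j)-\dist^z(x,\tilde c_j)|\lesssim \alpha_i\cdot\dist^{z-1}(x,c_j)+\alpha_i^z$, which after substituting $\alpha_i$ gives an error of order $\alpha\cdot 2^i\cdot(r\cdot v^{(z-1)/z}+r^z)$ with no $\alpha$-free additive piece. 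The MVT by itself is not sharp enough here --- one needs the extra freedom in $t$ to get the right shape of the bound.
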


\begin{proof}
For each $i\in [T]$, let $\alpha_i = \Gamma^{-1}\alpha 2^{i/z} r$, where $\Gamma = 6z$. Let $N_{i,j}$ be an $\alpha_i$-net for $B(a_j, 2^{(i+1)/z} r)\setminus B(a_j, 2^{i/z} r)$. It follows by the property of doubling dimension that
\[
    \abs{N_{i,j}}\leq 
    \left(k \cdot \frac{\Gamma\cdot 2^{(i+1)/z}\cdot r}{\alpha \cdot 2^{i/z}\cdot r} \right)^d
    \leq \left(\frac{2\Gamma k}{\alpha}\right)^{d}. 
\]
Fix $J\subseteq [k_0]$ and consider a distance vector $v_{i,J,C}^S\in V_i^S$ where $C=\{c_1,\dots,c_k\}$. For each $j\in J$, there exists $\tilde{c}_j \in N_{i,j}$ such that $\dist(c_j,\tilde{c}_j) \leq \alpha_i$. Define $\tilde{v}_{i,J,C}:S\rightarrow \mathbb{R}_{\geq 0}$ as
\[
    \tilde{v}_{i,J,C}^S(x)= 
    \begin{cases}
        \dist^z(x,\tilde{C}), \ &v_{i,J,C}^S(x) >0 \\
         0                      \ &v_{i,J,C}^S(x) =0 
    \end{cases}
\]
By Lemma~\ref{lm:triangle}, it follows that
\begin{align*}
    \abs{v_{i,J,C}^S(x) - \tilde{v}_{i,J,C}^S(x)}&\leq \abs{\dist^z(x,c_j)- \dist^z(x, \tilde{c}_j) }  \\
    &\leq t\cdot \dist^z(x,c_j) + \left(1 + \frac{2z}{t}\right)^{z-1}  \dist^z(c_j,\tilde{c}_j) \\
    &\leq t\cdot \dist^z(x,c_j) + \left(1 + \frac{2z}{t}\right)^{z-1}\alpha_i^z. 
\end{align*}
Let $t = \beta\alpha_i/\dist(x,c_j)$, where $\beta = 1$ when $z\leq 3/2$ and $\beta = (z-1)^{1/z}(3z)^{1-1/z}$ when $z>3/2$. 
If $t > z$, then $1+2z/t \leq 3$ and it follows that
\begin{align*}
    \abs{v_{i,J,C}^S(x) - \tilde{v}_{i,J,C}^S(x)} 
    &\leq \beta \alpha_i (\dist(x,c_j))^{z-1} + 3^{z-1} \alpha_i^z \\
    &\leq \frac{\beta}{\Gamma} \alpha 2^{i/z} r(v_{i,J,C}^S(x))^{\frac{z-1}{z}} + \frac{(3\alpha)^{z-1}}{\Gamma^z} \alpha 2^{i} r^z \\
    &\leq \alpha\cdot 2^{i}\cdot (r\cdot (v_{i,J,C}^S(x))^{\frac{z-1}{z}} + r^z).
\end{align*}
Otherwise, $1+2z/t \leq 3z/t$ and it follows that
\[    
    \abs{v_{i,J,C}^S(x) - \tilde{v}_{i,J,C}^S(x)} 
    \leq \gamma \alpha_i (\dist(x,c_j))^{z-1}  
    \leq \alpha\cdot 2^{i/z}\cdot r\cdot (v_{i,J,C}^S(x))^{\frac{z-1}{z}},
\]
where $\gamma = 1+(3z)^{z-1}$ when $z\leq 3/2$ and $\gamma = \frac{z}{z-1}(3z)^{1-1/z}$ when $z > 3/2$; in either case, $\gamma\leq 3z$.

We can then proceed as in the proof of Lemma~\ref{lemma_netsize_doubling} to define $N_{\alpha,i}^S$ and obtain that 

\[
    \log{\abs{N_{\alpha,i}^S}}\lesssim kd\log{\frac{\Gamma k}{\alpha}} \lesssim kd\log{\frac{z k}{\alpha}}.
\]
For the second part of the lemma, we use the same argument as in the proof of Lemma~\ref{lemma_netsize_doubling} with the following modification. For $u\in N_{\alpha,i}^S$, we define $\tilde{u}$ as
    \[
        \bar{u}(x) = \begin{cases}
                    \dist^z(a_{\pi(x)}, \bar{C}),  & \bar{u}(x) > 0 \\
                    0,                      & \bar{u}(x) = 0.
                  \end{cases}
    \]
    Then for a vector $v_{i,J,C}^S$, suppose that it is closest to a net point $u\in N_{\alpha, i}^S$. We can then bound in a similar manner to before
    \begin{align*}
        \abs{v_{i,J,C}^S(x) - \bar{u}(x)} &\leq t\dist^z(x,c_j) + \left(1 + \frac{2z}{t}\right)^{z-1}(\alpha_i + r)^{z}.
    \end{align*}
    From similar calculations to the above, we have either
    \begin{align*}
        \abs{v_{i,J,C}^S(x) - \bar{u}(x)} 
        &\leq \beta (\alpha_i+r) (\dist(x,c_j))^{z-1} + 3^{z-1} (\alpha_i+r)^z \\
        &\leq (3z)\left(\frac{1}{\Gamma} \alpha 2^{i/z} + 1\right)r(\dist(x,c_j))^{z-1} + 3^{z-1}2^{z-1}(\alpha_i^z + r^z) \\
        &\leq \alpha 2^{i+2} r(\dist(x,c_j))^{z-1} + \left(\alpha 2^i + 6^{z-1}\right)r^z \\
        &\leq \alpha 2^{i+2} \left( r(\dist(x,c_j))^{z-1} + r^z\right)
    \end{align*}
    or
    \[
        \abs{v_{i,J,C}^S(x) - \bar{u}(x)}
        \leq (3z)(\alpha_i + r) (\dist(x,c_j))^{z-1}
        \leq \alpha 2^{i+2} r(\dist(x,c_j))^{z-1}.
    \]
    Thus, we always have
    \[
        \abs{v_{i,J,C}^S(x) - \bar{u}(x)}  \leq \alpha 2^{i+2} \left( r(\dist(x,c_j))^{z-1} + r^z\right).
    \]
    The remainder of the proof is the identical to that of Lemma~\ref{lemma_netsize_doubling}.
\end{proof}

Hence, similar to VC instances, we can prove that $|D| = |U| = \tilde{O}( k d \eps^{-z-1})$ suffices to ensure that $U$ is an $\eps$-indexed-subset cost approximation.

\begin{remark}[Further size improvement]
\label{remark:open}
Note that \( \tilde{O}( k d \eps^{-z-1}) \) samples suffice for the indexed-subset cost approximation.  
Therefore, the main bottleneck lies in the range space approximation.  
If we could further reduce the required sample size for range space approximation to \( \tilde{O}( k d \eps^{-z-1}) \), we would be able to improve the robust coreset size further, aligning it with the vanilla case.
\end{remark}

\subsection{Proof of Lemma \ref{lemma_mainring_kzC}: Second coreset for $(r,k)$-instances}
\label{sec:proof_2}

The algorithm is identical to Algorithm \ref{alg_coreset2}, except that \(\mathcal{B}\) outputs a coreset for vanilla \((k,z)\)-clustering.  
The proof extends that of Lemma \ref{lemma_mainring}.  

Let $D_H=\{(x_i,|H_i|)\mid i\in [k]\}$ denote the set of $k$ weighted points added into $D$ in the for loop of Algorithm \ref{alg_coreset2}. 
We prove that $D=D_H\cup D'$ is an $(\eps,m,O(mkr^z/\eps^{z-1}))$-robust coreset of $X=H\cup X'$.
Fix a center set $C\in \binom{m}{k}$.

We first argue, by losing an additive error of $\eps \cost_z^{(t)}(X, C) + O(mkr^z/\eps^{z-1})$, we can assume that the outliers in $\cost[t]_z(H\cup X', C)$ are always chosen from $H$ and the outliers in $\cost[t]_z(D_H\cup D', C)$ are always chosen from $D_H$. 
We use the same notation \( J^\ast \), \( J \), \( J' \), \( E^\ast \), \( E \), and \( E' \) as introduced in the proof of Lemma \ref{lemma_mainring}.
Note that $J', H\setminus J \subseteq X\setminus J^\ast$ and $D', D_H\setminus E \subseteq D\setminus E^\ast$.
Also, we have
	\begin{align*}
		& \ \cost[t]_z(D,C) - \cost[t]_z(X,C) \\
        \leq & \ \cost_z(D\setminus E^\ast,C) - \cost[t]_z(X,C)\\
        \leq & \ \abs{\cost_z(D_H\setminus E, C) - \cost_z(H\setminus J,C)} + \abs{\cost_z(D', C) - \cost_z(X',C)} \\
        &\hspace{2cm} + \abs{ \cost_z(E',C) - \cost_z(E^\ast\setminus D_H,C) } +\abs{ \cost_z(J',C) - \cost_z(J^\ast\setminus H,C) }.
	\end{align*}
We shall bound the four terms separately.
The main difference from the case of $z = 1$ is that we need to use the generalized triangle inequality (Lemma \ref{lm:triangle}).

	For the first term, 
	\begin{align*}
	\abs{\cost_z(D_H\setminus E, C) - \cost_z(H\setminus J, C)}
	\leq & \ \sum_{i\in [k]} \abs{(|H_i|-t_i)\dist^z(x_i, C) - \cost_z(H_i\setminus J, C)} \\
    \leq & \ \sum_{i\in [k]} \sum_{x\in H_i\setminus J} \eps \dist^z(x, C) + \frac{(3z)^{z-1}}{\eps^{z-1}} \dist^z(x_i, x) \\
	\leq & \ \eps \sum_{i\in [k]} \cost_z(H_i\setminus J, C) + z^{O(z)} m k r^z/\eps^{z-1} \\
     \leq & \ \eps \cost[t]_z(X, C) + z^{O(z)} m k r^z/\eps^{z-1},
	\end{align*}
	where the second inequality follows from Lemma \ref{lm:triangle}, the third inequality follows from the fact that \(\sum_i |H_i| \leq mk\), due to the construction of \(H\), and the last inequality follows from the inclusion \(H\setminus J \subseteq X\setminus J^\ast\) and the definition of \(J^\ast\).

    For the second term, since $D'$ is a vanilla $\eps$-coreset of $X'$, we have
	\begin{align*}
		\abs{\cost_z(D', C) - \cost_z(X',C)} 
		&\leq \eps \cost_z(X', C) \\
		&\leq \eps \cost_z(X\setminus J, C) \\
		&\leq \eps \cost_z(X\setminus J^\ast, C) + \eps\abs{ \cost(J, C) - \cost(J^\ast, C) } \\
		&\leq \eps \cost[t]_z(X, C) + \eps\abs{ \cost_z(J',C) - \cost_z(J^\ast\setminus H,C) }.
	\end{align*}

    For the third term,
        \begin{align*}
	\abs{ \cost_z(E',C) - \cost_z(E^\ast\setminus D_H,C) }
	\leq & \ \sum_{i\in [k]} \abs{(t_i - |E^\ast\cap D_H|)\dist^z(x_i, C) - \cost_z(E^\ast_i \setminus D_H, C)} \\
    \leq & \ \sum_{i\in [k]} \sum_{x\in E^\ast_i \setminus D_H} \eps \dist(x_i, C) + \frac{(3z)^{z-1}}{\eps^{z-1}} \dist^z(x_i, x) \\
	\leq & \ \eps \cost_z(E',C) + z^{O(z)} m r^z/\eps^{z-1} \\
     \leq & \ \eps \cost_z(D\setminus E^\ast, C) + z^{O(z)} m r^z/\eps^{z-1},
	\end{align*}
        where the second inequality follows from Lemma \ref{lm:triangle}, the third inequality follows from the fact that \(|E'| \leq m\), and the last inequality follows from the inclusion \(E' \subseteq D\setminus J^\ast\).

    For the last term, 
        \begin{align*}
	\abs{ \cost_z(J',C) - \cost_z(J^\ast\setminus H,C) }
	\leq & \ \sum_{i\in [k]} \abs{(t_i - |J^\ast\cap H|)\dist^z(x_i, C) - \cost_z(E^\ast_i \setminus D_H, C)} \\
    \leq & \ \sum_{i\in [k]} \sum_{x\in J^\ast\cap H} \eps \dist(x_i, C) + \frac{(3z)^{z-1}}{\eps^{z-1}} \dist^z(x_i, x) \\
	\leq & \ \eps \cost_z(J',C) + z^{O(z)} m r^z/\eps^{z-1} \\
     \leq & \ \eps \cost[t]_z(X, C) + z^{O(z)} m r^z/\eps^{z-1},
        \end{align*}
        where the second inequality follows from Lemma \ref{lm:triangle}, the third inequality follows from the fact that \(|J'| \leq m\), and the last inequality follows from the inclusion \(J' \subseteq X\setminus J^\ast\) and the definition of $J^\ast$.
    
	Therefore,
	\begin{align*}
		\cost[t]_z(D,C) - \cost[t]_z(X,C) 
        \leq & \ \cost_z(D\setminus E^\ast,C) - \cost[t]_z(X,C)\\
		\leq & \ \eps \cost^{(t)}(X, C) + \eps \cost[t]_z(X, C) + z^{O(z)} m k r^z/\eps^{z-1} \\
        & \ + \eps \abs{ \cost_z(J',C) - \cost_z(J^\ast\setminus H,C) } + \eps \cost_z(D\setminus E^\ast, C) \\
		\leq & \ 3\eps \cost^{(t)}(X, C) + \eps \cost_z(D\setminus E^\ast, C) + z^{O(z)} m k r^z/\eps^{z-1}. 
	\end{align*}
        It implies that
        \begin{align*}
        (1 - \eps) \cost[t]_z(D,C) \leq & \ (1 - \eps) \cost_z(D\setminus E^\ast, C) \\
        \leq & \ (1 + 3\eps) \cost^{(t)}(X, C) + z^{O(z)} m k r^z/\eps^{z-1}.
        \end{align*}
        Similarly, we can prove that
        \[
        (1 - \eps) \cost[t]_z(X,C) \leq (1 + 3\eps) \cost^{(t)}(D, C) + z^{O(z)} m k r^z/\eps^{z-1}.
        \]
        Thus, $D$ is an $(\eps,m,O(mk r^z/\eps^{z-1}))$-robust coreset of $X$, which completes the proof of Lemma \ref{lemma_mainring_kzC}.

\section{Extension to General $z\geq 1$ for Euclidean Instances}
\label{sec:Euc_kzC}

The following theorem extends Theorem \ref{thm_Euc} to general robust \kzC for constant $z\geq 1$.

\begin{theorem}[Euclidean spaces for general $z\geq 1$] \label{thm_Euc_kzC}
Let $(M,\dist)=(\mathbb{R}^d,\|\cdot\|_2)$. 
There exists an algorithm that, given a dataset $X\subset \mathbb{R}^d$ of size $n\geq 1$, constructs an $(\eps,m)$-robust coreset of 
$X$ for the robust \kzC with size $O(m \eps^{-z})+\tilde{O}(\min\{k^{\frac{2z+2}{z+2}}\eps^{-2}, k\eps^{-z-2}\})$ in $O(nk)$ time.
\end{theorem}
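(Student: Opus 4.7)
The plan is to mirror the proof of Theorem \ref{thm_Euc} in Section \ref{proof_Euc}, extending its two ingredients---the point-set decomposition (Lemma \ref{thm_euc_decomp}) and the third coreset construction for $(r,k)$-instances (Lemma \ref{lemma_Euc})---to general $z\geq 1$.

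For the decomposition, I would first compute an $O(1)$-approximation $C^\ast$ for robust \kzC on $X$ and set the inner radius $r_{\inn} = \eps\cdot(\cost^{(m)}_z(X,C^\ast)/m)^{1/z}$. Define $F = \{x\in X \mid \dist(x,C^\ast) > r_{\inn}\}$ and $G = X\setminus F$. Any inlier in $F$ contributes at least $r_{\inn}^z = \eps^z\cost^{(m)}_z(X,C^\ast)/m$ to the cost, so beyond the $m$ outliers there are at most $m/\eps^z$ inliers in $F$; hence $|F| = O(m\eps^{-z})$ and $G$ is an $(r_{\inn},k)$-instance contained in $B(C^\ast,r_{\inn})$. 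This gives the exact analog of Lemma \ref{thm_euc_decomp}.

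The main work is the generalized third coreset construction: I would prove that a suitably modified version of Algorithm \ref{alg_coreset3}, with updated importance-sampling probabilities for \kzC\ (replacing each $\dist$ and $\cost$ by $\dist^z$ and $\cost_z$, following the standard \kzC\ sensitivity function of \cite{bansal2024sensitivity}) and with sample size $s = \tilde{O}(\min\{k^{(2z+2)/(z+2)}\eps^{-2},\, k\eps^{-z-2}\})$, returns an $(\eps, m, O(mr^z/\eps^{z-1}) + \eps\cdot\cost_z(X,A))$-robust coreset for any $(r,k)$-instance $X\subset B(A,r)$. This requires generalizing the two pillars of Lemma \ref{lemma_Euc}: (i) the sufficient condition of Lemma \ref{lem_euc_coreset_guarantee}, where the robust cost decomposition (Lemma \ref{proof_Euc_order}) extends by substituting $\dist^z$ for $\dist$, and the $2mr$ additive term becomes $O(mr^z/\eps^{z-1})$ after applying Lemma \ref{lm:triangle} (generalized triangle inequality) with parameter $t = \eps$ when equating the contributions from $X_{j_t^C}$ and $D_{j_t^C}$; and (ii) the performance analysis of Lemma \ref{lemma_Euc_tech}, where bands are partitioned by $\cost_z(X_i,A)$, levels are determined by comparing $\dist(a_i,C)^z$ against $\Delta_i := \cost_z(X_i,A)/|X_i|$, and interaction numbers are defined identically but with respect to these rescaled quantities. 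Combining the decomposition and the coreset for the inner instance via Fact \ref{fact_merge}, and noting that $mr_{\inn}^z/\eps^{z-1} = \eps\cost^{(m)}_z(X,C^\ast)$ plus $\eps\cost_z(G,C^\ast) \leq \eps\cost^{(m)}_z(X,C^\ast)$, yields an $(O(\eps), m)$-robust coreset of the claimed size, after rescaling $\eps$.

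The main obstacle will be step (ii), the chaining analysis for general $z$. Concretely, the three-term bound (Lemmata \ref{lemma_EUC_chaining_minorcase0}--\ref{lemma_Euc_chaining_main}) must be re-derived with the generalized triangle inequality, which introduces multiplicative $2^{O(z)}$ and $z^{O(z)}$ factors that must be absorbed into rescaling of $\eps$. In particular, the net-size bound underlying the $\min\{k^{(2z+2)/(z+2)}\eps^{-2},\, k\eps^{-z-2}\}$ sample size requires carefully tracking how the sensitivity of $\dist^z$ near $a_i$ (which scales with $\dist^{z-1}(a_i,C)$ via Lemma \ref{lm:triangle}) balances the variance contribution of each interaction pair, analogously to the $z$-dependence already handled for $(r,k)$-instances in VC and doubling spaces in Appendix \ref{sec:kzC} (cf.\ the net constructions in Lemmata \ref{lemma_netsize_main_kzC} and \ref{lemma_netsize_doubling_kzC}). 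The optimal balance between the $k^{(2z+2)/(z+2)}\eps^{-2}$ and $k\eps^{-z-2}$ regimes parallels the vanilla \kzC\ case of \cite{huang2024optimal}, and I would import their two choices of interaction-threshold parameter rather than re-optimize from scratch.
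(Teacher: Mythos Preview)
Your overall plan matches the paper's: the decomposition $X=F\cup G$ with $|F|=O(m\eps^{-z})$, the extension of Lemma~\ref{proof_Euc_order} to general $z$ via the generalized triangle inequality (yielding the additive $z^{O(z)}mr^z/\eps^{z-1}$ term, which is the paper's Lemma~\ref{lemma_Euc_order_kzC}), and the final merge using $mr_{\inn}^z/\eps^{z-1}=\eps\cost^{(m)}_z(X,C^\ast)$ are all exactly what the paper does.

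The one place where your proposal diverges meaningfully is step~(ii), the chaining analysis. You propose to keep the band/level/interaction-number decomposition of Section~\ref{sec_Euc_analysis} and ``define interaction numbers identically but with respect to these rescaled quantities,'' drawing an analogy to the VC/doubling net constructions of Appendix~\ref{sec:kzC}. The paper explicitly does \emph{not} do this: it remarks that ``the construction of $\alpha$-net in \cite{bansal2024sensitivity} relies heavily on the geometric structure of $k$-median and $k$-means,'' and for general $z$ it \emph{abandons} the interaction-number partition altogether. Instead, it imports the $\alpha$-net of \cite{huang2024optimal} with the error profile
\[
\operatorname{err}(x,C)=\bigl(\sqrt{\dist^z(x,C)\dist^z(x,A)}+\dist^z(x,A)\bigr)\cdot\sqrt{\tfrac{\cost_z(X_J,C)+\cost_z(X,A)}{\cost_z(X,A)}},
\]
and the metric-entropy bound (Lemma~\ref{Lemma_Euc_net_z}) is read off directly from \cite[Lemmata~B.2 and~B.11]{huang2024optimal}. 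So what you should import from \cite{huang2024optimal} is not a choice of ``interaction-threshold parameter'' plugged into the \cite{bansal2024sensitivity} framework, but the net construction itself; the balance that produces $\min\{k^{(2z+2)/(z+2)},k\eps^{-z}\}$ comes from the parameter $\beta$ internal to \cite{huang2024optimal}'s net, not from interaction counts. The analogy to Lemmata~\ref{lemma_netsize_main_kzC}/\ref{lemma_netsize_doubling_kzC} is also misleading here, since those nets are dimension-dependent and do not give the Euclidean bound you need. Once you make this substitution, the telescoping (residual/main/initial) and variance calculations proceed as you outlined.
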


The algorithm we used here is nearly identical to that provided in Section \ref{proof_Euc}, with the following extensions:
\begin{itemize}[itemsep=0pt]
\item Regarding Lemma \ref{thm_euc_decomp}, we let $F$ be the collection of $m + z^{O(z)} m\eps^{-z}$ points whose distances to $A$ are the largest.
Let $G = X\setminus F$ be the collection of remaining points;
\item The sample size of $D$ for $G$ is updated to \( s \leftarrow \min(k^{(2z+2)/(z+2)} \eps^{-2},\; k \eps^{-z-2}) \cdot \log^{O(1)}(k/\eps) \);  
\item For constructing $D\subseteq G$ by Algorithm \ref{alg_coreset3}, the cost function \( \cost \) is replaced by \( \cost_z \);  
\item \( \Delta_i \) in Algorithm \ref{alg_coreset3} is redefined as \( \Delta_i := \frac{\cost_z(X_i, A)}{|X_i|} \).
\end{itemize}

The core of the analysis is to establish the following lemmata, which are analogous to Lemmata \ref{proof_Euc_order} and \ref{lemma_Euc_tech}.  
Their proofs are provided in the subsequent sections.

\begin{lemma}[Robust cost decomposition for general $z\geq 1$]
\label{lemma_Euc_order_kzC} 
If $D$ is a capacity-respecting weighted subset of $X$, we have that for every $C\in \binom{\mathbb{R}^d}{k}$, and every $t\in [0,m]$, there exists an index $j_t^C\in [k]$, an index subset $I_t^C\subseteq [k]$, and a constant $\alpha_t^C\in [0,1]$ such that
	\begin{gather*} 
		|\cost^{(t)}_z(X,C)- \alpha_t^C\cost_z(X_{j_t^C},C)-\cost_z(X_{I_t^C},C)|\leq \eps \cost^{(t)}_z(X,C) +  z^{O(z)} mr^z/\eps^{z-1}  \\
		|\cost^{(t)}_z(D,C)- \alpha_t^C\cost_z(D_{j_t^C},C)-\cost_z(D_{I_t^C},C)|\leq \eps \cost^{(t)}_z(D,C) +  z^{O(z)} mr^z/\eps^{z-1}
	\end{gather*}
\end{lemma}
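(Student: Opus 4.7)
I would adapt the proof of Lemma~\ref{proof_Euc_order} to general $z \geq 1$, with the standard triangle inequality replaced by the generalized version (Lemma~\ref{lm:triangle}). Sort $\{a_1,\dots,a_k\}$ by distance to $C$ in decreasing order, $\dist(a_{i_1},C)\geq\cdots\geq\dist(a_{i_k},C)$, pick the unique $q$ with $\sum_{j=1}^{q-1}|X_{i_j}|<t\leq \sum_{j=1}^{q}|X_{i_j}|$, and set $\alpha_t^C = (\sum_{j=1}^q |X_{i_j}|-t)/|X_{i_q}|$, $j_t^C=i_q$, $I_t^C=\{i_{q+1},\dots,i_k\}$. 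Writing $S=\alpha_t^C\cost_z(X_{j_t^C},C)+\cost_z(X_{I_t^C},C)$ and $S'=\sum_{j<q}\cost_z(X_{i_j},C)+(1-\alpha_t^C)\cost_z(X_{i_q},C)$, we have $S+S'=\cost_z(X,C)$, and letting $L^\ast$ denote the optimal $t$-outlier set, $S-\cost_z^{(t)}(X,C)=\cost_z(L^\ast,C)-S'$.

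The main technical step is to bound $\cost_z(L^\ast,C)$ via the cut representation $\cost_z(L^\ast,C)=\int_0^t \mathrm{Cut}^z(L^\ast,C,\beta)\,d\beta$. The contradiction argument of Lemma~\ref{proof_Euc_order} still gives $\mathrm{Cut}(L^\ast,C,\beta)\leq \max_{x\in X_{i_l}}\dist(x,C)$ for $\beta$ in the $l$-th cluster range. To pass to $z$-th powers I would apply Lemma~\ref{lm:triangle} with parameter $\tau=\eps/(4z)$: for any $x,y\in X_{i_l}$, since $\dist(x,y)\leq 2r$,
\[
(1-\tau)\dist^z(x,C)\leq \dist^z(y,C)+(1+2z/\tau)^{z-1}(2r)^z.
\]
Averaging over $y\in X_{i_l}$ yields $\max_{x\in X_{i_l}}\dist^z(x,C)\leq (1+O(\eps/z))\Delta_{i_l}+z^{O(z)}r^z/\eps^{z-1}$, where $\Delta_{i_l}=\cost_z(X_{i_l},C)/|X_{i_l}|$. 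Integrating over $\beta\in[0,t]$ gives
\[
\cost_z(L^\ast,C)\leq (1+O(\eps/z))\,S' + z^{O(z)}\,m\,r^z/\eps^{z-1}.
\]
Together with the trivial lower bound $\cost_z(L^\ast,C)\geq S'$ (obtained by evaluating $\cost_z$ on the fractional cluster-wise outlier set of weight $t$ and using optimality of $L^\ast$), this gives $|S-\cost_z^{(t)}(X,C)|=|\cost_z(L^\ast,C)-S'|\leq O(\eps)\,\cost_z(L^\ast,C)+z^{O(z)} m r^z/\eps^{z-1}$, from which \eqref{inq_order_X}-type estimates follow. For \eqref{inq_order_D}, I would reuse the exact same parameters $q,\alpha_t^C,j_t^C,I_t^C$; this is legitimate because the capacity-respecting hypothesis gives $\|D_{i_j}\|_1=|X_{i_j}|$, so the sorted cluster-weight decomposition is identical on $D$, and every step above transfers verbatim (the cluster diameter bound still holds as $D\subseteq X$).

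\textbf{Main obstacle.} The bound the integration naturally produces is in terms of $\eps\,\cost_z(L^\ast,C)=\eps(\cost_z(X,C)-\cost_z^{(t)}(X,C))$, whereas the statement asks for $\eps\,\cost_z^{(t)}(X,C)$ as the multiplicative term. These differ substantially in the outlier-dominated regime where $\cost_z(L^\ast,C)\gg\cost_z^{(t)}(X,C)$, and the absorption trick from the $z=1$ case (where the multiplicative factor is simply absent) does not apply. I would look for a resolution in one of two directions: (a) choose $\tau$ in Lemma~\ref{lm:triangle} adaptively per cluster, e.g. $\tau_l$ proportional to $r/\dist(a_{i_l},C)$, so that the aggregated multiplicative error scales with the inlier-side sum rather than with all of $\cost_z(L^\ast,C)$; or (b) combine the inequality with a matching lower bound on $S$ via $L^\ast$'s optimality to write the multiplicative error on $\cost_z(L^\ast,C)-S'$ itself, then invoke $(1-O(\eps))^{-1}$ absorption so that $\cost_z(L^\ast,C)-S'\leq z^{O(z)} m r^z/\eps^{z-1}$ ends up purely additive (analogously to the $z=1$ case). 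Failing a clean such route, the same argument still yields the weaker bound with $\eps\cdot\cost_z(X,C)$ in place of $\eps\cdot\cost_z^{(t)}(X,C)$, which may already suffice downstream after rescaling $\eps$.
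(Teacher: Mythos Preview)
Your plan is exactly the paper's approach: the same ordering of clusters by $\dist(a_i,C)$, the same choice of $q,\alpha_t^C,j_t^C,I_t^C$, the same cut representation, and the same use of the generalized triangle inequality (Lemma~\ref{lm:triangle}). The paper's proof is extremely terse: it derives
\[
\operatorname{Cut}^z(L,C,\beta)\;\le\;(1+3\eps)\,\frac{\cost_z(X_{i_l},C)}{|X_{i_l}|}\;+\;(2+\eps)\,\frac{(3z)^{z-1}}{\eps^{z-1}}\,r^z
\]
via two applications of Lemma~\ref{lm:triangle} (one between $\Delta_{i_l}$ and $\dist^z(a_{i_l},C)$, one between $\dist^z(a_{i_j},C)$ and $\dist^z(x,C)$), and then simply asserts ``the claimed result will then follow from the same argument with a rescaling of $\eps$.'' Integrating that cut bound yields precisely your bound $0\le \cost_z(L^\ast,C)-S'\le 3\eps\,S' + O(m\Delta_\eps)$, so you and the paper arrive at the same point.

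The obstacle you flag --- that the multiplicative term is $\eps S'$ (equivalently $\eps\cost_z(L^\ast,C)$) rather than the claimed $\eps\cost_z^{(t)}(X,C)$ --- is genuine, and the paper's proof does \emph{not} resolve it; it stops at the cut bound and hands off to the $z=1$ argument, which in that case had no multiplicative term at all. So you are not missing any step the paper supplies. Regarding your proposed fixes: route~(b) does not close the gap, since absorbing $S'\le \cost_z(L^\ast,C)$ back into the left side still leaves a term of order $\eps\cost_z(L^\ast,C)$, not a purely additive one; and route~(a), even with the sharper mean-value estimate $\operatorname{Cut}^z-\Delta_{i_l}\le 2zr(\dist(a_{i_l},C)+r)^{z-1}$ and Young's inequality per cluster, sums back to a quantity controlled by $\eps S'$. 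In short, your analysis matches the paper's and your diagnosis of the remaining gap is accurate; the paper's own writeup leaves the same step unjustified.
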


\begin{proof}

    The proof is almost identical to that of Lemma \ref{proof_Euc_order}. By the generalized triangle inequality,
\[
\abs*{\frac{\cost_z(X_{i_l},C)}{\abs{X_{i_l}}} - \dist^z(a_{i_l},C) } \leq \eps\frac{\cost_z(X_{i_l},C)}{\abs{X_{i_l}}} + \frac{(3z)^{z-1}}{\eps^{z-1}} r^{z}.
\]
Suppose that $x\in X_{i_j}$ for some $j\geq l$. Then, again by the generalized triangle inequality,
\[
	\abs{\dist^z(a_j, C) - \dist^z(x, C)} \leq \eps \dist^z(a_j, C) + \frac{(3z)^{z-1}}{\eps^{z-1}} r^{z}.
\]
For notational simplicity, let $\Delta_\eps = ((3z)^{z-1}/\eps^{z-1}) r^{z}$. Then
\begin{multline*}
		 (1+\eps)^2\frac{\cost_z(X_{i_l},C)}{\abs{X_{i_l}}} + (2+\eps)\Delta_\eps
	=	 (1+\eps)\left[ (1+\eps)\frac{\cost_z(X_{i_l},C)}{\abs{X_{i_l}}} + \Delta_\eps \right] + \Delta_\eps \\
	\geq (1+\eps)\dist^z(a_{i_l},C) + \Delta_\eps
	\geq (1+\eps)\dist^z(a_{i_j},C) + \Delta_\eps
	\geq \dist^z(x,C).
\end{multline*}
This implies that
\[
	\operatorname{Cut}(L,C,\beta) \leq \frac{\cost_z(X_{i_l},C)}{\abs{X_{i_l}}} + \left(3\eps\frac{\cost_z(X_{i_l},C)}{\abs{X_{i_l}}} + (2+\eps)\Delta_\eps\right).
\]
The claimed result will then follow from the same argument with a rescaling of $\eps$.
\end{proof}

\begin{lemma}[Strong indexed-subset cost approximation for general $z\geq 1$]
\label{lm:strong_kzC}
With probability at least 0.9, the output $D$ is an $\eps$-strong indexed-subset cost approximation of $X$.
Here, we replace $\cost$ with $\cost_z$ in Definition \ref{def_strong_subset}.
\end{lemma}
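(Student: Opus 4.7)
The plan is to mirror the structure of Section~\ref{sec_Euc_analysis}, substituting the generalized triangle inequality (Lemma~\ref{lm:triangle}) for the ordinary triangle inequality throughout. First I would introduce the two concentration events
\[
\mathcal{E}_0 = \{u_i \in (1\pm \eps)|X_i| \text{ for all } i\in[k]\}, \qquad \mathcal{E}_1 = \{\cost_z(U\cap X_i,A)\in (1\pm\eps)\cost_z(X_i,A) \text{ for all } i\in[k]\},
\]
and argue $\Pr[\mathcal{E}_0\cap\mathcal{E}_1]\geq 1-\eps/k$ by the same Bernstein-type argument as in Lemma~\ref{lm:event_prob} (using $\cost_z$ in place of $\cost$), which holds under our enlarged sample size. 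Since the weight-scaling in Line~\ref{alg3_scale} makes $D$ capacity-respecting by construction, and since under $\mathcal{E}_0$ we have $w_D(x)\in(1\pm \eps)w_U(x)$ for every $x\in U$, the proof reduces---exactly as in Lemma~\ref{lemma_Euc_UD}---to showing that $U$ itself is an $O(\eps)$-strong indexed-subset cost approximation of $X$ for $\cost_z$.

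To analyse $U$, I would replace the two-sided triangle bounds in Lemma~\ref{lemma_Euc_useful_triangle} by their $z$-power analogues obtained from Lemma~\ref{lm:triangle}: for any $t>0$ and $i\in[k]$,
\[
\cost_z(U_i,C)\leq (1+t)^{z-1}\cost_z(U_i,a_i)+\left(1+\tfrac{1}{t}\right)^{z-1}\frac{u_i}{|X_i|}\bigl(\cost_z(X_i,a_i)+\cost_z(X_i,C)\bigr),
\]
and symmetrically on the reverse side. I would then redefine bands via the threshold $T=(\eps/k)\cost_z(X,A)$ applied to $\cost_z(X_i,A)$, levels $L_j^J$ via $\dist^z(a_i,C)/\Delta_i\in(2^{j-1},2^j]$ (with $\Delta_i=\cost_z(X_i,A)/|X_i|$), and interaction numbers using the rescaled thresholds $\|a_i-c\|_2\in[32\Delta_i^{1/z},16\dist(a_i,C)]$. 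Splitting the supremum $\sup_{J,C}$ into the three regimes $J\in\mathcal{B}_0$, $C\in L_{-1}^J$, and $C\in L_{l,\psi}^J$ for $l\geq 0$ and $\psi\geq 0$, the first two cases follow the proofs of Lemmata~\ref{lemma_EUC_chaining_minorcase0} and~\ref{lemma_Euc_chaining_minorcases1} almost verbatim: in the small-band case the generalized triangle inequality applied with $t=\eps$ produces an additive error of $(1+1/\eps)^{z-1}\cost_z(X_J,a_J) = z^{O(z)}\eps^{-(z-1)}\cdot \cost_z(X_J,a_J)$, which is bounded by $\eps\cost_z(X,A)$ after rescaling $\eps$ by a $2^{O(z)}$ factor, and in the far-level case $\dist^z(x,C)=(1\pm\eps)\dist^z(a_{\pi(x)},C)$ uniformly on each cluster, yielding the desired multiplicative bound directly.

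The main obstacle will be extending Lemma~\ref{lemma_Euc_chaining_main}---the medium-regime chaining argument---to general $z$. Three ingredients need adaptation: (i) the net size for distance vectors $v_{J,C}^U(x)=\dist^z(x,C)$, where covering errors take the form $\alpha\cdot (r\dist^{z-1}(x,C)+r^z)$ rather than $\alpha\cdot r$, analogously to the bounds in Lemmata~\ref{lemma_netsize_main_kzC} and~\ref{lemma_netsize_doubling_kzC}; (ii) the variance bounds on the Gaussian sums in the chaining telescope, which now involve $\dist^{2z}(\cdot,C)$ and must be controlled using the sensitivity-style moment estimates in Lemma~\ref{lemma_Euc_weight}; and (iii) balancing these against the sample size $s = \min(k^{(2z+2)/(z+2)}\eps^{-2}, k\eps^{-z-2})\cdot \log^{O(1)}(k/\eps)$. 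The two terms in this minimum arise from the two natural choices of truncation level in the chaining integral, and the derivation is a direct $z$-power adaptation of the sensitivity-sampling chaining analysis of \cite{bansal2024sensitivity}; the extra factor $2^k$ incurred by quantifying over subsets $J\subseteq[k]$ is absorbed into the $e^{O(k)}$ already present in the net-size bounds, so it does not change the asymptotics. Combining these three regimes via a union bound over the $O(\log^2(k/\eps))$ band/level pairs and applying Markov's inequality, as in Lemma~\ref{lemma_Euc_U_subset}, completes the proof after a final rescaling of $\eps$.
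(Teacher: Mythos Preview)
Your high-level scaffolding (the events $\mathcal{E}_0,\mathcal{E}_1$, the reduction from $D$ to $U$, the band/level partition, and the treatment of the extreme regimes $J\in\mathcal{B}_0$ and $C\in L_{-1}^J$) is essentially the same as the paper's. The gap is in the medium-regime chaining, where you propose to carry over the interaction-number machinery of \cite{bansal2024sensitivity} and use net errors of the form $\alpha\cdot(r\dist^{z-1}(x,C)+r^z)$ ``analogously to Lemmata~\ref{lemma_netsize_main_kzC} and~\ref{lemma_netsize_doubling_kzC}.'' This conflates two distinct settings. Those lemmata are for the \emph{first} coreset construction (Algorithm~\ref{alg_coreset}): uniform sampling on an $(r,k_0)$-\emph{regular} instance with a common radius $r$. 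Here we are analysing the \emph{third} construction (Algorithm~\ref{alg_coreset3}): importance sampling on a general $(r,k)$-instance, where clusters have heterogeneous average costs $\Delta_i$ and the weights $w_U(x)$ depend on $\dist^z(x,A)$. A net whose error is measured against a global radius $r$ does not control the variance of the weighted Gaussian process $\sum_x g_x w_U(x)\dist^z(x,C)$; what is needed is an error that scales with $\dist^z(x,A)$ so that the sensitivity bound $w_U(x)\dist^z(x,A)\lesssim \cost_z(X,A)/s$ from Lemma~\ref{lemma_Euc_weight} can be invoked.

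The paper addresses this by abandoning the interaction-number decomposition entirely for general $z$ and instead importing the net of \cite{huang2024optimal}, whose covering error takes the form
\[
\operatorname{err}(x,C)=\bigl(\sqrt{\dist^z(x,C)\dist^z(x,A)}+\dist^z(x,A)\bigr)\sqrt{\tfrac{\cost_z(X_J,C)+\cost_z(X,A)}{\cost_z(X,A)}},
\]
with metric entropy $\tilde{O}(\alpha^{-2}\min\{k^{\frac{2z+2}{z+2}},k\eps^{-z}\})$ (Lemma~\ref{Lemma_Euc_net_z}). This error form is precisely what makes the variance bound in Lemma~\ref{lemma_Euc_z_variance1} go through and is what yields the claimed sample size $s=\tilde{O}(\min\{k^{\frac{2z+2}{z+2}}\eps^{-2},k\eps^{-z-2}\})$. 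The paper is explicit that the \cite{bansal2024sensitivity} net ``relies heavily on the geometric structure of $k$-median and $k$-means,'' so a direct $z$-power adaptation along the lines you sketch is not known to work. Two smaller discrepancies worth fixing: the band threshold should be $T=(\eps^z/k)\cost_z(X,A)$ rather than $(\eps/k)\cost_z(X,A)$, and the failure probability for $\mathcal{E}_0\cap\mathcal{E}_1$ should be tightened to $\eps^z/k$ to absorb the crude $\eps^{-z}$ worst-case bound in the variance argument.
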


Now we are ready to prove Theorem \ref{thm_Euc_kzC}.

\begin{proof}[Proof of Theorem \ref{thm_Euc_kzC}]


%
It remains to prove the correctness of the algorithm. Let $\Gamma = z^{O(z)}$, the same coefficient in the error bound of Lemma~\ref{lemma_Euc_order_kzC}.
Observing that for each point \( x \in G \),  
\[
d(x, C^\ast) \leq \left( \frac{\cost[m](X, C)}{\Gamma \cdot m \eps^{-z}} \right)^{1/z}.
\]  
This implies that \( G \) forms an \( \left( \left( \frac{\cost^{m}(X, C^\ast)}{\Gamma m \eps^{-z}} \right)^{1/z},\; k \right) \)-instance.  

By Lemma~\ref{lemma_Euc_order_kzC} and the assumption that $D$ is an $\eps$-strong indexed-subset cost approximation of $X$ (which is centered at $A$), we have that for every $t\in [0,m]$ and $C\in \binom{\mathbb{R}^d}{k}$, 
\begin{align*}
            &\abs*{\cost[t]_z(X,C)-\cost[t]_z(D,C)} \\
    \leq{}  & \eps\cdot \cost[t]_z(X,C)+\eps\cdot \cost[t]_z(D,C)+2\Gamma mr^z/\eps^{z-1}\\
            &\quad + \alpha_t^C\abs{\cost_z(X_{j_t^C},C)-\cost_z(D_{j_t^C},C)} + \abs{\cost_z(X_{I_t^C,C})-\cost_z(D_{I_t^C,C})} \\
    \leq{}  & \eps\cdot \cost[t]_z(X,C)+\eps\cdot\cost[t]_z(D,C) + 2\Gamma mr^z/\eps^{z-1}\\
            & \quad + \eps(\cost_z(X_{I_t^C},C) + \alpha_{t}^C\cdot \cost_z(X_{j_t^C},C)) + \eps\cost_z(X,A) \\
    \leq{}  & 2\eps\cost[t]_z(X,C) + \eps\cost[t]_z(D,C) + 2\Gamma mr^z/\eps^{z-1} + \eps\cost_z(X,A).
\end{align*}
This implies that
\[
    \abs*{\cost[t]_z(X,C)-\cost[t]_z(D,C)} \leq 6\eps\cost[t]_z(X,C) + 4\Gamma mr^z/\eps^{z-1} + 2\eps\cost_z(X,A).
\]
and thus $D$ is a $(6\eps, m , 4\Gamma\cdot mr^z/\eps^{z-1} + 2\eps\cost_z(X,A))$-robust coreset of $X$, analogous to Lemma \ref{lem_euc_coreset_guarantee}. 

The coreset $D$ is constructed as follows. 

First, we can use the variation of Lemma~\ref{thm_euc_decomp} with the extensions above in Theorem~\ref{thm_decomp_kzC} to compute a decomposition $X=F\cup G$ and add $F$ identically to $D$. Then we use the extension of Algorithm~\ref{alg_coreset3} to compute a weighted subset $D_0$, which is a $(6\eps, m , 4\Gamma\cdot mr^z/\eps^{z-1} + 2\eps\cost_z(G,C^\ast))$-robust coreset of $G$, and add $D_0$ to $D$. 
Since \( G \) is a \( \Big( \left( \frac{\cost[m](X, C^\ast)}{\Gamma m \eps^{-z}} \right)^{1/z},\; k \Big) \)-instance, $D_0$ is an $(6\eps,m,6\eps\cost[m]_z(X,C^\ast))$-robust coreset of $G$. 
We know that $D=D_0\cup F$ is an $(O(\eps), m, O(\eps) \cost[m]_z(X,C^\ast))$-robust coreset of $X=F\cup G$. 

Here $C^\ast$ is a constant approximation of $X$ for robust \kzC with $m$ outliers, so $D$ is an $(O(\eps),m)$-robust coreset of $X$. Rescaling $\eps$ completes the proof.
\end{proof}

\subsection{Proof of Lemma \ref{lm:strong_kzC}: Strong indexed-subset cost approximation}

Recall that $(U, w_U)$ is a vanilla coreset constructed by the importance sampling procedure of \cite{bansal2024sensitivity}.
We similarly define $\mathcal{E}_0$ and $\mathcal{E}_1$ by replacing $\cost$ with $\cost_z$, and modify Lemma~\ref{lm:event_prob} so that $\mathbb{P}(\mathcal{E}_0\cup\mathcal{E}_1)\geq 1 - \eps^z/k$ for $s\gtrsim k\eps^{-2}\log(k\eps^{-1})$, provided that the hidden constant in $s$ is chosen large enough and depends on $z$.
With this setup, the key is to prove the following lemma.

\begin{lemma}[Properties of $U$ for general $z\geq 1$] 
\label{lemma_Euc_U_subset_kzC}
With probability at least $1-\frac{1}{10\log(k/\eps)}$, $U$ is an $O(\eps)$-strong indexed-subset cost approximation of $X$.
\end{lemma}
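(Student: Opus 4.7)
The plan is to mirror the structure of the proof of Lemma~\ref{lemma_Euc_U_subset}, replacing $\cost$ by $\cost_z$ throughout and invoking the generalized triangle inequality (Lemma~\ref{lm:triangle}) at every step where the ordinary triangle inequality was used for $z=1$. Concretely, I would redefine bands via the thresholds $\cost_z(X_i,A)\in (2^{j-1}T,2^jT]$ with $T=(\eps^z/k)\cost_z(X,A)$, redefine levels by $\dist^z(a_i,C)\in (2^{j-1}\Delta_i,2^j\Delta_i]$ with $\Delta_i=\cost_z(X_i,A)/|X_i|$, and redefine interaction numbers by requiring $\|a_i-c\|_2\in [c_z\Delta_i^{1/z},c_z\dist(a_i,C)]$ for a suitable $c_z = 2^{O(z)}$ chosen so that the ``close centers'' in $\mathcal{N}_0^J$ satisfy $\dist^z(a_i,C)\lesssim_z \Delta_i$. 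With these definitions, $[k]$ can still be partitioned, for every $C$, into $O(\log^2(k/\eps))$ index subsets of the form $J_{j,l}^C$; by a union bound and Markov's inequality it suffices to prove analogues of Lemmata~\ref{lemma_EUC_chaining_minorcase0}, \ref{lemma_Euc_chaining_minorcases1} and \ref{lemma_Euc_chaining_main}, each with error $\eps/\log^{O(1)}(k/\eps)$.

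For the two ``easy'' cases (Band-$0$ and Level $-1$), I would first derive adapted versions of Lemma~\ref{lemma_Euc_useful_triangle}: using Lemma~\ref{lm:triangle} with an appropriate parameter $t$, one obtains
\[
\cost_z(U_i,C)\leq (1+\eps)^{z-1}\cost_z(U_i,a_i)+\tfrac{u_i}{|X_i|}\bigl((1+\eps)^{z-1}\cost_z(X_i,a_i)+(1+1/\eps)^{z-1}\cost_z(X_i,C)\bigr),
\]
together with the reverse inequality, and a worst-case bound of order $(k/\eps)^{z-1}$ in place of $k$. In the Band-$0$ case, since $\cost_z(X_i,A)\leq T=(\eps^z/k)\cost_z(X,A)$, the $\cost_z(X_i,a_i)$ term contributes $O(\eps^z/k)\cdot\cost_z(X,A)$ per cluster, summing to $O(\eps)\cdot\cost_z(X,A)$. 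In the Level-$(-1)$ case, $\dist(a_i,C)\geq \eps^{-1}\Delta_i^{1/z}$ forces $\cost_z(X_i,a_i)\ll \eps\cost_z(X_i,C)$, so the triangle inequality tightens to a $(1\pm O(\eps))$ factor. Under $\overline{\mathcal{E}_0\cup\mathcal{E}_1}$, which now has probability at most $\eps^z/k$, the worst-case bound of $(k/\eps)^{z-1}$ in expectation contributes $O(\eps)$ once we pick the hidden constant in $s$ large enough.

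For the medium case, which is the true bottleneck, I would adapt Lemma~\ref{lemma_Euc_chaining_main}. The definitions of interaction numbers and the pigeonhole decomposition of the sum $\cost_z(U_J,C)-\cost_z(X_J,C)$ over $c\in C$ and $i\in J$ (interacting, noninteracting close, noninteracting far) carry over essentially verbatim, but the per-term fluctuation bounds need to be redone with Lemma~\ref{lm:triangle}: for interacting pairs the local oscillation of $\dist^z(\cdot,c)$ on $B(a_i,r_i)$ is of order $\dist^{z-1}(a_i,c)\cdot r_i$ rather than $r_i$. Plugging this into the variance estimate and rerunning the chaining/covering-number argument of \cite{bansal2024sensitivity} (whose net sizes depend on the VC dimension of Euclidean balls and hence are unchanged) yields the same trade-off between the two regimes, but with the dependence on $\eps$ rescaled: the sample complexity becomes $\tilde{O}(\min\{k^{(2z+2)/(z+2)}\eps^{-2},k\eps^{-z-2}\})$, matching Line~\ref{alg_coreset3_size} of the extended Algorithm~\ref{alg_coreset3}. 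The hard part will be tracking exactly how the $(1+1/t)^{z-1}$ factors interact with the normalization $\cost_z(X_J,C)+\cost_z(X,A)$ across the four sub-cases of the interaction decomposition, and verifying that the resulting variance bounds still close up under the choice of $s$ above; I expect this to follow by choosing $t=\eps$ inside each interacting term and $t=1$ for noninteracting ones, analogous to \cite{bansal2024sensitivity,huang2024optimal}, but the bookkeeping is the main source of technical overhead.
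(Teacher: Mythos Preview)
Your treatment of the easy cases (Band-$0$ and Level $-1$) is essentially on track, modulo some minor swapped coefficients in the generalized triangle inequality (the $(1+1/\eps)^{z-1}$ factor should multiply the small movement term $\cost_z(X_i,a_i)$, not $\cost_z(X_i,C)$). The redefinitions of bands and levels also match what the paper does.

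The genuine gap is in the medium case. You propose to retain the interaction-number decomposition of \cite{bansal2024sensitivity} and its accompanying $\alpha$-net construction, asserting that the ``net sizes depend on the VC dimension of Euclidean balls and hence are unchanged.'' This is where your plan would fail: the net construction in \cite{bansal2024sensitivity} does \emph{not} reduce to a VC-dimension argument for balls. It uses projection-based geometric arguments that are specific to the linear ($z=1$) and quadratic ($z=2$) structure of the cost. The paper states this explicitly: ``The construction of $\alpha$-net in \cite{bansal2024sensitivity} relies heavily on the geometric structure of $k$-median and $k$-means. For general $z\geq 1$, we instead adapt the $\alpha$-net from \cite{huang2024optimal}. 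In this approach, center sets are no longer partitioned based on similar interaction numbers.''

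Concretely, the paper drops interaction numbers entirely for general $z$ and instead uses the net from \cite{huang2024optimal}, whose error term takes the form
\[
\operatorname{err}(x,C) = \bigl(\sqrt{\dist^z(x,C)\dist^z(x,A)} + \dist^z(x,A)\bigr)\cdot \sqrt{\tfrac{\cost_z(X_J,C)+\cost_z(X,A)}{\cost_z(X,A)}},
\]
with metric entropy $\log|N_\alpha| \lesssim \alpha^{-2}\min\{k^{(2z+2)/(z+2)},k\eps^{-z}\}\cdot\polylog(k/\eps)$ (Lemma~\ref{Lemma_Euc_net_z}). The chaining then proceeds via a telescoping sum with this net, a variance bound exploiting $w_U(x)\dist^z(x,A)\leq \tfrac{4}{s}\cost_z(X,A)$ (Lemma~\ref{lemma_Euc_z_variance1}), and a separate treatment of the initial and residual telescoping terms. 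So the missing idea is to swap out the \cite{bansal2024sensitivity} net for the \cite{huang2024optimal} net in the medium case; the rest of your outline would then go through.
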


Similar to Section~\ref{sec_Euc_analysis}, we define bands and levels.

\paragraph{Bands} Now, let $T = (\eps^z/k) \cost_z(X,A)$ and $b_{max}=\lceil \log k + z\log \frac{1}{\eps}\rceil$. For $j\in [b_{max}]$, a subset $J\subseteq [k]$ is called a Band-$j$ subset if $\cost_z(X_i,A)\in (2^{j-1}T,2^jT]$ for all $i\in J$. In addition, we call $J$ a Band-$0$ subset if $\cost_z(X_i,A)\leq T$ for all $i\in J$. For $j=0,\dots,b_{max}$, let $\mathcal{B}_j=\{J\subseteq [k]\mid J\text{ is Band-}j\}$ denote the collection of all Band-$j$ subsets.

\paragraph{Levels} Let $l_{max}=\lceil z\log \eps^{-1}\rceil$. For any subset $J\subseteq[k]$, any $k$-center $C\in \binom{\mathbb{R}^d}{k}$, and any $j\in [l_{max}]$, $C$ is called a level-$j$ center of $J$ if $\dist(a_i,C)\in (2^{(j-1)/z}\Delta_i^{1/z}, 2^{j/z} \Delta_i^{1/z}]$ for all $i\in J$. 
In addition, $C$ is called a level-$0$ center of $J$ if $\dist(a_i,C)\leq \Delta_i^{1/z}$ for all $i\in J$. Let $L^J_j\subset \binom{\mathbb{R}^d}{k}$ denote the set of all level-$j$ centers of $J$. Furthermore, let $L^{J}_{-1}=\{C\in \binom{\R^d}{k}\mid \forall i\in J,\ \dist(a_i,C) > \eps^{-1} \Delta_i^{1/z}\}$.

It suffices to focus on the most interesting cases $j\in [b_{max}],l\in [l_{max}]$, the cases $j=0$ or $l=-1$ can be handled almost identically to those in Section~\ref{sec_Euc_analysis}.

\paragraph{Cost Vector} For a subset $J\in \mathcal{B}_j$, and a $k$-center $C\in L_{l}^J$, we define then cost vector $v_C^J:U\rightarrow \mathbb{R}_{\geq 0}$ as,
$$
v_C^J(x)=
\begin{cases}
\dist^z(x,C) & \text{if }x\in U_J\\
0 & \text{otherwise}.
\end{cases}
$$

The distance vector set is defined as $V^J=\{v_C^J\mid C\in L_{l}^J\}$. Again, by symmetrization trick, it remains to bound the following error term.

$$
\E\sup_{J,C}\bigg|\frac{\sum_{x\in U} g_x w_U(x)v_C^J(x)}{\cost(X_J,C)+\cost(X,A)}\bigg|.
$$

The construction of $\alpha$-net in \cite{bansal2024sensitivity} relies heavily on the geometric structure of $k$-median and $k$-means. For general $z\geq 1$, we instead adapt the $\alpha$-net from \cite{huang2024optimal}. In this approach, center sets are no longer partitioned based on similar interaction numbers. In the following, we fix $j\in [b_{max}]$ and $l\in [l_{max}]$.

\paragraph{Net Construction} An $\alpha$-net $N_{\alpha}=N_{\alpha,J,j,l}$ for $V^J$ is a subset of $V^J$ such that for any $v_C^J\in V^J$, there exists a vector $q\in N_{\alpha}$ satisfying the following properties for all $x\in U$,
\begin{enumerate}[itemsep = 0pt]
\item $|q(x)-v_C^J(x)|\leq \alpha \cdot \operatorname{err}(x,C)$ if $v_C^J(x)>0$;
\item $q(x)=0$ if $v_C^J(x)=0$,
\end{enumerate}
where
\[
\operatorname{err}(x,C) := \left(\sqrt{\dist^z(x,C)\dist^z(x,A)} + \dist^z(x,A)\right)\cdot \sqrt{\frac{\cost_z(X_J,C)+\cost_z(X,A)}{\cost_z(X,A)}}.
\]

\begin{lemma}[Adaptation of Lemmata B.2 and B.11 of \cite{huang2024optimal}] \label{Lemma_Euc_net_z}
For $\alpha\in (0,1]$, there exists an $\alpha$-net of $V^J$ satisfying the following metric entropy bound:
\[
\log |N_{\alpha}|\lesssim \alpha^{-2} \cdot \min\{k^\frac{2z+2}{z+2},k\eps^{-z}\}\cdot \log^2(k\eps^{-1})\cdot \log^2(\eps^{-1}) \cdot \log (\alpha^{-1}\eps^{-1}).
\]
\end{lemma}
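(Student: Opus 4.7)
The plan is to adapt the $\alpha$-net construction of \cite{huang2024optimal} for vanilla Euclidean $(k,z)$-clustering (Lemmata B.2 and B.11 therein) to the present restricted setting in which only clusters indexed by the fixed subset $J$ contribute to $v_C^J$. Because $J$ is fixed throughout the lemma statement, vectors in $V^J$ are supported on $U_J$ and vanish identically on $U\setminus U_J$; hence we only need to discretize $\dist^z(x,C)$ for $x\in U_J$, and no additional $2^k$ factor is introduced relative to the vanilla case.

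I would proceed in two phases. The first phase discretizes the cluster-level contribution: for each $i\in J$ and each $c_s\in C$, replace $\dist(a_i,c_s)$ by its nearest point in an $\alpha$-net built in an appropriately chosen JL- or terminal-embedded space of some dimension $d'\ll d$. The Band-$j$ assumption $\cost_z(X_i,A)\in (2^{j-1}T,2^jT]$ forces $\Delta_i^{1/z}$ to be constant up to a factor of two across $i\in J$, and the level-$l$ assumption $\dist(a_i,C)\in (2^{(l-1)/z}\Delta_i^{1/z},2^{l/z}\Delta_i^{1/z}]$ then makes every scale $\dist(a_i,c_{s})$ lie in a narrow geometric band, so a net of size $\exp(\tilde{O}(k\,d'\log(1/\alpha)))$ covers every admissible configuration. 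The second phase discretizes the point-level contribution by splitting $\dist^z(x,c_s)$ via Lemma~\ref{lm:triangle}, pushing the $x$-dependence onto a unit direction that is approximated by a spherical $\alpha$-net in the same embedded dimension. Combining the two phases yields an $\alpha$-net whose log-size is $O(\alpha^{-2}k\,d'\,\mathrm{polylog}(k/\eps))$.

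The two terms in the stated bound arise from two distinct choices of $d'$: fixing $d'\asymp \alpha^{-2}\log(k/\eps)$ (a standard JL dimension) and paying a per-cluster factor of $\eps^{-z}$ to resolve the error function produces the $k\eps^{-z}$ bound, while balancing the cluster- and point-level budgets as in \cite{Cohen2022Towards,huang2024optimal} gives the alternative $k^{(2z+2)/(z+2)}$ bound; taking the minimum yields the lemma. The main obstacle will be verifying that the modified error function $\operatorname{err}(x,C)$, which uses $\cost_z(X_J,C)$ rather than the vanilla $\cost_z(X,C)$, is still strong enough to absorb the approximation errors. This reduces to showing that the per-point discretization errors are dominated by the normalizer $\sqrt{(\cost_z(X_J,C)+\cost_z(X,A))/\cost_z(X,A)}$ in $\operatorname{err}(x,C)$, exploiting the trivial inequality $\cost_z(X_J,A)\leq \cost_z(X,A)$ together with the Band-$j$/level-$l$ structure to prevent losses outside $J$ from entering the error; once this is established, the remaining bookkeeping mirrors \cite[Appendix B]{huang2024optimal} with $J$ substituted for $[k]$ throughout.
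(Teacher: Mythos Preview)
Your high-level plan---reduce to the vanilla net of \cite{huang2024optimal} by exploiting that $J$ is fixed and $v_C^J$ is supported on $U_J$---is right, and it is essentially what the paper does. But the paper takes a much shorter route than your two-phase reconstruction: it simply partitions $X$ into $q=O(\log(1/\eps)\log(k/\eps))$ pieces (the band/level structure already present here), invokes Lemma~B.2 of \cite{huang2024optimal} as a black box on each piece, and multiplies the resulting entropy by $q$. There is no need to rebuild the JL/terminal embedding or the spherical nets from scratch; the existing lemma already delivers the claimed error function up to an offset term.

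Two points in your proposal need correction. First, your explanation of where the two bounds come from is off: they do \emph{not} arise from two choices of embedding dimension $d'$. Both come from the same intermediate quantity in Lemma~B.11 of \cite{huang2024optimal}, namely $\min\{2^{\beta z}k,\,k^2 2^{-2\beta}\}$. Balancing the two arguments of the $\min$ yields $k^{(2z+2)/(z+2)}$, while using only the first argument together with the level constraint $\beta\le 2+\log(z\eps^{-1})$ gives $2^{\beta z}k\lesssim k\eps^{-z}$. No separate construction is needed for the second bound. Second, you are missing a small but necessary step that the paper calls out explicitly: the net in \cite{huang2024optimal} approximates $v_C^J(x)-\dist^z(a_{\pi(x)},C)$, not $v_C^J(x)$ itself. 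You must therefore add a second (coarse) net for the shift $\dist^z(a_{\pi(x)},C)$; since there are at most $k$ centers $a_i$ and $O(l_{\max})$ discretization levels, this contributes only a lower-order additive term to the entropy. Your Phase~1 gestures in this direction but does not identify it as the reason a separate net is required.

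The ``main obstacle'' you flag---that $\operatorname{err}(x,C)$ uses $\cost_z(X_J,C)$ rather than $\cost_z(X,C)$---is not really an obstacle once you invoke the vanilla lemma on the restricted data $X_J$: the normalizer in \cite{huang2024optimal} is always the cost of the data to which the lemma is applied, so the substitution is automatic.
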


\begin{proof}
The adaptation consists of two steps. First, we partition $X$ into $q = O(\log(1/\eps)\log(k/\eps))$ subsets and obtain an $\alpha$-net for each subset using the net constructed in \cite[Lemma B.2]{huang2024optimal}. 
We remark that \cite[Lemma B.2]{huang2024optimal} only provides bounds for the first term $\tilde{O}(k^\frac{2z+2}{z+2} \alpha^{-2})$.
However, we note that this term comes from \cite[Lemma B.11]{huang2024optimal} and serves as the upper bound of the term
\[
T = \tilde{O}(\min\{ 2^{\beta z} k, k^2 2^{-2\beta} \} \cdot \alpha^{-2}).
\]
We note that their range of $\beta$ is $0\leq \beta \leq 2 + \log (z \eps^{-1})$.
Thus, $\tilde{O} ( k\eps^{-z} \alpha^{-2})$ can also serve as an upper bound of $T$, which results in the second bound in the lemma.
Overall, the above results in the metric entropy multiplied by $q$, giving a bound of the same order as the claimed result. 
Second, the $\alpha$-net in \cite{huang2024optimal} is designed to satisfy $\abs{q(x) - v_C^J(x) - \dist(a_{\pi(x)},C)} \leq \alpha\cdot \operatorname{err}(x,C)$. To remove the $\dist(a_{\pi(x)},C)$ term, we use another net to approximate it, leading to an additive lower-order term in the metric entropy bound.
\end{proof}

\paragraph{Telescoping} Let $h_{max}=\lceil (z+3)\log \frac{k}{\eps}\rceil$ and, for each $h\in \{0,\dots,h_{max}\}$, let $q_C^h\in N_{2^{-h}}$ denote the net vector of $v_C^J$. We rewrite $v_C^J(X)$ as a telescoping sum in exactly the same manner as \eqref{eq_Euc_tel}.
%
Similarly, it remains to control the initial term, main terms, and the residual individually. Let $\nu=5$ in the following.

\begin{lemma}[Residual term]
\begin{eqnarray*}
\E\sup_{J,C}\bigg|\frac{\sum_{x\in U} g_x w_U(x)\cdot (v_C^J(x)-q_C^{h_{max}}(x))}{\cost(X_J,C)+\cost(X,A)}\bigg|\lesssim \frac{\eps}{\log^\nu \frac{k}{\eps}}.
\end{eqnarray*}
\end{lemma}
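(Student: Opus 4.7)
The plan is to exploit the fact that $h_{\max} = \lceil (z+3)\log(k/\eps) \rceil$ is large enough that the approximation factor $2^{-h_{\max}} \leq (\eps/k)^{z+3}$ dominates any polynomial blow-up from the sup. First, I would move absolute values inside the Gaussian sum: by the net property, $|v_C^J(x)-q_C^{h_{\max}}(x)| \leq 2^{-h_{\max}}\operatorname{err}(x,C)$ whenever $v_C^J(x)>0$, and both vanish otherwise, so that
\[
\sup_{J,C}\frac{|\sum_x g_x w_U(x)(v_C^J(x)-q_C^{h_{\max}}(x))|}{\cost_z(X_J,C)+\cost_z(X,A)} \leq 2^{-h_{\max}}\sup_{J,C}\sum_{x\in U_J}\frac{|g_x|w_U(x)\operatorname{err}(x,C)}{\cost_z(X_J,C)+\cost_z(X,A)}.
\]

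Next, I would expand $\operatorname{err}(x,C)$ and apply AM-GM ($\sqrt{ab}\leq(a+b)/2$) together with the trivial bound $\sqrt{\cost_z(X,A)(\cost_z(X_J,C)+\cost_z(X,A))} \geq \cost_z(X,A)$ to reduce the ratio to
\[
\frac{\operatorname{err}(x,C)}{\cost_z(X_J,C)+\cost_z(X,A)} \lesssim \frac{\dist^z(x,C)+\dist^z(x,A)}{\cost_z(X,A)},
\]
which no longer depends on $J$. Taking expectation and using $\E|g_x|\leq 1$ then yields an upper bound on the quantity of interest of $2^{-h_{\max}}\,\E\sup_C(\cost_z(U,C)+\cost_z(U,A))/\cost_z(X,A)$.

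The key estimate is then to show that under the event $\mathcal{E}_0 \cup \mathcal{E}_1$ (which fails with probability $\lesssim \eps^z/k$), one has $\cost_z(U,A) \lesssim \cost_z(X,A)$ directly, while $\cost_z(U,C)$ admits a polynomial-in-$(k/\eps)$ bound relative to $\cost_z(X,A)$: the band–level structure forces $\dist^z(a_i,C) \lesssim \eps^{-z}\Delta_i$ for all $i\in J$ at every level $l \le l_{\max}$, and so the generalized triangle inequality (Lemma~\ref{lm:triangle}) combined with the sample weight control in Lemma~\ref{lemma_Euc_weight} (in its $\cost_z$ form) gives $\cost_z(U,C) \lesssim 2^{O(z)}(k/\eps)^z\,\cost_z(X,A)$. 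On the rare complement event, the crude sample-level bound $w_U(x)\dist^z(x,C) \lesssim \operatorname{poly}(n,k/\eps)\cost_z(X,A)$ suffices, since its contribution is dampened by $\mathbb{P}((\mathcal{E}_0\cup\mathcal{E}_1)^c) \leq \eps^z/k$.

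Combining these pieces, the whole expression is at most $2^{-h_{\max}} \cdot \operatorname{poly}(k/\eps) \leq (\eps/k)^{z+3}\,\operatorname{poly}(k/\eps) \lesssim \eps/\log^5(k/\eps)$, as claimed. The main obstacle I anticipate is the fourth step: extracting a uniform polynomial bound on $\sup_{C\in L_l^J} \cost_z(U,C)/\cost_z(X,A)$ without invoking a $\sqrt{\log|N_\alpha|}$ factor from the net (which would defeat the purpose of doing a residual estimate). This is circumvented precisely by pulling absolute values inside the Gaussian sum up front, at the cost of losing the sub-Gaussian cancellation but gaining a clean $L_1$-type estimate that factors through the band/level constraints.
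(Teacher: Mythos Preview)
Your overall strategy matches the paper's: exploit the slack $2^{-h_{\max}}\le(\eps/k)^{z+3}$ so that any $\operatorname{poly}(k/\eps)$ blow-up from the sup is harmless. There are, however, two genuine gaps in the execution.

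First, the step ``taking expectation and using $\E|g_x|\le 1$'' is not valid as written, because $\E_g$ does not commute with $\sup_{J,C}$. Concretely, $\E_g\sup_{J,C}\sum_{x\in U_J}|g_x|\,a_x(J,C)$ can exceed $\sup_{J,C}\sum_{x\in U_J}a_x(J,C)$ by a factor $\sim\sqrt{\log s}$ in the worst case. The paper sidesteps this entirely by first establishing a \emph{uniform pointwise} bound
\[
\frac{w_U(x)\,|v_C^J(x)-q_C^{h_{\max}}(x)|}{\cost_z(X_J,C)+\cost_z(X,A)}\ \lesssim\ \frac{\beta^2}{s}\qquad\text{for all }x\in U_J,\ J\in\mathcal{B}_j,\ C\in L_l^J,
\]
with $\beta\asymp\eps/\log^\nu(k/\eps)$, and then applying Cauchy--Schwarz once: $|\sum_x g_x a_x|\le\|g\|_2\|a\|_2$, which separates the Gaussian from the $(J,C)$-dependent part. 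Your route can be rescued by the cruder $|g_x|\le\|g\|_\infty$ and $\E\|g\|_\infty\lesssim\sqrt{\log s}$, absorbing the extra factor into $(\eps/k)^{z+3}$.

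Second, after step~2 you extend the sum from $U_J$ to $U$ and then try to bound $\cost_z(U,C)$. This fails: the level constraint $C\in L_l^J$ controls $\dist(a_i,C)$ only for $i\in J$, so $\dist^z(x,C)$ is unbounded for $x\notin U_J$. You must keep $\cost_z(U_J,C)$. The correct bound is the pointwise one above, obtained as follows: for $x\in U_J$ and $C\in L_l^J$ with $l\le l_{\max}$, the generalized triangle inequality gives $\dist^z(x,C)\lesssim \dist^z(x,A)+\eps^{-z}\Delta_{\pi(x)}$, and Lemma~\ref{lemma_Euc_weight} (in its $\cost_z$ form) gives $w_U(x)\dist^z(x,A)\lesssim\cost_z(X,A)/s$ and $w_U(x)\Delta_{\pi(x)}\lesssim\cost_z(X,A)/s$, both \emph{deterministically}. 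Hence $\cost_z(U_J,C)\lesssim\eps^{-z}\cost_z(X,A)$ holds surely, and your invocation of $\mathcal{E}_0\cup\mathcal{E}_1$ (and the separate treatment of its complement) is unnecessary.
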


\begin{proof}
Let $\beta \asymp \eps/\log^{\nu}(k/\eps)$.
By the choice of $h_{max}$, we have that 
\begin{align*}
|v_C^J(x)-q_C^{h_{max}}|
&\lesssim (\eps/k)^{z+3} \cdot (\sqrt{\dist^z(x,C)\dist^z(x,A)}+\dist^z(x,A))\cdot \eps^{-z/2}\\
&\lesssim (\eps/k)^{z/2+3} \cdot (\sqrt{\dist^z(x,C)\dist^z(x,A)}+\dist^z(x,A)) \\
&\lesssim (\eps/k)^{z/2+3} \cdot \eps^{-z/2} \dist^z(x,A) \\
&\lesssim \beta^2 \dist^z(x,A).
\end{align*}
Thus, 
\[
\frac{w_U(x) \abs{v_C^J(x)-q_C^{h_{max}}}}{\cost_z(X_J,C)+\cost(X,A)}\lesssim \frac{\beta^2}{s}\cdot \frac{\cost_z(X,A)}{\cost_z(X_J,C)+\cost_z(X,A)} \leq \frac{\beta^2}{s}.
\]
Similarly to the proof of Lemma~\ref{lemma_euc_residual}, by Cauchy-Schwarz inequality, 
\[
\E\sup_{J,C}\bigg|\frac{\sum_{x\in U} g_x w_U(x)\cdot (v_C^J(x)-q_C^{h_{max}}(x))}{\cost(X_J,C)+\cost(X,A)}\bigg|\lesssim \E \|g\|_2 \cdot \frac{\beta}{\sqrt{s}}\leq \beta. \qedhere
\]
\end{proof}

\begin{lemma}[Main telescoping terms] \label{lemma_Euc_chaining_z_main_main}
For each $h\in [h_{max}]$,
$$
\E\sup_{J,C}\bigg|\frac{\sum_{x\in U} g_x w_U(x)\cdot (q_C^h(x)-q_C^{h-1}(x))}{\cost(X_J,C)+\cost(X,A)}\bigg|\lesssim \frac{\eps}{\log^{\nu+1}\frac{k}{\eps}}.
$$
\end{lemma}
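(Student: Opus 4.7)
}

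The plan is to execute one clean step of a generic chaining argument for Gaussian processes indexed by $(J,C)$. First, I would fix $h\in [h_{\max}]$ and consider the differences $f_{J,C}(x) := q_C^h(x) - q_C^{h-1}(x)$. By the triangle inequality applied to the defining property of the nets $N_{2^{-h}}$ and $N_{2^{-(h-1)}}$ from Lemma~\ref{Lemma_Euc_net_z}, I have $|f_{J,C}(x)| \lesssim 2^{-h}\cdot \operatorname{err}(x,C)$ whenever $v_C^J(x)>0$, and $f_{J,C}(x)=0$ otherwise. The number of distinct difference vectors, taken over all $J\in \mathcal{B}_j$, $C\in L_l^J$, $j\in [b_{\max}]$ and $l\in [l_{\max}]$, is at most $|N_{2^{-h}}|\cdot |N_{2^{-(h-1)}}|\cdot 2^k\cdot \polylog(k/\eps)$, and the first two factors dominate by Lemma~\ref{Lemma_Euc_net_z}.

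Next, conditioning on $U$, the Gaussian sum inside $A_h := \sup_{J,C} \bigl|\sum_{x\in U} g_x w_U(x) f_{J,C}(x)/(\cost_z(X_J,C)+\cost_z(X,A))\bigr|$ is a maximum of centred Gaussians over this finite index set. I would apply Fact~\ref{fact_max_gauss} to obtain
\[
  \E_g A_h \;\lesssim\; \sqrt{\log|N_{2^{-h}}|}\;\cdot\; \sup_{J,C}\sqrt{\frac{\sum_{x\in U} w_U(x)^2 f_{J,C}(x)^2}{(\cost_z(X_J,C)+\cost_z(X,A))^2}}.
\]
Taking expectation over $U$ and using the with-replacement sampling identity $\E_U\sum_{x\in U} w_U(x)^2 f_{J,C}(x)^2 = \frac{1}{s}\sum_{y\in X}f_{J,C}(y)^2/p(y)$, together with the lower bound $p(y)\gtrsim (\dist^z(y,A)+\Delta_{\pi(y)})/\cost_z(X,A)$, I would expand the squared error and reduce it to $\cost_z(X_J,C)+\cost_z(X,A)$ by the inequality
\[
 \frac{\dist^{2z}(y,A)+\dist^z(y,A)\dist^z(y,C)}{\dist^z(y,A)+\Delta_{\pi(y)}} \;\lesssim\; \dist^z(y,A)+\dist^z(y,C).
\]
Summing over $y\in X_J$ (where $f_{J,C}$ is non-zero) and using the $\operatorname{err}$-factor of $(\cost_z(X_J,C)+\cost_z(X,A))/\cost_z(X,A)$ then yields $\E_U\sum_x w_U(x)^2 f_{J,C}(x)^2 \lesssim 2^{-2h}(\cost_z(X_J,C)+\cost_z(X,A))^2/s$, so that the squared variance ratio is bounded, in expectation, by $O(2^{-2h}/s)$. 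A standard Bennett/Bernstein step promotes this to a high-probability bound uniform in $(J,C)$ (the uniformity handled by another union bound at the $N_{2^{-h}}$ scale).

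Combining these pieces with the net-entropy bound $\log|N_{2^{-h}}| \lesssim 2^{2h}\cdot \min(k^{(2z+2)/(z+2)},k\eps^{-z})\cdot \polylog(k/\eps)\cdot \log(2^h/\eps)$ from Lemma~\ref{Lemma_Euc_net_z}, and the choice of sample size $s = \min(k^{(2z+2)/(z+2)}\eps^{-2},k\eps^{-z-2})\cdot \log^{O(1)}(k/\eps)$, the factor $\sqrt{\log|N_{2^{-h}}|/s}\cdot 2^{-h}$ collapses to $\eps/\log^{\nu+1}(k/\eps)$ provided we take the polylog exponent in $s$ large enough relative to $\nu$ and $z$. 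The main obstacle I anticipate is the accounting step: after symmetrization, I must simultaneously absorb the extra $2^k$ factor from summing over $J\subseteq[k]$ (which is swallowed into $\log|N_{2^{-h}}|$ using that the entropy already scales linearly with $k$), the extra $\log(1/\eps\alpha)$ telescoping factor from the net, and the $h$-dependence needed to sum the resulting bound over $h=1,\dots,h_{\max}$ without loss. Once this bookkeeping is done, Lemma~\ref{lemma_Euc_chaining_z_main_main} follows.
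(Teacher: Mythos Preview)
Your high-level template---apply Fact~\ref{fact_max_gauss} to the differences $q_C^h-q_C^{h-1}$, multiply by $\sqrt{\log|N_{2^{-h}}|}$ from Lemma~\ref{Lemma_Euc_net_z}, and balance against the maximal conditional standard deviation---is exactly what the paper does, and your net-size/sample-size arithmetic at the end is fine. The genuine gap is in how you control $\sup_{J,C}\Var[G(J,C)\mid U]$.

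You bound $\E_U\Var[G(J,C)\mid U]\lesssim 2^{-2h}/s$ for each fixed $(J,C)$ and then propose ``a standard Bennett/Bernstein step'' plus a union bound over the $N_{2^{-h}}$ net to make this uniform. That union bound does not go through. Each summand $w_U(x)^2 f_{J,C}(x)^2/D^2$ has sup-norm of order $2^{-2h}\,2^{l}/s^{2}$ (from the level-$l$ relation $\dist^z(x,C)\lesssim 2^{l}(\dist^z(x,A)+\Delta_{\pi(x)})$), so Bernstein yields a failure exponent of order $s/2^{l}$. You need this to beat $\log|N_{2^{-h}}|\asymp 2^{2h}\min(k^{(2z+2)/(z+2)},k\eps^{-z})$, but with $s\asymp \eps^{-2}\min(\cdots)$, $2^{l}$ as large as $\eps^{-z}$, and $2^{2h}$ as large as $(k/\eps)^{O(z)}$, the inequality fails badly for large $h$.

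The paper avoids this entirely. The lower bound $p(y)\gtrsim \dist^z(y,A)/\cost_z(X,A)$ you already invoke is a \emph{pointwise} weight bound $w_U(x)\,\dist^z(x,A)\le \tfrac{4}{s}\cost_z(X,A)$ (Lemma~\ref{lemma_Euc_weight}); applying it to one factor of $w_U(x)$ before taking any expectation gives, deterministically in $U$,
\[
\Var[G(J,C)\mid U]\;\lesssim\;\frac{2^{-2h}}{s}\cdot \frac{\cost_z(U_J,C)+\cost_z(U_J,A)}{\cost_z(X_J,C)+\cost_z(X,A)}
\]
(the first half of Lemma~\ref{lemma_Euc_z_variance1}). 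The remaining ratio is then handled by a \emph{single} good event rather than a per-$(J,C)$ concentration: on $\mathcal{E}_0\cap\mathcal{E}_1$ the generalized triangle inequality gives the ratio $\lesssim 1$ for all $(J,C)$ simultaneously, while unconditionally it is at most $O(\eps^{-z})$; since $\Pr[\neg(\mathcal{E}_0\cap\mathcal{E}_1)]\le \eps^{z}/k$, one gets $\E_U\sup_{J,C}\sqrt{\text{ratio}}\lesssim 1$ (the second half of Lemma~\ref{lemma_Euc_z_variance1}). Plugging this into your Gaussian-maximum display yields the lemma directly, with no Bernstein step needed.
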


Let $G(J,C)=\frac{\sum_{x\in U} g_x w_U(x)\cdot (q_C^h(x)-q_C^{h-1}(x))}{\cost(X_J,C)+\cost(X,A)}$, we have the following bound on the variance of $G(J,C)$. Lemma~\ref{lemma_Euc_chaining_z_main_main} can be proved by combining Lemma~\ref{Lemma_Euc_net_z} and Lemma~\ref{lemma_Euc_z_variance1}.

\begin{lemma}[Variance bound of main telescoping terms]
\label{lemma_Euc_z_variance1}
\begin{gather}
\Var[G(J,C)\mid U]\leq \frac{2^{-2h}}{s}\cdot \frac{\cost_z(U_J,C)+\cost_z(U_J,A)}{\cost_z(X_J,C)+\cost_z(X,A)} \notag
\shortintertext{and}
    \E_U\bigg[\sup_{J,C} \bigg(\frac{\cost_z(U_J,C)+\cost_z(U_J,A)}{\cost_z(X_J,C)+\cost_z(X,A)}\bigg)^{1/2}\bigg]\lesssim 1. \label{inq_Euc_z_chaining_var1}
\end{gather}
\end{lemma}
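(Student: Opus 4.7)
The proof splits naturally into two parts, a per-realization variance computation and a supremum-expectation estimate. For the first part, conditional on $U$ the variables $\{g_x\}_{x\in U}$ are independent standard Gaussians, so $G(J,C)$ is a centered Gaussian whose conditional variance equals
$$
\Var[G(J,C)\mid U]=\frac{1}{(\cost_z(X_J,C)+\cost_z(X,A))^2}\sum_{x\in U}w_U(x)^2\bigl(q_C^h(x)-q_C^{h-1}(x)\bigr)^2.
$$
By the two properties of an $\alpha$-net together with the triangle inequality, $q_C^h(x)$ and $q_C^{h-1}(x)$ both vanish when $x\notin U_J$, and for $x\in U_J$ one has $|q_C^h(x)-q_C^{h-1}(x)|\leq 3\cdot 2^{-h}\operatorname{err}(x,C)$. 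Expanding $\operatorname{err}(x,C)^2$ via $(a+b)^2\leq 2(a^2+b^2)$, then invoking the general-$z$ analog of Lemma \ref{lemma_Euc_weight}, which yields $w_U(x)\dist^z(x,A)\lesssim \cost_z(X,A)/s$ directly from the first term of the importance-sampling probability $p(x)$, I can bound
$$
\sum_{x\in U_J}w_U(x)^2\bigl(\dist^z(x,C)\dist^z(x,A)+\dist^{2z}(x,A)\bigr)\lesssim \frac{\cost_z(X,A)}{s}\bigl(\cost_z(U_J,C)+\cost_z(U_J,A)\bigr).
$$
Substituting back and cancelling one factor of $\cost_z(X,A)$ against the $\operatorname{err}$ normalization produces the claimed variance bound.

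For the supremum-expectation bound, I will mirror the template of Lemma \ref{lemma_EUC_chaining_minorcase0} and decompose according to the good event $\mathcal{E}_0\cap\mathcal{E}_1$. On the good event, the general-$z$ analog of \eqref{inq_Euc_tri}, proved via Lemma \ref{lm:triangle} applied cluster by cluster with parameter $t$ chosen to absorb the $(1+1/t)^{z-1}$ factor into a $z^{O(z)}$ constant, gives $\cost_z(U_J,C)\lesssim \cost_z(X_J,C)+\cost_z(X,A)$ and $\cost_z(U_J,A)\lesssim \cost_z(X,A)$, making the square root of the ratio $O(1)$. On the complementary bad event, the worst-case bound obtained by summing the general-$z$ analog of \eqref{inq_tri_worst} across $i\in J$ yields $\cost_z(U_J,C)+\cost_z(U_J,A)\lesssim k\bigl(\cost_z(X_J,C)+\cost_z(X,A)\bigr)$, so the square root of the ratio is at most $O(\sqrt{k})$ deterministically. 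Combining with $\mathbb{P}(\overline{\mathcal{E}_0\cap \mathcal{E}_1})\lesssim \eps^z/k$, the total expectation is bounded by $O(1)+O(\sqrt{k})\cdot \eps^z/k=O(1)$, as desired.

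The main technical obstacle is the variance computation, where the $z$-dependent exponents appearing in $\operatorname{err}(x,C)$ must be matched carefully against the exponents in the importance-sampling weight bounds; a useful check is that after substitution the two normalization factors $\cost_z(X_J,C)+\cost_z(X,A)$ in the denominator cancel exactly with the $\operatorname{err}^2$ prefactor and one of the $(\cost_z(X_J,C)+\cost_z(X,A))$ terms in the numerator. A secondary obstacle is establishing the general-$z$ analogs of Lemmata \ref{lemma_Euc_weight} and \ref{lemma_Euc_useful_triangle}; both follow routinely from the modified definition of $p(x)$ (with $\cost_z$ and $\Delta_i^{1/z}$-adjusted terms) and from Lemma \ref{lm:triangle}, but keeping the $z^{O(z)}$ constants separate from the $\poly(k,1/\eps)$ dependence is essential so that the final coreset-size bound carries only a $2^{O(z)}$ overhead.
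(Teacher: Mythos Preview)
Your proposal is correct and follows essentially the same approach as the paper for the variance computation. For the expectation bound you take a slightly different route on the bad event: you invoke the general-$z$ analogue of \eqref{inq_tri_worst} to get a deterministic $O(k)$ bound on the ratio (hence $O(\sqrt{k})$ on its square root), whereas the paper instead exploits the level constraint $C\in L_l^J$ with $l\le l_{\max}$ together with the weight bound $w_U(x)\le \tfrac{4}{s}\cost_z(X,A)/\Delta_{\pi(x)}$ to show $\cost_z(U_J,C)\lesssim \eps^{-z}\cost_z(X,A)$ unconditionally, and separately observes that $\cost_z(U_J,A)\le 4\cost_z(X,A)$ always holds. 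Both routes close since the bad-event probability is $O(\eps^z/k)$; your argument is marginally simpler in that it does not appeal to the level structure of $C$, while the paper's avoids re-deriving the three-step transport inequality for general $z$.
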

\begin{proof}

By plugging the error term of $\alpha$-net, we have that,
\begin{align*}
&\Var[G(J,C)\mid U]\\
\lesssim{} & \frac{2^{-2h}\sum_{x\in U_J} w_U(x)^2 (\dist^z(x,C)\dist^z(x,A)+\dist^z(x,A)^2)}{(\cost_z(X_J,C)+\cost_z(X,A))^2}\cdot \frac{\cost_z(X_J,C)+\cost_z(X,A)}{\cost_z(X,A)}\\
={} & \frac{2^{-2h}\sum_{x\in U_J} w_U(x)^2 \cdot \dist^z(x,A)\cdot (\dist^z(x,C)+\dist^z(x,A))}{(\cost_z(X_J,C)+\cost_z(X,A))\cdot \cost_z(X,A)}.
\end{align*}
By Lemma~\ref{lemma_Euc_weight}, we have $w_U(x)\dist^z(x,A)\leq \frac{4}{s}\cdot \cost_z(X,A)$. It follows that
\[
\Var[G(J,C)\mid U]\lesssim \frac{2^{-2h}}{s}\cdot \frac{\cost(U_J,C)+\cost(U_J,A)}{\cost(X_J,C)+\cost(X,A)}.
\]
Next, we prove \eqref{inq_Euc_z_chaining_var1}. By an identical argument to the proof of Lemma~\ref{lemma_chaining_var1}, it always holds that
\[
\frac{\cost_z(U_J,A)}{\cost_z(X_J,C)+\cost_z(X,A)}\leq 4.
\] 
Thus, it suffices to prove that
\[
\E_U\bigg[\sup_{J,C} \bigg(\frac{\cost_z(U_J,C)}{\cost_z(X_J,C)+\cost_z(X,A)}\bigg)^{1/2}\bigg]\lesssim 1.
\]

When $\mathcal{E}_0\cup \mathcal{E}_1$ holds, we have that, by the generalized triangle inequality (Lemma~\ref{lm:triangle}),
$$
 \cost_z(U_J,C)\lesssim \cost_z(X_J,C)+\cost_z(X_J,A)+\cost_z(U_J,A).
$$
On the other hand, it always holds that
\begin{align*}
\cost_z(U_J,C) = \sum_{x\in U_J} w_U(x)\dist^z(x,C) &\lesssim 2^{l_{max}} \sum_{x\in U_J}w_U(x)\dist^z(x,A)\\
&\lesssim 2^{l_{max}} \sum_{x\in U_J} \frac{\cost_z(X,A)}{s}\\
&\lesssim 2^{l_{max}} \cdot \cost_z(X,A)\\
&\lesssim \eps^{-z} \cdot \cost_z(X,A).
\end{align*}
Thus, 
\begin{align*}
\E_U\bigg[\sup_{J,C} \bigg(\frac{\cost_z(U_J,C)}{\cost_z(X_J,C)+\cost_z(X,A)}\bigg)^{1/2}\bigg]&\lesssim  1\cdot \mathbb{P}[\mathcal{E}_0\cup \mathcal{E}_1] +\eps^{-z} \cdot (1-\mathbb{P}[\mathcal{E}_0\cup \mathcal{E}_1] )\\
&\lesssim  1+\eps^{-z}\cdot \frac{\eps^z}{k}\\
&\lesssim 1. \qedhere
\end{align*}
\end{proof}

\begin{lemma}[Initial telescoping term] \label{lemma_Euc_chaining_z_coarse}
\begin{eqnarray*}
\E\sup_{J,C}\bigg|\frac{\sum_{x\in U} g_x w_U(x)\cdot q_C^0(x)}{\cost(X_J,C)+\cost(X,A)}\bigg|\lesssim \frac{\eps}{\log^\nu \frac{k}{\eps}}.
\end{eqnarray*}
\end{lemma}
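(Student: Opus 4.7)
The plan is to mirror the chaining strategy used for the main telescoping terms in Lemma~\ref{lemma_Euc_chaining_z_main_main}, but now specialized to the coarsest ($\alpha=1$) net, where $q_C^0$ itself (rather than an increment) is the random quantity to be controlled. First, conditionally on $U$, I would write
\[
G_0(J,C) := \frac{\sum_{x\in U} g_x\, w_U(x)\, q_C^0(x)}{\cost_z(X_J,C)+\cost_z(X,A)}
\]
and apply Fact~\ref{fact_max_gauss} to the (at most $|N_1|\cdot 2^k$) Gaussian variables indexed by $(J,C)$, where $N_1$ is the $1$-net supplied by Lemma~\ref{Lemma_Euc_net_z}. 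Since
\[
\log|N_1| \lesssim \min\!\bigl\{k^{(2z+2)/(z+2)},\, k\eps^{-z}\bigr\}\cdot \log^{O(1)}(k/\eps),
\]
and the additional $2^k$ factor from choosing $J\subseteq[k]$ is absorbed into this bound, it remains to establish the conditional variance estimate
\[
\sup_{J,C}\Var\!\bigl[G_0(J,C)\mid U\bigr] \;\lesssim\; \frac{1}{s}\cdot \frac{\cost_z(U_J,C)+\cost_z(U_J,A)}{\cost_z(X_J,C)+\cost_z(X,A)},
\]
and then to bound the expectation of the ratio on the right by a constant.

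To obtain this variance bound, observe that because $q_C^0\in N_1$ approximates $v_C^J$ within an additive error $\mathrm{err}(x,C)$, we have
\[
q_C^0(x)^2 \;\lesssim\; v_C^J(x)^2 + \mathrm{err}(x,C)^2
\;\lesssim\; \dist^{2z}(x,C)\mathbf 1_{\{x\in U_J\}} + \bigl(\dist^z(x,C)\dist^z(x,A)+\dist^{2z}(x,A)\bigr)\cdot\frac{\cost_z(X_J,C)+\cost_z(X,A)}{\cost_z(X,A)}.
\]
Combining this with Lemma~\ref{lemma_Euc_weight}, which gives $w_U(x)\dist^z(x,A)\lesssim \cost_z(X,A)/s$, and with the elementary bound $w_U(x)\dist^z(x,C)\lesssim (\cost_z(X,C)+\cost_z(X,A))/s$ (via the generalized triangle inequality and Lemma~\ref{lemma_Euc_weight}), the $\sum_{x\in U_J} w_U(x)^2 q_C^0(x)^2$ numerator reduces cleanly to the desired ratio $(\cost_z(U_J,C)+\cost_z(U_J,A))/s$ after cancellation of one factor of $\cost_z(X_J,C)+\cost_z(X,A)$.

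The second step is to bound $\E_U\sup_{J,C}\sqrt{(\cost_z(U_J,C)+\cost_z(U_J,A))/(\cost_z(X_J,C)+\cost_z(X,A))}$ by a constant. This is precisely inequality~\eqref{inq_Euc_z_chaining_var1} from Lemma~\ref{lemma_Euc_z_variance1}, whose proof I would reuse verbatim: on the good event $\mathcal E_0\cup\mathcal E_1$ the generalized triangle inequality gives $\cost_z(U_J,C)\lesssim \cost_z(X_J,C)+\cost_z(X_J,A)+\cost_z(U_J,A)$, while on the (rare) bad event one uses the deterministic bound $\cost_z(U_J,C)\lesssim 2^{l_{\max}}\cost_z(X,A)\lesssim \eps^{-z}\cost_z(X,A)$ together with $\mathbb P[\overline{\mathcal E_0\cup\mathcal E_1}]\lesssim \eps^z/k$.

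Putting everything together, taking expectation in $U$ and applying Cauchy–Schwarz yields
\[
\E\sup_{J,C}|G_0(J,C)|
\;\lesssim\; \sqrt{\frac{\log|N_1|}{s}}
\;\lesssim\; \sqrt{\frac{\min\{k^{(2z+2)/(z+2)},k\eps^{-z}\}\cdot \log^{O(1)}(k/\eps)}{s}},
\]
which is $\lesssim \eps/\log^\nu(k/\eps)$ for the sample size $s=\min\{k^{(2z+2)/(z+2)}\eps^{-2},k\eps^{-z-2}\}\cdot \log^{O(1)}(k/\eps)$ chosen in the algorithm. The main technical obstacle is the variance computation for $q_C^0(x)^2$: unlike the increments $q_C^h-q_C^{h-1}$, the coarse term is \emph{large}, so the cancellation between the numerator and denominator must be harvested entirely from the $\cost_z$-normalization and from the structure of the sensitivity sampling weights in Lemma~\ref{lemma_Euc_weight}. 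Handling the cross term $\dist^z(x,C)\dist^z(x,A)$ inside $\mathrm{err}(x,C)^2$ requires a careful AM–GM-style split so that each factor can be absorbed into either $\cost_z(X_J,C)+\cost_z(X,A)$ or $\cost_z(X,A)$ before invoking Lemma~\ref{lemma_Euc_weight}.
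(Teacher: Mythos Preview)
Your variance bound for $G_0(J,C)$ is where the argument breaks. The increment terms $q_C^h-q_C^{h-1}$ are small because they are bounded by $2^{-h}\operatorname{err}(x,C)$, and $\operatorname{err}(x,C)$ is built from $\dist^z(x,A)$, for which the sensitivity weights give $w_U(x)\dist^z(x,A)\lesssim \cost_z(X,A)/s$. The coarse term $q_C^0(x)$, however, contains the full $v_C^J(x)=\dist^z(x,C)$, and the ``elementary bound'' you invoke, $w_U(x)\dist^z(x,C)\lesssim (\cost_z(X,C)+\cost_z(X,A))/s$, is false. From Lemma~\ref{lemma_Euc_weight} and the triangle inequality one gets only $w_U(x)\dist^z(x,C)\lesssim \tfrac{1}{s}\bigl(\cost_z(X,A)+\tfrac{\cost_z(X,A)}{\Delta_{\pi(x)}}\dist^z(a_{\pi(x)},C)\bigr)\lesssim \tfrac{2^l}{s}\cost_z(X,A)$ for $C\in L_l^J$, and $2^l\cost_z(X,A)$ can exceed $\cost_z(X_J,C)+\cost_z(X,A)$ by a factor of up to $\eps^{-z}$ (take $J$ a single cluster with $\cost_z(X_J,A)\ll\cost_z(X,A)$ and $l=l_{\max}$). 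Hence the ``clean cancellation of one factor of $\cost_z(X_J,C)+\cost_z(X,A)$'' you rely on does not occur, and the resulting variance is too large by a factor that the net-size bound cannot absorb.

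The paper's proof repairs exactly this: it subtracts from $q_C^0$ the piecewise-constant vector $d_{\pi(x)}=\lceil \dist^z(a_{\pi(x)},C)/\Delta_{\pi(x)}\rceil\Delta_{\pi(x)}$, which isolates the large part of $\dist^z(x,C)$. The difference $q_C^0(x)-d_{\pi(x)}$ is then genuinely of size $O(\operatorname{err}(x,C)+\dist^z(x,A)+\Delta_{\pi(x)})$ and can be handled by the same variance computation as the main telescoping terms (Lemma~\ref{lemma_Euc_z_variance1}). The remaining piecewise-constant part $\sum_{x\in U_J}g_xw_U(x)d_{\pi(x)}=\sum_{i\in J}d_i\sum_{x\in U_i}g_xw_U(x)$ is controlled by a separate argument exploiting that it factors per cluster, as in the $z=1$ analysis (Lemma~\ref{lemma_Euc_chaining_coarse}). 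Without this decomposition the initial term cannot be bounded by the chaining argument you propose.
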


\begin{proof}
Similar to the proof of Lemma~\ref{lemma_Euc_chaining_coarse}, we define a rounding term of $d^z(a_i,C)$, denoted by
\[
d_i = \big\lceil \frac{\dist^z(a_i,C)}{\Delta_i} \big\rceil\cdot \Delta_i.
\]
Similarly, the number of distinct tuples of $(d_i)_{i\in J}$ can be bounded by $2^{O(kl_{max})}\leq |N_{1/2}|$. Now we consider
\begin{equation}\label{term_Euc_z_inital}
\E\sup_{J,C}\bigg|\frac{\sum_{x\in U_J} g_x w_U(x)\cdot (q_C^0(x)-d_{\pi(x)})}{\cost(X_J,C)+\cost(X,A)}\bigg|.
\end{equation}

We have $|q_C^0(x) - d_{\pi(x)}|\lesssim \operatorname{err}(x,C) + \dist^z(x,A) + \Delta_{\pi(x)}$. Moreover, the number of distinct vectors induced by $(q_C^0(x)-d_{\pi(x)})_{x\in U}$ can be bounded by $2^k|N_{1/2}|^2$. Hence, (\ref{term_Euc_z_inital}) has essentially the same order variance and logarithm of net size as
$$
\E\sup_{J,C}\bigg|\frac{\sum_{x\in U_J} g_x w_U(x)\cdot (q_C^0(x)-q_C^1(x))}{\cost(X_J,C)+\cost(X,A)}\bigg|,
$$
thus can be controlled in the same manner as in the proof of Lemma~\ref{lemma_Euc_chaining_z_main_main}. 
It remains to control 
\begin{equation}\label{term_euc_z_initial_2}
\E\sup_{J,C}\bigg|\frac{\sum_{x\in U_J} g_x w_U(x)\cdot d_{\pi(x)}}{\cost(X_J,C)+\cost(X,A)}\bigg|,
\end{equation} 
which has an identical form as the first term of (\ref{eqn:chaining_euc_initial_aux}). Thus, (\ref{term_euc_z_initial_2}) can be controlled identically.
\end{proof}

\end{document}